\pdfoutput=1
\documentclass[a4paper,UKenglish]{lipics-v2019}

\usepackage{microtype}
\usepackage{cite}


\bibliographystyle{plainurl}

\title{On the Complexity of Reachability\protect\\in Parametric Markov Decision
Processes}

\titlerunning{On the Complexity of Reachability in Parametric Markov Decision
Processes}

\author{Tobias Winkler}{RWTH Aachen University}{tobias.winkler1@rwth-aachen.de}{}{}

\author{Sebastian Junges}{RWTH Aachen University}{sebastian.junges@cs.rwth-aachen.de}{0000-0003-0978-8466}{}

\author{Guillermo A. P\'erez}{University of
  Antwerp}{guillermoalberto.perez@uantwerpen.be}{0000-0002-1200-4952}{}

\author{Joost-Pieter Katoen}{RWTH Aachen University}{katoen@cs.rwth-aachen.de}{0000-0002-6143-1926}{}

\authorrunning{T. Winkler, S. Junges, G. A. P\'erez, J.-P. Katoen}

\Copyright{Tobias Winkler, Sebastian Junges, Guillermo A. P\'erez, Joost-Pieter
Katoen}

\ccsdesc[500]{Theory of computation~Probabilistic computation}
\ccsdesc[500]{Theory of computation~Logic and verification}
\ccsdesc[500]{Theory of computation~Markov decision processes}
 
\keywords{Parametric Markov decision processes, Formal verification, ETR, Complexity}





\acknowledgements{We would like to thank Krishnendu Chatterjee for his pointer to CSRGs.}

\EventEditors{John Q. Open and Joan R. Access}
\EventNoEds{2}
\EventLongTitle{42nd Conference on Very Important Topics (CVIT 2016)}
\EventShortTitle{CVIT 2016}
\EventAcronym{CVIT}
\EventYear{2016}
\EventDate{December 24--27, 2016}
\EventLocation{Little Whinging, United Kingdom}
\EventLogo{}
\SeriesVolume{42}
\ArticleNo{23}
\nolinenumbers 
\hideLIPIcs  

\usepackage{nicefrac}
\usepackage{wasysym}
\usepackage{tikz}
\usepackage{upgreek}
\usepackage{mathtools}

\usepackage{wrapfig}

\usepackage{todonotes}

\usetikzlibrary{arrows,positioning,automata,snakes}

\tikzset{>=stealth'}

\usepackage{colonequals}
\newcommand{\QQ}{\mathbb{Q}}
\newcommand{\RR}{\mathbb{R}}

\newcommand{\pol}{f}

\newcommand{\mdp}{\ensuremath{\mathcal{M}}} 
\newcommand{\sinit}{{\ensuremath{s_\iota}}}
\newcommand{\pmdptuple}{\ensuremath{(S,\Params,\Act,\sinit,P)}}

\newcommand{\supp}{\ensuremath{\mathrm{supp}}}
\newcommand{\Distr}{\ensuremath{\mathrm{Distr}}}
\newcommand{\act}{\ensuremath{a}}
\newcommand{\Act}{\ensuremath{\textsl{Act}}}

\newcommand{\sched}{\ensuremath{\sigma}}
\newcommand{\Sched}{\ensuremath{\Sigma}}
\newcommand{\RSched}{\ensuremath{R\Sigma}}

\newcommand{\Params}{\ensuremath{X}} 
\newcommand{\inst}{\ensuremath{\mathit{val}}} 
\newcommand{\ParamSpace}{\ensuremath{\mathcal{P}}}

\newcommand{\sol}{\ensuremath{\textsl{sol}}}
\newcommand{\maxsol}{\ensuremath{\textsl{maxsol}}}
\newcommand{\minsol}{\ensuremath{\textsl{minsol}}}

\newcommand{\playerI}{\ensuremath{\mathrm{I}}}
\newcommand{\playerII}{\ensuremath{\mathrm{II}}}

\newcommand{\csrg}{\ensuremath{\mathcal{G}}}
\newcommand{\csrginit}{\ensuremath{\sinit}}

\renewcommand{\phi}{\varphi}

\providecommand{\st}{}
\renewcommand{\st}{\ensuremath{\:\mid\:}}
\newcommand{\quant}{\ensuremath{\mathcal{Q}}}
\newcommand{\reach}{\ensuremath{\mathrm{Reach}}}
\newcommand{\robreach}{\ensuremath{\mathrm{RobReach}}}

\def\Inf{\operatornamewithlimits{inf\vphantom{p}}}

\let\Pr\relax
\def\Pr{\mathop{\mathrm{Pr}}\nolimits}

\begin{document}

\maketitle

\begin{abstract}
  This paper studies parametric Markov decision processes (pMDPs), an extension to Markov decision processes (MDPs) where transitions probabilities are described by polynomials over a finite set of parameters. Fixing values for all parameters yields MDPs.
  In particular, this paper studies the complexity of finding values for these parameters such that the induced MDP satisfies some reachability constraints.
  We discuss different variants depending on the comparison operator in the constraints and the domain of the parameter values.  
  We improve all known lower bounds for this problem, and notably provide ETR-completeness results for distinct variants of this problem.
  Furthermore, we provide insights in the functions describing the induced reachability probabilities, and how pMDPs generalise concurrent stochastic reachability games. 
\end{abstract}

\section{Introduction}

Markov decision processes (MDPs) are \emph{the} model to reason about sequential processes under (stochastic) uncertainty and non-determinism.
Markov chains (MCs) are MDPs without non-determinism.
Often, probability distributions in these models are difficult to assess precisely during design time of a system.
This shortcoming has led to interval MCs~\cite{DBLP:conf/lics/JonssonL91,DBLP:journals/ipl/ChenHK13,DBLP:conf/rp/Sproston18,DBLP:conf/tacas/SenVA06} and interval MDPs (aka: Bounded-parameter MDPs)~\cite{givan2000bounded,DBLP:journals/ai/WuK08,DBLP:conf/cav/PuggelliLSS13}, which allow for interval-labelled transitions.
Analysis under interval Markov models is often too pessimistic:
The actual probabilities on the transitions are considered to be non-deterministically and \emph{locally} chosen.
Intuitively, consider the probability of a coin-flip yielding heads in some stochastic environment. In interval models, the probability may vary with the local memory state of an agent acting in this environment. Such behaviour is unrealistic.
\emph{Parametric} MCs/MDPs~\cite{DBLP:conf/ictac/Daws04,DBLP:journals/fac/LanotteMT07,DBLP:conf/nfm/HahnHZ11,DBLP:journals/ai/DelgadoSB11} (pMCs, pMDPs) overcome this limitation by adding dependencies (or couplings) between various transitions---they add global restrictions to the selection of the probability distributions.
Intuitively, the probability of flipping heads can be arbitrary, but should be independent of an agent's local memory. 
Such couplings are similar to restrictions on schedulers in decentralised/partially observable MDPs, considered in e.g.,\cite{DBLP:journals/mor/BernsteinGIZ02,DBLP:journals/tcs/GiroDF14,DBLP:journals/aamas/SeukenZ08}.

Technically, pMDPs label their transitions with polynomials over a finite set of parameters. Fixing all parameter values yields MDPs. 
The synthesis problem considered in this paper asks to find parameter values
such that the induced MDPs satisfy reachability constraints.  Such reachability
constraints state that the probability---under some/all possible ways to resolve
non-determinism in the MDP---to reach a target state is (strictly) above or
below a threshold.
An example synthesis problem is thus: ``Are there parameter values such
that for all possible ways to resolve the non-determinism, the probability to
reach a target state exceeds $\frac{1}{2}$?''
Variants of the synthesis problem are obtained by varying the reachability constraints, and the domain of the parameter values. 
Parameter synthesis is supported by
the model checkers PRISM~\cite{DBLP:conf/cav/KwiatkowskaNP11} and Storm~\cite{DBLP:conf/cav/DehnertJK017}, and dedicated
tools PARAM~\cite{param_sttt} and
PROPhESY~\cite{DBLP:conf/cav/DehnertJJCVBKA15}. The complexity of the decision problems corresponding to parameter synthesis is mostly open.

This paper significantly extends complexity results for parameter
synthesis in pMCs and pMDPs.
Table~\ref{tab:reach-zoo} on~page \pageref{tab:reach-zoo} gives an
overview of new results:  
Most prominently, it establishes ETR-completeness of reachability
problems for pMCs with non-strict comparison operators, and establishes
NP-hardness for pMCs with strict comparison operators.  For pMDPs with demonic
non-determinism, it establishes ETR-completeness for any comparison operator.
For angelic non-determinism, mostly the synthesis problems are equivalent to their pMC counterparts.  
When considering
pMDPs with a fixed number of variables, we establish uniform NP upper bounds for parameter synthesis under angelic or demonic non-determinism.
These results are partially based on properties of pMDPs scattered in earlier work, and prominently use a strong connection between polynomial inequalities and parameter synthesis.

Finally, pMDPs are interesting generalisations of
other models: 
\cite{DBLP:conf/uai/Junges0WQWK018} shows that parameter synthesis in
pMCs is equivalent to the synthesis of finite-state controllers
(with a-priori fixed bounds) of partially observable MDPs (POMDPs)~\cite{DBLP:books/daglib/0023820} under
reachability constraints.  
Thus, as a side product we improve
complexity bounds~\cite{DBLP:journals/toct/VlassisLB12,DBLP:conf/aaai/ChatterjeeCD16} for (a-priori fixed) memory bounded strategies in POMDPs.
In this paper, we show how pMDPs generalise
concurrent stochastic reachability games~\cite{Sha53,DBLP:journals/tcs/AlfaroHK07,survey-CSGs}.
We finish the paper by drawing some connections with robust schedulers, i.e.\ the question how to optimally resolve non-determinism taking into account the uncertainty in the stochastic dynamics.

\subparagraph*{Related work.}
Various results in this paper extend work by Chonev~\cite{Chonev17}, who studied
a model of augmented interval Markov chains. These coincide with parametric
Markov chains.  The work also builds upon results by Hutschenreiter \emph{et
al.}\cite{baiercomplexity}, in particular upon the result that pMCs with an
a-priori fixed number of parameters can be checked in P. Furthermore, they study
the complexity of PCTL model checking of pMCs.  The complexity of finite-state
controller synthesis in POMDPs has been studied in~\cite{DBLP:journals/toct/VlassisLB12,DBLP:conf/aaai/ChatterjeeCD16}.
Some of the proofs for ETR-completeness presented here reuse ideas from~\cite{DBLP:journals/mst/SchaeferS17}.

Methods (and implementations) to analyse pMCs by computing their characteristic
\emph{solution function} are considered in \
\cite{DBLP:conf/ictac/Daws04,param_sttt,DBLP:conf/cav/DehnertJJCVBKA15,baiercomplexity,DBLP:journals/tse/FilieriTG16,DBLP:conf/qest/JansenCVWAKB14,DBLP:journals/ai/DelgadoSB11,DBLP:conf/atva/GainerHS18}.
Sampling-based approaches to find feasible instantiations in pMDPs are
considered by~\cite{DBLP:conf/nfm/HahnHZ11,DBLP:conf/tase/ChenHHKQ013},
while~\cite{DBLP:conf/tacas/BartocciGKRS11,DBLP:conf/atva/CubuktepeJJKT18}
utilise optimisation methods. 
Finally,~\cite{DBLP:conf/atva/QuatmannD0JK16} presents a method to prove the absence of solutions in pMDPs by iteratively considering simple stochastic games~\cite{DBLP:books/daglib/0074447}.
Some other works on Markov models
with structurally equivalent yet parameterised dynamics
include~\cite{chatterjee12,solan03,DBLP:conf/concur/ChenFRS14,DBLP:journals/acta/CeskaDPKB17}.
Parameter synthesis with statistical guarantees has been explored in, e.g., \cite{DBLP:conf/tacas/BortolussiS18}. 
Further work on parameter synthesis in Markov models has been surveyed in \cite{prophesy_journal}.

%

\section{Preliminaries}

Let $X$ be a finite set of variables. Let $\QQ[X]$ and $\QQ(X)$ denote the set of
all rational-coefficient polynomial and rational functions on $X$, respectively.
A rational function $f/g$ can be represented as a pair $(f,g)$ of polynomials.
In turn, a polynomial can be represented as a sum of terms, where each term is
given by a \emph{coefficient} and a \emph{monomial}. The (total) degree of a
polynomial is the maximum over the sum of the exponents in the monomials.  A
polynomial is quadratic (respectively, quadric), if its total degree is two
(four) or less.
For a rational function $f(x_1,\dots,x_k) \in \QQ(X)$ and an instantiation
$\inst \colon X \to \RR$ we write $f[\inst]$ for the value
$f(\inst(x_1),\dots,\inst(x_k))$.
We use $\bowtie$ to denote either of $\{ \leq, <, \geq, > \}$ and
$\unrhd$ for either $\{\geq, >\}$ (and $\unlhd$ analogously). With $\overline{\bowtie}$, we denote the complement, e.g. $\overline{\leq}\;=\;>$.

Consider a finite set $S$. Let $\Distr(S)$ denote the set of all distributions
over $S$, and $\supp(\delta) \subseteq S$ the support $\{s \in S \mid \delta(s) >
0\}$ of distribution $\delta \in \Distr(S)$.

\subsection{Parametric Markov models}

\begin{definition}[pMDP]
  A \emph{parametric Markov Decision Process} $\mdp$ is a tuple $\pmdptuple$
  with $S$ a (finite) set of \emph{states}, $ \Params$ a finite set of
  \emph{parameters}, $\Act$ a finite set of \emph{actions}, $\sinit \in S$ the initial state, and $P\colon S
  \times \Act \times S \rightarrow \QQ[\Params] \cup \RR$ the probabilistic transition
  function.
\end{definition}
Parameter-free pMDPs coincide with standard MDPs, as in
\cite{puterman05}.  We define $\Act(s) = \{ \act \in \Act \mid \exists s' \in
S.\; P(s,a,s') \neq 0 \}$.  If $|\Act(s)| = 1$ for all $s \in S$, then $\mdp$ is
a \emph{parametric Markov chain} (pMC). We denote its transitions with
$P(s,s')$ and omit the actions.

A pMDP is \emph{simple} if and only if non-constant probabilities labelling transitions
$(s,a,s')$ are of the form $x$ or $1-x$, and the sum of outgoing transitions from a state-action pair always is (equivalent to) $1$.
Formally, simple pMDPs satisfy the following two properties:
\begin{itemize}
  \item $P(s,\act,s') \in \{ x, 1-x \st x \in \Params \} \cup \RR$ for all $s,s'
    \in S$ and $\act \in \Act$; and
  \item $\sum_{s'\in S} P(s,\act,s') = 1$ for all $s \in S$ and $\act \in
    \Act(s)$.
\end{itemize}

\begin{definition}[Instantiation]
  Let $\mdp = \pmdptuple$ be a pMDP.  
  An instantiation $\inst\colon X \rightarrow \RR$ is \emph{well-defined} 
  if the induced functions $P(s,\act,\cdot)$ are distributions over $S$,
   i.e.
  \[ 
    \forall s,s' \in S, \forall \act \in \Act.\; 0 \leq P(s,\act,s')[\inst] \in \RR
    \land \sum_{\hat{s}\in S} P(s,\act,\hat{s})[\inst] = 1.
  \]
\end{definition}
Let $\mdp[\inst]$ denote the parameter-free MDP in which $P(s,\act,s')$ has
been replaced by $P(s,\act,s')[\inst]$.
We denote with $P^\inst_{=0} \coloneqq \{ (s,a,s') \in S \times \Act \times S \mid
P(s,\act,s') \neq 0 \land P(s,\act,s')[\inst] = 0 \}$ the transitions of
$\mdp$ that become $0$ in $\mdp[\inst]$.  A well-defined instantiation $\inst$
is \emph{graph-preserving} if the topology of the pMDP is preserved, i.e. if
$P^\inst_{=0} = \emptyset$.

The (well-defined) parameter space $\ParamSpace^{\mathrm{wd}}_\mdp$ for $\mdp$
is $\{ \inst \colon \Params \to \RR \st \inst \text{ is well defined} \}$ and
the graph-preserving parameter space $\ParamSpace^{\mathrm{gp}}_\mdp \coloneqq \{ \inst
\colon \Params \to \RR \st \inst \text{ is graph-preserving} \}$.  In simple
pMDPs, the well-defined (respectively, graph-preserving) parameter space is the
set of instantiations $\inst \colon \Params \to [0,1]$ (respectively, $\inst
\colon \Params \to (0,1)$). We omit the subscript from $\ParamSpace^{\mathrm{gp}}_\mdp$ and $\ParamSpace^{\mathrm{wd}}_\mdp$  when the pMDP $\mdp$ is understood
from the context.

A \emph{graph-consistent region} $R$ for $\mdp$ is a subset of
$\ParamSpace^{\mathrm{wd}}$ such that all instantiations in $R$ induce the same graph, i.e., $P^\inst_{=0} =
P^{\inst'}_{=0}$ for all $\inst,\inst' \in R$.
\begin{remark} \label{rem:fixedparam}
  For any simple pMDP, $\ParamSpace^\text{wd}$ can be partitioned into
  $3^{|\Params|}$ many graph-consistent regions $R$.  For any graph-consistent
  region, we can (in linear time) construct a simple pMDP $\mdp'$ such that the
  graph-consistent region $R$ corresponds to $\ParamSpace^\mathrm{gp}_{\mdp'}$.
  Essentially, the construction merely removes the transitions $P^{\inst}_{=0}$
  for $\inst \in R$, and adjusts the probabilities of some other transitions to
  $1$ to ensure simplicity.  A complete construction is given in
  \cite{prophesy_journal}.
\end{remark}

\subparagraph*{Reachability, schedulers and induced Markov chains.}
Consider a parameter-free MC $\mdp$ and a state $s_0$.  A \emph{run} of $\mdp$
from $s_0$ is an infinite sequence of states $s_0 s_1 \dots$ such that
$P(s_i,s_{i+1}) > 0$ for all $i \geq 0$. We denote by $\mathrm{Runs}^{s_0}$ the set of
all runs of $\mdp$ that start with the state $s_0$.  The \emph{probability} of
measurable \emph{event} $E \subseteq \mathrm{Runs}^{s_0}$ is 
defined using a standard cylinder construction~\cite{puterman05,BK08}. 
Let $\Pr_{\mdp}(\lozenge T)$ denote the probability to \emph{eventually reach
$T$} from the initial state of $\mdp$; and
$\Pr_{\mdp}(s \rightarrow \lozenge T)$ denote the probability to eventually reach
$T$ starting from state $s$. We omit the subscript $\mdp$ if it is clear from the context.

To define reachability in pMDPs, we need to eliminate the non-determinism. We do
so by means of a \emph{scheduler} (a.k.a. a \emph{policy} or \emph{strategy}).
\begin{definition}[Scheduler]
  A randomised (memoryless) scheduler is a function $\sched\colon S \rightarrow
  \Distr(\Act)$ s.t.\ $\supp(\sched(s)) \subseteq \Act(s)$.  A scheduler is
  deterministic if $|\supp(\sched(s))| = 1$ (i.e. $\sched(s)$ is Dirac) for
  every $s \in S$.  We refer to deterministic schedulers as \emph{schedulers}.
\end{definition}
We denote the set of randomised schedulers with $\RSched$, and (deterministic) schedulers
with $\Sched$.

  For pMDP $\mdp = \pmdptuple$ and $\sched \in \RSched_\mdp$, the
  \emph{induced pMC} $\mdp^\sched$ is defined as $(S,X, \sinit, P')$ with $P'(s,s') =
  \sum_{\act \in \Act} \sched(s)(\act) \cdot P(s,\act,s')$. 
For simple pMDPs, the induced pMC of a deterministic scheduler
is simple. Under randomised schedulers, the induced pMC can be transformed
into a simple pMC (e.g.\ \cite{DBLP:conf/uai/Junges0WQWK018}). We abbreviate
$\Pr_{\mdp^\sched}$ by $\Pr^\sched_{\mdp}$.

\begin{remark} \label{rem:detsched}
 Deterministic schedulers dominate
  randomised schedulers for reachability properties~\cite{puterman05}, i.e.\ for each MDP there exists a
  deterministic scheduler $\sched$ s.t.\ $\Pr^\sched_\mdp(\lozenge T)
  = \sup_{\sched' \in \RSched} \Pr^{\sched'}_\mdp(\lozenge T)$. 
  Therefore, in the remainder, we focus on deterministic
  schedulers.
\end{remark}

\begin{definition}[Solution function]
  For a pMC $\mdp$ and a state $s$, let the \emph{solution function}
  $\sol^{\mdp,T}_s\colon  \ParamSpace^{\mathrm{wd}}_\mdp \rightarrow [0,1]$ be defined as
  $\sol^{\mdp,T}_s[\inst] \coloneqq \Pr_{\mdp[\inst]}(s \rightarrow \lozenge T)$.  For a pMDP
  $\mdp$, let $\minsol_s^{\mdp,T}[\inst] \coloneqq \min_{\sched \in \Sched} \sol^{\mdp^\sched,T}_s[\inst] = 
  \min_{\sched \in \Sched} \Pr^\sched_{\mdp[\inst]}(s \rightarrow \lozenge T)$.  We
  define $\maxsol_s^{\mdp,T}$ analogously as the maximum.
\end{definition}
Let $\sol^{\mdp,T}$ denote $\sol^{\mdp,T}_{\sinit}$ with the convention that $T$
is omitted
whenever it is clear from the context.  On
$\ParamSpace^{\mathrm{gp}}$, $\sol^{\mdp}$ is described by a rational
function over the parameters~\cite{DBLP:conf/ictac/Daws04,DBLP:journals/fac/LanotteMT07}, and is
computable in $\mathcal{O}\left(\textsl{poly}(|S|\cdot d)^{|\Params|}\right)$, where $d$
is the maximal degree of polynomials in $\mdp$'s
transitions~\cite{baiercomplexity}.
The number of resulting monomials is polynomial in $|S|$ and $d$ but exponential
in $|\Params|$.
Furthermore, the degree of $f$ and $g$ in the resulting function $f/g$ is
upper-bounded by $\ell(d)$ --- where $\ell$ is a linear
function.\footnote{Importantly, this means that if the coefficients and
exponents were written in binary for the given pMC then linearly more bits
suffice to do the same for the computed rational function.}
For acyclic pMCs, $\sol^\mdp$ is described by a polynomial.

\subsection{Existential theory of the reals}
Many results in this paper are based on results from the
existential theory of the reals~\cite{BPR06}. We give a brief recap.
We consider the first-order theory of the reals: the set of all valid sentences
in the first-order language $(\mathbb{R},+,\cdot,0,1,<)$. The existential theory
of the reals restricts the language to (purely) existentially quantified
sentences. The complexity of deciding membership, i.e.\ whether a sentence is
(true) in the theory of the reals, is in PSPACE~\cite{DBLP:conf/stoc/Canny88} and NP-hard. A
careful analysis of its complexity is given in
\cite{DBLP:journals/jsc/Renegar92}. In particular, deciding membership for
sentences with an a-priori fixed upper bound on the number of variables is in
polynomial time.  ETR denotes the complexity
class~\cite{DBLP:journals/mst/SchaeferS17} of problems with a polynomial-time
many-one reduction to deciding membership in the existential theory of the
reals.

\section{Problem landscape}
\label{sec:landscape}
In this section, we introduce the family of decision problems of our main interest.  Let a
\emph{simple} pMDP $\mdp$ with all constants rational, and a set $T$ of target states be the given input.
We analyse the decision problems
according to whether the set $\Params$ of parameters from $\mdp$ has bounded
size---with a-priori fixed bound---or arbitrary size.

It remains for us to fix an encoding for rational functions. Henceforth, we
assume the coefficients, exponents and constants are all given as binary-encoded
integer pairs.

\subparagraph*{Decision problems.}
The first problem is the existence of so-called
\emph{robust parameter values} or lack thereof. More precisely, the question is whether
some instantiation of $\mdp$ is such that its maximal or minimal probability
of  eventually reaching $T$ compares with $\frac{1}{2}$ in some desired way. In
symbols, for $\quant_1,\quant_2 \in \{\exists,\forall\}$ and 
$\mathrm{\bowtie} \in
\{\leq, <, >, \geq\}$, let
\[ 
  \quant_1\quant_2\reach^{\bowtie}_{\mathrm{wd}} \stackrel{\mathrm{def}}{\iff}
  \quant_1\;\inst \in \ParamSpace^{\mathrm{wd}},
  \quant_2\;\sched \in \Sched.\;
    \Pr_{\mdp[\inst]}^{\sched}(\lozenge T) \bowtie \frac{1}{2}
\]
be the problem of interest.
We write $\quant_1\quant_2\reach^{\bowtie}_{\mathrm{gp}}$ whenever $\quant_1$ quantifies over graph-preserving instantiations.
We write $\quant_1\quant_2\reach^{\bowtie}_{*}$ to denote both the $\mathrm{wd}$ and $\mathrm{gp}$ variants.
Furthermore, if $\mdp$ is a pMC we omit the second quantifier, e.g. $\exists \reach^{<}_{*}$.
Table~\ref{tab:reach-zoo} surveys the results.

\newcommand{\refsize}[1]{{\scriptsize #1}}
\begin{table}[tb]
\centering
\setlength{\tabcolsep}{5pt}
\begin{tabular}{|ll|c|c|c|}
\hline
           &                      & Fixed \#  &
\multicolumn{2}{c|}{Arbitrary \# parameters} \\
\cline{4-5}
& & parameters &  well-defined & graph-preserving \\
\hline
\hline
\parbox[t]{1mm}{\multirow{3}{*}{\rotatebox[origin=c]{90}{pMC}}} & $\exists\reach^{\geq/\leq}$             & in P~\refsize{\cite{baiercomplexity}}                 &  \multicolumn{2}{c|}{--- ETR-complete \refsize{[Thm.~\ref{thm:etr:pmcsarehard}]} ---}  \\
& $\exists\reach^{>}$             		&     ''             &  NP-hard \refsize{[Thm.~\ref{thm:e_reach_gr_np_hard}]}  &    $\exists\reach^{>}_\mathrm{wd}$-complete \refsize{[Lem.~\ref{lem:gpvswdequiv}]}           \\
& $\exists\reach^{<}$             		&       ''           &  NP-hard \refsize{[Thm.~\ref{thm:e_reach_gr_np_hard}]}                     &    $\exists\reach^{>}_\mathrm{wd}$-complete \refsize{[Lem.~\ref{lem:restrict}]}           \\
\hline
\parbox[t]{1mm}{\multirow{4}{*}{\rotatebox[origin=c]{90}{pMDP}}} & $\exists\exists\reach^{\geq/\leq}$              & in NP \refsize{[Thm.~\ref{lem:fp_ee_reach_in_np}]}       & \multicolumn{2}{c|}{--- ETR-complete --- \refsize{(trivial)}} \\
& $\exists\exists\reach^{>}$      & ''                     & \multicolumn{2}{c|}{--- $\exists\reach^{>}_\mathrm{wd}$-complete \refsize{[Lem.~\ref{lem:existwdgeeqexistwdge}, Cor.~\ref{cor:eeggpeqerggwd}]} ---} \\
& $\exists\exists\reach^{<}$   & ''                     &    $\exists\reach^{<}_\mathrm{wd}$-complete \refsize{[Lem.~\ref{lem:existwdgeeqexistwdge}]}    &   $\exists\reach^{>}_\mathrm{wd}$-hard   \refsize{(trivial)}       \\
& $\exists\forall\reach^{\bowtie}$                    & in NP \refsize{[Thm.~\ref{thm:fp_ea_reach_in_np}]}                    &  \multicolumn{2}{c|}{--- ETR-complete \refsize{[Thm.~\ref{thm:etr:mdpsarehard}]} ---}    \\
\hline 
\end{tabular}
\caption{The complexity landscape for reachability in simple pMDPs. All problems are in ETR.}

\label{tab:reach-zoo}
\end{table}

\begin{proposition}\label{pro:all-etr}
  For every $\quant_1,\quant_2 \in \{\exists,\forall\}$ and 
  $\mathrm{\bowtie} \in \{\leq, <, >, \geq\}$,
  $\quant_1\quant_2\reach^{\bowtie}_{\mathrm{*}}$ are decidable in ETR.
\end{proposition}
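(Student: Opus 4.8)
The plan is to reduce every problem $\quant_1\quant_2\reach^{\bowtie}_{*}$ to deciding the truth of a single polynomial-size \emph{existential} first-order sentence over $(\RR,+,\cdot,0,1,<)$; ETR membership then follows by definition. Unfolding the definitions, the variant with $\quant_1=\exists$ asks whether there is $\inst\in\ParamSpace^{*}$ such that $\quant_2\,\sched.\ \Pr_{\mdp[\inst]}^{\sched}(\lozenge T)\bowtie\frac{1}{2}$, and the set $\ParamSpace^{\mathrm{wd}}$ (resp.\ $\ParamSpace^{\mathrm{gp}}$) is itself cut out by a conjunction of (strict) polynomial inequalities in the parameter variables. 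So two things need work: (i)~expressing ``$\Pr_{\mdp[\inst]}^{\sched}(\lozenge T)\bowtie\frac{1}{2}$'' by an existential formula over fresh real variables, and (ii)~eliminating the scheduler quantifier $\quant_2$.

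For (i) --- the technical heart --- I would first handle a parametric Markov chain, and deliberately \emph{avoid} using the explicit solution function $\sol^{\mdp}$: it is a rational function on $\ParamSpace^{\mathrm{gp}}$, but can have exponentially many monomials in $|\Params|$. Instead I introduce fresh real variables $z_s$ ($s\in S$) and assert the linear Bellman system $z_s=1$ for $s\in T$ and $z_s=\sum_{s'}P(s,s')[\inst]\cdot z_{s'}$ for $s\notin T$. The difficulty is that the reachability probabilities are only the \emph{least} nonnegative solution, and that for merely \emph{well-defined} (not graph-preserving) $\inst$ one does not even know which transitions carry positive probability. I would resolve both with a \emph{ranking witness}: fresh variables $r_s\geq 0$ subject to the constraints that (a)~each $s$ with $r_s>0$ either lies in $T$ or has a transition with $P(s,s')[\inst]>0$ and $0<r_{s'}<r_s$, and (b)~the set $\{s\mid r_s=0\}$ is disjoint from $T$ and closed under positive-probability transitions. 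Constraints (a) and (b) force $\{s\mid r_s>0\}$ to equal \emph{exactly} the set of states of $\mdp[\inst]$ that reach $T$ with positive probability; setting $z_s=0$ on $\{s\mid r_s=0\}$ then makes the restricted Bellman system uniquely solvable, so the $z_s$ are pinned down. Appending $z_{\sinit}\bowtie\frac{1}{2}$ yields a polynomial-size existential formula in $\inst,z,r$ that is correct uniformly over $\mathrm{wd}$ and $\mathrm{gp}$, thereby sidestepping the exponential region split of Remark~\ref{rem:fixedparam}.

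For (ii): if $\quant_2=\exists$, rather than taking a (possibly exponential) disjunction over deterministic schedulers, I would existentially quantify over a \emph{randomised} memoryless scheduler $\sched\colon S\to\Distr(\Act)$ with $\supp(\sched(s))\subseteq\Act(s)$ (variables $\sched(s)(\act)\in[0,1]$ with $\sum_{\act}\sched(s)(\act)=1$), build the induced pMC symbolically via $P'(s,s')=\sum_{\act}\sched(s)(\act)\cdot P(s,\act,s')$, and feed it into (i). This is equivalent to the original statement because the supremum/infimum over randomised schedulers is attained by a deterministic one (Remark~\ref{rem:detsched} and its dual for infima), for every $\bowtie$. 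If $\quant_2=\forall$, then ``$\forall\sched.\ \Pr_{\mdp[\inst]}^{\sched}(\lozenge T)\bowtie\frac{1}{2}$'' is, again by Remark~\ref{rem:detsched}, equivalent to $\minsol_{\sinit}^{\mdp,T}[\inst]\bowtie\frac{1}{2}$ when $\bowtie\in\{\geq,>\}$ and to $\maxsol_{\sinit}^{\mdp,T}[\inst]\bowtie\frac{1}{2}$ when $\bowtie\in\{\leq,<\}$; I would characterise these extremal \emph{values} existentially just as in (i), replacing the linear Bellman equations by the min/max Bellman \emph{optimality} equations (``$v_s\leq\sum_{s'}P(s,\act,s')[\inst]\cdot v_{s'}$ for all $\act$'' conjoined with equality for some $\act$) and adapting the ranking witness to the set of states from which the relevant player can(not) keep the play out of $T$. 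For every variant with $\quant_1=\exists$ this produces a polynomial-size purely existential sentence, hence ETR membership.

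The step I expect to be the main obstacle is (i): obtaining a description of the reachability probability/value that is \emph{simultaneously} polynomial-size, purely existential, and correct over all \emph{well-defined} instantiations, where the induced graph may degenerate; the ranking witnesses are precisely what make this work without an exponential case analysis, and some care is needed to argue that they correctly pin down the least fixed point (for pMCs) and the optimal value (for the $\quant_2=\forall$ case). For $\quant_1=\forall$ I would finally note that $\forall\quant_2\reach^{\bowtie}_{*}$ is the complement of the $\quant_1=\exists$ variant in which $\quant_2$ is swapped ($\exists\leftrightarrow\forall$) and $\bowtie$ is replaced by $\overline{\bowtie}$; the scheduler quantifier there again collapses to an existential block as above, so that variant is in ETR by the preceding paragraphs, and hence every variant is decided by a single polynomial-time reduction to the existential theory of the reals (up to complementing the answer).
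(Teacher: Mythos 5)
Your proposal is correct and follows essentially the same route as the paper's appendix encoding: a polynomial-size existential sentence built from Bellman (in)equalities over fresh value variables, with ranking witnesses $r_s$ (the paper pairs them with Boolean flags $p_s$) to pin down exactly which states reach $T$ with positive probability under well-defined but possibly graph-degenerate instantiations, and with the $\forall$-first variants handled by complementation. The only differences are cosmetic: the paper precomputes the zero-probability state sets by graph algorithms in the graph-preserving case and encodes the $\exists\sched$ choice as a per-state disjunction over actions rather than via randomised-scheduler variables, but both devices yield the same polynomial-size existential formula.
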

\begin{proof}
  Both 
  $\exists\forall\reach^{\bowtie}_\mathrm{*}$ and $\exists\exists\reach^{\bowtie}_\mathrm{*}$  are in ETR (for an encoding, see
  Appendix~\ref{app:fullencoding}). 
  It follows that   $\exists\reach^{\bowtie}_{*}$ are also in ETR.
\end{proof}

\subparagraph*{Problems with fixed threshold.}
In the above-defined problems, we have fixed a threshold of $\frac{1}{2}$. This is no
loss of generality as any \emph{given}  rational threshold can be reduced
to $\frac{1}{2}$:
\begin{remark} \label{rem:fixedlambda}
  An arbitrary threshold $0 < \lambda < 1$, $\lambda \in \QQ$, is reducible to $\frac{1}{2}$ by the
  constructions depicted in Fig.~\ref{fig:fixedlambda}: If $\lambda \leq
  \frac{1}{2}$ then we prepend a transition with probability $p = 2\lambda$ to the
  initial state and with probability $1-p$ to a sink state.  Otherwise, if
  $\lambda > \frac{1}{2}$, we prepend a transition with probability $q =
  2(1-\lambda)$ to the initial state and $1-q$ to the target state.
  Conversely, the $\frac{1}{2}$ threshold may analogously be reduced to an arbitrary
  threshold $0 < \lambda < 1$.
\end{remark}

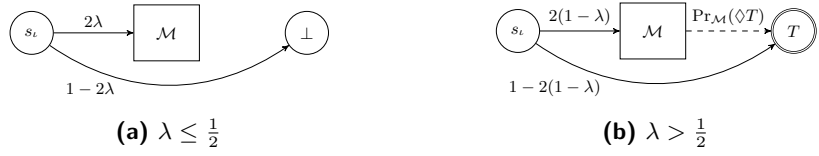
\begin{figure}[tb]
\centering
	\small
	\begin{subfigure}[b]{0.45\textwidth}
	\centering
	\scalebox{0.7}{
		\begin{tikzpicture}[baseline=(start)]
			\node[state] (start) {$\sinit$};
			
			\node[right=1.5cm of start, rectangle, draw, inner sep=12pt, xshift=0mm] (box) {$\mdp$};
			\node[state, right=1.6cm of box] (sink) {$\bot$};
			
			\draw[->] (start) -- node[above] {$2\lambda$} (box);
			\draw[->] (start) edge[bend right=35] node[left,pos=0.3,yshift=-2mm] {$1- 2\lambda$} (sink);
		\end{tikzpicture}
		}
		\subcaption{\centering$\lambda \leq \frac{1}{2}$}
	\end{subfigure}
	\small
	\begin{subfigure}[b]{0.45\textwidth}
	\centering
	\scalebox{0.7}{
		\begin{tikzpicture}[baseline=(start)]
			\node[state] (start) {$\sinit$};
			
			\node[right=1.5cm of start, rectangle, draw, inner sep=12pt, xshift=0mm] (box) {$\mdp$};
			\node[state, accepting, right=1.6cm of box] (T) {$T$};
			
			\draw[->] (start) -- node[above] {$2(1-\lambda)$} (box);
			\draw[->] (start) edge[bend right=35] node[left,pos=0.3,yshift=-2mm] {$1- 2(1-\lambda)$} (T);
			\draw[->,dashed] (box) -- node[above] {$\Pr_{\mdp}(\lozenge T)$} (T);
		\end{tikzpicture}
		}
	\subcaption{\centering$\lambda > \frac{1}{2}$}
	\end{subfigure}
	\caption{Reductions to reachability threshold $\lambda=\frac{1}{2}$, cf.\ Remark~\ref{rem:fixedlambda}.}
	\label{fig:fixedlambda}
\end{figure}

\subparagraph*{Considerations for the comparison relations.}
\begin{restatable}{lemma}{restaterestrictcomparisons}
 \label{lem:restrict}
  For every $\quant_1,\quant_2 \in \{\exists,\forall\}$, 
  there are polynomial-time Karp reductions
  \begin{itemize}
    \item among the problems $\quant_1 \quant_2 \reach^{>}_{\mathrm{gp}}$ and
      $\quant_1 \quant_2 \reach^{<}_{\mathrm{gp}}$ and
    \item among the problems $\quant_1 \quant_2 \reach^{\geq}_{\mathrm{gp}}$ and
      $\quant_1 \quant_2 \reach^{\leq}_{\mathrm{gp}}$.
  \end{itemize}
\end{restatable}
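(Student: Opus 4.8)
The plan is to exploit the fact that on the graph-preserving parameter space we can safely swap each parameter $x$ for its "mirror" $1-x$, and that reachability of a target set can be complemented into reachability of a "sink" set. The first reduction (between $>$ and $<$) rests on the observation that, after making $\mdp$ into a pMC via a fixed scheduler or by quantifying over schedulers, all probability mass from $\sinit$ either reaches $T$ or reaches the set $B$ of states from which $T$ is unreachable; hence $\Pr^{\sched}_{\mdp[\inst]}(\lozenge T) > \tfrac12$ iff $\Pr^{\sched}_{\mdp[\inst]}(\lozenge B) < \tfrac12$. So to reduce $\quant_1\quant_2\reach^{>}_{\mathrm{gp}}$ to $\quant_1\quant_2\reach^{<}_{\mathrm{gp}}$ I would, in polynomial time, compute the set $B$ of states that cannot reach $T$ (a standard graph reachability computation), redirect everything appropriately and declare $B$ the new target; the quantifier structure $\quant_1\quant_2$ is untouched because we do not touch parameters or the scheduler quantifier. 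The reverse direction is symmetric. One subtlety: if $\Pr^{\sched}_{\mdp[\inst]}(\lozenge T)$ can be strictly between $0$ and $1$ without all the "losing" mass being absorbed in $B$ — this is automatically handled once we recall that under any fixed (deterministic, memoryless) scheduler the induced finite Markov chain reaches a bottom SCC almost surely, and a bottom SCC either is contained in $T$-reaching states or in $B$; so the dichotomy "reach $T$ or reach $B$" holds with probability $1$, making the complementation exact.

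For the second reduction (between $\geq$ and $\leq$), I would additionally have to handle the boundary case $\Pr = \tfrac12$, which the strict/non-strict swap above does not by itself resolve. Here the clean trick is to first use Remark~\ref{rem:fixedlambda} to renormalise: reduce $\quant_1\quant_2\reach^{\geq}_{\mathrm{gp}}$ with threshold $\tfrac12$ to a problem with a shifted threshold, then observe that $\Pr^{\sched}_{\mdp[\inst]}(\lozenge T) \geq \tfrac12$ iff $1 - \Pr^{\sched}_{\mdp[\inst]}(\lozenge T) \leq \tfrac12$, i.e.\ iff $\Pr^{\sched}_{\mdp[\inst]}(\lozenge B) \leq \tfrac12$ with $B$ as above. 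Since this is an exact identity of real numbers (no strictness is gained or lost), the same $B$-construction that worked for the strict case works verbatim for the non-strict case, and again leaves $\quant_1,\quant_2$ intact. So both bullets are instances of the same graph transformation: replace the target $T$ by its "co-target" $B = S \setminus \{s : \Pr(s \to \lozenge T) > 0\}$, which is computable in polynomial time and flips $\bowtie$ to $\overline{\bowtie}$ exactly on probabilities summing to $1$.

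A point worth making explicit in the write-up is why we need \emph{graph-preserving} instantiations rather than merely well-defined ones: the set $B$ of states that cannot reach $T$ depends only on the graph of $\mdp[\inst]$, and on $\ParamSpace^{\mathrm{gp}}$ this graph is the same for all instantiations (indeed the same as the graph of $\mdp$ read symbolically). Thus $B$ can be computed once, syntactically, from $\mdp$, and the reduction is genuinely a single polynomial-time graph rewriting independent of $\inst$. If we allowed arbitrary well-defined instantiations the graph could collapse and $B$ would not be well-defined as a static object; this is precisely why the lemma is stated for the $\mathrm{gp}$ variants only. The main obstacle, and the only place where care is genuinely required, is verifying that the "reach $T$ or reach $B$ with probability $1$" dichotomy holds for \emph{every} scheduler simultaneously (so that the reduction is correct under both $\exists$ and $\forall$ over schedulers) — this follows from the bottom-SCC argument above, but it should be spelled out, together with the observation that introducing the fresh sink/target gadget of Remark~\ref{rem:fixedlambda} does not disturb it.
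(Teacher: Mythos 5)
Your overall strategy---complementing the target so that the comparison flips---is the same as the paper's, but your concrete choice of co-target is wrong for pMDPs, and this is a genuine gap rather than a presentational issue. You take $B$ to be the set of states from which $T$ is unreachable (a purely graph-theoretic set, the same for all graph-preserving instantiations), and you claim that under \emph{every} scheduler the run almost surely reaches $T$ or reaches $B$, so that $\Pr^{\sched}_{\mdp[\inst]}(\lozenge T)+\Pr^{\sched}_{\mdp[\inst]}(\lozenge B)=1$. This dichotomy fails: a bottom SCC of the induced chain $\mdp^{\sched}$ that is disjoint from $T$ need not be contained in $B$, because its states may well be able to reach $T$ under a \emph{different} scheduler. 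Concretely, take states $s_0$ (initial), $t\in T$ and a sink $\bot$, with action $a$ at $s_0$ moving to $t$ and to $\bot$ with probability $\tfrac12$ each, and action $b$ looping at $s_0$. Then $B=\{\bot\}$, $\max_{\sched}\Pr^{\sched}(\lozenge T)=\tfrac12$, but $\min_{\sched}\Pr^{\sched}(\lozenge B)=0$ (choose $b$), so your reduction would map a negative instance of $\exists\exists\reach^{>}_{\mathrm{gp}}$ to a positive instance of $\exists\exists\reach^{<}_{\mathrm{gp}}$. Your sentence ``a bottom SCC either is contained in $T$-reaching states or in $B$'' is true but does not yield the dichotomy: being contained in $T$-reaching states of the \emph{MDP} does not mean the run reaches $T$ under the \emph{fixed} scheduler whose bottom SCC it is.

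The paper avoids exactly this trap: it invokes the end-component machinery of Baier--Katoen (Thms.~10.122 and 10.127) to compute, purely from the graph, a set $T'$ that is an adequate union of particular \emph{maximal end components} satisfying $\max_{\sched}\Pr^{\sched}_{\mdp[\inst]}(\lozenge T)=1-\min_{\sched}\Pr^{\sched}_{\mdp[\inst]}(\lozenge T')$ for every graph-preserving $\inst$, and then chains the same equivalences you intend. Your argument does go through for pMCs (where the induced chain's bottom SCCs are fixed and your $B$ coincides with the union of non-target bottom SCCs, provided one treats $T$ as absorbing), and your observation about why graph-preservation is essential is correct and matches the paper's footnote. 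To repair the proof you must replace ``states that cannot reach $T$'' by the end-component-based co-target; the detour through Remark~\ref{rem:fixedlambda} in your second bullet is then unnecessary, since the identity $p\geq\tfrac12\iff 1-p\leq\tfrac12$ already handles the non-strict case once the correct complementation identity is in place.
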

\noindent The above claim only holds when restricted to graph-preserving parameter spaces.
\subparagraph*{Semi-continuity.}
The following theorem formalises an observation in \cite[Thm. 5]{DBLP:conf/uai/Junges0WQWK018}. 
\begin{restatable}{theorem}{restatesemicontinuous}
 \label{thm:semicontinuous}
  For each simple pMC $\mdp$, the function $\sol^\mdp$ is lower semi-continuous, and continuous on $\ParamSpace^\mathrm{gp}_\mdp$. For acyclic simple pMCs $\mdp$, $\sol^\mdp$ is continuous on $\ParamSpace^\mathrm{wd}_\mdp$.
\end{restatable}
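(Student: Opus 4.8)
The plan is to analyze the solution function $\sol^\mdp$ via the standard characterization of reachability probabilities in Markov chains as the least non-negative solution of a linear equation system whose coefficients are the (parametric) transition probabilities. Concretely, partition the states into $S_? $ (states that under some instantiation can reach $T$ with positive probability) and the rest; on $S_?$ the vector $(\sol_s^\mdp)_{s \in S_?}$ is the least fixed point of the monotone operator $\Phi_{\inst}(v)_s = \sum_{s' \in T} P(s,s')[\inst] + \sum_{s' \in S_?} P(s,s')[\inst]\cdot v_{s'}$, equivalently the supremum of the iterates $\Phi_\inst^n(\mathbf{0})$. Each iterate $\inst \mapsto \Phi_\inst^n(\mathbf{0})_s$ is a polynomial in the parameters, hence continuous in $\inst$ on all of $\ParamSpace^\mathrm{wd}$; and since the $P(s,s')[\inst]$ are non-negative for well-defined $\inst$, the sequence of iterates is pointwise non-decreasing in $n$. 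A pointwise supremum of an increasing sequence of continuous functions is lower semi-continuous, which gives the first claim: $\sol^\mdp$ is lower semi-continuous on $\ParamSpace^\mathrm{wd}_\mdp$ (a fortiori at every point, in particular on $\ParamSpace^\mathrm{gp}$).

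For continuity on $\ParamSpace^\mathrm{gp}_\mdp$, I would invoke the result quoted earlier in the preliminaries that on $\ParamSpace^\mathrm{gp}$ the function $\sol^\mdp$ is given by a rational function $f/g$ over the parameters. It remains to check that the denominator $g$ has no zeros on $\ParamSpace^\mathrm{gp}$: this is automatic since $f/g$ agrees there with the solution function, which is finite-valued (in fact in $[0,1]$) everywhere on $\ParamSpace^\mathrm{gp}$, so $g$ cannot vanish at any graph-preserving instantiation (a zero of $g$ that were not cancelled by $f$ would force a pole). A rational function is continuous wherever its denominator is non-zero, yielding continuity on $\ParamSpace^\mathrm{gp}_\mdp$. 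Alternatively, and perhaps more self-containedly, one can argue continuity by combining lower semi-continuity of $\sol^\mdp$ with lower semi-continuity of $1 - \sol^{\mdp,T'}$ where $T'$ is the set of states from which $T$ is unreachable under a graph-preserving instantiation (the "reach a bottom SCC avoiding $T$" probability), noting that on $\ParamSpace^\mathrm{gp}$ the reachable bottom SCCs are fixed, so the two probabilities sum to $1$ and upper semi-continuity of $\sol^\mdp$ follows.

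For the acyclic case, the key observation is that when $\mdp$ is acyclic the fixed-point iteration terminates after finitely many steps (at most $|S|$): $\sol^\mdp = \Phi_\inst^{|S|}(\mathbf{0})$ is exactly one of the polynomial iterates, hence a polynomial in the parameters, hence continuous on all of $\ParamSpace^\mathrm{wd}_\mdp$ — here no graph-preservation is needed because there is no denominator and no risk of a division by zero. I expect the main obstacle to be the continuity-on-$\ParamSpace^\mathrm{gp}$ step in the general cyclic case: one must be careful that "graph-preserving" genuinely pins down which bottom SCCs are reachable and that the solution function does not secretly develop a removable-looking discontinuity at the boundary; making the denominator-nonvanishing argument rigorous (or equivalently the matching upper-semi-continuity argument) is where the real content lies, whereas lower semi-continuity and the acyclic case are routine once the fixed-point picture is set up.
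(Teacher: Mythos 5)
Your proof is correct, and for the lower semi-continuity part it takes a genuinely different---and arguably cleaner---route than the paper. The paper argues l.s.c.\ in two stages: it first shows that every $\inst$ has a neighbourhood on which the set of states with reachability probability zero can only shrink, and then runs a structural induction over the (SCC-contracted) graph, treating target states, zero states and the remaining states separately. You instead observe that $\sol_s^\mdp$ is the pointwise supremum of the increasing sequence of $n$-step reachability probabilities, each of which is a polynomial in the parameters and hence continuous on all of $\ParamSpace^{\mathrm{wd}}_\mdp$; l.s.c.\ then follows from the standard fact that a supremum of continuous functions is lower semi-continuous. This avoids the SCC contraction and the zero-state bookkeeping entirely, and it yields the acyclic case for free, since there the iteration stabilises after $|S|$ steps and $\sol^\mdp$ equals one of the polynomial iterates (matching the paper's corollary that $\sol^\mdp$ is a polynomial on $\ParamSpace^{\mathrm{wd}}_\mdp$ for acyclic pMCs). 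For continuity on $\ParamSpace^{\mathrm{gp}}_\mdp$ both you and the paper rest on the rational-function representation of $\sol^\mdp$; your justification that the denominator cannot vanish (``a zero of $g$ not cancelled by $f$ would force a pole'') is slightly informal, since boundedness alone does not exclude a common zero of $f$ and $g$---the clean statement, which is what the paper's citation provides, is that state elimination produces a denominator that is a product of factors of the form one minus a cycle probability, each strictly positive on graph-preserving instantiations. Your alternative upper-semi-continuity argument via $\sol^{\mdp,T}+\sol^{\mdp,T'}=1$ on $\ParamSpace^{\mathrm{gp}}_\mdp$ is a nice self-contained substitute, but it requires first making $T$ absorbing so that the two reachability events are almost surely disjoint.
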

\begin{wrapfigure}{R}{0.15\textwidth}
\scalebox{0.7}{
\begin{tikzpicture}
	\node[state, initial, initial text=,scale=0.4] (s0) {};
	\node[state, right=of s0, accepting,scale=0.4] (s1) {};
	\draw[->] (s0) edge node[above] {$p$} (s1);
	\draw[->] (s0) edge[loop above] node[left, pos=0.3] {$1-p$} (s0);
\end{tikzpicture}	
}
\caption{}
\label{fig:semicontinuous}
\end{wrapfigure}

\smallskip\noindent Continuity on $\ParamSpace^\mathrm{gp}_\mdp$ follows as $\sol^\mdp$ is a rational function bounded by $[0,1]$ on all
well-defined points~\cite{DBLP:conf/atva/QuatmannD0JK16}. 
Graph non-preserving instantiations might yield additional sink states in the induced MC, therefore, the probability may drop when changing a parameter instantiation, e.g.\ $p=0$ in  Fig.~\ref{fig:semicontinuous}.

The semi-continuity is the main reason that we do not have symmetric entries for
upper and lower bounds in Table~\ref{tab:reach-zoo}. 

\noindent The following result follows
immediately from properties of (semi-)continuous functions.
\begin{corollary} \label{cor:pmdpminexists}
  For all pMDPs, the functions $\minsol^\mdp$ and $\maxsol^\mdp$
  are lower semi-continuous and have a minimum. For acyclic pMDPs, these functions are continuous.
\end{corollary}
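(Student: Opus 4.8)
The plan is to derive the corollary from Theorem~\ref{thm:semicontinuous} by viewing $\minsol^\mdp$ and $\maxsol^\mdp$ as pointwise extrema of the solution functions of the finitely many induced pMCs, and then to invoke compactness of $\ParamSpace^\mathrm{wd}$ for the existence of a minimum. Concretely, fix a simple pMDP $\mdp$ and a target set $T$. For every deterministic (memoryless) scheduler $\sched \in \Sched$ the induced pMC $\mdp^\sched$ is again a simple pMC, it has the same parameter set $\Params$ and hence the same well-defined parameter space $\ParamSpace^\mathrm{wd} = [0,1]^{|\Params|}$, and if $\mdp$ is acyclic then so is $\mdp^\sched$, since the transition graph of $\mdp^\sched$ is a subgraph of that of $\mdp$. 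Theorem~\ref{thm:semicontinuous} therefore gives that each $\sol^{\mdp^\sched,T}$ is lower semi-continuous on $\ParamSpace^\mathrm{wd}$, and continuous on $\ParamSpace^\mathrm{wd}$ when $\mdp$ is acyclic.

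The second ingredient is two elementary topological facts: (i) a pointwise minimum or maximum of \emph{finitely many} lower semi-continuous real-valued functions is again lower semi-continuous, and the same holds with ``continuous'' in place of ``lower semi-continuous''; and (ii) a lower semi-continuous real-valued function on a nonempty compact set attains its infimum. Since $\Sched$ is finite (deterministic memoryless schedulers suffice, cf.\ Remark~\ref{rem:detsched}), the identities $\minsol^{\mdp,T} = \min_{\sched\in\Sched}\sol^{\mdp^\sched,T}$ and $\maxsol^{\mdp,T} = \max_{\sched\in\Sched}\sol^{\mdp^\sched,T}$ are genuine finite extrema, so fact~(i) together with the previous paragraph yields that $\minsol^\mdp$ and $\maxsol^\mdp$ are lower semi-continuous, and continuous when $\mdp$ is acyclic. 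For the existence of a minimum, $\ParamSpace^\mathrm{wd} = [0,1]^{|\Params|}$ is nonempty and compact, so fact~(ii) shows that both $\minsol^\mdp$ and $\maxsol^\mdp$ attain a minimum on $\ParamSpace^\mathrm{wd}$.

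There is no real obstacle here; the statement is a routine consequence of Theorem~\ref{thm:semicontinuous}. The only points that need a moment's care are: that the extremum over schedulers is taken over a \emph{finite} set, which is exactly what is needed for ``$\min$''/``$\max$'' of lower semi-continuous (resp.\ continuous) functions to stay lower semi-continuous (resp.\ continuous); that the domain on which ``has a minimum'' is asserted is the \emph{compact} set $\ParamSpace^\mathrm{wd}$ rather than the open set $\ParamSpace^\mathrm{gp}$ (on which a minimum may fail to exist); and that one should \emph{not} expect a maximum, which is consistent with lower---rather than upper---semi-continuity, as witnessed already by $\sol^\mdp$ for the pMC in Fig.~\ref{fig:semicontinuous}.
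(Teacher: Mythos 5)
Your proof is correct and matches the paper's intent: the paper gives no explicit proof of this corollary, remarking only that it ``follows immediately from properties of (semi-)continuous functions,'' and the argument it has in mind is precisely yours --- $\minsol^\mdp$ and $\maxsol^\mdp$ are finite pointwise extrema of the (lower semi-)continuous functions $\sol^{\mdp^\sched}$ over the finitely many deterministic memoryless schedulers, and a lower semi-continuous function on the compact set $\ParamSpace^\mathrm{wd}$ attains its infimum. Your side remarks (finiteness of $\Sched$ being essential for the min to stay lower semi-continuous, and the asymmetry with maxima) are accurate and consistent with the paper's discussion around Fig.~\ref{fig:semicontinuous}.
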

\begin{corollary}
	For acyclic pMCs, $\sol^\mdp$ is described by a polynomial even on $\ParamSpace^\mathrm{wd}_\mdp$.
\end{corollary}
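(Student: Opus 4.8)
The plan is to exhibit the polynomial explicitly and verify that it computes $\sol^\mdp$ pointwise on \emph{all} of $\ParamSpace^\mathrm{wd}_\mdp$, not only on the graph-preserving region. Concretely: since $\mdp$ is acyclic, fix a topological order on its states (states in $T$ and states without outgoing transitions come last). Define, by backward recursion along this order, a polynomial $p_s \in \QQ[\Params]$ for each state $s$: set $p_s = 1$ if $s \in T$, and $p_s = \sum_{s' \in S} P(s,s') \cdot p_{s'}$ otherwise. Acyclicity guarantees that this recursion is well-founded (each $p_s$ depends only on $p_{s'}$ for states $s'$ strictly later in the order) and that the $p_s$ are genuine polynomials, since $\QQ[\Params]$ is closed under $+$ and $\cdot$. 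This is just the symbolic version of the standard reachability computation, specialised to the acyclic case where no fixed-point iteration is needed.

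It then remains to show $p_s[\inst] = \Pr_{\mdp[\inst]}(s \to \lozenge T)$ for every $\inst \in \ParamSpace^\mathrm{wd}_\mdp$, and not merely for graph-preserving $\inst$. Write $x_s \coloneqq \Pr_{\mdp[\inst]}(s \to \lozenge T)$. In the Markov chain $\mdp[\inst]$ the first-step decomposition gives $x_s = 1$ for $s \in T$ and $x_s = \sum_{s'} P(s,s')[\inst] \cdot x_{s'}$ for $s \notin T$; crucially, because $\mdp$ (hence $\mdp[\inst]$) is acyclic, this second equation holds for \emph{every} non-target $s$, including those from which $T$ is unreachable --- there both sides equal $0$ --- so the usual ``least solution'' caveat does not arise. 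A straightforward induction along the topological order then yields $x_s = p_s[\inst]$: in the inductive step, $x_s = \sum_{s'} P(s,s')[\inst]\, x_{s'} = \sum_{s'} P(s,s')[\inst]\, p_{s'}[\inst] = \bigl(\sum_{s'} P(s,s')\, p_{s'}\bigr)[\inst] = p_s[\inst]$, using the induction hypothesis on each successor $s'$. Taking $s = \sinit$ gives $\sol^\mdp = p_{\sinit}$ on all of $\ParamSpace^\mathrm{wd}_\mdp$.

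For \emph{simple} acyclic pMCs there is also a quicker route: by the result cited before the statement, $\sol^\mdp$ agrees with some polynomial $p$ on $\ParamSpace^\mathrm{gp}_\mdp = (0,1)^{|\Params|}$; Theorem~\ref{thm:semicontinuous} makes $\sol^\mdp$ continuous on $\ParamSpace^\mathrm{wd}_\mdp = [0,1]^{|\Params|}$; $p$ is continuous everywhere; and $(0,1)^{|\Params|}$ is dense in $[0,1]^{|\Params|}$, so the two continuous functions, coinciding on a dense subset, coincide. I would mention both arguments but favour the direct computation, since it needs neither simplicity nor the semi-continuity theorem. The only point requiring genuine care --- the closest thing to an obstacle --- is the observation that acyclicity is exactly what licenses using the one-step equations for \emph{all} non-target states and reading off the polynomial by plain substitution; in a cyclic pMC the analogous symbolic computation produces a rational function, and pointwise agreement with $\sol^\mdp$ can really fail at graph-non-preserving instantiations (cf.\ the drop at $p = 0$ in Fig.~\ref{fig:semicontinuous}).
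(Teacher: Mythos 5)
Your proposal is correct. The paper gives no explicit proof of this corollary; it is meant to follow immediately from the two facts stated just before it, namely that $\sol^\mdp$ is a polynomial for acyclic pMCs (on $\ParamSpace^\mathrm{gp}_\mdp$) and that, by Theorem~\ref{thm:semicontinuous}, $\sol^\mdp$ is continuous on $\ParamSpace^\mathrm{wd}_\mdp$ in the acyclic case --- i.e.\ exactly your ``quicker route'' via density of $(0,1)^{|\Params|}$ in $[0,1]^{|\Params|}$. Your preferred direct argument is not really a departure either: it is the same backward structural induction that the appendix uses to prove the acyclic case of Theorem~\ref{thm:semicontinuous} ($\sol_s^\mdp = \sum_{s'} P(s,s')\cdot\sol_{s'}^\mdp$, induction along the acyclic graph), only pushed one step further to read off the polynomial identity rather than mere continuity. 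What the direct route buys you is that it needs neither simplicity nor the semi-continuity machinery, so it covers non-simple acyclic pMCs with arbitrary well-defined parameter spaces, where the density argument could fail. One small imprecision: the one-step equation $x_s = \sum_{s'} P(s,s')[\inst]\,x_{s'}$ holds for every non-target state in \emph{any} MC, cyclic or not; what acyclicity actually buys is that these equations have a unique solution, determined by the topological order, so the induction closes. This does not affect the validity of your proof.
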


\section{Fixing the number of parameters}
In this section, we assume that the number of parameters is fixed. We focus
ourselves on graph-preserving instantiations, as the analysis of pMDP $\mdp$ and
$\ParamSpace^\mathrm{wd}_\mdp$ corresponds to analysing constantly many pMDPs $\mdp'$ on
$\ParamSpace^\mathrm{gp}_{\mdp'}$, cf.\ Rem.~\ref{rem:fixedparam}.

\subparagraph*{Upper bounds.}
Below, we establish NP membership for all variants.
\begin{lemma}
In the fixed parameter case, $\exists\exists\reach^{\bowtie}_{*}$ is in NP.
\label{lem:fp_ee_reach_in_np}
\end{lemma}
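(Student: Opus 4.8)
The plan is to exploit the fact that with a fixed number of parameters, the solution function of a pMC is a rational function computable in polynomial time, and then guess both a scheduler and (a polynomial-size certificate of) a parameter instantiation. Concretely, the problem $\exists\exists\reach^{\bowtie}_{*}$ asks whether there exist an instantiation $\inst$ (well-defined or graph-preserving) and a scheduler $\sched$ with $\Pr^{\sched}_{\mdp[\inst]}(\lozenge T) \bowtie \frac12$. Since deterministic memoryless schedulers suffice (Remark~\ref{rem:detsched}), the NP machine first nondeterministically guesses a deterministic scheduler $\sched \in \Sched$; this costs only polynomially many bits, namely $\log|\Act|$ per state. Fixing $\sched$ turns $\mdp$ into the induced simple pMC $\mdp^{\sched}$, and the question collapses to: does there exist $\inst \in \ParamSpace^{*}_{\mdp^\sched}$ with $\sol^{\mdp^\sched}[\inst] \bowtie \frac12$?

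For the remaining existential-over-instantiations part I would first compute the solution function $\sol^{\mdp^\sched} = f/g$ explicitly. By the discussion following the definition of the solution function, with $|\Params|$ fixed this computation runs in time $\mathcal{O}(\mathrm{poly}(|S|\cdot d)^{|\Params|})$, which is polynomial, and the resulting numerator and denominator have polynomially many monomials with only linearly larger bit-lengths than the input. It then remains to decide whether $\exists \inst$ in the appropriate (constant-dimensional) region with $f[\inst]/g[\inst] \bowtie \frac12$. This is a feasibility question for a system of polynomial (in)equalities in a \emph{fixed} number of variables: the region constraints are the box constraints $x\in[0,1]$ (well-defined) or $x\in(0,1)$ (graph-preserving) on each parameter, the threshold constraint is $2f[\inst] \bowtie g[\inst]$ (after clearing the positive denominator $g$, which is positive on $\ParamSpace^{\mathrm{gp}}$ and we must be a little careful about its sign on $\ParamSpace^{\mathrm{wd}}$), and in the graph-preserving / well-defined cases one can handle the boundary via Remark~\ref{rem:fixedparam}, reducing to constantly many pMCs each considered on their graph-preserving space. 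By Renegar's results cited in the preliminaries, deciding the existential theory of the reals with an a-priori fixed number of variables is in polynomial time, so this feasibility test can actually be carried out deterministically in polynomial time once $\sched$ is fixed.

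Putting the pieces together: the NP algorithm guesses $\sched$, deterministically computes $\sol^{\mdp^\sched}$, and deterministically decides the fixed-dimension real feasibility query; it accepts iff the query is satisfiable. Correctness in both directions follows because deterministic memoryless schedulers are optimal for reachability in the induced MDP and because $\sol^{\mdp^\sched}[\inst] = \Pr^{\sched}_{\mdp[\inst]}(\lozenge T)$ by definition. I expect the main obstacle to be the careful bookkeeping at the well-defined boundary: on $\ParamSpace^{\mathrm{wd}}$ the denominator $g$ of the solution function need not stay positive (graph-non-preserving points can introduce sinks and degeneracies), so clearing denominators naively is unsound. The clean fix is to invoke Remark~\ref{rem:fixedparam} and split $\ParamSpace^{\mathrm{wd}}$ into the $3^{|\Params|}$ (a constant, since $|\Params|$ is fixed) graph-consistent regions, recompute the solution function as a genuine rational function with positive denominator on each, and run the polynomial-time real-feasibility check on each region, accepting if any succeeds — all still within deterministic polynomial time after the single nondeterministic guess of $\sched$.
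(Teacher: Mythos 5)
Your proposal is correct and follows essentially the same route as the paper: nondeterministically guess a deterministic memoryless scheduler, form the induced pMC, and verify it deterministically in polynomial time using the fixed-parameter solution-function computation (the paper delegates this last step to the cited result of Hutschenreiter et al.\ rather than spelling out the Renegar-based feasibility check and the split into graph-consistent regions, but the content is the same). No gaps.
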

\begin{proof}
Guess a memoryless scheduler. 
Construct the induced pMC, and verify it in P.
\end{proof}
The main result in this paragraph is:
\begin{restatable}{theorem}{restatefpexfainnp}
In the fixed parameter case, $\exists\forall\reach^{\bowtie}_{*}$ is in NP.
\label{thm:fp_ea_reach_in_np}
\end{restatable}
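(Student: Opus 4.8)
The plan is to reduce the problem to deciding membership in the existential theory of the reals with a fixed number of variables, which is decidable in polynomial time by~\cite{DBLP:journals/jsc/Renegar92}, but with the twist that the scheduler must be existentially guessed in a way that still keeps the variable count bounded. First I would observe that $\exists\forall\reach^\bowtie_\mathrm{gp}$ asks whether there exists a graph-preserving instantiation $\inst$ such that \emph{all} schedulers $\sched$ satisfy $\Pr^\sched_{\mdp[\inst]}(\lozenge T)\bowtie\frac12$. For $\unlhd\in\{\leq,<\}$ this is equivalent to $\maxsol^\mdp[\inst]\bowtie\frac12$ and for $\unrhd\in\{\geq,>\}$ to $\minsol^\mdp[\inst]\bowtie\frac12$; by Lemma~\ref{lem:restrict} it suffices to treat one direction, say the $\unlhd$ case with $\maxsol$. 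The key point is that $\maxsol^\mdp$ is \emph{not} directly a rational function, but on any fixed graph-consistent region the optimal value is attained by a deterministic memoryless scheduler (Rem.~\ref{rem:detsched}), and there are only finitely many of these. So the naive approach would enumerate all $|\Act|^{|S|}$ schedulers, but that is exponential.

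The trick to stay in NP is to \emph{nondeterministically guess} a deterministic memoryless scheduler $\sched$ first, then verify in polynomial time that (i) $\Pr^\sched_{\mdp[\inst]}(\lozenge T)\unlhd\frac12$ is satisfiable over graph-preserving $\inst$ \emph{and} (ii) $\sched$ is actually optimal, i.e. no other scheduler does better at that witnessing $\inst$. Step (i) alone is an instance of $\exists\reach^\unlhd_\mathrm{gp}$ for the induced pMC $\mdp^\sched$ with a fixed number of parameters, which is in P by~\cite{baiercomplexity} — except that we need the \emph{same} $\inst$ to also witness optimality. The clean way to handle both at once is to write a single existential sentence over the reals: there exist $\inst$ (the $|\Params|$ parameter values, a fixed number), there exist the $|S|$ reachability values $v_s$ of $\mdp^\sched[\inst]$, such that the $v_s$ satisfy the linear-in-$v$ Bellman equations for scheduler $\sched$ (whose coefficients are polynomials in $\inst$ of fixed degree), the graph-preservation constraints $0<P(s,a,s')[\inst]<1$ hold, additionally for \emph{every} state $s$ and \emph{every} action $a\in\Act(s)$ the Bellman \emph{optimality} inequality $v_s\geq\sum_{s'}P(s,a,s')[\inst]\cdot v_{s'}$ holds (a polynomial number of polynomial inequalities of fixed degree), and finally $v_{\sinit}\unlhd\frac12$. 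The optimality inequalities pin $v_s=\maxsol^\mdp_s[\inst]$, so the sentence is true iff there is a graph-preserving $\inst$ with $\maxsol^\mdp[\inst]\unlhd\frac12$. Crucially the number of existentially quantified variables is $|\Params|+|S|$, but $|\Params|$ is fixed while $|S|$ is not — so this sentence alone is \emph{not} in P.

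This is where the guessed scheduler re-enters and where the main obstacle lies: we must eliminate the $|S|$ value-variables. The point is that once $\sched$ is fixed, the values $v_s$ of $\mdp^\sched[\inst]$ are the unique solution of a linear system (after collapsing the bottom SCCs that cannot reach $T$, which the graph structure fixes since $\inst$ is graph-preserving), hence each $v_s$ equals a \emph{rational function} $g_s(\inst)/h(\inst)$ in the parameters, computable by~\cite{baiercomplexity} in time $\mathrm{poly}(|S|\cdot d)^{|\Params|}$ — polynomial, since $|\Params|$ is fixed — and of degree linear in $d$. Substituting these rational functions for the $v_s$ into the optimality inequalities and into $v_{\sinit}\unlhd\frac12$ (clearing the denominator $h(\inst)$, whose sign is fixed and positive on the graph-consistent region) yields a purely existential sentence over the reals in only the $|\Params|$ parameter variables, with polynomially many polynomial (in)equalities of polynomially bounded degree and bit-size. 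By Renegar's result this is decidable in polynomial time. The overall algorithm: guess $\sched$, compute the solution functions of $\mdp^\sched$, build this fixed-variable ETR sentence, decide it in P; accept iff some guess succeeds. For the $\unrhd/\minsol$ direction one uses the symmetric optimality inequalities $v_s\leq\sum_{s'}P(s,a,s')[\inst]v_{s'}$; for the $\mathrm{wd}$ variant one additionally iterates over the $3^{|\Params|}=O(1)$ graph-consistent regions via Rem.~\ref{rem:fixedparam}. I expect the only genuinely delicate point to be arguing that the optimality inequalities correctly characterise $\maxsol$ even when the guessed $\sched$ is suboptimal — they force $v_s\geq\max_a(\dots)$, and since $v$ is required to be the fixed-point of $\sched$ it also satisfies $v_s=(\dots$ for $\sched(s))\leq\max_a(\dots)$, so equality holds everywhere and $\sched$ is forced to be optimal, which is exactly what makes the guess verifiable.
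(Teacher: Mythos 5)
Your proposal is correct and follows essentially the same route as the paper: guess a deterministic memoryless scheduler, compute the rational solution functions of the induced pMC (polynomial time for fixed $|\Params|$), and verify in a single fixed-variable existential sentence over the reals that some (graph-preserving) instantiation simultaneously satisfies the Bellman optimality inequalities (so the guessed scheduler is somewhere optimal there) and the threshold at the initial state, handling $\mathrm{wd}$ via the $3^{|\Params|}$ graph-consistent regions. This is precisely the paper's combination of Lemma~\ref{lem:fp_check_sched_opt_in_np} (the encoding $\Phi_\sigma$ with cleared denominators) with the added conjunct on $v_{\sinit}$, so no further comparison is needed.
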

In the non-parametric case, a scheduler $\sigma$ of an MDP is called \emph{minimal} if it minimises $\Pr^\sched(\lozenge T)$, i.e. if
$\sigma \in \text{argmin}_{\sched' \in \Sched}\; \Pr^{\sched'}(\lozenge T).$
Consider the probabilities $x_s = \Pr^{\sigma}(s \rightarrow \lozenge T)$ for $s \in S$.
It is well-known (see, e.g.,~\cite{puterman05}) that $\sigma$ is minimal if and only if
$
	x_s \leq \sum_{s' \in S}P(s,\act,s')\cdot~x_{s'}
$
 holds for all $s \in S$ and $\act \in \Act(s)$. (There is a similar condition
for \emph{maximal} schedulers.) The minimality criterion can be lifted to
the parametric case: 
Suppose $R \subseteq \ParamSpace^\mathrm{wd}$ is a
graph-consistent region and let $f_s = \sol^{\mdp^\sched}_s$. 
Then $\sigma$ is \emph{somewhere} minimal on $R$ if and only if there exists some $\inst \in R$ such that
\begin{equation} 
	\label{eq:parametricmincrit}
	f_s[\inst] \leq \sum_{s' \in S}P(s,\act,s')\cdot f_{s'}[\inst]
\end{equation}
for all $s \in S$ and $\act \in \Act$. (For \emph{everywhere} minimal
strategies, a universal quantification over $\inst$ yields the correct
criterion).

\begin{restatable}{lemma}{restatefpchecksched}
	\label{lem:fp_check_sched_opt_in_np}
	In the fixed parameter case, checking whether a given strategy is somewhere (resp. everywhere) minimal (resp. maximal) on $\ParamSpace^\mathrm{wd}$ is in P.
\end{restatable}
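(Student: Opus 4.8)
The plan is to reduce the question to deciding constantly many sentences of the existential theory of the reals, each over the a-priori fixed number $k \coloneqq |\Params|$ of parameters, and then to invoke that deciding membership in the (even first-order) theory of the reals is in polynomial time when the number of variables is fixed~\cite{DBLP:journals/jsc/Renegar92}.

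First I would reduce the problem over $\ParamSpace^\mathrm{wd}$ to the graph-preserving case. By Remark~\ref{rem:fixedparam}, $\ParamSpace^\mathrm{wd}$ decomposes into $3^{k}$ --- a constant number --- of graph-consistent regions $R$, each of which is (a copy of) $\ParamSpace^\mathrm{gp}_{\mdp_R}$ for a simple pMDP $\mdp_R$ computable in linear time. The given strategy $\sigma$ carries over unchanged to each $\mdp_R$, and $\sigma$ is somewhere minimal on $\ParamSpace^\mathrm{wd}$ iff it is somewhere minimal on $\ParamSpace^\mathrm{gp}_{\mdp_R}$ for some $R$, while $\sigma$ is everywhere minimal on $\ParamSpace^\mathrm{wd}$ iff it is everywhere minimal on $\ParamSpace^\mathrm{gp}_{\mdp_R}$ for every $R$. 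Hence it suffices to decide somewhere/everywhere minimality of a fixed strategy in a fixed simple pMDP over its graph-preserving space, and then take the disjunction/conjunction over the constantly many regions.

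Next I would compute, for the pMC $\mdp^\sigma$ and every state $s$, the solution function $f_s = \sol^{\mdp^\sigma}_s = g_s/h_s$ as a rational function over $\Params$; since $k$ is fixed this takes polynomial time, and by the degree and bit-size bounds on solution functions recalled in the preliminaries the polynomials $g_s,h_s$ are of polynomial size. As the graph-preserving space of a simple pMDP is the connected box $(0,1)^{k}$, on which no $h_s$ vanishes, I may normalise each $h_s$ to be positive there; setting $H \coloneqq \prod_{s\in S} h_s > 0$ and multiplying the parametric minimality criterion~\eqref{eq:parametricmincrit} for a pair $(s,\act)$ by $H$ turns it into a single polynomial inequality $q_{s,\act} \le 0$ of polynomial size. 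Then ``$\sigma$ is somewhere minimal'' is exactly satisfiability of
\[
  \exists\,\inst\colon \Params \to \RR.\ \Bigl(\,\textstyle\bigwedge_{x\in\Params} 0 < \inst(x) < 1\,\Bigr)\ \wedge\ \textstyle\bigwedge_{s\in S}\bigwedge_{\act\in\Act(s)} q_{s,\act}[\inst] \le 0,
\]
an ETR sentence of polynomial size over exactly the $k$ variables $\Params$, hence decidable in P. For everywhere minimality I would instead observe that $\sigma$ fails to be everywhere minimal iff the analogous sentence with $\bigvee_{s,\act} q_{s,\act}[\inst] > 0$ in place of the conjunction is satisfiable --- again ETR over $k$ variables, decidable in P, and P is closed under complementation. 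The maximal cases are symmetric, using $\ge$ in place of $\le$ in the criterion.

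The only genuinely nontrivial ingredient is the cited polynomial-time decidability of the theory of the reals in fixed dimension; the rest is bookkeeping. The two points requiring care are that clearing denominators with $H$ is sound --- which rests on the connectedness of the graph-preserving region, ensuring each $h_s$ has constant sign there --- and that all polynomials produced stay of polynomial size, which follows from the computability and degree bounds for solution functions together with the constancy of $k$.
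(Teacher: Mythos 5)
Your proposal is correct and follows essentially the same route as the paper: partition $\ParamSpace^{\mathrm{wd}}$ into the $3^{|\Params|}$ graph-consistent regions of Remark~\ref{rem:fixedparam}, compute the solution functions $g_s/h_s$ per region in polynomial time, clear denominators (using that each $h_s$ has constant sign on the connected graph-preserving box) to obtain the polynomial formula $\Phi_\sigma$ of~\eqref{eq:optimalstratencoding}, and decide the resulting fixed-variable ETR sentence in polynomial time via Renegar. Your explicit treatment of the everywhere case by complementation and your conjunctive (rather than implicational) form of the existential query are, if anything, slightly cleaner than the paper's own write-up.
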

\begin{proof}[Proof sketch]
  Condition \eqref{eq:parametricmincrit} can be reformulated as the ETR formula
  with $|X|$ many variables%
  \begin{equation}\label{eq:optimalstratparametric}
    \Psi = \exists \inst\colon \Phi_R(\inst) \longrightarrow \Phi_\sigma(\inst)
  \end{equation}
	where $\Phi_R(\inst)$ is a formula which is true if and only if $\inst \in R$ and
  \begin{equation} \label{eq:optimalstratencoding}
    \Phi_\sigma(\inst) = \bigwedge_{s \in S} \bigwedge_{\act \in \Act} \bigg(g_s[\inst] \cdot \prod_{s' \neq s}h_{s'}[\inst]\; \leq \;\sum_{s' \in S}P(s, \act, s') \cdot g_{s'}[\inst] \cdot \prod_{s'' \neq s'}h_{s''}[\inst]\bigg)
  \end{equation}
where $\nicefrac{g_s}{h_s} = f_s$ for $g_s, h_s\in \QQ[\inst]$. (W.l.o.g.\ it holds that $h_s[\inst] > 0$ for all $\inst \in R$.) 
\end{proof}

\begin{proof}[Proof sketch of Thm.~\ref{thm:fp_ea_reach_in_np}]
Consider $\bowtie\;=\;\geq$: Guess a somewhere minimal scheduler. Check its minimality similar to Lem.~\ref{lem:fp_check_sched_opt_in_np}, but extended to simultaneously ensure that the induced pMC satisfies the threshold. The other relations in $\bowtie$ are analogous.
\end{proof}

\subparagraph*{Sets of optimal schedulers.}
For the problems $\forall\forall\reach^{\bowtie}_{*}$ and $\forall\exists\reach^{\bowtie}_{*}$ (with fixed parameters)
we already have coNP-membership (as we considered their complements before).
It is tempting to assume that their NP-membership  can be
established analogous to above, relying on \emph{everywhere} optimal schedulers which, according to Lem.~\ref{lem:fp_check_sched_opt_in_np}, can also be verified in polynomial time.
However, such schedulers do not necessarily exist. What we
need instead is a \emph{set} of somewhere optimal schedulers covering the
entire parameter space---a so called \emph{optimal-scheduler set (OSS)}. 

\begin{definition}[Optimal scheduler set]
	A set $\Omega \subseteq \Sched$ is called an \emph{optimal scheduler set} (OSS) on $R \subseteq \ParamSpace^{\mathrm{wd}}$ if
	\[ 
    \forall \inst \in R, \exists \sched \in \Omega .\; \Pr_{\mdp[\inst]}^\sched(\lozenge T) = \max_{\sched' \in \Sched}\Pr_{\mdp[\inst]}^{\sched'}(\lozenge T),
	\]
	i.e. $\Omega$ contains a maximal scheduler for every point in the region $R$. The notion can be analogously defined for minimal schedulers.
\end{definition}
An OSS of minimal cardinality is called a \emph{minimal optimal scheduler set} (MOSS). 
For many applications it is appropriate to describe a region $R$ via a quantifier-free ETR-formula $\Phi_R$ with $|X|$ free variables such that $Sat(\Phi_R) = R$. 
In that case, we have the following:

\begin{theorem}
\label{thm:fixpar-checkoss}
In the fixed parameter case, checking whether a given $\Omega \subseteq \Sched$ constitutes an OSS on $R = Sat(\Phi_R)$ can be done in time polynomial in the size of $\mdp$, $\Omega$ and $\Phi_R$.
\end{theorem}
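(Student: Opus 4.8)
The plan is to reduce the OSS test, for each of the $3^{|\Params|}$ (hence, with $|\Params|$ fixed, constantly many) graph-consistent regions into which $\ParamSpace^{\mathrm{wd}}$ is partitioned by Remark~\ref{rem:fixedparam}, to a single \emph{unsatisfiability} query in the existential theory of the reals over only the $|\Params|$ parameter variables, which is decidable in polynomial time since the number of variables is fixed. The reduction rests on the observation that $\Omega$ is an OSS on $R$ if and only if $R \subseteq \bigcup_{\sched \in \Omega} A_\sched$, where $A_\sched$ is the set of instantiations at which $\sched$ is a maximal scheduler; equivalently, iff the sentence $\exists\,\inst\colon \Phi_R(\inst) \wedge \bigwedge_{\sched \in \Omega} \neg \Phi_{A_\sched}(\inst)$ is unsatisfiable, with $\Phi_{A_\sched}$ a quantifier-free formula such that $Sat(\Phi_{A_\sched}) = A_\sched$. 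One direction of this equivalence is immediate ($\inst \in A_\sched$ means $\sched$ witnesses the OSS condition at $\inst$), and for the other direction one uses that if $\Omega$ covers $\inst$, then some $\sched \in \Omega$ is maximal at $\inst$, i.e.\ $\inst \in A_\sched$.

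It remains to construct the $\Phi_{A_\sched}$ in polynomial time. Fix one graph-consistent region and adjoin its (explicit, quantifier-free) defining formula to $\Phi_R$; on such a region every solution function $f^\sched_s = \sol^{\mdp^\sched}_s$ is a rational function $g^\sched_s / h^\sched_s$ with $h^\sched_s > 0$, and, since the number of parameters is fixed, all the $g^\sched_s, h^\sched_s$ are computable in polynomial time with polynomially many bits (cf.\ the discussion after the definition of the solution function, and the proof of Lemma~\ref{lem:fp_check_sched_opt_in_np}). By the maximal analogue of the parametric optimality criterion~\eqref{eq:parametricmincrit} (``$\leq$'' becoming ``$\geq$''), $\sched$ is maximal at $\inst$ iff $f^\sched_s[\inst] \geq \sum_{s' \in S} P(s,\act,s') \cdot f^\sched_{s'}[\inst]$ for all $s \in S$ and $\act \in \Act$; clearing the positive denominators exactly as in~\eqref{eq:optimalstratencoding} turns this into a quantifier-free polynomial formula $\Phi_{A_\sched}$ over the $|\Params|$ parameter variables whose size is polynomial in $\mdp$ and $\Omega$. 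The overall query $\exists\,\inst\colon \Phi_R(\inst) \wedge \bigwedge_{\sched}\neg\Phi_{A_\sched}(\inst)$ then has a fixed number of variables and size polynomial in $\mdp$, $\Omega$ and $\Phi_R$; solving one such query per graph-consistent region yields the claimed polynomial-time algorithm, and the minimal case is obtained symmetrically by flipping the inequality.

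The step I expect to require the most care is pinning down the parametric maximality criterion feeding $\Phi_{A_\sched}$: one genuinely needs to pass to a graph-consistent region first so that the $f^\sched_s$ are bona fide rational functions with sign-definite denominators, one has to treat states that cannot reach $T$ consistently, and --- the most delicate point --- one should be aware that~\eqref{eq:parametricmincrit} characterises optimality of a scheduler \emph{from every state}, not merely from $\sinit$; this uniform notion is the one meant by ``maximal scheduler'', and it is precisely this from-every-state characterisation that is local enough to be captured state-by-state and hence amenable to a single polynomial-size ETR query. Everything else --- the reductions to graph-consistent pieces, clearing denominators, and assembling the final sentence --- is routine bookkeeping.
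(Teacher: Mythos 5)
Your proposal is correct and follows essentially the same route as the paper: build the cleared-denominator optimality formulas $\Phi_\sigma$ per scheduler (per graph-consistent region, as in Lemma~\ref{lem:fp_check_sched_opt_in_np}) and decide unsatisfiability of a single fixed-variable existential sentence asserting that some point of $R$ is covered by no $\sigma \in \Omega$. Note that where the paper writes $\exists \inst\colon \Phi_R(\inst) \rightarrow \bigwedge_{\sigma}\neg\Phi_\sigma(\inst)$, your conjunction $\Phi_R(\inst) \wedge \bigwedge_{\sigma}\neg\Phi_\sigma(\inst)$ is the semantically intended formula (the implication would be trivially satisfiable by any $\inst \notin R$), so your version is if anything the more careful one.
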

\begin{proof}
	For every $\sigma \in \Omega$, we construct the formulas $\Phi_\sigma$ as in \eqref{eq:optimalstratencoding} in polynomial time. Then, we check whether the fixed-parameter ETR-formula
  \[ 
		\exists \inst\colon \Phi_R(\inst) \longrightarrow \bigwedge_{\sigma \in \Omega}\neg\Phi_\sigma(\inst)
  \]
	is unsatisfiable (also in polynomial time). If yes, return true and otherwise false.
\end{proof}

\begin{restatable}{lemma}{restateconpconjecture}
\label{lem:conpconjecture}
	If the size of a MOSS on $\ParamSpace^{\mathrm{wd}}$ is polynomially bounded for fixed-parameter pMDPs, then $\exists\exists\reach^{\bowtie}_{*}$ and $\exists\forall\reach^{\bowtie}_{*}$ are in coNP.
\end{restatable}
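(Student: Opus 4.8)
The plan is to prove the equivalent statement that, under the polynomial-MOSS hypothesis, the complement problems $\forall\forall\reach^{\bowtie}_{*}$ and $\forall\exists\reach^{\bowtie}_{*}$ lie in NP for every $\bowtie$ and every $*\in\{\mathrm{wd},\mathrm{gp}\}$; the claimed coNP-memberships then follow by complementation (note that $\overline{\bowtie}$ again ranges over all four relations). Since there are only finitely many deterministic schedulers and they dominate randomised ones (Rem.~\ref{rem:detsched}), the inner quantifier collapses: for a fixed $\inst$, ``$\forall\sched.\;\Pr^{\sched}_{\mdp[\inst]}(\lozenge T)\bowtie\tfrac12$'' is equivalent to ``$\minsol^{\mdp}[\inst]\bowtie\tfrac12$'' when $\bowtie\in\{\geq,>\}$ and to ``$\maxsol^{\mdp}[\inst]\bowtie\tfrac12$'' when $\bowtie\in\{\leq,<\}$, and dually ``$\exists\sched.\;(\cdots)$'' becomes a condition on $\maxsol^{\mdp}[\inst]$ or $\minsol^{\mdp}[\inst]$. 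Hence it suffices to place, for every $\bowtie$, the two problems ``$\forall\inst\in\ParamSpace.\;\maxsol^{\mdp}[\inst]\bowtie\tfrac12$'' and ``$\forall\inst\in\ParamSpace.\;\minsol^{\mdp}[\inst]\bowtie\tfrac12$'' into NP, for $\ParamSpace\in\{\ParamSpace^{\mathrm{gp}}_{\mdp},\ParamSpace^{\mathrm{wd}}_{\mdp}\}$. By symmetry I only discuss the $\maxsol$ versions.

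First I would reduce the $\mathrm{wd}$ case to the $\mathrm{gp}$ case. As $|X|$ is fixed, Rem.~\ref{rem:fixedparam} partitions $\ParamSpace^{\mathrm{wd}}_{\mdp}$ into a constant number $3^{|X|}$ of graph-consistent regions, each of which equals the graph-preserving space $\ParamSpace^{\mathrm{gp}}_{\mdp'}$ of a linear-time-constructible simple pMDP $\mdp'$ (with at most $|X|$ parameters) on which $\maxsol$ is preserved. The universal statement over $\ParamSpace^{\mathrm{wd}}_{\mdp}$ thus becomes a conjunction of constantly many universal statements over graph-preserving spaces; moreover a MOSS on $\ParamSpace^{\mathrm{wd}}_{\mdp'}$, which is polynomially bounded by hypothesis, is in particular an OSS on $\ParamSpace^{\mathrm{gp}}_{\mdp'}\subseteq\ParamSpace^{\mathrm{wd}}_{\mdp'}$, so a polynomially sized OSS on $\ParamSpace^{\mathrm{gp}}_{\mdp'}$ exists as well. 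It remains to decide, in NP, ``$\forall\inst\in R.\;\maxsol^{\mdp}[\inst]\bowtie\tfrac12$'' for a fixed-parameter simple pMDP $\mdp$ with $R=\ParamSpace^{\mathrm{gp}}_{\mdp}=Sat(\Phi_R)$ where $\Phi_R(\inst)=\bigwedge_{x\in X}(0<x<1)$.

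The NP procedure guesses an OSS $\Omega\subseteq\Sched$ of maximal schedulers of polynomial size (which exists by the hypothesis) and verifies it in polynomial time in two steps. (i) It checks that $\Omega$ is genuinely an OSS on $R$ via Thm.~\ref{thm:fixpar-checkoss}: unsatisfiability of $\exists\inst\colon\Phi_R(\inst)\longrightarrow\bigwedge_{\sched\in\Omega}\neg\Phi_\sched(\inst)$ (the maximising variant of \eqref{eq:optimalstratencoding}) says precisely that every $\inst\in R$ admits a $\sched\in\Omega$ that is maximal at $\inst$; since $R$ is graph-consistent, each $f_s=\sol^{\mdp^\sched}_s$ is a bona fide rational function $g_s/h_s$ with $h_s>0$ on $R$. (ii) Since $\Omega$ is an OSS we have $\maxsol^{\mdp}[\inst]=\max_{\sched\in\Omega}\sol^{\mdp^\sched}[\inst]$ on $R$, so the target statement is equivalent, for $\bowtie\in\{\geq,>\}$, to the sentence
\[
  \forall\inst\colon\ \Phi_R(\inst)\ \longrightarrow\ \bigvee_{\sched\in\Omega}\big(g_\sched[\inst]\ \bowtie\ \tfrac{1}{2}\,h_\sched[\inst]\big),
\]
and, for $\bowtie\in\{\leq,<\}$, to the same sentence with $\bigwedge$ in place of $\bigvee$, where $g_\sched/h_\sched=\sol^{\mdp^\sched}$ with $h_\sched>0$ on $R$; these rational functions have polynomially many bits and are computable in polynomial time because the number of parameters is fixed. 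The displayed sentence belongs to the first-order theory of the reals, has polynomial length, and uses an a-priori fixed number $|X|$ of variables, hence is decidable in polynomial time. Running this for each of the constantly many regions and taking the conjunction yields the NP algorithm; correctness in both directions is immediate from the OSS property and the case split on $\bowtie$. The $\minsol$ versions are symmetric, using an OSS of minimal schedulers and the minimising form of \eqref{eq:optimalstratencoding}, both of which are covered by Thm.~\ref{thm:fixpar-checkoss} and by the hypothesis.

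The main obstacle is the interplay with semi-continuity (Thm.~\ref{thm:semicontinuous}): on $\ParamSpace^{\mathrm{wd}}$ the functions $\sol^{\mdp^\sched}$ need not be rational and the denominators occurring in \eqref{eq:optimalstratencoding} may vanish, which is exactly why the reduction via Rem.~\ref{rem:fixedparam} to finitely many graph-preserving sub-problems is essential---and also why this result remains conditional on the (open) polynomial bound on MOSS sizes.
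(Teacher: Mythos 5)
Your proof is correct and follows essentially the same route as the paper's: pass to the complements $\forall\forall\reach^{\overline{\bowtie}}_{*}$ and $\forall\exists\reach^{\overline{\bowtie}}_{*}$, guess a polynomially sized OSS, certify it via Thm.~\ref{thm:fixpar-checkoss}, and discharge the remaining universal condition in polynomial time using fixed-parameter decidability. You are in fact somewhat more careful than the paper's terse argument, notably in reducing the well-defined case to constantly many graph-preserving regions and in phrasing the final check as a single fixed-variable first-order sentence with a disjunction over $\Omega$ (which is what the $\forall\exists$ case genuinely requires, since it does not decompose into independent per-scheduler pMC checks).
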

\noindent The proof considers the complement of $\exists\forall\reach^{\bowtie}_{*}$, that is $\forall\exists\reach^{\bar{\bowtie}}_{*}$. 
Under the assumption in the lemma, it now suffices to guess a MOSS and verify $\exists\reach^{\bar{\bowtie}}_{*}$ on the induced pMCs in polynomial time, showing that the complement is in NP.

\noindent In the arbitrary parameter case, we obtain an exponential lower bound on the MOSS size: 
\begin{restatable}{lemma}{restateexponentiallmoss}
\label{lem:exponentialmoss}
There exists a family $(\mdp_n)_{n \in \mathbb{N}}$ of simple pMDPs with $n+2$ states s.t.\ $|\Omega| \geq 2^n$ for any OSS $\Omega$ on $\ParamSpace^{\mathrm{wd}}$, i.e., the size of a MOSS can grow exponentially in the  pMDP's size.
\end{restatable}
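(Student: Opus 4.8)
The plan is to exhibit an explicit, very shallow family of pMDPs in which the optimal local decision at each of $n$ states depends, independently, on that state's own parameter. For $n \geq 1$, let $\mdp_n = \pmdptuple$ with $S = \{s_1,\dots,s_n,T,\bot\}$, $\Params = \{x_1,\dots,x_n\}$, $\Act = \{a,b\}$, $\sinit = s_1$, and target set $\{T\}$; let $T$ and $\bot$ be absorbing (only action $a$, with $P(T,a,T) = P(\bot,a,\bot) = 1$), and, writing $s_{n+1} := T$, for each $i \in \{1,\dots,n\}$ put
\[
  P(s_i,a,s_{i+1}) = x_i,\quad P(s_i,a,\bot) = 1-x_i,\quad P(s_i,b,s_{i+1}) = 1-x_i,\quad P(s_i,b,\bot) = x_i .
\]
Then $\mdp_n$ is a simple pMDP with $n+2$ states, $\ParamSpace^{\mathrm{wd}}_{\mdp_n} = \{\inst\colon\Params\to[0,1]\}$, and since $T$ and $\bot$ carry a single action, a deterministic scheduler is exactly a choice vector $c = (c_1,\dots,c_n) \in \{a,b\}^n$; hence $|\Sched| = 2^n$.

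The next step is to read off the solution functions: an immediate induction along the chain shows that for the scheduler $\sched$ given by $c$,
\[
  \Pr^{\sched}_{\mdp_n[\inst]}(\lozenge T) \;=\; \prod_{i=1}^n p_i, \qquad\text{where } p_i = \inst(x_i) \text{ if } c_i = a, \text{ and } p_i = 1-\inst(x_i) \text{ if } c_i = b .
\]
I would then restrict attention to the $2^n$ instantiations in $V := \{\,\inst\colon\Params\to\{\nicefrac{1}{4},\nicefrac{3}{4}\}\,\} \subseteq \ParamSpace^{\mathrm{gp}}_{\mdp_n} \subseteq \ParamSpace^{\mathrm{wd}}_{\mdp_n}$, and for $\inst \in V$ define the scheduler $\sched_\inst$ by $c_i = a$ iff $\inst(x_i) = \nicefrac{3}{4}$. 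The map $\inst \mapsto \sched_\inst$ is injective, so $\{\sched_\inst \mid \inst \in V\}$ has exactly $2^n$ elements.

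It remains to show that $\sched_\inst$ is the \emph{unique} maximal scheduler at each $\inst \in V$. Fix $\inst \in V$. For every scheduler each factor $p_i$ above lies in $\{\nicefrac{1}{4},\nicefrac{3}{4}\}$ and is strictly maximised by $p_i = \max(\inst(x_i),\,1-\inst(x_i)) = \nicefrac{3}{4}$, the value realised by $\sched_\inst$; since all factors are strictly positive, $\prod_i p_i$ is strictly increasing in each $p_i$, so it is maximised \emph{precisely} when $c$ agrees with $\sched_\inst$ in every coordinate. Hence $\max_{\sched'\in\Sched}\Pr^{\sched'}_{\mdp_n[\inst]}(\lozenge T) = (\nicefrac{3}{4})^n$ with the unique maximiser $\sched_\inst$, and therefore any OSS $\Omega$ on $\ParamSpace^{\mathrm{wd}}_{\mdp_n}$ must contain $\sched_\inst$ for every $\inst \in V$, giving $|\Omega| \geq 2^n$ (in fact $= 2^n$, as there are only $2^n$ schedulers in total). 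The one step that genuinely needs thought is the design of the gadget --- making the $n$ local optima mutually independent so that no single scheduler can serve two of the $2^n$ parameter \emph{sign patterns} at once; once this product-of-biased-coins construction is in place, simplicity, the product formula and the strict-monotonicity argument are all routine.
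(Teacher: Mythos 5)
Your proposal is correct and uses essentially the same construction and argument as the paper: a chain of $n$ states where each scheduler is the unique maximiser at some instantiation, forcing every OSS to contain all $2^n$ schedulers. The only (harmless, in fact slightly strengthening) difference is that you witness uniqueness at interior instantiations $\inst(x_i)\in\{\nicefrac{1}{4},\nicefrac{3}{4}\}$, whereas the paper uses the boundary values $\{0,1\}$, so your argument even establishes the lower bound on $\ParamSpace^{\mathrm{gp}}$.
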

This lemma, and what follows below, consider the unbounded parameter case, i.e., from now on, parameters are part of the input.

\section{The expressiveness of simple pMCs}

We investigate the relation between polynomial inequalities and the $\exists\reach$ problems.
The first lemma in this section is a key ingredient for our complexity analysis later on.
\begin{restatable}[Chonev's trick~{\cite[Remark~7]{Chonev17}}]{lemma}{restatechonevstrick}
  \label{lem:chonevtrick}
  Let $f \in \mathbb{Q}[X]$ be a polynomial, $\mu \in \QQ$ and $0 < \lambda <
  1$. There exists a simple acyclic pMC $\mdp$ with a target state $T$ such that
  for all $\inst \colon X \rightarrow [0,1]$ and all comparison relations
  $\bowtie~\in \{<,\leq,\geq,>,=\}$ it holds that
	\[ 
    f[\inst] \bowtie \mu \Longleftrightarrow \Pr_{\mdp[\inst]}(\lozenge T) \bowtie \lambda.
	\]
  Moreover, if $d$ is the total degree of $f$, $t$ the number of terms in $f$
  and $\kappa$ a bound on the (bit-)size of the coefficients and the thresholds $\mu$,
  $\lambda$, then $\mdp$ can be
  constructed in time $\mathcal{O}(poly(d,t,\kappa))$.
\end{restatable}
\begin{example}
	Consider the inequality ${-}2x^2y + y \geqslant 5$. We reformulate this to:
	$2\cdot\left( (1-x)xy + (1-x)y + (1-y) - 1 \right) + y \geqslant 5$ and then to $2
	\cdot (1-x)xy + 2\cdot (1-x)y + 2\cdot(1-y) + y \geqslant 7$. Observe that both sides now only contain positive coefficients. Furthermore, observe that we wrote the left-hand side as sum of products over $\{ x, 1-x, y, 1-y \}$. After rescaling (with $\frac{1}{8}$), we can construct the pMC $\mdp$ depicted in Fig.~\ref{fig:chonevtrick} and set $\lambda = \frac{7}{8}$.
	\label{ex:chonevtrick}
\end{example}
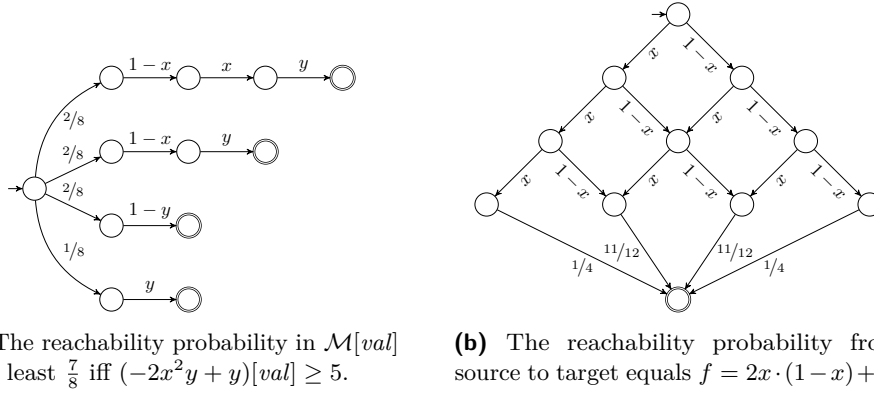
\begin{figure}
\centering
\begin{subfigure}[b]{0.42\textwidth}
\centering
\scalebox{0.7}{
\begin{tikzpicture}[initial text=,scale=0.6]
	\node[state, initial, scale=0.5] (s0) {};
	\node[state, right=of s0,yshift=2.1cm, scale=0.5] (a1) {};
	\node[state, right=of s0,yshift=0.7cm, scale=0.5] (b1) {};
	\node[state, right=of s0,yshift=-0.7cm, scale=0.5] (c1) {};
	\node[state, right=of s0,yshift=-2.1cm, scale=0.5] (d1) {};
	\node[state, right=of a1, scale=0.5] (a2) {};
	\node[state, right=of a2, scale=0.5] (a3) {};
	\node[state, right=of a3, scale=0.5, accepting] (a4) {};
	\node[state, right=of b1, scale=0.5] (b2) {};
	\node[state, right=of b2, scale=0.5,accepting] (b3) {};
	\node[state, right=of c1, scale=0.5,accepting] (c2) {};
	\node[state, right=of d1, scale=0.5,accepting] (d2) {};
	
	\draw[->] (s0) edge[bend left] node[pos=0.75,below,yshift=-1mm] {$\nicefrac{2}{8}$} (a1);
	\draw[->] (s0) edge node[above] {$\nicefrac{2}{8}$} (b1);
	\draw[->] (s0) edge node[above] {$\nicefrac{2}{8}$} (c1);
	\draw[->] (s0) edge[bend right] node[pos=0.75,above,yshift=2mm] {$\nicefrac{1}{8}$} (d1);
	
	\draw[->] (a1) edge node[above] {$1-x$} (a2);
	\draw[->] (a2) edge node[above] {$x$} (a3);
	\draw[->] (a3) edge node[above] {$y$} (a4);
	
	\draw[->] (b1) edge node[above] {$1-x$} (b2);
	\draw[->] (b2) edge node[above] {$y$} (b3);
	
	\draw[->] (c1) edge node[above] {$1-y$} (c2);
	\draw[->] (d1) edge node[above] {$y$} (d2);

\end{tikzpicture}
}

\label{fig:chonevtrick}
\caption{The reachability probability in $\mdp[\inst]$ is at least $\frac{7}{8}$ iff $({-}2x^2y + y)[\inst] \geq 5$.}
\end{subfigure}	
\hspace{5mm}
\begin{subfigure}[b]{0.42\textwidth}
\centering
\scalebox{0.7}{
\begin{tikzpicture}[initial text=, scale=0.6]
	\node[state, initial, scale=0.5] (root) at (0,0) {};
	
	\node[state, scale=0.5] (l) at (-2,-2) {};
	\node[state, scale=0.5] (r) at (2,-2) {};
	
	\node[state, scale=0.5] (ll) at (-4,-4) {};
	\node[state, scale=0.5] (lr) at (0,-4) {};
	\node[state, scale=0.5] (rr) at (4,-4) {};
	
	\node[state, scale=0.5] (lll) at (-6,-6) {};
	\node[state, scale=0.5] (llr) at (-2,-6) {};
	\node[state, scale=0.5] (lrr) at (2,-6) {};
	\node[state, scale=0.5] (rrr) at (6,-6) {};
	
	\node[state,accepting, scale=0.5] (T) at (0,-9) {};
	
	\draw[->] (root) -- node[sloped, anchor=center,below] {$x$} (l);
	\draw[->] (root) -- node[sloped, anchor=center,below] {$1-x$} (r);
	
	\draw[->] (l) -- node[sloped, anchor=center,below] {$x$} (ll);
	\draw[->] (l) -- node[sloped, anchor=center,below] {$1-x$} (lr);
	\draw[->] (r) -- node[sloped, anchor=center,below] {$x$} (lr);
	\draw[->] (r) -- node[sloped, anchor=center,below] {$1-x$} (rr);
	
	\draw[->] (ll) -- node[sloped, anchor=center,below] {$x$} (lll);
	\draw[->] (ll) -- node[sloped, anchor=center,below] {$1-x$} (llr);
	\draw[->] (lr) -- node[sloped, anchor=center,below] {$x$} (llr);
	\draw[->] (lr) -- node[sloped, anchor=center,below] {$1-x$} (lrr);
	\draw[->] (rr) -- node[sloped, anchor=center,below] {$x$} (lrr);
	\draw[->] (rr) -- node[sloped, anchor=center,below] {$1-x$} (rrr);
	
	\draw[->] (lll) -- node[below] {$\nicefrac{1}{4}$} (T);
	\draw[->] (llr) -- node[left] {$\nicefrac{11}{12}$} (T);
	\draw[->] (lrr) -- node[right] {$\nicefrac{11}{12}$} (T);
	\draw[->] (rrr) -- node[below] {$\nicefrac{1}{4}$} (T);
\end{tikzpicture}
}
\subcaption{The reachability probability from source to target equals $f = 2x\cdot(1-x)+\frac{1}{4}$.}
\label{fig:handelmanex}
\end{subfigure}
\caption{Examples for the strong connection between polynomial (inequalities) and pMCs. Transitions to the sink are not depicted for conciseness.}

\end{figure}
\noindent Checking a bound on a given polynomial over $X$ thus is equivalent to
checking a bound on a reachability probability in a simple acyclic pMC over $X$.
For the fixed parameter case, this gives rise to the following equivalence
relating arbitrary pMCs to simple acyclic pMCs.
\begin{restatable}{theorem}{restateacyclicsuffices}
\label{thm:simpleacyclicsuffices}
For any non-simple pMC $\mdp$ with $\ParamSpace^{\mathrm{gp}}_\mdp = (0,1)^{\Params}$ there exists a simple acyclic pMC $\mdp'$ such that
	\[ 
  \{ \inst \in \ParamSpace^{\mathrm{gp}}_\mdp \mid \Pr_{\mdp[\inst]}(\lozenge T) \bowtie \lambda  \} = \{ \inst \in \ParamSpace^{\mathrm{gp}}_{\mdp'} \mid \Pr_{\mdp'[\inst]}(\lozenge T) \bowtie \lambda \}.\]
	In the fixed parameter case, $\mdp'$ can be computed in polynomial time. 
\end{restatable}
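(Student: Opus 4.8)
The plan is to turn the question about $\mdp$ into a question about a single polynomial inequality and then invoke Chonev's trick (Lemma~\ref{lem:chonevtrick}), which realises polynomial inequalities as reachability thresholds in simple acyclic pMCs. The one subtlety is that Chonev's trick expects a polynomial, whereas $\sol^\mdp$ is a priori only a rational function, so I would first clear the denominator, after checking that its sign is constant on the relevant region.

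First I would use that on $\ParamSpace^{\mathrm{gp}}_\mdp = (0,1)^\Params$ the solution function $\sol^\mdp$ is described by a rational function $f/g$ with $f,g \in \QQ[\Params]$, that $\deg f, \deg g \leq \ell(d)$ for a linear $\ell$ and the maximal transition-degree $d$, that the coefficient bit-sizes are linear in the input, that the number of monomials of $f$ and $g$ is polynomial in $|S|$ and $d$ for a fixed $|\Params|$, and that $f/g$ is computable in time $\mathcal{O}(\textsl{poly}(|S|\cdot d)^{|\Params|})$ --- hence in polynomial time for a fixed number of parameters. Since $f/g$ \emph{describes} the everywhere-defined function $\sol^\mdp$ on $(0,1)^\Params$, the denominator $g$ has no zero there; being a polynomial it is continuous, and $(0,1)^\Params$ is connected, so $g$ has constant sign on $(0,1)^\Params$. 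Replacing $(f,g)$ by $(-f,-g)$ if necessary, I may assume $g[\inst] > 0$ throughout $(0,1)^\Params$.

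Next, with $\lambda \in \QQ$, $0 < \lambda < 1$ (cf.\ Rem.~\ref{rem:fixedlambda}), and using $g[\inst] > 0$, I would observe that for every $\inst \in (0,1)^\Params$ and every $\bowtie$,
\[
  \Pr_{\mdp[\inst]}(\lozenge T) \bowtie \lambda
  \iff \frac{f[\inst]}{g[\inst]} \bowtie \lambda
  \iff f[\inst] \bowtie \lambda \cdot g[\inst]
  \iff h[\inst] \bowtie 0,
\]
where $h \coloneqq f - \lambda g \in \QQ[\Params]$ inherits from $f$ and $g$ the bounds on degree, number of terms and coefficient bit-size, and is computable in polynomial time for a fixed $|\Params|$. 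Applying Lemma~\ref{lem:chonevtrick} to $h$, with $\mu = 0$ and reachability threshold $\lambda$, yields a simple acyclic pMC $\mdp'$ --- which I may take to have declared parameter set $\Params$ --- with a target state $T$ such that $h[\inst] \bowtie 0 \iff \Pr_{\mdp'[\inst]}(\lozenge T) \bowtie \lambda$ for all $\inst \in [0,1]^\Params$; by the running-time bound of that lemma together with the size bounds on $h$, this $\mdp'$ is constructible in polynomial time in the size of $\mdp$ when $|\Params|$ is fixed. Chaining the equivalences gives $\Pr_{\mdp[\inst]}(\lozenge T) \bowtie \lambda \iff \Pr_{\mdp'[\inst]}(\lozenge T) \bowtie \lambda$ for all $\inst \in (0,1)^\Params$; and since $\mdp'$ is simple, $\ParamSpace^{\mathrm{gp}}_{\mdp'} = (0,1)^\Params = \ParamSpace^{\mathrm{gp}}_\mdp$, so the two sets in the statement coincide, for every $\bowtie$ simultaneously.

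The main obstacle I anticipate is not a single deep step --- the statement really is a composition of the ingredients above --- but rather getting two bookkeeping points exactly right. The first is the sign normalisation of $g$, which is what makes clearing the denominator preserve the direction of each comparison $\bowtie$; here one must be sure that $g$ has no pole on the open hypercube, and the hypothesis $\ParamSpace^{\mathrm{gp}}_\mdp = (0,1)^\Params$ together with connectedness is exactly what guarantees this (and is also why the two graph-preserving parameter spaces end up equal). The second is the size accounting --- degree, number of terms, coefficient bit-size --- along the passage from $f,g$ to $h$ and then to $\mdp'$, which is what makes Chonev's trick run in polynomial time whenever $|\Params|$ is fixed; for arbitrarily many parameters only the existence part survives, since the number of monomials of $f/g$ is then exponential in $|\Params|$.
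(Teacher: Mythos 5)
Your proof is correct and follows essentially the same route as the paper's: compute the rational function $\nicefrac{h}{g}$ describing $\sol^\mdp$, use continuity and connectedness/convexity of $(0,1)^\Params$ to fix the sign of $g$, clear the denominator to reduce to a single polynomial inequality, and invoke Lemma~\ref{lem:chonevtrick}; your $h = f - \lambda g \bowtie 0$ is the same transformation as the paper's $f = h - (g-1)\lambda \bowtie \lambda$ up to shifting the threshold, both of which Chonev's trick accommodates. The complexity accounting for the fixed-parameter case also matches the paper's argument.
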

\noindent The proof is constructive: one first computes the (rational function) $\sol_\mdp$, reformulates that as a polynomial constraint, and casts that into a simple acyclic pMC using Lemma~\ref{lem:chonevtrick}.

The goal of the rest of this section is to prove a result which is, in a
sense, a stronger version of Lemma~\ref{lem:chonevtrick}. 
In particular, we want to describe polynomials by $\sol^\mdp$ for an acyclic pMC. 
 We call a polynomial
$f \in \QQ[\Params]$ \emph{adequate} if $0 < f[\inst] < 1$ for all
$\inst\colon \Params \rightarrow (0,1)$ and $0 \leq f[\inst] \leq 1$ for all
$\inst\colon \Params \rightarrow [0,1]$. Note that \emph{$\sol^\mdp$ is an adequate
polynomial if $\mdp$ is both simple and acyclic}, and there is no acyclic pMC $\mdp$ (with a single parameter) such that $\sol_\mdp$ is \emph{not} adequate---except where $\sol_\mdp=0$ or $\sol_\mdp=1$.

\begin{restatable}{theorem}{restatehandelmanconstr} \label{thm:handelmanconstr}
  Let $f \in \mathbb{Q}[x]$ be a (univariate) adequate polynomial. There
  exists a simple acyclic pMC $\mdp$ with a target state $T$ such that $f =
  \sol^\mdp$.
\end{restatable}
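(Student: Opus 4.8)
The plan is to realise $f$ via its Bernstein representation on the layered ``binomial tower'' pMC of the kind illustrated in Fig.~\ref{fig:handelmanex}. For a degree $n$ to be fixed later, build a DAG with states $v_{\ell,k}$ for $0 \le \ell \le n$ and $0 \le k \le \ell$, with $v_{0,0} = \sinit$; for $\ell < n$ put a transition from $v_{\ell,k}$ to $v_{\ell+1,k+1}$ labelled $x$ and one from $v_{\ell,k}$ to $v_{\ell+1,k}$ labelled $1-x$; and from each leaf $v_{n,k}$ put a transition to the target state $T$ with (constant, rational) probability $\beta_k \in [0,1]$ and one to a fresh absorbing sink $\bot$ with probability $1-\beta_k$ (with $T$ and $\bot$ made absorbing). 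This pMC is simple and acyclic (apart from the absorbing $T$ and $\bot$, exactly as in Lemma~\ref{lem:chonevtrick}), and leaf $v_{n,k}$ is reached from $\sinit$ with probability $\binom{n}{k}x^k(1-x)^{n-k}$, so a direct computation gives $\sol^\mdp = \sum_{k=0}^{n}\binom{n}{k}\beta_k\,x^k(1-x)^{n-k}$, the degree-$n$ Bernstein polynomial with coefficient vector $(\beta_k)_k$. Since for fixed $n$ the family $\{x^k(1-x)^{n-k}\}_{k=0}^n$ is a $\QQ$-basis of the polynomials of degree $\le n$, the numbers $\beta_k$ realising a given $f$ are \emph{uniquely} determined by $f$ and are automatically rational; hence the whole theorem reduces to the claim that for \emph{some} $n$ all degree-$n$ Bernstein coefficients of $f$ lie in $[0,1]$.

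To get $\beta_k \ge 0$ I would first strip the boundary roots: write $f = x^a(1-x)^b h$ with $a,b \ge 0$ and $h \in \QQ[x]$ having no root in $[0,1]$; adequacy of $f$ (which gives $f > 0$ on $(0,1)$ and $f \ge 0$ on $[0,1]$) forces $h > 0$ on all of $[0,1]$, including the endpoints. Now Handelman's Positivstellensatz for the polytope $[0,1] = \{x \ge 0,\ 1-x \ge 0\}$ yields $h = \sum_{i,j} c_{ij}\,x^i(1-x)^j$ with finitely many coefficients $c_{ij} \ge 0$, hence $f = \sum_{i,j} c_{ij}\,x^{i+a}(1-x)^{j+b}$ is a nonnegative combination of products $x^p(1-x)^q$. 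Multiplying each such product by a suitable power of $1 = x + (1-x)$ and expanding keeps all coefficients nonnegative, so it homogenises $f$ to the form $\sum_{k=0}^{n_1} e_k\,x^k(1-x)^{n_1-k}$ with every $e_k \ge 0$; by basis uniqueness the degree-$n_1$ Bernstein coefficients of $f$ are $\beta_k^{(n_1)} = e_k/\binom{n_1}{k} \ge 0$. Applying the same argument to $1-f$ (which is adequate as well, since adequacy of $f$ gives $1-f > 0$ on $(0,1)$ and $1-f \ge 0$ on $[0,1]$) produces some $n_2$ with all degree-$n_2$ Bernstein coefficients of $1-f$ nonnegative; elevating both representations to $n := \max(n_1,n_2)$ — degree elevation is once more multiplication by $1 = x+(1-x)$ and preserves nonnegativity of the coefficients — yields, at degree $n$, simultaneously $\beta_k^{(n)}(f) \ge 0$ and $\beta_k^{(n)}(1-f) = 1 - \beta_k^{(n)}(f) \ge 0$, i.e. $\beta_k^{(n)}(f) \in [0,1]$ for all $k$. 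Taking $\beta_k := \beta_k^{(n)}(f)$ in the tower above gives $\sol^\mdp = f$.

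The one genuinely delicate point is that Handelman's theorem requires $h$ to be strictly positive on the \emph{closed} interval, whereas an adequate $f$ may vanish at $0$ and/or at $1$; the factorisation $f = x^a(1-x)^b h$ is precisely what isolates the boundary behaviour so that the Positivstellensatz applies to $h$, while the monomial factors $x^a$, $(1-x)^b$ merely get absorbed into the nonnegative-combination representation. Everything else is routine bookkeeping: the computation of $\sol^\mdp$ on the tower, the rationality of the $\beta_k$ (forced by basis uniqueness over $\QQ$), and the homogenisation/elevation arguments. Note finally that the statement only asserts the \emph{existence} of $\mdp$, so no bound on $n$ (hence none on $|S| = \Theta(n^2)$) is needed; plugging in explicit Handelman degree bounds would in fact give one.
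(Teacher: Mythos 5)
Your proof is correct and follows the same overall strategy as the paper: realise $f$ through its Bernstein coefficients on a binomial-pyramid pMC, obtain nonnegativity of those coefficients from Handelman's theorem (the paper's Lemma~\ref{lem:handelman}), and obtain the upper bound $1$ by playing $f$ against $1-f$ at a common degree. The one place where you genuinely diverge is the boundary case. The paper's Lemma~\ref{lem:binomialrep} assumes strict positivity of $f$ (and, implicitly, of $1-f$) on the closed interval, and for an adequate $f$ vanishing at an endpoint it factors $f = x^e(1-x)^d f'$ and prepends $x$- and $(1-x)$-transitions to the pMC built for $f'$. You instead absorb the factors $x^a(1-x)^b$ back into the nonnegative Handelman combination before homogenising, treat $1-f$ separately in the same way, and only afterwards elevate both to a common degree. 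Your version is actually the more careful one: the paper's cofactor $f'$ is strictly positive on $[0,1]$ but need not satisfy $f' \le 1$ there (e.g.\ $f = x(2-x)$ is adequate with cofactor $f' = 2-x$), so Lemma~\ref{lem:binomialrep} cannot be applied to $f'$ verbatim, whereas your route of deriving nonnegative Bernstein coefficients for $f$ and for $1-f$ directly and then using $\beta_k(f)+\beta_k(1-f)=1$ at the common degree sidesteps this issue. The supporting steps you flag as routine (uniqueness and rationality of the Bernstein coefficients at fixed degree, preservation of nonnegativity under degree elevation, and the path-counting computation of $\sol^\mdp$ on the pyramid) are indeed routine and match the paper.
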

\noindent Our construction of a pMC for some adequate polynomial is based on the following result:

\begin{restatable}[Handelman's theorem \cite{handelman1988representing}]{lemma}{restatehandelmantheorem}
\label{lem:handelman}
  Let $\beta_1 \geq 0,\dots,\beta_\ell \geq 0$ be linear constraints that define a
  compact convex polyhedron $P \subseteq \RR^n$ with interior. If a polynomial
  $\pol \in \RR[x_1,\dots,x_n]$ is strictly positive on $P$, then $\pol$ may be
  written as
  \begin{equation}\label{eq:handelmanrepr} 
    \pol = \sum_{i=1}^k \lambda_ih_i
  \end{equation}
  where $h_i =
  \beta_1^{e_{i,1}}\cdot\ldots\cdot\beta_\ell^{e_{i,\ell}}$ for some natural exponents $e_{i,j}$ and real coefficients $\lambda_i > 0$.
\end{restatable}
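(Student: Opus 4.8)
The plan is to derive Handelman's theorem from P\'olya's classical positivity theorem for homogeneous forms, using the compactness of $P$ to supply the ``bounding'' relation that makes the reduction go through. Recall P\'olya's theorem: if a form $F(y_1,\dots,y_\ell)$ is homogeneous and strictly positive on $\{y \geq 0,\; y \neq 0\}$, then for all sufficiently large $N$ the form $(\sum_j y_j)^N \cdot F$ has exclusively positive coefficients. The entire difficulty is to massage the hypothesis---$\pol > 0$ on the polytope $P$ cut out by the \emph{given} affine forms $\beta_j$---into an instance where P\'olya applies, in such a way that pulling the resulting identity back through the substitution $y_j \mapsto \beta_j$ yields a representation using \emph{only} the $\beta_j$, with no auxiliary bounding constraint adjoined.

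First I would extract the key structural consequence of compactness. Writing $\beta_j(x) = a_j\cdot x + b_j$, the boundedness of the nonempty, full-dimensional $P$ is equivalent to the linear parts $a_j$ positively spanning $(\RR^n)^*$. Hence there exist strictly positive reals $\mu_j > 0$ with $\sum_j \mu_j a_j = 0$, which yields the polynomial \emph{identity} $\sum_j \mu_j \beta_j = c$ for a constant $c > 0$. This is the crucial point: the simplex relation ``$\sum_j y_j = \text{const}$'' is already built into the $\beta_j$, so I never have to introduce an extra generator. Rescaling to $y_j := \mu_j \beta_j$, the map $\Lambda(x) = (\beta_1(x),\dots,\beta_\ell(x))$ sends $P$ into the scaled simplex $\{y \geq 0,\ \sum_j y_j = c\}$. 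Since the $a_j$ span, $\Lambda$ is injective and every coordinate $x_k$ is an affine function of the $\beta_j$; consequently $\pol$ rewrites as a genuine polynomial $g$ in the $\beta_j$, strictly positive on the image $\Lambda(P)$.

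Next I would homogenize and apply P\'olya. Using the identity $\sum_j y_j = c$, I homogenize $g$ (say of degree $d$) to a form $G$ of degree $d$ that agrees with $g$ on the hyperplane $\sum_j y_j = c$. P\'olya requires positivity on the \emph{entire} orthant, whereas a priori $G$ is only positive on the slice $\Lambda(P) = A \cap \{\sum_j y_j = c,\ y \geq 0\}$, where $A = \Lambda(\RR^n)$ is a proper affine subspace when $\ell > n{+}1$. To bridge this gap I would modify $G$ by the ideal of the (suitably homogenized) affine relations $r_k$ cutting out $A$: replacing $G$ by $G' = G + M\,(\tfrac{1}{c}\sum_j y_j)^{d-2}\sum_k r_k^2$ leaves the restriction to $A$ unchanged and, by a compactness argument on the simplex, is strictly positive everywhere on it for $M$ large enough (the degenerate cases $d \le 1$ being immediate). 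Crucially, the $r_k$ are exactly the relations satisfied by the $\beta_j$, so each vanishes identically under $y_j \mapsto \mu_j\beta_j$ and the modification does not affect the pullback. Applying P\'olya to the everywhere-positive form $G'$ produces $(\sum_j y_j)^N G' = \sum_e \lambda_e\, y^e$ with all $\lambda_e > 0$.

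Finally I would pull this identity back through $y_j \mapsto \mu_j\beta_j$. On the left, $\sum_j y_j \mapsto c$ and $G' \mapsto \pol$, giving $c^N \pol$; on the right, each monomial $y^e \mapsto \big(\prod_j \mu_j^{e_j}\big)\prod_j \beta_j^{e_j}$, a positive multiple of a product of the $\beta_j$. Dividing by $c^N$ yields $\pol = \sum_e \lambda'_e \prod_j \beta_j^{e_j}$ with every $\lambda'_e > 0$, which is exactly~\eqref{eq:handelmanrepr}. I expect the main obstacle to be the positivity-extension step: ensuring that $g$, known positive only on the lower-dimensional slice $\Lambda(P)$, can be corrected modulo the relation ideal into a form positive on the whole simplex \emph{without} disturbing the pullback. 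The structural identity $\sum_j\mu_j\beta_j = c$ furnished by compactness is precisely what makes both this extension and the clean, generator-preserving pullback possible.
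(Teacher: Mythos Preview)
The paper does not prove this lemma; it is stated as Handelman's theorem with a citation to \cite{handelman1988representing} and then used as a black box in the proof of Lemma~\ref{lem:binomialrep}. There is therefore no ``paper's own proof'' to compare your proposal against.

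That said, your argument is a standard and essentially correct derivation of Handelman's theorem from P\'olya's positivity theorem. The key steps---extracting the identity $\sum_j \mu_j \beta_j = c$ from compactness, rewriting $\pol$ as a polynomial in the $\beta_j$ via surjectivity of the linear parts, homogenizing along the hyperplane $\sum_j y_j = c$, correcting modulo the relation ideal of the image to force positivity on the full simplex, applying P\'olya, and pulling back---are all sound. A few points deserve tightening but are not genuine gaps: (i) the existence of \emph{strictly} positive $\mu_j$ with $\sum_j \mu_j a_j = 0$ does not follow immediately from ``positively spanning''; one obtains it by writing each $-a_j$ as a nonnegative combination of the $a_k$ and summing the resulting null relations over $j$; (ii) the low-degree cases $d \le 1$ are handled most cleanly by first multiplying $\pol$ by a power of $\tfrac{1}{c}\sum_j \mu_j \beta_j \equiv 1$ to raise the degree, rather than by a separate argument; (iii) you should state explicitly that positivity of the homogeneous form $G'$ on the simplex slice extends to the whole punctured nonnegative orthant by homogeneity, which is what P\'olya actually requires. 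With these made explicit, the proof goes through.
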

\noindent
Form \eqref{eq:handelmanrepr} is called a \emph{Handelman representation}
of $\pol$ w.r.t.\ $\beta_1 ,\dots, \beta_\ell$. 
The next lemma states the existence of a specific Handelman
representation which we can map to a pMC.

\begin{restatable}{lemma}{restatebinomialrepr}
	\label{lem:binomialrep}
	Let $\pol \in \QQ[x]$ be strictly positive on $[0,1]$. There exists an $n \geq 0$ such that 
  \begin{equation} \label{eq:binomialrep} 
		\pol = \sum_{k=0}^n p_k\cdot {n \choose k}\cdot x^{n-k}\cdot (1-x)^k \qquad\text{with $p_k \in [0,1]$ for all $0 \leq k \leq n$}
	\end{equation}
\end{restatable}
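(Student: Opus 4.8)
The plan is to show that any polynomial $\pol \in \QQ[x]$ strictly positive on $[0,1]$ admits a representation in the \emph{Bernstein basis} of some sufficiently high degree $n$, with all coefficients lying in $[0,1]$. First I would recall that the degree-$n$ Bernstein polynomials $B_{n,k}(x) = \binom{n}{k} x^{n-k}(1-x)^k$ for $0 \le k \le n$ form a basis of the polynomials of degree at most $n$, so for any $n \ge \deg(\pol)$ there are \emph{unique} coefficients $b^{(n)}_k \in \RR$ with $\pol = \sum_{k=0}^n b^{(n)}_k B_{n,k}(x)$. The target is to prove that for $n$ large enough, all $b^{(n)}_k \in [0,1]$. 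Note that positivity of the $b^{(n)}_k$ is exactly a Handelman-type statement for the polyhedron $[0,1]$ with defining constraints $\beta_1 = x \ge 0$, $\beta_2 = 1-x \ge 0$ (every monomial $h_i$ in Lemma~\ref{lem:handelman} is $x^a(1-x)^b$, and one can homogenise to degree $n$ using $x + (1-x) = 1$); so non-negativity of the coefficients is essentially classical. The extra content here is the two-sided bound $b^{(n)}_k \le 1$.

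The key step is a convergence estimate for Bernstein coefficients. The standard fact is that $\max_k b^{(n)}_k \to \max_{x \in [0,1]} \pol(x)$ and $\min_k b^{(n)}_k \to \min_{x \in [0,1]} \pol(x)$ as $n \to \infty$; more precisely, $b^{(n)}_{\lfloor xn \rfloor} \to \pol(x)$ uniformly, and each $b^{(n)}_k$ lies between the min and max of $\pol$ on a shrinking subinterval around $k/n$ up to an $O(1/n)$ error. I would make this quantitative: writing $\pol$ in the degree-$n$ Bernstein basis and then passing to degree $n+1$ is given by the explicit \emph{degree-elevation} formula $b^{(n+1)}_k = \frac{k}{n+1} b^{(n)}_{k-1} + \frac{n+1-k}{n+1} b^{(n)}_k$, i.e. each new coefficient is a convex combination of old ones; iterating, $b^{(n)}_k$ is a weighted average of the values $\pol(k/d)$-type quantities and converges to the range of $\pol$. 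Since $\pol$ is strictly positive and continuous on the compact set $[0,1]$, we have $0 < m := \min_{[0,1]} \pol \le M := \max_{[0,1]} \pol < \infty$; if $M \le 1$ we are essentially done once $n$ is large enough that the coefficients are squeezed into $[m/2, M + \varepsilon] \subseteq (0,1]$. If $M > 1$ — which is allowed, since the lemma does not assume $\pol$ is bounded by $1$ — then no degree-$n$ Bernstein representation of $\pol$ itself can have all coefficients $\le 1$ (the coefficients' maximum converges to $M > 1$), so I would first \emph{rescale}: pick a rational $c \in (0, 1/M]$, apply the argument to $c\pol$ to get $c\pol = \sum_k q_k B_{n,k}(x)$ with $q_k \in [0,1]$, and then absorb $c$ — but this changes the polynomial.

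On reflection, the cleanest route that matches how the lemma is used (Theorem~\ref{thm:handelmanconstr} needs $\pol$ itself, and $\pol = \sol^\mdp$ forces $\pol$ adequate, hence $\pol[\inst] \le 1$ on $[0,1]$ anyway) is: in the intended application $\pol$ is adequate, so $M \le 1$ automatically, and then the pure Bernstein-coefficient convergence argument suffices without rescaling. I would therefore structure the proof as: (1) fix the Bernstein basis and state uniqueness of coefficients $b^{(n)}_k$; (2) invoke Handelman (Lemma~\ref{lem:handelman}) on the polytope $[0,1] = \{x \ge 0,\ 1-x \ge 0\}$ together with homogenisation via $x+(1-x)=1$ to conclude $b^{(n)}_k \ge 0$ for all $n$ past some threshold $n_0$; (3) prove the quantitative squeeze $b^{(n)}_k \le \max_{[0,1]}\pol + \omega_\pol(1/\sqrt n)$ (with $\omega_\pol$ the modulus of continuity) via the convex-combination / degree-elevation representation of $b^{(n)}_k$ as an average of point-values of $\pol$; (4) for adequate $\pol$, $\max_{[0,1]}\pol \le 1$, and strict positivity gives a strict separation so that for $n$ large the bound is $\le 1$ (using that the $p_k$ are allowed to hit $1$, so the non-strict inequality is harmless); set $p_k := b^{(n)}_k$. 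I expect step (3) — making the upper bound quantitative and genuinely $\le 1$ rather than $\le 1 + \varepsilon$ — to be the main obstacle; the resolution is that the modulus-of-continuity error term is additive around the local \emph{value} $\pol(k/n) \le \max \le 1$, but near a point where $\pol$ attains a value close to $1$ one needs $\pol \le 1$ to hold on the whole interval, which adequacy of $\pol$ supplies exactly (it bounds $\pol$ by $1$ on all of $[0,1]$, not just in the limit), so the averaging never exceeds $1$.
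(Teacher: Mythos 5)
Your treatment of the lower bound $p_k \ge 0$ (Handelman on the polytope $\{x\ge 0,\ 1-x\ge 0\}$ followed by homogenisation via $x+(1-x)=1$) is exactly the paper's argument, and your observation that the statement implicitly requires $\pol\le 1$ on $[0,1]$ (i.e.\ adequacy) is correct. The gap is in the upper bound $p_k\le 1$. The quantitative convergence route you sketch does not close it: degree elevation makes each degree-$(n{+}1)$ coefficient a convex combination of degree-$n$ \emph{coefficients}, not of point values of $\pol$, so iterating yields weighted averages of the coefficients at the minimal degree (which may exceed $1$ even when $\pol\le 1$), and the maximum of the coefficients is non-increasing in $n$ and converges to $\max_{[0,1]}\pol$ \emph{from above}. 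Your proposed resolution (``the averaging never exceeds $1$ because $\pol\le 1$ pointwise'') therefore rests on a false premise: adequacy bounds the function, not its Bernstein coefficients, and the control polygon of a polynomial generally overshoots the polynomial's range. When $\max_{[0,1]}\pol<1$ strictly one could salvage this with an explicit $O(1/n)$ rate, but that is additional work you have not carried out, and it still gives nothing when the maximum equals $1$ (which adequacy permits at the endpoints).

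The paper closes the upper bound with a purely algebraic device that your proposal is missing: apply Handelman a \emph{second} time, to $1-\pol$, obtaining a non-negative combination $1-\pol=\sum_i\lambda_i'\, x^{e_i'}(1-x)^{d_i'}$; homogenise \emph{both} representations to the same degree $n$ (the largest degree occurring in either), yielding $\pol=\sum_k p_k\binom{n}{k}x^{n-k}(1-x)^k$ and $1-\pol=\sum_k p_k'\binom{n}{k}x^{n-k}(1-x)^k$ with $p_k,p_k'\ge 0$; then, since $\pol+(1-\pol)=1=\sum_k\binom{n}{k}x^{n-k}(1-x)^k$ and the terms $x^{n-k}(1-x)^k$ are linearly independent, $p_k+p_k'=1$, whence $p_k\le 1$. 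This symmetric use of Handelman is the key idea of the proof; without it (or an equivalent substitute) your proposal establishes only $p_k\ge 0$ and $p_k\le 1+\varepsilon$, which does not prove the lemma.
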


\begin{example}
	\label{ex:binomialrep}
	Consider $\pol = 2x\cdot(1-x) + \frac{1}{4}$ which is strictly positive on $[0,1]$ and already in a Handelman representation. Following the proof of Lem.~\ref{lem:binomialrep}, we find that
	\[ 
	f = \frac{1}{4}{0 \choose 3}x^3 + \frac{11}{12}{1 \choose 3}x^2(1-x) + \frac{11}{12}{2 \choose 3}x(1-x)^2+ \frac{1}{4}{3 \choose 3}(1-x)^3
	\]
	\label{ex:polytolinearpmc}
	The construction (described in the proof of Thm.~\ref{thm:handelmanconstr}) yields the pMC depicted in Fig.~\ref{fig:handelmanex}.
\end{example}

\section{The complexity of reachability in pMCs}
We improve lower bounds for $\exists\reach$ problems. The results depend on the comparison type:

\subparagraph*{Nonstrict inequalities.}

This paragraph is devoted to proving the following theorem:

\begin{restatable}{theorem}{restatepmcetrhard}
\label{thm:etr:pmcsarehard}
	$\exists\reach^{\leq}_{*}$, 
	$\exists\reach^{\geq}_{*}$ are all ETR-complete (even for acyclic pMCs).
\end{restatable}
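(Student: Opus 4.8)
The plan is to prove the two directions separately: membership in ETR is already established (Proposition~\ref{pro:all-etr}), so the work is in showing ETR-hardness of $\exists\reach^{\leq}_{*}$ and $\exists\reach^{\geq}_{*}$ for (acyclic) simple pMCs. I would reduce from a suitable ETR-complete problem about polynomial (in)equalities. Since ETR-completeness is usually witnessed by the feasibility of a system of polynomial equalities $g_1 = 0, \dots, g_m = 0$ over the reals (or, after standard manipulations, over a bounded box such as $[0,1]^k$ or $[-1,1]^k$), I would first normalise such an instance: replace the conjunction $\bigwedge_i g_i = 0$ by the single equation $\sum_i g_i^2 = 0$, and rescale/shift variables so that all variables range over $[0,1]$ and the polynomial thresholds are rational numbers strictly between $0$ and $1$. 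One must be a little careful here, because the feasibility problem over an \emph{open} box $(0,1)^k$ (needed for the $\mathrm{gp}$ variant) versus the closed box $[0,1]^k$ (for the $\mathrm{wd}$ variant) can differ; this is exactly the kind of subtlety the paper flags around Lemma~\ref{lem:restrict} and Theorem~\ref{thm:semicontinuous}, so I would invoke a known ETR-hardness result that already fixes the domain appropriately (e.g.\ via the techniques reused from~\cite{DBLP:journals/mst/SchaeferS17}), or pad the instance with a small slack to move between open and closed domains.

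The core step is then a direct application of Chonev's trick (Lemma~\ref{lem:chonevtrick}): given the single polynomial $f \in \QQ[X]$ and rational threshold $\mu$ obtained above, together with a choice of $0 < \lambda < 1$, Lemma~\ref{lem:chonevtrick} produces in polynomial time a simple \emph{acyclic} pMC $\mdp$ with target $T$ such that, for every $\inst\colon X \to [0,1]$ and every $\bowtie \in \{<,\leq,\geq,>,=\}$,
\[
  f[\inst] \bowtie \mu \iff \Pr_{\mdp[\inst]}(\lozenge T) \bowtie \lambda .
\]
Taking $\bowtie\,=\,\leq$ (resp.\ $\bowtie\,=\,\geq$) and writing the original equation $f[\inst] = \mu$ as $f[\inst] \geq \mu \wedge f[\inst] \leq \mu$ would, if needed, call for two applications; but since $\sum_i g_i^2 \geq 0$ holds identically, feasibility of the system is equivalent to $\exists \inst.\ \sum_i g_i^2[\inst] \leq 0$, i.e.\ a single non-strict inequality. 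Hence the ETR instance is satisfiable (over the chosen box) if and only if $\exists \inst \in \ParamSpace.\ \Pr_{\mdp[\inst]}(\lozenge T) \leq \lambda$, which is exactly an instance of $\exists\reach^{\leq}_{*}$, and $\mdp$ is acyclic and simple by construction. For $\exists\reach^{\geq}_{*}$ I would instead normalise to a statement of the form $\exists \inst.\ h[\inst] \geq \nu$ with $h$ adequate after rescaling — e.g.\ work with $1 - \sum_i g_i^2$ suitably rescaled into $[0,1]$ — and apply Lemma~\ref{lem:chonevtrick} with $\bowtie\,=\,\geq$. Alternatively, Lemma~\ref{lem:restrict} transports hardness between the $\reach^{\geq}$ and $\reach^{\leq}$ variants on graph-preserving spaces, so one reduction plus that lemma suffices for the $\mathrm{gp}$ case, and the $\mathrm{wd}$ case follows by a parallel construction (or by Remark~\ref{rem:fixedparam} relating $\mathrm{wd}$ to constantly many $\mathrm{gp}$ instances — though here the number of parameters is unbounded, so I would prefer to argue $\mathrm{wd}$-hardness directly from the $[0,1]^k$-domain ETR instance).

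The main obstacle I anticipate is not the pMC construction itself — Lemma~\ref{lem:chonevtrick} does the heavy lifting and guarantees acyclicity, simplicity, rational constants, and polynomial-time computability — but rather pinning down an ETR-hard starting problem whose domain and comparison operator match what we need, and making the open-vs-closed box distinction rigorous so that the same reduction certifies both $\exists\reach^{\bowtie}_{\mathrm{wd}}$ and $\exists\reach^{\bowtie}_{\mathrm{gp}}$ hardness. This is where I expect to reuse ideas from~\cite{DBLP:journals/mst/SchaeferS17} and to possibly insert a harmless gadget (an extra g$\varepsilon$-perturbation, or a transition forcing parameters into the open interval) so that a point in the closed box can always be traded for a point in the open box without changing satisfiability of the polynomial constraint. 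Everything else is routine: the reduction is clearly polynomial-time, correctness is the biconditional above, and membership in ETR is already in hand, so ETR-completeness follows.
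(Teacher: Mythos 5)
Your proposal follows essentially the same route as the paper: the paper reduces from a sum-of-squares feasibility problem (mb4FEAS, asking for $\inst$ in $[0,1]^X$ resp.\ $(0,1)^X$ with $f[\inst]\leq 0$ for a non-negative quadric $f$), whose ETR-hardness is established exactly via the Schaefer--\v{S}tefankovi\'c bounded-box results you cite, and then applies Lemma~\ref{lem:chonevtrick} with $\mu=0$, $\lambda=\frac{1}{2}$, using the open-box variant for the graph-preserving case and $-f$ for the $\geq$ direction. Your anticipated obstacles (open vs.\ closed domain, unbounded parameter count ruling out Remark~\ref{rem:fixedparam}) are precisely the ones the paper resolves, and in the same way.
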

\begin{restatable}{definition}{restatembFEASdef}
  The decision problem \emph{modified-closed-bounded-4-feasibility} (mb4FEAS-c) asks:
  Given a (non-negative) quadric polynomial $f$, $\exists
  \inst\colon \Params \rightarrow [0,1]$	 s.t.\ $f[\inst] \leq 0$?  The \emph{modified-open-bounded-4-feasibility} (mb4FEAS-o) is analogously defined with $\inst$ ranging over $(0,1)$. 
\end{restatable}
\noindent This problem easily reduces to its $\geq$-variant by multiplying $f$
with $-1$.
\begin{restatable}{lemma}{restatembFEASetr}
\label{lem:etr:mb4feas}
  The problems mb4FEAS-c and mb4FEAS-o are ETR-hard.
\end{restatable}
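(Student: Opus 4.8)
The goal is to show mb4FEAS-c and mb4FEAS-o are ETR-hard, i.e.\ to reduce an arbitrary existential sentence over the reals (equivalently, the feasibility problem for a finite system of polynomial inequalities) to the feasibility of a single quadric polynomial $f[\inst]\leq 0$ with the parameters ranging over the closed (resp.\ open) unit box. The standard route, following the ideas of Schaefer--Štefankovič \cite{DBLP:journals/mst/SchaeferS17} (explicitly cited in the related work), has three ingredients: (i) reduce ETR to \textsc{4-FEAS}, the feasibility of a single degree-$4$ polynomial equation $g=0$ over $\RR^n$, by summing squares $g=\sum_i p_i^2$ of the given polynomials --- but first degree-reducing each $p_i$ to quadratic by introducing fresh variables $z=uv$ and the constraint $(z-uv)^2=0$, so that $g$ itself is quadric; (ii) \emph{bound} the variables, i.e.\ show one may assume every solution lies in a box $[-N,N]^n$ for a doubly-exponential $N$ that is nonetheless writable with polynomially many bits, using the standard bound on the smallest solution of a feasible polynomial system (this is where "bounded" in the problem name comes from); (iii) \emph{rescale and shift} the box $[-N,N]$ to $[0,1]$ (or to $(0,1)$) via an affine substitution $x_i\mapsto 2Nx_i-N$, which preserves the degree and keeps the polynomial quadric and rational-coefficiented, and finally arrange non-negativity of $f$ by replacing $f$ with $f^2$ if necessary --- noting $f^2$ is quartic, hence still quadric in the paper's terminology, and $f[\inst]\le 0 \iff f[\inst]=0 \iff f^2[\inst]\le 0$.

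First I would make precise the variant of ETR we start from: by the usual normal form, an instance is a finite family of polynomials $p_1,\dots,p_m\in\QQ[y_1,\dots,y_n]$ and the question is whether $\exists y\in\RR^n$ with $p_j(y)=0$ for all $j$ (strict inequalities and $\le$ can be folded in by introducing slack variables with $p=s^2$ or $ps^2=1$-type gadgets; equalities suffice up to polynomial blowup). I would then perform step (i): repeatedly introduce a fresh variable for each product of two existing variables/monomials until every $p_j$ has degree $\le 2$, charging a constraint $(z-uv)^2=0$ for each; the total number of variables and the bit-size stay polynomial. Set $g \coloneqq \sum_j p_j^2 + \sum (\text{the } (z-uv)^2 \text{ terms})$. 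Then $g$ is a sum of squares of quadratics, hence of degree $\le 4$, has non-negative rational coefficients after expansion is \emph{not} automatic --- so I would \emph{not} rely on that and instead handle non-negativity of the final polynomial separately as in step (iii). The feasibility of the original system is equivalent to $\exists y:\, g(y)=0$, equivalently $g(y)\le 0$ since $g\ge 0$ everywhere.

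Next, step (ii): invoke the classical a priori bound (see e.g.\ \cite{BPR06} or the analysis in \cite{DBLP:journals/jsc/Renegar92}) that if the system $g=0$ is feasible then it has a solution with $\|y\|_\infty \le 2^{L}$ where $L$ is polynomial in the input bit-size; crucially $2^L$ needs only $L+1$ bits to write down, so it is a legitimate polynomial-size constant in the reduction. Hence feasibility of $g=0$ over $\RR^n$ is equivalent to feasibility over the box $[-2^L,2^L]^n$. Now step (iii): substitute $y_i = (2x_i - 1)\cdot 2^L$ (affine, rational coefficients, bit-size $O(L)$), obtaining $h(x)\coloneqq g\big((2x-1)2^L\big)$, a degree-$\le 4$ polynomial in $\QQ[x_1,\dots,x_n]$ whose non-negativity on all of $\RR^n$ is inherited from $g\ge 0$. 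The substitution is a bijection $[0,1]^n \leftrightarrow [-2^L,2^L]^n$ and $(0,1)^n \leftrightarrow (-2^L,2^L)^n$; the boundary points do not matter because we may harmlessly shrink $2^L$ to $2^L+1$ so that all needed solutions are interior, making the open and closed versions coincide. Finally set $f \coloneqq h$ (already non-negative, so no squaring needed --- but if one is squeamish about proving $h\ge 0$ syntactically, take $f\coloneqq h^2$, still quartic, and use $h\le 0 \iff h=0 \iff h^2\le 0$). Then mb4FEAS-c (resp.\ mb4FEAS-o) on $f$ is a yes-instance iff the original ETR instance is, and the whole transformation is polynomial-time and many-one; this establishes ETR-hardness of both variants.

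**Main obstacle.** The delicate point is step (ii)–(iii) bookkeeping: one must be careful that after degree reduction and summing squares the polynomial is genuinely \emph{quadric} (degree $\le 4$, not just "low degree"), that the a priori solution bound is applied to the \emph{final} system (after the $z=uv$ gadgets, whose solutions must also be bounded --- they are, since $z$ equals a bounded product of bounded variables), and that the closed/open distinction is neutralised by the mild enlargement of the box so that mb4FEAS-o and mb4FEAS-c can be handled uniformly. The non-negativity requirement ("(non-negative) quadric polynomial") is the other thing to get right; it is cleanly obtained because $g$ is a sum of squares and affine substitution preserves global non-negativity, with the $h\mapsto h^2$ fallback available if a purely syntactic non-negativity certificate is preferred. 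None of these steps is deep, but the reduction only works if all three are threaded together consistently, and that coordination is the crux.
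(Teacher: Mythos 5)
There is a genuine gap in your step (ii), and it is precisely the step that the paper's proof is careful to avoid doing by hand. You claim that a feasible system $g=0$ has a solution with $\|y\|_\infty \le 2^L$ for $L$ polynomial in the input bit-size, so that the bound "needs only $L+1$ bits to write down." That bound is wrong: the classical estimates (Basu--Pollack--Roy, Renegar) guarantee a solution only within a ball of radius of order $2^{\tau d^{O(n)}}$, i.e.\ \emph{doubly} exponential in the number of variables $n$, hence in the input size. (Your own plan paragraph concedes the bound is doubly exponential but then asserts it is "writable with polynomially many bits," which is false for a binary-encoded integer; step (ii) then silently downgrades it to singly exponential.) With the correct doubly-exponential $N$, your affine substitution $y_i = (2x_i-1)N$ produces, after expansion of the degree-$4$ polynomial, coefficients involving $N^4$, which require exponentially many bits --- so the reduction is not polynomial-time. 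This is exactly the known obstruction that makes "bounded" ETR-hardness non-trivial, and it is why the paper does \emph{not} bound the variables itself: it reduces from bQUAD-o/c, the common-root problem for quadratic polynomials with variables already confined to the unit box, whose ETR-completeness is imported from Schaefer's Lemma~3.9 (which achieves the confinement by a renormalisation argument, not by clipping to an a priori solution box). From that bounded starting point the paper's remaining steps coincide with yours: sum of squares to get a single non-negative quadric polynomial, then the affine moves $[-1,1]\to[0,2]\to[0,1]$ and the observation that $f\le 0\iff f=0$ for non-negative $f$. To repair your argument you would either have to cite a bounded ETR-complete problem as the source of the reduction (as the paper does), or replace the explicit constant $N$ by an implicit encoding (e.g.\ the iterated-squaring gadget the paper uses elsewhere for $2^{-2^{L+5}}$), which does not fit into a single polynomial inequality of degree~$4$.
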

\noindent Essentially, one reduces from the existence of common roots of quadratic polynomials lying in a unit ball, which is known to be ETR-complete~\cite[Lemma 3.9]{Schaefer2013}. 
The reduction to mb4FEAS follows the reduction\footnote{Essentially the polynomial $f$ in mb4FEAS is constructed by taking the sum-of-squares of the quadratic polynomials, and further operations are adequatly shifting the polynomial.} between unconstrained variants (i.e., variants in which the position of the root is not constrained) of the same decision problems~\cite[Lemma 3.2]{DBLP:journals/mst/SchaeferS17}.

\begin{remark}
  Observe that there may be exactly one satisfying assignment to mb4FEAS-o/c, which may be irrational.
  In contrast, if there exists a satisfying assignment for $f > 0$, then there exist infinitely many
  satisfying (rational) assignments. To the best of our knowledge, the complexity of a variant of mb4FEAS-o/c  with strict bounds is open. Therefore, we have no ETR-hardness for $\exists\reach$ with strict bounds.
  In general, \emph{conjunctions} of strict inequalities are also ETR-complete \cite{DBLP:journals/mst/SchaeferS17}. We exploit this in the proof of Thm.~\ref{thm:etr:mdpsarehard} on page~\pageref{thm:etr:mdpsarehard}.
  \end{remark}

\begin{proof}[Proof of Thm.~\ref{thm:etr:pmcsarehard}]
  The reduction from mb4FEAS-c to $\exists\reach^\leq_\mathrm{wd}$ is a straightforward
  application of Lemma~\ref{lem:chonevtrick} with $\mu = 0$ and $\lambda =
  \frac{1}{2}$.  For $\exists\reach^\leq_\mathrm{gp}$, we reduce from the open variant
  and notice that as the construction in Lemma~\ref{lem:chonevtrick} preserves
  all satisfying instantiations $\inst\colon X \rightarrow [0,1]$ it, in
  particular, also preserves them on the graph-preserving parameter
  space.  For $\geq$, we apply Lemma~\ref{lem:chonevtrick} on ${-}f$.	
\end{proof}

The tight complexity class shows that the assumption of simplicity is not a real restriction.
Furthermore, a similar construction can be used for (sufficiently large\footnote{The bounds should be at least $\delta$ apart, where $\delta$ requires at most single-exponentially many bits.}, linear) subsets of the parameter space. 
In particular, methods~\cite{DBLP:conf/atva/QuatmannD0JK16,DBLP:conf/atva/CubuktepeJJKT18} targeted at a variant of $\exists\reach$ considering a so-called $\epsilon$-preserving parameter space ($[\epsilon$, $1-\epsilon]^k$) target an ETR-complete problem.

\subparagraph*{Strict inequalities.}
In this paragraph, the main result is:
\begin{theorem}
	\label{thm:e_reach_gr_gp_np_hard}
	$\exists\reach_{\text{*}}^>$ and $\exists\reach_{\text{*}}^<$ are NP-hard.
\end{theorem}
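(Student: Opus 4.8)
**Proof plan for Theorem (NP-hardness of $\exists\reach^{>}_{*}$ and $\exists\reach^{<}_{*}$).**

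The plan is to reduce from a suitable NP-complete problem via a polynomial that captures satisfiability, then cast this polynomial into a simple acyclic pMC using Chonev's trick (Lemma~\ref{lem:chonevtrick}). A natural starting point is 3-SAT, or better, a "gap" version such as checking whether a quadratic polynomial attains a strictly positive value somewhere in $[0,1]^k$ (or $(0,1)^k$). Concretely, given a 3-SAT instance with variables $z_1,\dots,z_k$, one associates parameter $x_i \in [0,1]$ with $z_i$ and builds a polynomial $f$ whose value measures "how many clauses are satisfied," arranged so that $f[\inst] > \mu$ for some $\inst$ if and only if the formula is satisfiable. The standard trick is to use a product/sum encoding where each clause contributes a term that is $0$ exactly when the clause is violated and bounded away from $0$ on the Boolean corners when satisfied; summing over clauses and using that extremal behaviour of a multilinear polynomial on $[0,1]^k$ occurs at a vertex, one gets the desired equivalence. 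Then Lemma~\ref{lem:chonevtrick} turns "$f[\inst] \bowtie \mu$" into "$\Pr_{\mdp[\inst]}(\lozenge T) \bowtie \lambda$" for the simple acyclic pMC $\mdp$, built in polynomial time, giving NP-hardness of $\exists\reach^{>}_{\mathrm{wd}}$; the symmetric case $\exists\reach^{<}_{*}$ follows by applying the construction to ${-}f$.

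First I would fix the source problem and the encoding polynomial. I expect to use a multilinear $f$ of the form $f = \sum_{j} \prod_{\ell \in C_j} (\text{literal term})$, where a positive literal $z_i$ contributes $x_i$ and a negative literal $\bar z_i$ contributes $1-x_i$; then on Boolean points $f$ counts falsified clauses, and the formula is unsatisfiable iff $f \geq 1$ on all of $[0,1]^k$, i.e.\ satisfiable iff $\exists \inst.\ f[\inst] < 1$. (Multilinearity ensures the minimum over $[0,1]^k$ is attained at a Boolean vertex, so no spurious fractional solutions arise.) This reduces to a strict-inequality feasibility question, which is exactly what $\exists\reach^{<}$ needs after applying Chonev's trick with threshold $\mu = 1$ rescaled appropriately; $\exists\reach^{>}$ is obtained by negating.

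Second, I would handle the graph-preserving variant $\exists\reach^{>}_{\mathrm{gp}}$ / $\exists\reach^{<}_{\mathrm{gp}}$: here parameters range over the open cube $(0,1)^k$, so I must argue the equivalence is robust to perturbing a Boolean witness slightly into the interior. This is where a small gap is needed: if the 3-SAT instance is satisfiable, the satisfying Boolean assignment has $f = 0$ (no falsified clause), and by continuity $f$ stays below $1$ on a neighbourhood, hence below $1$ at some rational interior point; conversely if unsatisfiable, every Boolean vertex has $f \geq 1$, and by multilinear convexity-type arguments $f \geq 1$ on the whole closed cube, in particular on the open cube. Then Lemma~\ref{lem:chonevtrick}, which preserves all satisfying instantiations $\inst\colon X \to [0,1]$ and in particular those in $(0,1)^k$, carries this over to the pMC.

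The main obstacle I anticipate is ensuring the strict-inequality encoding is genuinely sound on the interior, not just at Boolean corners: one must rule out fractional "near-solutions" that satisfy $f < 1$ without corresponding to a satisfying assignment. The multilinearity argument (extremum of a multilinear function on a box is at a vertex) is the crucial lemma here, and I would state and prove it carefully; an alternative, if that argument is delicate under strict inequalities on the open cube, is to instead reduce from a problem whose feasible region is already known to be open and well-behaved, or to add a penalty term forcing near-Boolean behaviour. A secondary check is verifying that the polynomial $f$ has only polynomially many terms and bounded-degree monomials (it does: degree $\leq 3$, one term per clause), so that Lemma~\ref{lem:chonevtrick} runs in polynomial time and the whole reduction is a valid Karp reduction.
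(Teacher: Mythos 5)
Your proposal is correct, but it takes a genuinely different route from the paper's. The paper obtains NP-hardness of $\exists\reach^{>}_\mathrm{wd}$ and $\exists\reach^{<}_\mathrm{wd}$ by a refined analysis of Chonev's original 3SAT gadget --- a \emph{cyclic} pMC in which one shows that unsatisfiability forces $\Pr(\lozenge T)\leq \frac{2}{3}$ for \emph{every} well-defined instantiation --- and then transfers the result to the graph-preserving variants via two separate reductions: the gadget of Lemma~\ref{lem:gpvswdequiv} (which zeroes out the reachability probability at graph-non-preserving points) and the max/min duality of Lemma~\ref{lem:restrict} (which swaps $>$ and $<$ on graph-preserving spaces). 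You instead encode 3SAT as a strict polynomial inequality: with the per-clause product terms, the formula is satisfiable iff $\exists\inst\in[0,1]^k$ with $f[\inst]<1$, where soundness against fractional witnesses follows because a multilinear polynomial attains its minimum over a box at a vertex; Chonev's trick (Lemma~\ref{lem:chonevtrick}) then converts this into $\exists\reach^{<}_\mathrm{wd}$, negation of $f$ gives $\exists\reach^{>}_\mathrm{wd}$, and the open-cube case is handled directly by continuity (satisfiable $\Rightarrow f=0$ at a vertex $\Rightarrow f<1$ on a neighbourhood meeting $(0,1)^k$; unsatisfiable $\Rightarrow f\geq 1$ on the closed, hence open, cube). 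This is more modular, avoids Lemmas~\ref{lem:gpvswdequiv} and~\ref{lem:restrict} entirely, and yields the slightly stronger statement that NP-hardness already holds for \emph{acyclic} simple pMCs, which the paper's cyclic gadget does not give for this theorem. Two small points to fix when writing it up: your literal encoding is inverted (a positive literal $z_i$ must contribute $1-x_i$, and a negated one $x_i$, so that a clause's product is $1$ exactly when the clause is falsified at a Boolean point), and you should note the standard normalisation that clauses contain no repeated or complementary literals so that $f$ is genuinely multilinear; the ``delicate'' multilinearity step you flag is in fact unproblematic, since $f$ is affine in each variable separately.
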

The gadget in Fig.~\ref{fig:e_reach_gp_red_e_reach} ensures that for any graph non-preserving instantiation, the probability to reach the target is $0$, while it does not affect reachability probabilities for graph-preserving instantiations. Together with semi-continuity of the solution function, we deduce that assuming graph-preservation is equivalent to not making this assumption:
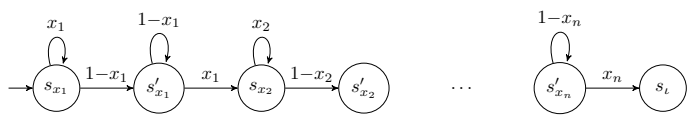
\begin{figure}
\centering
\scalebox{0.7}{
\begin{tikzpicture}[initial text=]
	\node[initial,state] (s1) {$s_{x_1}$};
	\node[state, right=of s1] (s2) {$s'_{x_1}$};
	\node[state, draw, right=of s2] (s3) {$s_{x_2}$};
	\node[state, right=of s3] (s4) {$s'_{x_2}$};
	\node[circle, right=of s4] (s5) {$\hdots$};
	\node[state, right=of s5] (s6) {$s'_{x_n}$};
	\node[state, right=of s6] (s7) {$\sinit$};
	
	\draw[->] (s1) edge node[auto] {$1{-}x_1$} (s2);
	\draw[->] (s2) edge node[auto] {$x_1$} (s3);
	\draw[->] (s3) edge node[auto] {$1{-}x_2$} (s4);
	\draw[->] (s6) edge node[auto] {$x_n$} (s7);

	\draw[->] (s1) edge[loop above] node[auto] {$x_1$} (s1);
	\draw[->] (s2) edge[loop above] node[auto] {$1{-}x_1$} (s2);
	\draw[->] (s3) edge[loop above] node[auto] {$x_2$} (s3);
	\draw[->] (s6) edge[loop above] node[auto] {$1{-}x_n$} (s6);
	
\end{tikzpicture}	
}
\caption{Gadget for the proof of Lemma~\ref{lem:gpvswdequiv}}
\label{fig:e_reach_gp_red_e_reach}
\end{figure}
\begin{restatable}{lemma}{restategpnotnecessary}
	\label{lem:gpvswdequiv}
	There are polynomial-time Karp reductions among
	$\exists\reach_{\text{gp}}^>$ and $\exists\reach^>_\mathrm{wd}$.
\end{restatable}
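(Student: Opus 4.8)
I need two polynomial-time Karp reductions, one in each direction. Throughout, let $\mdp$ be the given simple pMC with parameter set $\Params = \{x_1,\dots,x_n\}$, initial state $\sinit$ and target set $T$; recall that for simple pMCs $\ParamSpace^{\mathrm{gp}}_\mdp = (0,1)^{\Params}$ and $\ParamSpace^{\mathrm{wd}}_\mdp = [0,1]^{\Params}$, so a ``witness'' is just an instantiation whose reachability probability exceeds $\frac{1}{2}$.

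\textbf{Reducing $\exists\reach^{>}_{\mathrm{gp}}$ to $\exists\reach^{>}_{\mathrm{wd}}$.}
Given $\mdp$, I would build $\mdp'$ by prepending the gadget of Fig.~\ref{fig:e_reach_gp_red_e_reach}: add fresh states $s_{x_1},s'_{x_1},\dots,s_{x_n},s'_{x_n}$, make $s_{x_1}$ the new initial state, put a self-loop of probability $x_i$ at $s_{x_i}$ and of probability $1-x_i$ at $s'_{x_i}$, an edge $s_{x_i}\to s'_{x_i}$ of probability $1-x_i$, and an edge $s'_{x_i}\to s_{x_{i+1}}$ of probability $x_i$, where $s_{x_{n+1}} \coloneqq \sinit$. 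This is again a simple pMC over the same parameters, constructible in polynomial time, so $\ParamSpace^{\mathrm{gp}}_{\mdp'}=\ParamSpace^{\mathrm{gp}}_\mdp$ and $\ParamSpace^{\mathrm{wd}}_{\mdp'}=\ParamSpace^{\mathrm{wd}}_\mdp$. I would then prove two facts. (i) If $\inst\in\ParamSpace^{\mathrm{gp}}_\mdp$, every gadget self-loop has probability strictly below $1$, so the chain almost surely leaves the gadget and reaches $\sinit$; since no transition of $\mdp$ points back into the fresh states and $T$ is disjoint from them, from $\sinit$ on the chain behaves exactly as $\mdp[\inst]$, whence $\Pr_{\mdp'[\inst]}(\lozenge T)=\Pr_{\mdp[\inst]}(\lozenge T)$. (ii) If $\inst\in\ParamSpace^{\mathrm{wd}}_\mdp\setminus\ParamSpace^{\mathrm{gp}}_\mdp$, take the least $i$ with $\inst(x_i)\in\{0,1\}$; the chain reaches $s_{x_i}$ but is then trapped forever in the absorbing state $s_{x_i}$ (if $\inst(x_i)=1$) or $s'_{x_i}$ (if $\inst(x_i)=0$), never visiting $\sinit$, so $\Pr_{\mdp'[\inst]}(\lozenge T)=0$. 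Since $0\leq\frac{1}{2}$, (i) and (ii) together show that $\mdp'$ has a well-defined witness iff $\mdp$ has a graph-preserving one, i.e.\ $\mdp\mapsto\mdp'$ is the required reduction.

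\textbf{Reducing $\exists\reach^{>}_{\mathrm{wd}}$ to $\exists\reach^{>}_{\mathrm{gp}}$.}
Here the identity map already works, because I claim that $\mdp$ has a well-defined witness iff it has a graph-preserving one. The ``if'' direction is immediate from $\ParamSpace^{\mathrm{gp}}_\mdp\subseteq\ParamSpace^{\mathrm{wd}}_\mdp$. For ``only if'', let $\inst\in\ParamSpace^{\mathrm{wd}}_\mdp=[0,1]^{\Params}$ with $\sol^\mdp[\inst]>\frac{1}{2}$. By Theorem~\ref{thm:semicontinuous}, $\sol^\mdp$ is lower semi-continuous on $\ParamSpace^{\mathrm{wd}}_\mdp$, and $\ParamSpace^{\mathrm{gp}}_\mdp=(0,1)^{\Params}$ is dense in $[0,1]^{\Params}$; hence there are $\inst'\in\ParamSpace^{\mathrm{gp}}_\mdp$ arbitrarily close to $\inst$, and lower semi-continuity forces $\sol^\mdp[\inst']>\frac{1}{2}$ once $\inst'$ is close enough. (Equivalently, lower semi-continuity plus density gives $\sup_{\ParamSpace^{\mathrm{wd}}_\mdp}\sol^\mdp=\sup_{\ParamSpace^{\mathrm{gp}}_\mdp}\sol^\mdp$, and a supremum exceeds $\frac{1}{2}$ exactly when some value does.) One may alternatively compose the gadget construction of the first part with this semi-continuity argument, but the identity reduction is cleaner.

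\textbf{Where the work is.}
The reductions are short; the care goes into fact (ii): one must verify that a boundary instantiation genuinely traps the chain inside the gadget and that $T$ cannot be reached in $\mdp'$ without first visiting $\sinit$ (true since the only edge out of the gadget goes to $\sinit$, $\mdp$ adds no edges back, and $T$ lies in $\mdp$'s state space). The conceptual heart is the second direction, which rests entirely on lower semi-continuity of $\sol^\mdp$; this is exactly why the equivalence holds for the strict operator $>$ but, by the same reasoning, fails for $\geq$ --- there the supremum of $\sol^\mdp$ could be attained only on the non-graph-preserving boundary --- which is the asymmetry recorded in Table~\ref{tab:reach-zoo}.
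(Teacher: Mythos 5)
Your proposal is correct and follows essentially the same route as the paper: the direction $\exists\reach^{>}_{\mathrm{gp}}\leq_p\exists\reach^{>}_{\mathrm{wd}}$ uses exactly the prepended trapping gadget of Fig.~\ref{fig:e_reach_gp_red_e_reach} (the paper's Lemma~\ref{lem:e_reach_gp_red_e_reach}), and the converse direction is the identity reduction justified by lower semi-continuity of $\sol^\mdp$ together with density of $(0,1)^{\Params}$ in $[0,1]^{\Params}$ (the paper's Lemma~\ref{lem:e_reach_less_red_e_reach_less_gp}, stated there as an immediate consequence of Thm.~\ref{thm:semicontinuous}). Your write-up in fact spells out the $\varepsilon$-argument for the second direction more explicitly than the paper does, and correctly identifies why the equivalence is specific to the strict comparison.
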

\noindent We may thus turn our attention to well-defined parameter spaces: 
The decision problem
\[\exists\reach^{\geq1}_\mathrm{wd} \stackrel{\mathrm{def}}{\iff} \exists \inst \in \ParamSpace^{\text{wd}}_{\mdp}.\; \Pr_{\mdp[\inst]}(\lozenge T) \geq 1\]
is NP-complete~\cite[Thm.~3]{Chonev17}.
	A more refined analysis of the 3SAT-reduction yields:
\begin{restatable}{theorem}{restatechonevrefined}
	\label{thm:e_reach_gr_np_hard}
	$\exists\reach^{>}_\mathrm{wd}$ and $\exists\reach^{<}_\mathrm{wd}$ are NP-hard.
\end{restatable}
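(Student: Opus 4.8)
**Proof plan for Theorem~\ref{thm:e_reach_gr_np_hard} ($\exists\reach^{>}_\mathrm{wd}$ and $\exists\reach^{<}_\mathrm{wd}$ are NP-hard).**

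The plan is to adapt the 3SAT-reduction underlying Chonev's NP-completeness proof for $\exists\reach^{\geq 1}_\mathrm{wd}$~\cite[Thm.~3]{Chonev17} so that the reachability probability $1$ is replaced by a \emph{strict} threshold. First I would recall the structure of that reduction: given a 3SAT instance with variables $x_1,\dots,x_n$ and clauses $c_1,\dots,c_m$, one builds a simple pMC whose parameters correspond to the Boolean variables, where a well-defined instantiation assigning value in $\{0,1\}$ to a parameter encodes a truth assignment, and where the reachability probability of the target equals $1$ precisely when the instantiation is $0/1$-valued and satisfies every clause. The key observation I would exploit is that, because the parameter space is \emph{well-defined} (not graph-preserving), we may legitimately instantiate parameters to the boundary values $0$ and $1$; and on such boundary instantiations the induced MC collapses (sink states appear), which is exactly the mechanism that forces the probability strictly below $1$ whenever a clause is violated or a parameter takes an intermediate value.

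The main step is to show that the reduction already (or after a minor modification) produces a \emph{strict gap}: there is a threshold $\lambda^\ast < 1$ such that for every well-defined instantiation $\inst$, either $\Pr_{\mdp[\inst]}(\lozenge T) = 1$ (when $\inst$ is $0/1$-valued and satisfying) or $\Pr_{\mdp[\inst]}(\lozenge T) \le \lambda^\ast$ (otherwise). If Chonev's construction does not directly give such a gap, I would insert a preprocessing gadget before the initial state (as in Remark~\ref{rem:fixedlambda}) that rescales the threshold to $\frac{1}{2}$, i.e.\ prepend a transition that with some fixed constant probability routes into the target and otherwise into $\mdp$, so that a probability-$1$ answer becomes $> \frac{1}{2}$ and any probability $\le \lambda^\ast$ becomes $< \frac{1}{2}$. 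This yields the reduction $3\text{SAT} \le_P \exists\reach^{>}_\mathrm{wd}$. For $\exists\reach^{<}_\mathrm{wd}$, I would dualise by composing with the analogue of Lemma~\ref{lem:restrict}-style complementation: replace the target $T$ by the complementary set of absorbing states, so that $\Pr_{\mdp[\inst]}(\lozenge T) < \frac12$ becomes $\Pr_{\mdp'[\inst]}(\lozenge T') > \frac12$, giving NP-hardness of $\exists\reach^{<}_\mathrm{wd}$ from that of $\exists\reach^{>}_\mathrm{wd}$ (or directly from 3SAT by the symmetric construction).

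The hard part, and the place where care is genuinely needed, is establishing the \emph{strict gap} rigorously: one must argue that an intermediate parameter value $\inst(x_i) \in (0,1)$ — not just a clause violation at a Boolean assignment — also pushes the probability away from $1$ by a bounded amount. This requires understanding exactly which transitions become sink-creating at the boundary and bounding, uniformly over all $\inst \in \ParamSpace^\mathrm{wd}$, the probability mass that can reach $T$ when \emph{some} parameter is non-Boolean. A convenient route is to note that $\Pr_{\mdp[\inst]}(\lozenge T)$, as a function of $\inst$ on each graph-consistent region, is a rational function bounded by $[0,1]$ (Thm.~\ref{thm:semicontinuous} and its surrounding discussion); since it is continuous on each region's graph-preserving interior and the full well-defined space is a finite union of $3^{|\Params|}$ such (compact-closure) pieces, the supremum of $\Pr_{\mdp[\inst]}(\lozenge T)$ over the ``bad'' instantiations is attained and is strictly less than $1$ provided the construction ensures value $1$ is reached \emph{only} at good instantiations. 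Making this last implication quantitative — and checking that the resulting $\lambda^\ast$ has polynomially many bits so the Remark~\ref{rem:fixedlambda} rescaling is polynomial-time — is the technical crux; everything else is bookkeeping on top of Chonev's gadgets.
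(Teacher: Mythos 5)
Your overall strategy is the right one and matches the paper's: adapt Chonev's 3SAT gadget, establish a quantitative gap between the satisfiable and unsatisfiable cases, rescale the threshold to $\frac12$ via Remark~\ref{rem:fixedlambda}, and obtain the $<$ case by switching the target to the sink (the construction has exactly two bottom SCCs, $T$ and $\bot$, so $\Pr_{\mdp[\inst]}(\lozenge T)+\Pr_{\mdp[\inst]}(\lozenge\bot)=1$). However, the step you yourself flag as ``the technical crux'' is left open, and the route you sketch for it does not work. First, the dichotomy you state --- for every $\inst\in\ParamSpace^{\mathrm{wd}}$, either $\Pr_{\mdp[\inst]}(\lozenge T)=1$ (at $0/1$-valued satisfying instantiations) or $\Pr_{\mdp[\inst]}(\lozenge T)\le\lambda^\ast$ --- is false for this kind of gadget: by the (semi-)continuity you cite, instantiations arbitrarily close to a satisfying Boolean one take values arbitrarily close to $1$ while being neither $0/1$-valued nor ``good'', so the supremum over ``bad'' instantiations of a satisfiable instance is $1$. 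What is actually needed, and what the paper proves, is a gap between the two *cases of the decision problem*: if $\phi$ is satisfiable then \emph{some} $\inst$ achieves probability $1$, and if $\phi$ is unsatisfiable then \emph{every} $\inst\in\ParamSpace^{\mathrm{wd}}$ achieves probability at most a fixed constant. Second, even restricted to the unsatisfiable case, the compactness argument over the $3^{|\Params|}$ graph-consistent regions only yields the \emph{existence} of some instance-dependent $\lambda^\ast<1$; for a polynomial-time reduction you must exhibit a threshold with polynomially many bits, which compactness does not give you.

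The paper closes this gap with an explicit, instance-independent bound of $\frac23$, obtained combinatorially rather than topologically. The key auxiliary claim is: if $\phi$ is unsatisfiable, then for every well-defined $\inst$ there is a clause $\phi_{j^\ast}$ all three of whose literal states leak probability at least $\frac12$ to $\bot$ (proved by contraposition --- round each $\inst(y_i)$ at $\frac12$ to get a Boolean assignment; a ``witness'' literal with leakage $<\frac12$ in every clause would make that assignment satisfying). This bounds the return probability $\Pr(v_m\to\lozenge v_m)$ by $\frac{2k-1}{2(k+1)}$ and hence $\Pr(v_m\to\lozenge T)\le\frac23$ uniformly over $\ParamSpace^{\mathrm{wd}}$, covering intermediate parameter values and arbitrary choices of the clause parameters $z_{j,r}$ in one stroke. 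Without this (or an equivalent explicit bound), the reduction is not established. A further piece of bookkeeping your plan omits is that the clause states have three parametric outgoing transitions, so the gadget must be massaged into a \emph{simple} pMC by splitting each clause state into a binary cascade with fresh parameters; this is routine but required by the problem statement.
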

\begin{proof}[Proof of Thm.~\ref{thm:e_reach_gr_gp_np_hard}]
Lem.~\ref{lem:gpvswdequiv}, Thm.~\ref{thm:e_reach_gr_np_hard}, and Lem.~\ref{lem:restrict} together imply Thm.~\ref{thm:e_reach_gr_gp_np_hard}.
\end{proof}
This concludes our complexity analysis for pMCs.

\section{The complexity of reachability in pMDPs}
\subsection{Exists-exists reachability}
\label{sec:mdp:ee}
By definition, every pMC is a pMDP. Conversely, from any pMDP we can construct a
pMC such that their $\exists\exists\reach^{\bowtie}_\mathrm{wd}$ problems coincide. A similar construction
relates pMCs to the existence of optimal randomised memoryless strategies in
partially observable MDPs~\cite{DBLP:conf/uai/Junges0WQWK018}.

\begin{restatable}{lemma}{restatepmdpsarepmcs}
\label{lem:existwdgeeqexistwdge}
  There are polynomial-time Karp reductions among 
  $\exists\reach^{\bowtie}_\mathrm{wd}$ and $\exists\exists\reach^{\bowtie}_\mathrm{wd}$.
\end{restatable}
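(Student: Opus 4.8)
The plan is to establish the two directions of the reduction separately. The easy direction is from pMCs to pMDPs: every pMC is syntactically a pMDP in which $|\Act(s)| = 1$ for all $s$, so that $\Sched$ is a singleton and $\exists\exists\reach^{\bowtie}_\mathrm{wd}$ collapses to $\exists\reach^{\bowtie}_\mathrm{wd}$. Thus the identity map is the required Karp reduction, and no work is needed there. The substantial direction is from $\exists\exists\reach^{\bowtie}_\mathrm{wd}$ (on a pMDP $\mdp$) to $\exists\reach^{\bowtie}_\mathrm{wd}$ (on a pMC $\mdp'$): given $\mdp = \pmdptuple$ and target $T$, I would build a pMC $\mdp'$ and show
\[
  \exists \inst \in \ParamSpace^\mathrm{wd}_{\mdp}, \exists \sched \in \Sched_\mdp.\; \Pr^\sched_{\mdp[\inst]}(\lozenge T) \bowtie \tfrac12
  \quad\Longleftrightarrow\quad
  \exists \inst' \in \ParamSpace^\mathrm{wd}_{\mdp'}.\; \Pr_{\mdp'[\inst']}(\lozenge T) \bowtie \tfrac12.
\]

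The key construction is to encode the scheduler choice by fresh parameters. For each state $s$ with $\Act(s) = \{a_1,\dots,a_m\}$ I would introduce a small gadget "branching" on $m-1$ fresh parameters $y_s^1,\dots,y_s^{m-1}$ that routes, from $s$, to a copy $s^{a_i}$ of the state where action $a_i$ has already been committed to; concretely one can chain binary choices with probabilities $y_s^1$ and $1-y_s^1$, then $y_s^2$ and $1-y_s^2$, etc., so that the product of the branch probabilities realises a full distribution over $\{a_1,\dots,a_m\}$. From $s^{a_i}$ the original transition $P(s,a_i,\cdot)$ is taken (now with no action label, since the choice is fixed), leading back into the $s$-type branching states of successor states. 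Since the original pMDP is simple, the transition polynomials introduced are again of the form $z$ or $1-z$, so $\mdp'$ is simple; and $\mdp'$ has size polynomial in $|\mdp|$ (at most $\sum_s |\Act(s)|$ extra states and branching gadgets). For the forward direction, given $(\inst,\sched)$ for $\mdp$ I set $\inst'$ to agree with $\inst$ on $X$ and to set the fresh parameters so that the branch at $s$ realises the distribution $\sched(s)$ over $\Act(s)$ — using a deterministic (Dirac) $\sched$ this just means picking each $y_s^j \in \{0,1\}$ appropriately — and then a straightforward induction on the reachability equations (or on the solution functions $\sol_s$, using the defining linear system for reachability probabilities) shows $\Pr^\sched_{\mdp[\inst]}(s\to\lozenge T) = \Pr_{\mdp'[\inst']}(s\to\lozenge T)$ for all $s$. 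For the backward direction, given $\inst'$ for $\mdp'$, the restriction of $\inst'$ to $X$ gives $\inst$, and reading off the induced distribution over $\Act(s)$ at each branching gadget yields a \emph{randomised} memoryless scheduler $\sched'$ with $\Pr^{\sched'}_{\mdp[\inst]}(\lozenge T) = \Pr_{\mdp'[\inst']}(\lozenge T)$; by Remark~\ref{rem:detsched}, deterministic schedulers are as good as randomised ones for both $\sup$ and $\inf$ of reachability, so whenever a randomised scheduler witnesses the $\bowtie\tfrac12$ comparison some deterministic $\sched$ does too (for $\unrhd$ take a scheduler achieving the maximum, for $\unlhd$ one achieving the minimum), giving a witness for $\exists\exists\reach^{\bowtie}_\mathrm{wd}$ on $\mdp$.

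The main obstacle I anticipate is the interaction between the \emph{well-defined} parameter space and the freshly introduced parameters: I must ensure that the set of well-defined instantiations of $\mdp'$ projects exactly onto pairs (well-defined instantiation of $\mdp$, arbitrary scheduler), with no spurious constraints and no spurious freedom. Because $\mdp'$ is built to be simple, its well-defined parameter space is automatically $[0,1]$ on every parameter (by the remark following the definition of simple pMDPs), so each fresh $y_s^j$ ranges freely over $[0,1]$, which is exactly what is needed to encode an arbitrary randomised $\sched(s)$, and the original parameters $x \in X$ inherit precisely the constraint that $\inst$ restricted to $X$ be well-defined for $\mdp$ (here one uses that $\mdp$ is simple, so well-definedness is the coordinatewise condition $\inst(x)\in[0,1]$, which the gadget does not disturb). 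A secondary subtlety is the $\bowtie\in\{<,>\}$ cases, where one cannot simply take an optimal scheduler achieving the extremum but must argue that \emph{some} deterministic scheduler strictly beats the threshold whenever a randomised one does — this again follows from Remark~\ref{rem:detsched} since $\sup_{\sched\in\RSched} = \sup_{\sched\in\Sched}$ (and dually for $\inf$), so if a randomised scheduler achieves value $\bowtie\tfrac12$ then the deterministic supremum (resp.\ infimum) is on the correct side of $\tfrac12$ as well, and is attained by a deterministic scheduler. Handling $\bowtie$ uniformly over all four relations, I would phrase the correctness argument in terms of $\max_{\sched\in\Sched}$ and $\min_{\sched\in\Sched}$ rather than case-splitting more than necessary.
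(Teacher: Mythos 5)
Your proposal is correct and follows essentially the same route as the paper: the paper also treats the pMC-to-pMDP direction as trivial, and for the converse it first binarises the non-determinism and then replaces each binary choice at a state $s$ by a fresh parameter $x_s$ (transitions $x_s$ and $1-x_s$), identifying instantiations of the fresh parameters with randomised memoryless schedulers and invoking Remark~\ref{rem:detsched} to pass to deterministic ones. Your single-gadget "stick-breaking" chain just merges the paper's two steps (binarisation and parametrisation) into one, and your treatment of well-definedness and of the strict/non-strict comparison cases matches the paper's argument.
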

\noindent We outline the steps in Fig.~\ref{fig:pmdptopmc} and in the description below.

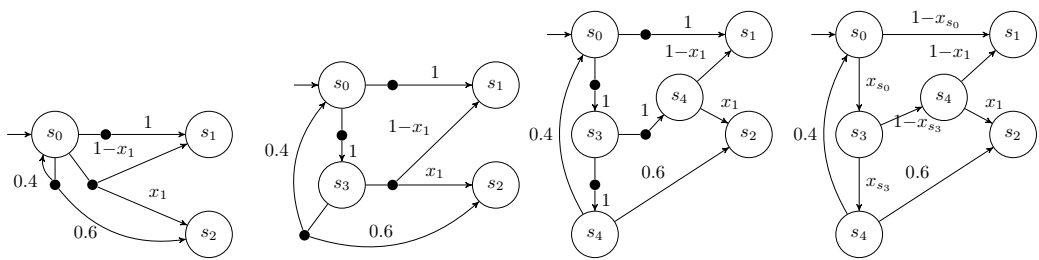
\begin{figure}
\centering
\begin{subfigure}[b]{0.24\textwidth}
\scalebox{0.7}{
\begin{tikzpicture}[baseline=(s0)]
\node[state, initial, initial text=] (s0) {$s_0$};
\node[state, right=2cm of s0] (s1) {$s_1$};
\node[state, below=1cm of s1 ] (s2) {$s_2$};
\node[fill, circle, inner sep=2pt, right=0.4cm of s0] (a0) {};
\node[fill, circle, inner sep=2pt, below=0.4cm of s0,xshift=2em] (a1) {};
\node[fill, circle, inner sep=2pt, below=0.4cm of s0,xshift=0em] (a2) {};

\draw[-] (s0) -- (a0);
\draw[-] (s0) -- (a1);
\draw[-] (s0) -- (a2);

\draw[->] (a0) edge node[auto] {$1$} (s1);
\draw[->] (a1) edge node[auto] {$1{-}x_1$} (s1);
\draw[->] (a1) edge node[auto] {$x_1$} (s2);	

\draw[->] (a2) edge[bend left] node[auto] {$0.4$} (s0);
\draw[->] (a2) edge[bend right] node[left,xshift=-2mm] {$0.6$} (s2);	
	
\end{tikzpicture}
}
\subcaption{simple pMDP}
\label{fig:simplepmdp}
\end{subfigure}
\begin{subfigure}[b]{0.24\textwidth}
\scalebox{0.7}{
\begin{tikzpicture}
	\node[state, initial, initial text=] (s0) {$s_0$};
\node[state, right=2cm of s0] (s1) {$s_1$};
\node[state, below=1cm of s1 ] (s2) {$s_2$};
\node[state, below=1cm of s0] (s3) {$s_3$};
\node[fill, circle, inner sep=2pt, right=0.4cm of s0] (a0) {};
\node[fill, circle, inner sep=2pt, below=0.4cm of s0,xshift=0em] (a2) {};
\node[fill, circle, inner sep=2pt, right=0.4cm of s3] (a3) {};	
\node[fill, circle, inner sep=2pt, below=0.4cm of s3,xshift=-2em] (a4) {};

\draw[-] (s3) -- (a3);
\draw[-] (s3) -- (a4);
\draw[-] (s0) -- (a0);
\draw[-] (s0) -- (a2);
\draw[->] (a3) edge node[auto] {$1{-}x_1$} (s1);
\draw[->] (a3) edge node[auto] {$x_1$} (s2);	
\draw[->] (a4) edge[bend left] node[auto] {$0.4$} (s0);
\draw[->] (a4) edge[bend right] node[auto] {$0.6$} (s2);	

\draw[->] (a0) edge node[auto] {$1$} (s1);
\draw[->] (a2) edge node[auto] {$1$} (s3);
\end{tikzpicture}
}
\subcaption{intermediate step}
\end{subfigure}
\begin{subfigure}[b]{0.24\textwidth}
\scalebox{0.7}{
\begin{tikzpicture}[baseline=(s0)]
	\node[state, initial, initial text=] (s0) {$s_0$};
\node[state, right=2cm of s0] (s1) {$s_1$};
\node[state, below=1cm of s1 ] (s2) {$s_2$};
\node[state, below=1cm of s0] (s3) {$s_3$};
\node[state, below=1cm of s3] (s4) {$s_4$};
\node[state, right=0.7cm of s3,yshift=2em] (s5) {$s_4$};

\node[fill, circle, inner sep=2pt, right=0.4cm of s0] (a0) {};
\node[fill, circle, inner sep=2pt, below=0.4cm of s0,xshift=0em] (a2) {};
	
\node[fill, circle, inner sep=2pt, right=0.4cm of s3] (a3) {};	
	
\node[fill, circle, inner sep=2pt, below=0.4cm of s3] (a4) {};

\draw[->] (s5) edge node[auto] {$1{-}x_1$} (s1);
\draw[->] (s5) edge node[auto] {$x_1$} (s2);	
\draw[->] (s4) edge[bend left] node[auto] {$0.4$} (s0);
\draw[->] (s4) edge node[auto] {$0.6$} (s2);	
\draw[-] (s3) -- (a3);
\draw[-] (s3) -- (a4);
\draw[-] (s0) -- (a0);
\draw[-] (s0) -- (a2);

\draw[->] (a3) edge node[auto] {$1$} (s5);
\draw[->] (a4) edge node[auto] {$1$} (s4);
\draw[->] (a0) edge node[auto] {$1$} (s1);
\draw[->] (a2) edge node[auto] {$1$} (s3);

\end{tikzpicture}
}
\subcaption{simple bin.-dec.}
\end{subfigure}
\begin{subfigure}[b]{0.24\textwidth}
\scalebox{0.7}{
\begin{tikzpicture}[baseline=(s0)]
	\node[state, initial, initial text=] (s0) {$s_0$};
\node[state, right=2cm of s0] (s1) {$s_1$};
\node[state, below=1cm of s1 ] (s2) {$s_2$};
\node[state, below=1cm of s0] (s3) {$s_3$};
\node[state, below=1cm of s3] (s4) {$s_4$};
\node[state, right=0.7cm of s3,yshift=2em] (s5) {$s_4$};

\draw[->] (s5) edge node[auto] {$1{-}x_1$} (s1);
\draw[->] (s5) edge node[auto] {$x_1$} (s2);	
\draw[->] (s4) edge[bend left] node[auto] {$0.4$} (s0);
\draw[->] (s4) edge node[auto] {$0.6$} (s2);	
\draw[->] (s3) edge node[below,pos=0.9] {$1{-}x_{s_3}$} (s5);
\draw[->] (s3) edge node[auto] {$x_{s_3}$} (s4);
\draw[->] (s0) edge node[auto] {$1{-}x_{s_0}$} (s1);
\draw[->] (s0) edge node[auto] {$x_{s_0}$} (s3);

\end{tikzpicture}
}
\subcaption{simple pMC}
\label{fig:simplepmc}
\end{subfigure}
\caption{From simple pMDP to simple pMC}
\label{fig:pmdptopmc}
\end{figure}

\subparagraph*{Binary-decision pMDPs.} The first step of the translation
consists in restricting the non-determinism resolved by a scheduler to (at most)
two options from every state. A binary-decision pMDP is a pMDP such that
$|\Act(s)| \leq 2$ for all
states $s \in S$ and if $|\Act(s)| = 2$ then
\(
  \forall \act \in \Act(s), \forall s' \in S,\; P(s,\act,s') \in
  \{0,1\}.
\)
Any pMDP can be transformed (in polynomial time) into a
binary-decision pMDP by introducing auxiliary states and simulating $k$-ary
non-deterministic choice using a binary-tree-like scheme in which all non-Dirac
transitions are pushed to the leaves (see,
e.g.,~\cite{DBLP:conf/qest/SegalaT05,
DBLP:conf/atva/QuatmannD0JK16,DBLP:conf/uai/Junges0WQWK018}). Such a
construction preserves simplicity.

\subparagraph*{From non-determinism to parameters.}
For a given binary-decision pMDP $\mdp$, we may replace all non-determinism
by parameters, inspired by \cite{DBLP:conf/nfm/HahnHZ11,DBLP:conf/uai/Junges0WQWK018}. We introduce fresh variables $\Params_S = \{ x_s \mid s \in S \}$.	
In $\mdp$, for any state $s$ with $\Act(s) = \{ \act, \act' \}$ we replace
\begin{itemize}
	\item the unique transition $P(s,\act,s') = 1$ by $P(s,\act,s') = x_{s}$
	\item the unique transition $P(s,\act',s') = 1$ by $P(s,\act',s') = 1-x_{s}$	.
\end{itemize}
The outcome is a simple pMC $\mdp'$. To translate instantiations into
schedulers, and vice versa, it is helpful to consider randomised schedulers.
Observe that, by Rem.~\ref{rem:detsched}, instantiations which translate into such
schedulers are always dominated by deterministic ones. 

\noindent Using the previously described construction,
we obtain the following.
\begin{restatable}{lemma}{restateequivsimplepmcs}
\label{lem:equivsimplepmc}
	For all simple pMDPs $\mdp$ one can construct in polynomial time
  a (linearly larger) simple pMC $\mdp'$ s.t.
  \[ 
  \left(
    \exists \inst \in
    \ParamSpace^{\mathrm{wd}}_\mdp,\exists \sched \in \Sched.\; \Pr_{\mdp[\inst]}^\sched(\lozenge T) \bowtie
    \frac{1}{2} \right)
    \iff
    \left(
     \exists \inst \in \ParamSpace^\mathrm{wd}_{\mdp'}.\;
    \Pr_{{\mdp'}[\inst]}(\lozenge T) \bowtie \frac{1}{2}
    \right) .
  \]
\end{restatable}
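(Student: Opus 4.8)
The plan is to chain together the two constructions described in the preceding paragraphs and then invoke Remark~\ref{rem:detsched} to pass from randomised to deterministic schedulers. First I would apply the binary-decision transformation: given the simple pMDP $\mdp$, construct a simple binary-decision pMDP $\mdp_b$ in polynomial time by replacing each state with $k$ actions by a binary tree of auxiliary states whose internal transitions are Dirac and whose leaves carry the original (possibly parametric, simple) distributions. This step preserves simplicity and, crucially, preserves the quantity $\exists\sched.\;\Pr^\sched_{\mdp[\inst]}(\lozenge T)\bowtie\frac12$ for every fixed $\inst$, because a scheduler of $\mdp$ corresponds exactly to a scheduler of $\mdp_b$ reaching the same leaf distributions, and the reachability probability of $T$ is unchanged (the auxiliary states are transient and deterministic). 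Since the well-defined parameter space is also unchanged (no new parameters are introduced in this step), the left-hand side of the claimed equivalence is the same for $\mdp$ and $\mdp_b$.

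Next I would apply the ``non-determinism to parameters'' construction to $\mdp_b$: introduce a fresh parameter $x_s$ for each state $s$ with $|\Act(s)|=2$ and replace the two Dirac action-transitions by probabilities $x_s$ and $1-x_s$, yielding a simple pMC $\mdp'$. The heart of the argument is the equivalence of the two existential statements. For one direction, given $\inst\in\ParamSpace^\mathrm{wd}_\mdp$ and $\sched\in\Sched$ with $\Pr^\sched_{\mdp_b[\inst]}(\lozenge T)\bowtie\frac12$, extend $\inst$ to $\inst'$ on the parameters $\Params_{S}$ by setting $x_s$ according to which action $\sched$ picks in $s$ (i.e.\ $x_s=1$ or $x_s=0$); then $\mdp'[\inst']$ is literally the induced MC $\mdp_b[\inst]^\sched$, so the reachability probability is preserved and $\inst'$ witnesses the right-hand side. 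For the converse, given $\inst'\in\ParamSpace^\mathrm{wd}_{\mdp'}$, its restriction $\inst$ to the original parameters is well-defined for $\mdp_b$, and the values $\inst'(x_s)\in[0,1]$ define a \emph{randomised} memoryless scheduler $\sigma_r$ on $\mdp_b$ with $\Pr^{\sigma_r}_{\mdp_b[\inst]}(\lozenge T)=\Pr_{\mdp'[\inst']}(\lozenge T)\bowtie\frac12$; by Remark~\ref{rem:detsched} there is a deterministic scheduler doing at least as well for the $\sup$, and by taking either the $\sup$ or (for $\unlhd$) the $\inf$ over deterministic schedulers — which is attained — one obtains a deterministic $\sched$ with $\Pr^\sched_{\mdp_b[\inst]}(\lozenge T)\bowtie\frac12$ as well. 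Here one should treat the four relations in $\bowtie$ uniformly by noting that the deterministic optima dominate the randomised value from both sides.

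The main obstacle is the converse direction for the \emph{strict} and the \emph{lower-bound} relations: Remark~\ref{rem:detsched} as stated gives domination only for $\sup$, so one must argue separately that deterministic schedulers also realise the infimum of the reachability probability (again standard for MDPs, since optimal memoryless deterministic schedulers exist for both $\min$ and $\max$ reachability), and that strictness is inherited — if a randomised scheduler strictly beats $\frac12$ then the dominating deterministic one does too for $>$ and $\geq$, and symmetrically the minimising deterministic scheduler handles $<$ and $\leq$. Handling $\bowtie$ uniformly, rather than case-splitting verbosely, is the only subtlety; everything else is a direct composition of polynomial-time graph transformations, so the polynomial size bound and the ``linearly larger'' claim follow by inspection of the two constructions (the binary-tree blow-up is linear in $\sum_s|\Act(s)|$ and the parameter replacement adds no states).
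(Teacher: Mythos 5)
Your proposal is correct and follows essentially the same route as the paper: the appendix proves the lemma by composing exactly these two steps (a binary-decision transformation, Lemma~\ref{lem:makebinary}, followed by the replacement of binary non-determinism with fresh parameters $x_s$, Lemma~\ref{lem:makepmc}), with the forward direction witnessed by Dirac values $x_s\in\{0,1\}$ and the converse handled by reading $\inst'(x_s)$ as a randomised memoryless scheduler and invoking Remark~\ref{rem:detsched}. Your explicit remark that deterministic schedulers must dominate the randomised value from both sides (so that all four relations in $\bowtie$ are covered) is the same point the paper makes, only more tersely.
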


\begin{corollary}
	\label{cor:eeggpeqerggwd}
There are polynomial-time Karp reductions among 
  $\exists\exists\reach^>_\mathrm{gp}$ and $ \exists\reach^>_\mathrm{wd}$.
\end{corollary}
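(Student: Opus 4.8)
The plan is to obtain both reductions at once by chaining two equivalences,
\[
  \exists\exists\reach^{>}_\mathrm{gp} \;\equiv_P\; \exists\exists\reach^{>}_\mathrm{wd} \;\equiv_P\; \exists\reach^{>}_\mathrm{wd},
\]
where ``$A \equiv_P B$'' abbreviates ``$A$ and $B$ are interreducible by polynomial-time Karp reductions'', and then composing. The right-hand equivalence is exactly the case $\bowtie\;=\;{>}$ of Lemma~\ref{lem:existwdgeeqexistwdge}, so the whole task reduces to establishing the left-hand one, between the graph-preserving and well-defined variants of $\exists\exists\reach^{>}$ on simple pMDPs.

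For that, I would argue that the \emph{identity} map already witnesses $\exists\exists\reach^{>}_\mathrm{gp} \equiv_P \exists\exists\reach^{>}_\mathrm{wd}$. The direction $\exists\exists\reach^{>}_\mathrm{gp} \Rightarrow \exists\exists\reach^{>}_\mathrm{wd}$ is immediate since $\ParamSpace^\mathrm{gp}_\mdp \subseteq \ParamSpace^\mathrm{wd}_\mdp$. For the converse, suppose there are $\inst \in \ParamSpace^\mathrm{wd}_\mdp = [0,1]^{\Params}$ and $\sched \in \Sched$ with $\Pr^\sched_{\mdp[\inst]}(\lozenge T) > \frac{1}{2}$; then $\maxsol^\mdp[\inst] \geq \Pr^\sched_{\mdp[\inst]}(\lozenge T) > \frac{1}{2}$. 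By Corollary~\ref{cor:pmdpminexists}, $\maxsol^\mdp$ is lower semi-continuous on $\ParamSpace^\mathrm{wd}_\mdp$, so the superlevel set $U = \{\, \inst' \in [0,1]^{\Params} \mid \maxsol^\mdp[\inst'] > \frac{1}{2} \,\}$ is nonempty and relatively open in $[0,1]^{\Params}$. Since $(0,1)^{\Params} = \ParamSpace^\mathrm{gp}_\mdp$ is dense in $[0,1]^{\Params}$, the open set $U$ contains some graph-preserving $\inst'$; because $\maxsol^\mdp$ is a maximum over the \emph{finite} set $\Sched$, the strict inequality $\maxsol^\mdp[\inst'] > \frac{1}{2}$ is witnessed by some $\sched' \in \Sched$, i.e.\ $\Pr^{\sched'}_{\mdp[\inst']}(\lozenge T) > \frac{1}{2}$. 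Hence $\exists\exists\reach^{>}_\mathrm{wd}$ and $\exists\exists\reach^{>}_\mathrm{gp}$ are literally the same decision problem, and composition with Lemma~\ref{lem:existwdgeeqexistwdge} yields the corollary.

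The only delicate point — and thus the main obstacle — is the semi-continuity step: one must use that $\maxsol^\mdp$ is \emph{lower} (not upper) semi-continuous on the full well-defined space, so that the value of $\maxsol^\mdp$ may only drop, never jump upwards, when approaching the boundary; this is precisely what allows a graph-non-preserving witness to be pulled back to a graph-preserving one. For $\bowtie \in \{\geq, <, \leq\}$ this reasoning breaks down, which is consistent with the asymmetric entries of Table~\ref{tab:reach-zoo}. Should a more ``constructive'' reduction be preferred, $\exists\exists\reach^{>}_\mathrm{gp} \leq_P \exists\reach^{>}_\mathrm{wd}$ can alternatively be assembled from the gadget of Fig.~\ref{fig:e_reach_gp_red_e_reach} prepended on all parameters of $\mdp$ (so that every graph-non-preserving instantiation reaches $T$ with probability $0$, turning the $\mathrm{gp}$ question into a $\mathrm{wd}$ one) followed by the pMDP-to-pMC translation of Lemma~\ref{lem:equivsimplepmc}; in the reverse direction, a pMC is a pMDP whose inner quantifier is vacuous, so $\exists\reach^{>}_\mathrm{wd}$ trivially reduces to $\exists\exists\reach^{>}_\mathrm{wd}$ and then to $\exists\exists\reach^{>}_\mathrm{gp}$ by the identity argument above (or directly by Lemma~\ref{lem:gpvswdequiv}).
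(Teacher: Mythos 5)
Your proposal is correct. The right-hand equivalence is indeed just Lemma~\ref{lem:existwdgeeqexistwdge} instantiated at $\bowtie\;=\;{>}$, and your left-hand equivalence is sound: by Corollary~\ref{cor:pmdpminexists} the function $\maxsol^\mdp$ is lower semi-continuous on $[0,1]^{\Params}$, so its strict superlevel set $\{\inst \mid \maxsol^\mdp[\inst] > \frac{1}{2}\}$ is relatively open and, being nonempty, must meet the dense subset $(0,1)^{\Params} = \ParamSpace^\mathrm{gp}_\mdp$; since the maximum ranges over the finite set $\Sched$, the witness scheduler exists. Hence the gp and wd variants of $\exists\exists\reach^{>}$ have identical yes-instance sets and the identity is a Karp reduction in both directions. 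The paper instead dispatches the corollary by ``minor adaptions'' of the proofs of Lemma~\ref{lem:existwdgeeqexistwdge} and Lemma~\ref{lem:gpvswdequiv}, i.e.\ by carrying the gadget of Fig.~\ref{fig:e_reach_gp_red_e_reach} and the pMDP-to-pMC translation through the gp setting --- essentially your ``constructive'' fallback. Your primary route is a genuine streamlining: it lifts only the semi-continuity half of Lemma~\ref{lem:gpvswdequiv} (the analogue of Lemma~\ref{lem:e_reach_less_red_e_reach_less_gp}) from $\sol^\mdp$ to $\maxsol^\mdp$ and observes that, for strict lower bounds, the gadget direction is subsumed by the trivial inclusion $\ParamSpace^\mathrm{gp} \subseteq \ParamSpace^\mathrm{wd}$, so no new construction is needed; the price is that the argument is non-constructive and, as you correctly note, specific to $\bowtie\;=\;{>}$, which is exactly why Table~\ref{tab:reach-zoo} is asymmetric in the other comparison relations. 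Both routes rely on Lemma~\ref{lem:existwdgeeqexistwdge} for the collapse of the inner scheduler quantifier, so the essential content is shared; yours merely packages the gp-versus-wd step more cleanly for this particular corollary.
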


\begin{proof}
Minor adaptions in the proofs of Lemma~\ref{lem:existwdgeeqexistwdge} and Lemma~\ref{lem:gpvswdequiv}.	
\end{proof}

\subsection{Exists-forall reachability}
Contrary to pMCs, we obtain ETR-completeness in pMDPs for any comparison relation:
\begin{restatable}{theorem}{restatemdpsetrhard}
	\label{thm:etr:mdpsarehard}
	$\exists\forall\reach^{\bowtie}_{*}$ are all ETR-complete (even for acyclic pMDPs with a single non-deterministic state).
\end{restatable}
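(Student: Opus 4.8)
The plan is to reduce from a known ETR-complete problem involving conjunctions of strict inequalities. Concretely, deciding whether a finite conjunction of strict polynomial inequalities $f_1[\inst] > 0 \wedge \dots \wedge f_m[\inst] > 0$ has a solution $\inst \colon \Params \to (0,1)$ is ETR-complete (this follows from standard ball-constrained feasibility results, cf.\ the remark after Lemma~\ref{lem:etr:mb4feas} and the cited results of Schaefer--\v{S}tefankovi\v{c}). For the non-strict case I would instead reduce from mb4FEAS-c/o via Lemma~\ref{lem:etr:mb4feas}. Membership in ETR is already given by Proposition~\ref{pro:all-etr}, so only hardness is needed, and the real content is encoding a \emph{conjunction} of reachability-type constraints using the universal scheduler quantifier.

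The key idea for hardness is that a single non-deterministic state lets the adversary ($\forall$) \emph{pick which conjunct to test}. First I would normalize the instance: using Chonev's trick (Lemma~\ref{lem:chonevtrick}) on each polynomial $f_j$ (after rescaling so all coefficients are positive and the threshold is some common $\lambda$), build simple acyclic pMCs $\mdp_j$ with target $T$ such that $\Pr_{\mdp_j[\inst]}(\lozenge T) \bowtie \lambda \iff f_j[\inst] \bowtie \mu_j$. Then construct a pMDP $\mdp$ with a fresh initial state $\sinit$ having a single non-deterministic choice among $m$ actions $a_1,\dots,a_m$, where action $a_j$ deterministically moves to the initial state of $\mdp_j$; all the $\mdp_j$ share the same parameter set and the same target $T$. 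For $\bowtie\ \in\ \{\leq,<\}$ we have $\forall\sched.\ \Pr^{\sched}_{\mdp[\inst]}(\lozenge T) \bowtie \lambda$ iff for every $j$, $\Pr_{\mdp_j[\inst]}(\lozenge T) \bowtie \lambda$, i.e.\ iff the conjunction $\bigwedge_j f_j[\inst] \bowtie \mu_j$ holds; the $\exists\inst$ quantifier then matches the feasibility question exactly. For $\bowtie\ \in\ \{\geq,>\}$, I would dually take the \emph{disjunction} encoding: note $\forall\sched.\ \Pr^\sched \unrhd \lambda$ is equivalent to all branches satisfying $\unrhd\lambda$, so one flips each $f_j$ to ${-}f_j$ exactly as in the proof of Theorem~\ref{thm:etr:pmcsarehard}, or one uses Lemma~\ref{lem:restrict} to swap $\geq/\leq$ (on graph-preserving spaces) and handle $>$ separately via the strict-conjunction ETR-hardness directly. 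One also has to handle the wd-vs-gp distinction: since all the building blocks are acyclic simple pMCs and Chonev's construction preserves all satisfying instantiations over $[0,1]$ (hence in particular over $(0,1)$), the same reduction works for both $\mathrm{wd}$ and $\mathrm{gp}$, as in Theorem~\ref{thm:etr:pmcsarehard}.

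The main obstacle I anticipate is the strict case $\exists\forall\reach^{>}_{*}$ and $\exists\forall\reach^{<}_{*}$: the non-strict case piggybacks cleanly on mb4FEAS (Lemma~\ref{lem:etr:mb4feas}), but as the remark before this theorem points out, a strict-bound variant of mb4FEAS is not known to be ETR-hard, so I cannot route through a single strict polynomial inequality. This is exactly why the \emph{conjunction} trick is essential: conjunctions of strict polynomial inequalities \emph{are} ETR-complete, and the universal scheduler quantifier over a single non-deterministic state is precisely a conjunction. So the plan is to make the reduction from strict-conjunction-feasibility rather than from mb4FEAS for the $>,<$ entries, and from mb4FEAS for the $\leq,\geq$ entries; in both cases the gadget is the single-non-deterministic-state dispatcher described above. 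A secondary routine point is bookkeeping: ensuring the shared parameter/target setup keeps $\mdp$ simple and acyclic (Chonev's gadgets are acyclic, and prepending one non-deterministic state with Dirac transitions keeps acyclicity and simplicity), and that the threshold-$\frac12$ normalization of Remark~\ref{rem:fixedlambda} is applied uniformly so that all conjuncts use the same $\lambda=\tfrac12$.
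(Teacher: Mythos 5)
Your proposal is correct and follows essentially the same route as the paper: non-strict relations are handled via the pMC result (mb4FEAS plus Chonev's trick), while the strict relations are reduced from ETR-hard conjunctions of strict polynomial inequalities (the paper's bcon4INEQ-o/c) using exactly the single-non-deterministic-state dispatcher over Chonev gadgets, so that the universal scheduler quantifier realises the conjunction. The only blemish is the stray phrase ``disjunction encoding'' for the $\unrhd$ case, which your own next clause correctly overrides (it is still a conjunction, handled by negating each $f_j$).
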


\noindent For the strict relations, we use a different problem to reduce from.
\begin{restatable}{definition}{restatebconINEQdef}
  The decision problem \emph{bounded-conjunction-of-inequalities} (bcon4INEQ-c) asks:
  Given a family of quadric polynomials $f_1,\hdots, f_m$, $\exists \inst:
  \Params \to  [0,1]$ s.t.\ $\bigwedge_{i=1}^m f_i[\inst] < 0$? The open variant
  (bcon4INEQ-o) can be defined analogously.
\end{restatable}
\noindent By a reduction from mb4FEAS (adapted from~\cite[Thm 4.1]{DBLP:journals/mst/SchaeferS17}):
\begin{restatable}{lemma}{restatebconetrhard}
\label{lem:etr:bcon4INEQhard}
	The bcon4INEQ-o/c problems are ETR-hard.
\end{restatable}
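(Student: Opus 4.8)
\textbf{Proof plan for Lemma~\ref{lem:etr:bcon4INEQhard} (bcon4INEQ-o/c are ETR-hard).}
The plan is to reduce from mb4FEAS-o/c, which is ETR-hard by Lemma~\ref{lem:etr:mb4feas}. Recall that an instance of mb4FEAS asks, given a single non-negative quadric polynomial $f$, whether there is an $\inst\colon \Params \to [0,1]$ (respectively $(0,1)$) with $f[\inst] \leq 0$; since $f$ is non-negative on the unit cube by construction, this is the same as asking for a point where $f[\inst] = 0$. The difficulty in moving to bcon4INEQ is that bcon4INEQ uses \emph{strict} inequalities $f_i[\inst] < 0$, so we cannot simply take the family $\{f\}$: the zero set of $f$ is exactly what we want to detect, but $f[\inst] < 0$ never holds there. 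The standard way around this, which we adapt from~\cite[Thm 4.1]{DBLP:journals/mst/SchaeferS17}, is to replace the single equation $f = 0$ by two strict inequalities that together ``sandwich'' the zero set, using a fresh auxiliary variable as a slack/tolerance parameter.

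Concretely, I would introduce one fresh variable $\epsilon$ ranging over the same domain and output the family consisting of (a shifted/scaled version of) $f - \epsilon$ and $-f - \epsilon$, i.e. essentially the constraints $f[\inst] < \epsilon$ and $-f[\inst] < \epsilon$, which are equivalent to $|f[\inst]| < \epsilon$; additionally one constraint forcing $\epsilon$ to be arbitrarily small is not directly expressible, so instead one keeps $\epsilon$ as a free variable and argues via a limiting/compactness argument. The key point is: the conjunction $\exists \inst\, \exists\epsilon$ with $|f[\inst]| < \epsilon$ and (a constraint pushing $\epsilon$ down, e.g.\ $\epsilon < \delta$ for a suitable small constant $\delta$, or a reciprocal trick) is satisfiable iff $f$ attains values arbitrarily close to $0$ on the (compact) cube, which by continuity of $f$ and compactness of $[0,1]^{\Params}$ is equivalent to $f$ attaining $0$, i.e.\ to the original mb4FEAS-c instance being positive. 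For the open variant one must be slightly more careful, arguing that a root in the closed cube together with the strict-inequality slack still yields a witness with all coordinates (including $\epsilon$) strictly inside $(0,1)$, possibly after an affine reparametrisation of the cube; this mirrors exactly how~\cite{DBLP:journals/mst/SchaeferS17} handles the open/closed distinction. All polynomials produced are still quadric (degree at most four): $f$ is quadric and subtracting the degree-one term $\epsilon$ does not raise the degree, and any auxiliary constraint on $\epsilon$ alone is linear, so the output is a legitimate bcon4INEQ instance, and the transformation is clearly polynomial-time.

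The main obstacle I expect is getting the ``$\epsilon$ small enough'' mechanism exactly right so that the equivalence is an \emph{iff} rather than just one direction, and doing so with only quadric polynomials and only box constraints on the variables — one cannot add an equation $\epsilon = 0$ (that would be the forbidden non-strict case we are trying to get rid of), nor a non-polynomial side condition. The resolution is the compactness argument: because $f$ is continuous and $[0,1]^{\Params}$ is compact, $\inf_{\inst} f[\inst]$ is attained, so ``$\forall \delta>0\,\exists\inst\colon f[\inst]<\delta$'' collapses to ``$\exists\inst\colon f[\inst]\le 0$'', and since $f\ge 0$ everywhere this is ``$\exists\inst\colon f[\inst]=0$'', which is the mb4FEAS-c instance. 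Thus one may fix \emph{any} convenient constant threshold for $\epsilon$ (or even drop the threshold entirely and let the two sandwiching inequalities plus $\epsilon$'s own box constraint do the work), and the reduction goes through in both the closed and the open setting, establishing ETR-hardness of bcon4INEQ-c and bcon4INEQ-o.
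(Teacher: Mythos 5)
Your starting point matches the paper's: reduce from mb4FEAS, replace the equation $f[\inst]=0$ by strict inequalities that ``sandwich'' zero using a fresh slack variable. But there is a genuine gap at exactly the place you flag as the main obstacle, and your proposed resolution does not work. The satisfiability question for a bcon4INEQ instance is purely existential, so your system reads $\exists \inst\,\exists\epsilon$ with $f[\inst]<\epsilon$ (and perhaps $\epsilon<c$ for some constant $c$). If $f$ has no root but its minimum on the cube is, say, $c/2>0$, this system is still satisfiable by taking $\epsilon$ between $c/2$ and $c$; so the reduction fails in the ``no'' direction. Your compactness argument only shows that \emph{some} positive lower bound $\delta_0$ on $\min f$ exists in the no-case; it does not let you ``fix any convenient constant threshold'' or ``drop the threshold entirely,'' because the reduction must output a concrete system that distinguishes $\min f=0$ from $\min f>0$, and for that you need an \emph{effective, input-independent-in-form but input-size-dependent-in-value} bound on how small a positive minimum can be.

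The paper supplies exactly the two missing ingredients. First, a quantitative separation bound: viewing the problem as asking whether the compact semi-algebraic sets $A=\{(\inst,z): f[\inst]=z,\ \inst\in[0,1]^{\Params}\}$ and $B=\{(\inst,z): z=0,\ \inst\in[0,1]^{\Params}\}$ intersect, \cite[Corollary 3.8]{DBLP:journals/mst/SchaeferS17} guarantees that if they are disjoint their distance is at least $2^{{-}2^{L+5}}$, where $L$ bounds the encoding length of $f$. Second, since this threshold is doubly exponentially small and cannot be written down in polynomial time, it is enforced by a repeated-squaring chain of strict quadratic inequalities $\delta<\delta_1^2,\ \delta_1<\delta_2^2,\ \dots,\ \delta_{L+4}<\tfrac12$ over $L+5$ fresh variables, together with $f-\delta<0$. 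With these, the iff holds: a root of $f$ makes the whole chain satisfiable with any sufficiently small positive $\delta$, and conversely satisfiability forces $f[\inst]<\delta<2^{{-}2^{L+5}}$, which by the gap bound is impossible unless $A\cap B\neq\emptyset$, i.e.\ unless $f$ has a root. Without the gap theorem and the squaring gadget, your sketch does not yield a correct polynomial-time reduction.
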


\begin{proof}[Proof sketch of Thm.~\ref{thm:etr:mdpsarehard}]
	ETR-hardness for non-strict inequalities follows from Thm.~\ref{thm:etr:pmcsarehard}. For strict inequalities, we reduce from bcon4INEQ-o/c: Generalise the construction from Thm.~\ref{thm:etr:pmcsarehard}: Build a pMDP as in Fig.~\ref{fig:mdpsareetrhard} with pMCs $\mdp(f_i)_{\lambda=\nicefrac{1}{2},\mu=0}$ created by Lemma~\ref{lem:chonevtrick}. 
\end{proof}

\begin{figure}
\centering
\begin{tikzpicture}
	\node[circle, draw, initial, initial text=] (sinit) {};
	\node[below=1.3cm of sinit] (dots) {$\dots$};
	\node[rectangle, left=0.7cm of dots, draw, inner sep = 6pt] (f1) {\footnotesize$\mdp(f_1)_{\lambda=\nicefrac{1}{2},\mu=0}$};
	\node[rectangle, right=0.7cm of dots, draw, inner sep = 6pt] (fm) {\footnotesize$\mdp(f_m)_{\lambda=\nicefrac{1}{2},\mu=0}$};
	\node[circle, fill, inner sep = 2pt, above=0.5cm of f1] (a1) {};
	\node[circle, fill, inner sep = 2pt, above=0.5cm of fm] (am) {};
	
	\draw[->] (a1) edge  node[right] {\scriptsize$1$} (f1);
	\draw[->] (am) edge  node[right] {\scriptsize$1$} (fm);
	\draw[->] (sinit) edge  node[right] {} (a1);
	\draw[->] (sinit) edge  node[right] {} (am);

\end{tikzpicture}
\caption{Construction for the proof of Thm.~\ref{thm:etr:mdpsarehard}}
\label{fig:mdpsareetrhard}
\end{figure}
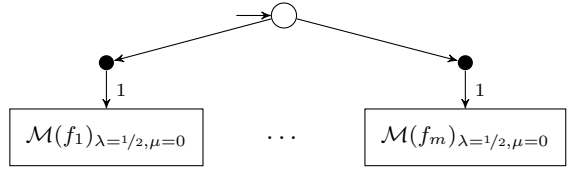

\paragraph*{Relation to stochastic games}
We will now argue that
pMDPs are---in a sense---a generalisation of
\emph{Concurrent Stochastic Reachability Games} (CSRG), a model which has been
extensively studied~\cite{Sha53,DBLP:journals/tcs/AlfaroHK07,survey-CSGs,DBLP:conf/lics/HansenKM09,chi17}. We use this to establish more fine-grained results about
the bit-complexity of the parameter synthesis problem below.

\subparagraph*{Playing a stochastic game.} A CSRG is a two-player game $\csrg$
played on a finite set $S$ of states. The objective of player $\playerI$
is to reach a target states $T \subseteq S$ 
while player $\playerII$
has to avoid ever reaching a state in $T$.
A play of $\csrg$ begins in an initial state $\csrginit$ and proceeds as
follows: In state $s$, both players $\playerI$ and $\playerII$
\emph{concurrently} select an action $a \in A_s$ (resp.\ $b \in B_s$), the finite
set of actions available to player $\playerI$ (resp.\ $\playerII$) in state $s$.
The game then picks a successor state $s'$ according to a fixed 
probability distribution $P(\cdot | s,a,b)$ over $S$, and the
play continues in $s'$. The transition from $s$ to $s'$ is called a
\emph{round} of $\csrg$. Player $\playerI$ wins $\csrg$ once a state in
$T$ is reached. Otherwise, if a target is never reached, then $\playerII$ wins. 

A \emph{strategy} $\sigma$ of a player is, in essence, a scheduler. However,
strategies in a CSRG map state-action sequences $s_0 (a_0,b_0) \dots s_{k-1}
(a_{k-1},b_{k-1}) s_k$
to a probability distribution over the actions $A_{s_k}$
(resp.\ $B_{s_k}$) available in the current state $s_k$. We call $\sigma$ a
\emph{stationary} strategy if it does not depend on the history but only on the
current state, i.e. it is a randomised memoryless scheduler.
Let $\Sigma^i$ denote the set of stationary strategies for player $i \in
\{\playerI, \playerII\}$.

\subparagraph*{Instantiations and MDPs.} The \emph{instantiation} $\csrg^\sigma$
of $\csrg$ with a stationary strategy for player $\playerI$ is the structure
obtained by forcing player $\playerI$ to follow $\sigma$. Notice that
$\csrg^\sigma$ is a finite MDP $\mdp$. Its transition probability function
$P_\mdp$ is obtained by letting
\begin{equation} \label{eq:csrginst} 
	P_\mdp(s,b,s') = \sum_{a \in A_s} \sigma(a | s)P(s' | s,a,b)
\end{equation}
for all $s, s' \in S$ and actions $b \in B_s$ of player $\playerII$.
(Instantiations are defined completely symmetrically for strategies of player
$\playerII$.) Conversely, every MDP may be viewed as a CSRG where $|A_s| = 1$
(or $|B_s| = 1$) for all $s \in S$, i.e.\ one of the players does never have any
choice.

\subparagraph*{Value of a CSRG.} Let $\Pr_\csrg^{\sigma, \tau}(\lozenge T)$
be the probability that $T$ is reached if player $\playerI$ plays according to $\sigma$
and player $\playerII$ according to $\tau$. The \emph{value} of $\csrg$ is defined as follows
\[
  V(\csrg)
  \coloneqq \sup_{\sigma}
  \Inf_{\tau} \Pr_\csrg^{\sigma,
  \tau}(\lozenge T)
\]
where the $\sup$ and $\inf$ range over all strategies of both players
respectively.
Intuitively, it is the maximal winning probability of player~$\playerI$ that can be
guaranteed against all strategies of  player~$\playerII$.
The existence of stationary optimal strategies for player~$\playerII$~\cite{partha71,partha76}
allows us to encode a CSRG in a pMDP as we will show next.
\begin{restatable}{theorem}{restatecsrgencoding}
	\label{thm:csrgencoding}
	For any given CSRG $\csrg$, there exists a simple pMDP $\mdp$ such that
	\[
    V(\csrg) = 
	\min_{\tau \in \Sigma^\playerII}\max_{\sigma \in \Sigma^\playerI}\Pr_\csrg^{\sigma, \tau}(\lozenge T)
	=
	\min_{\inst \in \ParamSpace^{\mathrm{wd}}}\max_{\sched \in \Sched}\Pr_{\mdp[\inst]}^\sched(\lozenge T)
	\]
	and $\mdp$ can be computed in polynomial time (in the size of $\csrg$).
\end{restatable}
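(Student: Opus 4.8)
The plan is to encode a CSRG $\csrg$ as a simple pMDP $\mdp$ in which the parameters play the role of player~$\playerII$'s stationary strategy, and the non-determinism of $\mdp$ plays the role of player~$\playerI$'s choices. The first equality, $V(\csrg) = \min_{\tau}\max_{\sigma}\Pr_\csrg^{\sigma,\tau}(\lozenge T)$, is not part of the construction and follows from classical game theory: player~$\playerII$ has stationary optimal strategies~\cite{partha71,partha76} (as remarked just before the statement), so the outer $\sup_\sigma\inf_\tau$ may be exchanged for $\min_{\tau\in\Sigma^\playerII}\sup_\sigma$, and against a fixed stationary $\tau$ the instantiation $\csrg^\tau$ is a finite MDP for which deterministic memoryless schedulers attain the optimum (Rem.~\ref{rem:detsched}), giving $\sup_\sigma = \max_{\sigma\in\Sigma^\playerI}$.

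For the second equality I would construct $\mdp$ as follows. Fix an enumeration $B_s = \{b_{s,1},\dots,b_{s,k_s}\}$ of player~$\playerII$'s actions in each state $s$. Introduce parameters that encode, via a simple-pMDP-friendly binary-branching gadget, a probability distribution over $B_s$: concretely, a small acyclic gadget on auxiliary states between $s$ and its $\playerII$-action vertices, each transition labelled $x$ or $1-x$ for fresh parameters, such that the induced probability of selecting $b_{s,j}$ is an arbitrary point of the simplex $\Distr(B_s)$ as the parameters range over $(0,1)$ — this is exactly the standard trick used for randomised schedulers in the excerpt (``Under randomised schedulers, the induced pMC can be transformed into a simple pMC'', and the binary-decision construction of Section~\ref{sec:mdp:ee}). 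Player~$\playerI$'s action $a\in A_s$ becomes a genuine non-deterministic action of $\mdp$ at $s$; after $\playerI$ picks $a$ and the parameter gadget has picked $b$, a probabilistic transition distributes according to $P(\cdot\mid s,a,b)$, which again is made simple by the same binary-decision-plus-parameter encoding (note: these latter probabilities are the fixed, non-parametric numbers of the game, so only $\playerII$'s move needs parameters; $\playerI$'s move is the pMDP's own non-determinism). All of this is linear in the size of $\csrg$, hence polynomial time, and preserves simplicity.

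The correctness argument then has two directions. Each well-defined instantiation $\inst$ determines a stationary $\playerII$-strategy $\tau_\inst$ (read off the gadget probabilities) and, conversely, each stationary $\tau$ is realised by some $\inst$ (surjectivity of the simplex gadget onto $\Distr(B_s)$); moreover each deterministic memoryless scheduler $\sched$ of $\mdp$ corresponds to a stationary $\playerI$-strategy and, by Rem.~\ref{rem:detsched}, deterministic memoryless schedulers suffice to attain $\max_\sched$. One checks that under $(\inst,\sched)\leftrightarrow(\tau_\inst,\sigma_\sched)$ the induced Markov chain of $\mdp[\inst]^\sched$ and the Markov chain $\csrg^{\sigma_\sched,\tau_\inst}$ have the same reachability probability for $T$ — this is precisely Eq.~\eqref{eq:csrginst} unfolded through the gadget, so $\Pr_{\mdp[\inst]}^\sched(\lozenge T) = \Pr_\csrg^{\sigma_\sched,\tau_\inst}(\lozenge T)$. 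Taking $\max$ over $\sched$ and then $\min$ over $\inst$ (the minimum exists by Cor.~\ref{cor:pmdpminexists}) yields the claimed identity. The main obstacle I anticipate is the bookkeeping in the gadget that simultaneously (i) realises every stationary distribution over $B_s$, (ii) keeps the pMDP \emph{simple} in the sense of the paper (only $x$/$1{-}x$ labels, row sums equal $1$), and (iii) does not accidentally create extra non-determinism or extra reachable target mass; getting all three at once, while still polynomial and while the parameter gadget is \emph{surjective} onto the simplex rather than merely injective, is the delicate part, but it is essentially the composition of constructions already invoked in Sections~\ref{sec:landscape} and~\ref{sec:mdp:ee}.
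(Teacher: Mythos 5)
Your proposal is correct and follows essentially the same route as the paper: parameters encode player $\playerII$'s stationary strategy (one parametric branch per $b\in B_s$, routed through intermediate states $s_{ab}$ that then fire the constant distribution $P(\cdot\mid s,a,b)$), non-determinism encodes player $\playerI$, and the first equality is delegated to the classical existence of stationary optimal strategies for player $\playerII$ plus Rem.~\ref{rem:detsched}. The only cosmetic difference is that you build the binary $x/(1{-}x)$ simplex gadget into the construction from the start, whereas the paper first builds a non-simple pMDP with one parameter per $\playerII$-action and then applies exactly that gadget (with the reparametrisation $\inst'(x_i)=\inst(x_i)/(1-\inst(x_1))$, handling the boundary case $\inst(x_1)=1$) as a post-processing step to restore simplicity.
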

\noindent As a direct consequence, we obtain CSRG-hardness.
\begin{restatable}{corollary}{restatecsrghardness} \label{thm:csrghardness}
  Determining whether $V(\csrg) \unlhd \lambda$, for $\lambda \in
  \mathbb{Q}$, reduces to $\exists\forall\reach^{\unlhd}_\mathrm{wd}$.
\end{restatable}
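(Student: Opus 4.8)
The plan is to read the corollary directly off Theorem~\ref{thm:csrgencoding}, converting the $\min$--$\max$ characterisation of $V(\csrg)$ into a quantifier alternation and then normalising the threshold via Remark~\ref{rem:fixedlambda}. Given a CSRG $\csrg$ and a rational $\lambda$, I would first apply Theorem~\ref{thm:csrgencoding} to compute, in polynomial time, a \emph{simple} pMDP $\mdp$ with target set $T$ such that
\[
  V(\csrg) \;=\; \min_{\inst \in \ParamSpace^{\mathrm{wd}}} \; \max_{\sched \in \Sched} \; \Pr_{\mdp[\inst]}^\sched(\lozenge T).
\]

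The second step rewrites ``$V(\csrg) \unlhd \lambda$'' as an $\exists\forall$ statement about $\mdp$, using two elementary facts. Since $\Sched$ is finite, $\max_{\sched \in \Sched}\Pr_{\mdp[\inst]}^\sched(\lozenge T) \unlhd \lambda$ holds exactly when $\Pr_{\mdp[\inst]}^\sched(\lozenge T) \unlhd \lambda$ for all $\sched \in \Sched$. Moreover, by Corollary~\ref{cor:pmdpminexists} the map $\inst \mapsto \max_{\sched \in \Sched}\Pr_{\mdp[\inst]}^\sched(\lozenge T)$ attains its minimum on $\ParamSpace^{\mathrm{wd}}$, so this minimum is $\unlhd \lambda$ exactly when some $\inst \in \ParamSpace^{\mathrm{wd}}$ realises a value $\unlhd \lambda$. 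Composing the two equivalences yields
\[
  V(\csrg) \unlhd \lambda \quad\Longleftrightarrow\quad \exists\, \inst \in \ParamSpace^{\mathrm{wd}} \;\colon\; \forall\, \sched \in \Sched \;\colon\; \Pr_{\mdp[\inst]}^\sched(\lozenge T) \unlhd \lambda ,
\]
which is exactly the problem $\exists\forall\reach^{\unlhd}_{\mathrm{wd}}$ for $\mdp$, except that the threshold is $\lambda$ rather than $\frac{1}{2}$.

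For the final step I would invoke Remark~\ref{rem:fixedlambda}: when $0 < \lambda < 1$, prepending the appropriate gadget to $\mdp$ produces a simple pMDP $\mdp'$ (still computable in polynomial time) whose threshold-$\frac{1}{2}$ instance of $\exists\forall\reach^{\unlhd}_{\mathrm{wd}}$ is equivalent to the threshold-$\lambda$ instance for $\mdp$; the prepended state carries only parameter-free transitions and a single action, so it introduces no new nondeterminism and the $\forall\,\sched$ quantifier is unaffected. The degenerate cases $\lambda \geq 1$ and $\lambda < 0$ give constant instances, and the boundary case $\lambda = 0$ with $\unlhd\,=\,\leq$ reduces analogously by prepending a fresh single-action state that moves to $T$ with probability $\frac{1}{2}$ and to $\mdp$'s initial state with probability $\frac{1}{2}$ (the $\lambda = \frac{3}{4}$ case of Remark~\ref{rem:fixedlambda}), again reaching threshold $\frac{1}{2}$. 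Chaining the three steps gives a polynomial-time many-one reduction.

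I do not anticipate a substantial obstacle; the one delicate point -- and the reason the statement is formulated with $\unlhd$ rather than $\unrhd$ -- is swapping $\min_{\inst}$ with the existential quantifier, which requires the optimum over the (infinite, possibly irrational-valued) parameter space to be \emph{attained}. This is precisely what Corollary~\ref{cor:pmdpminexists} supplies, and the remainder of the argument is purely formal.
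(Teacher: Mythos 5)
Your proposal is correct and follows essentially the same route as the paper's own (very terse) proof: apply Theorem~\ref{thm:csrgencoding} to obtain the pMDP, observe that $\min_{\inst}\max_{\sched}\Pr_{\mdp[\inst]}^\sched(\lozenge T) \unlhd \lambda$ is equivalent to $\exists\inst\,\forall\sched.\;\Pr_{\mdp[\inst]}^\sched(\lozenge T)\unlhd\lambda$, and normalise the threshold. You are in fact more careful than the paper, which leaves both the attainment of the minimum (needed for the $\leq$ direction, and supplied by Corollary~\ref{cor:pmdpminexists} or by the stationary optimal strategies underlying Theorem~\ref{thm:csrgencoding}) and the Remark~\ref{rem:fixedlambda} threshold adjustment implicit.
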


\noindent It follows from \cite[Thms.\ 6 and 12]{chi17} that optimal rational
instantiations may be complex.
\begin{theorem}
  There are pMDPs for which rational optimal and $\varepsilon$-optimal parameter
  values minimising the value $\max_{\sched \in
  \Sched}\Pr_{\mdp[\inst]}^\sched(\lozenge T)$ require exponentially-many bits
  to be written as a binary-encoded integer-pair.
\end{theorem}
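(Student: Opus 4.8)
\textbf{The plan} is to transfer the statement from the analogous, and known, phenomenon for concurrent stochastic reachability games through the encoding of Theorem~\ref{thm:csrgencoding}. Concretely, \cite[Thms.\ 6 and 12]{chi17} provide a family $(\csrg_n)_{n\in\mathbb{N}}$ of CSRGs with $|\csrg_n|$ polynomial in $n$ such that (i) every rational optimal \emph{stationary} strategy of player~$\playerII$ in $\csrg_n$ has an action probability whose reduced numerator/denominator pair needs $\Omega(2^n)$ bits, and (ii) already every $\varepsilon$-optimal stationary strategy of player~$\playerII$ does, for a suitably small (single-exponentially small in $n$) threshold $\varepsilon = \varepsilon_n$. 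I would feed each $\csrg_n$ into the construction behind Theorem~\ref{thm:csrgencoding} to obtain a simple pMDP $\mdp_n$ of size polynomial in $n$ with $V(\csrg_n) = \min_{\inst \in \ParamSpace^{\mathrm{wd}}} \max_{\sched \in \Sched} \Pr_{\mdp_n[\inst]}^{\sched}(\lozenge T)$, and then argue that a sub-exponential-size optimal (resp.\ $\varepsilon_n$-optimal) instantiation of $\mdp_n$ would induce a sub-exponential-size optimal (resp.\ $\varepsilon_n$-optimal) stationary player-$\playerII$ strategy in $\csrg_n$, contradicting (i) (resp.\ (ii)). Thus $(\mdp_n)_n$ witnesses the theorem.

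\textbf{The steps}, in order. First, make the parameter/strategy correspondence explicit: in the construction behind Theorem~\ref{thm:csrgencoding}, every well-defined instantiation $\inst$ of $\mdp_n$ encodes a stationary player-$\playerII$ strategy $\tau_\inst$ of $\csrg_n$ (and every stationary $\tau$ arises this way, which is what makes the value equality hold), with $\mdp_n[\inst]$ equal, as an MDP, to the instantiation $\csrg_n^{\tau_\inst}$ in which player~$\playerI$ is the remaining non-determinism; cf.\ \eqref{eq:csrginst}. Second, bound bit-sizes: since $\mdp_n$ is \emph{simple}, a distribution over player~$\playerII$'s actions is realised by a polynomial-size gadget of transitions labelled $x$ or $1-x$, so each coordinate of $\tau_\inst$ is a polynomial of polynomially bounded degree with $0/1$ coefficients in the parameter values, hence computable from $\inst$ with only a polynomial blow-up in the number of bits. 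Third, transfer optimality: because $\max_{\sched}\Pr_{\mdp_n[\inst]}^{\sched}(\lozenge T) = \sup_{\sigma}\Pr_{\csrg_n}^{\sigma,\tau_\inst}(\lozenge T)$ for every $\inst$ (it is an MDP reachability value), an instantiation achieving $\min_{\inst}\max_{\sched}\Pr_{\mdp_n[\inst]}^{\sched}(\lozenge T) = V(\csrg_n)$ yields a $\tau_\inst$ achieving $\Inf_{\tau}\sup_{\sigma}\Pr_{\csrg_n}^{\sigma,\tau}(\lozenge T) = V(\csrg_n)$, i.e.\ an optimal stationary player-$\playerII$ strategy; $\varepsilon$-optimality transfers in the same way. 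Finally, combine the bit-size bound with the optimality transfer and with (i)/(ii) to obtain the contradiction, establishing the claimed exponential lower bound for $(\mdp_n)_n$.

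\textbf{The main obstacle} I foresee is not conceptual but a matter of fixing the concrete encoding of Theorem~\ref{thm:csrgencoding} precisely enough to certify the second step --- in particular, that the gadgets needed to turn $x$/$1-x$-transitions into arbitrary player-$\playerII$ distributions raise degrees (and thus bit-lengths after evaluation at a rational point) by no more than a polynomial factor. Two bookkeeping caveats round off the proof. For the ``rational optimal'' half, one must note that optimal stationary player-$\playerII$ strategies always exist (\cite{partha71,partha76}) but the value itself can be irrational, so the precise statement is that \emph{among} the rational optimal parameter values, all require exponentially many bits --- this is exactly what \cite[Thm.\ 6]{chi17} supplies. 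For the ``$\varepsilon$-optimal'' half, rational $\varepsilon$-optimal instantiations always exist by density of $\mathbb{Q}$, so the content is that every one of them is exponentially large, which follows from \cite[Thm.\ 12]{chi17}.
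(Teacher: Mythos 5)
Your proposal is correct and follows essentially the same route as the paper, which derives this theorem directly from \cite[Thms.\ 6 and 12]{chi17} via the CSRG-to-pMDP encoding of Theorem~\ref{thm:csrgencoding}; the paper leaves the transfer implicit, whereas you spell out the strategy/instantiation correspondence, the polynomial bit-size preservation through the simplicity gadget, and the optimality transfer, all of which are the right details to check.
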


\section{Robust reachability}
In this section, we briefly consider pMDPs in which we focus on obtaining (robust) schedulers rather than (robust) parameter values: 
We swap the quantification order from the $\quant_1 \quant_2 \reach$ problem. Intuitively, we ask
whether some scheduler gives guarantees on the maximal or minimal probability of
all instantiations of the pMDP eventually reaching $T$. 
Formally, for each $\quant_1,\quant_2 \in \{\exists,\forall\}$ and
$\mathrm{\bowtie} \in
\{\leq, <, >, \geq\}$, let
\[
  \quant_1\quant_2\robreach^{\bowtie}_\mathrm{wd} \stackrel{\mathrm{def}}{\iff}
  \quant_1 \sched \in \Sigma,\;\quant_2 \inst \in \ParamSpace^{\mathrm{wd}}.\;
    \Pr_{\mdp[\inst]}^\sched(\lozenge T) \bowtie \frac{1}{2}.
\]
We adopt the same conventions as for the $\quant_1\quant_2\reach$ problem when
considering graph-preserving instantiations.
Variants which use the same quantifier twice, or consider pMCs yield the same results as for $\reach^{\bowtie}$, and are therefore omitted.

Robust strategies have been widely studied in the field of operations
research (see, e.g.,~\cite{ng05,wkr13}) and are the main focus of \emph{reinforcement
learning}~\cite{sb98}. It is known
that the robust-reachability problem as defined above is not the most
general question one can ask. Indeed, we restrict our attention to
\emph{memoryless schedulers} while,
in general, optimal robust schedulers require
memory and randomisation~\cite{ArmingBS17}. 

Our interest in the robust-reachability problem is twofold. First,
it naturally corresponds to the quantifier-swapped version of the reachability
problem. Second, memoryless schedulers are desirable in practice for
their comprehensibility and ease of implementation.

\begin{restatable}{theorem}{restatefprobreachnpcomp}
\label{thm:fp_e_a_robreach_np_compl}
  In the fixed parameter case, $\exists\forall\robreach^{\nicefrac{<}{>}}_{{*}}$ are NP-complete. NP-hardness holds even for acyclic pMDPs with a single parameter.
\end{restatable}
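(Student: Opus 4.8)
\noindent\emph{Proof idea.} The plan is to establish NP-membership and NP-hardness separately.

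\smallskip\noindent\emph{Membership in NP.} I would nondeterministically guess a memoryless scheduler $\sched$ (of polynomial size), form the induced pMC $\mdp^\sched$, which still has only a fixed number of parameters, and then decide the remaining universal statement. For $\exists\forall\robreach^{<}_{\mathrm{wd}}$ this statement is $\forall \inst \in \ParamSpace^{\mathrm{wd}}.\ \Pr_{\mdp^\sched[\inst]}(\lozenge T) < \tfrac{1}{2}$, whose negation is precisely an instance of $\exists\reach^{\geq}$ for the fixed-parameter pMC $\mdp^\sched$; this lies in $\mathrm{P}$ (Table~\ref{tab:reach-zoo}, \cite{baiercomplexity}), so the universal statement does too, and the whole problem is in NP. For $\exists\forall\robreach^{>}$ one uses $\exists\reach^{\leq}$ instead, and the graph-preserving variants follow by additionally invoking Remark~\ref{rem:fixedparam}.

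\smallskip\noindent\emph{The hardness construction.} For NP-hardness I would reduce from $3$SAT. Given $\phi$ with variables $x_1,\dots,x_n$ and clauses $C_1,\dots,C_m$, put $x^*_j := j/(m+1)$ and fix, for each $j$, an \emph{adequate} polynomial $d_j \in \QQ[x]$ that stays within $\tfrac{1}{100mn}$ of $1$ on all of $[0,1]$, except for a narrow ``dip'' confined to the $\tfrac{1}{3(m+1)}$-neighbourhood of $x^*_j$ where $d_j(x^*_j) = \tfrac{3}{4}$. Such a $d_j$ has degree polynomial in $m,n$ and, being bounded away from $0$ and $1$, is realised as the solution function of a polynomially sized simple acyclic pMC via Theorem~\ref{thm:handelmanconstr}/Lemma~\ref{lem:binomialrep}. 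The pMDP $\mdp$ over the single parameter $x$ is then a chain of $n$ acyclic ``stages'': stage $i$ starts in a state $\iota_i$ with two actions; one action routes deterministically (via Dirac transitions) through one fresh copy of the $d_j$-gadget, in series, for every clause $C_j$ containing the literal $\neg x_i$, and upon surviving all of them proceeds to $\iota_{i+1}$ (otherwise to a sink); the other action does the same for the clauses containing $x_i$. Letting $\iota_{n+1}$ be the target $T$, a memoryless scheduler $\sched$ picks exactly a truth assignment $\vec b$, and $\Pr^\sched_{\mdp[\inst]}(\lozenge T) = \prod_{j=1}^m d_j(x)^{u_j(\vec b)}$, where $u_j(\vec b) \in \{0,1,2,3\}$ counts the literals of $C_j$ that $\vec b$ falsifies and $x = \inst$.

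\smallskip\noindent\emph{Correctness.} If $\vec b$ satisfies $\phi$ then $u_j(\vec b) \leq 2$ for all $j$; since at any $x$ at most one $d_j$ is inside its dip while all others exceed $1 - \tfrac{1}{100mn}$, one gets $\Pr^\sched_{\mdp[\inst]}(\lozenge T) \geq \bigl(\prod_j d_j(x)\bigr)^2 \geq \bigl(\tfrac{3}{4}\cdot(1-\tfrac{1}{100mn})^m\bigr)^2 > \tfrac{1}{2}$ for all $x\in[0,1]$. If $\vec b$ violates some $C_{j_0}$ then $u_{j_0}(\vec b) = 3$ and, using $d_j \leq 1$, $\Pr^\sched_{\mdp[\inst]}(\lozenge T) \leq d_{j_0}(x^*_{j_0})^3 = \bigl(\tfrac{3}{4}\bigr)^3 < \tfrac{1}{2}$ at $x = x^*_{j_0}$. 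Hence $\phi$ is satisfiable iff $\exists\sched\,\forall\inst.\ \Pr^\sched_{\mdp[\inst]}(\lozenge T) > \tfrac{1}{2}$; since the $x^*_j$ lie in $(0,1)$ and the value at the endpoints is close to $1$, this is equivalent on both $\ParamSpace^{\mathrm{wd}}$ and $\ParamSpace^{\mathrm{gp}}$, i.e.\ it is a positive instance of $\exists\forall\robreach^{>}_{*}$. The $<$-variant is obtained by declaring the sink to be the target (so the reachability probability becomes $1 - \prod_j d_j(x)^{u_j(\vec b)}$), or directly via Lemma~\ref{lem:restrict}. As $\mdp$ is acyclic, simple, and uses only the parameter $x$, this proves NP-hardness in the claimed restricted form.

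\smallskip\noindent\emph{Main obstacle.} The delicate part is the design and analysis of the dip polynomials $d_j$: one must make each dip deep (down to $\tfrac{3}{4}$) yet narrow enough that its ``spill'' onto the other points $x^*_{j'}$ and onto the non-dip regions stays below $\tfrac{1}{100mn}$, keep every $d_j$ adequate and of polynomially bounded degree (so that Theorem~\ref{thm:handelmanconstr} applies in polynomial time), and verify that the threshold arithmetic $\bigl(\tfrac{3}{4}\bigr)^3 < \tfrac{1}{2} < \bigl(\tfrac{3}{4}\cdot(1-\tfrac{1}{100mn})^m\bigr)^2$ survives all the accumulated approximation errors simultaneously for \emph{every} $x\in[0,1]$, not only at the points $x^*_j$.
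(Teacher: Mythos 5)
Your NP-membership argument is essentially the paper's: guess a memoryless scheduler, observe that the induced pMC still has a fixed number of parameters, and decide the residual universally quantified statement by complementing it to a fixed-parameter $\exists\reach$ instance, which is in P. That part is fine.

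The hardness reduction, however, departs from the paper's and contains a genuine gap. The paper also reduces from 3SAT over a single parameter $x$ with clause points $k/(m{+}1)$, but it composes the variable gadgets \emph{in parallel}: the initial state branches uniformly to $n$ variable states, and the action for literal $\ell_i$ reaches the target with probability $f_{i,\gamma}=\tfrac12\bigl(1+\prod_{k}(x-\tfrac{k}{m+1})^2\bigr)$ (product over clauses \emph{not} satisfied by $\ell_i$), so the overall value is an average $\tfrac1n\sum_i\Delta_i$ of functions that are $\ge\tfrac12$ everywhere; unsatisfiability is witnessed by all summands simultaneously attaining $\tfrac12$ at some clause point. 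These polynomials are explicit, of degree $O(m)$, and simplicity is obtained via Lemma~\ref{lem:chonevtrick}, which comes with a stated polynomial-time bound. Your construction composes clause gadgets \emph{in series}, so the value is a product $\prod_j d_j(x)^{u_j}$, and this forces you to realise each dip polynomial $d_j$ \emph{exactly} as the solution function of a simple acyclic pMC of \emph{polynomial size}. Theorem~\ref{thm:handelmanconstr} only asserts existence: the Bernstein/Handelman degree $n$ in Lemma~\ref{lem:binomialrep} is not bounded by the statement (it depends on how close $f$ and $1-f$ come to $0$ on $[0,1]$), and the paper explicitly leaves open whether every adequate polynomial admits a ``small'' pMC. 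Nor can you substitute Lemma~\ref{lem:chonevtrick} here, since it only preserves a threshold comparison after an affine rescaling of the polynomial, and affine rescalings do not commute with your multiplicative composition. The gap is repairable --- e.g., define each $d_j$ directly by its Bernstein coefficients $p_k\in[\tfrac34,1]$ on the degree-$N$ pyramid from the proof of Theorem~\ref{thm:handelmanconstr} with $N=\mathrm{poly}(m,n)$, so the realising pMC is polynomial by construction and the dip/flat behaviour follows from concentration of the binomial weights --- but as written, ``realised as the solution function of a polynomially sized simple acyclic pMC via Theorem~\ref{thm:handelmanconstr}'' is not justified by anything in the paper. The surrounding threshold arithmetic and the wd/gp equivalence in your write-up are sound.
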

\begin{proof}[Proof sketch]
Membership in NP is analogous to Lem.~\ref{lem:fp_ee_reach_in_np}. 
NP-hardness is based on a reduction from 3-SAT, with a construction similar to Fig.~\ref{fig:mdpsareetrhard}. \end{proof}

\begin{proposition}
	The decision problems
  $\exists\forall\robreach^{\bowtie}_{*}$   are  NP-hard and coNP-hard, and in PSPACE. For non-strict inequalities, the problems are coETR-hard.
\end{proposition}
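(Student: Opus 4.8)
The plan is to establish the four ingredients separately—membership in PSPACE, NP-hardness, coNP-hardness, and coETR-hardness for the non-strict relations—reusing the machinery built above. For \emph{membership in PSPACE}, note that since we quantify over deterministic memoryless schedulers only, there are at most $|\Act|^{|S|}$ candidates, each of polynomial description size. The algorithm enumerates them one at a time (a polynomial-size counter suffices, and the work space of each individual test is reused). For a fixed $\sched$, build the induced pMC $\mdp^\sched$; the statement ``$\forall\,\inst\in\ParamSpace^{\mathrm{wd}}.\;\Pr_{\mdp^\sched[\inst]}(\lozenge T)\bowtie\frac12$'' (and analogously its graph-preserving variant) is the negation of a polynomial-size existential sentence over the reals, obtained by applying the encoding of Appendix~\ref{app:fullencoding} to $\mdp^\sched$. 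As $\mathrm{ETR}\subseteq\mathrm{PSPACE}$ and PSPACE is closed under complement, this test runs in polynomial space; hence so does the whole enumeration, which accepts iff some $\sched$ passes.

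For \emph{NP-hardness}, the strict relations are exactly Theorem~\ref{thm:fp_e_a_robreach_np_compl}, which gives NP-hardness already for acyclic pMDPs with one parameter---a fortiori in the unbounded-parameter case. For the non-strict relations I would re-run the same 3SAT reduction: the scheduler encodes a truth assignment through its choices, and the construction is arranged so that a \emph{satisfying} assignment forces the reachability probability (under every instantiation) to be bounded away from $\frac12$, whereas for a non-satisfying assignment the instantiation can push it strictly across $\frac12$. Hence replacing $>$ by $\geq$ (resp.\ $<$ by $\leq$) does not change the answer, and NP-hardness carries over to $\exists\forall\robreach^{\geq}_{*}$ and $\exists\forall\robreach^{\leq}_{*}$. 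For \emph{coNP-hardness}, observe that every pMC is a pMDP with a single scheduler, so on pMC inputs $\exists\forall\robreach^{\bowtie}_{*}$ coincides with ``$\forall\,\inst.\;\Pr_{\mdp[\inst]}(\lozenge T)\bowtie\frac12$'', whose complement is $\exists\reach^{\overline{\bowtie}}_{*}$. For the non-strict relations, $\exists\reach^{>}_{*}$ and $\exists\reach^{<}_{*}$ are NP-hard (Theorem~\ref{thm:e_reach_gr_gp_np_hard}), which yields coNP-hardness; for the strict relations the same argument, now invoking ETR-completeness of $\exists\reach^{\geq}_{*}$ and $\exists\reach^{\leq}_{*}$ (Theorem~\ref{thm:etr:pmcsarehard}), in fact already gives coETR-hardness, in particular coNP-hardness.

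Finally, \emph{coETR-hardness for the non-strict relations} is the only part not subsumed by the pMC case, and it genuinely needs the non-determinism. The plan is to reduce from a coETR-complete problem of a ``non-strict'' shape, e.g.\ the complement of bcon4INEQ-c: given quadrics $f_1,\dots,f_m$, decide whether $\bigvee_{i=1}^m f_i[\inst]\geq 0$ holds for \emph{all} $\inst\in[0,1]^k$. By Chonev's trick (Lemma~\ref{lem:chonevtrick}) each ``$f_i[\inst]\geq 0$'' becomes a reachability condition ``$\Pr_{\mdp(f_i)[\inst]}(\lozenge T)\geq\frac12$'', and one then plugs these gadgets into a pMDP with suitably normalised thresholds (Remark~\ref{rem:fixedlambda}) and transfers between the $\mathrm{wd}$ and $\mathrm{gp}$ parameter spaces via Lemma~\ref{lem:gpvswdequiv}. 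The difficulty---and the step I expect to be the main obstacle---is that a memoryless scheduler is fixed \emph{before} the instantiation, so the naive initial-branching gadget of Figure~\ref{fig:mdpsareetrhard} only expresses ``$\exists i\,\forall\inst$'' rather than the required ``$\forall\inst\,\exists i$''. I anticipate circumventing this with a cyclic construction in which the scheduler's single choice at a recurrent state is effectively played against all instantiations simultaneously, or else via the correspondence with concurrent stochastic reachability games (Theorem~\ref{thm:csrgencoding}), whose values resist a simple characterisation; making that coupling precise while keeping the reduction polynomial is the technical heart of the argument.
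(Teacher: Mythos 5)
Your PSPACE argument (enumerate the finitely many memoryless schedulers with a reusable counter, and check each universally quantified statement as the complement of an ETR sentence) and your coNP-hardness argument (restrict to pMCs, where the problem degenerates to $\forall\,\inst.\;\Pr_{\mdp[\inst]}(\lozenge T)\bowtie\frac12$, whose complement is $\exists\reach^{\overline{\bowtie}}_{*}$) are exactly the paper's proof. The paper's entire justification of the coETR-hardness claim is likewise this complement argument, citing Thm.~\ref{thm:etr:pmcsarehard}: as you correctly work out, the complement of a \emph{strict} robust problem is a \emph{non-strict} $\exists\reach$ problem, which is ETR-complete, so the complement argument yields coETR-hardness for the strict comparison operators. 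The proposition's wording (``for non-strict inequalities'') is at odds with its own one-line proof; you took the statement literally and therefore spent your final paragraph trying to manufacture a genuinely new ``$\forall\inst\,\exists i$'' reduction for the non-strict case. That excursion is not what the paper does, you yourself flag it as unresolved, and it is not needed to reproduce the paper's argument --- you had already derived everything the paper's proof actually establishes two paragraphs earlier.

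There is one concrete error in your NP-hardness discussion. You claim the 3SAT reduction of Thm.~\ref{thm:fp_e_a_robreach_np_compl} carries over to $\geq$ ``because replacing $>$ by $\geq$ does not change the answer.'' In that construction the induced function satisfies $f[\inst]\geq\frac12$ for \emph{every} instantiation and \emph{every} scheduler, with equality attained (at the points $\inst(x)=k/(m+1)$) exactly when the corresponding assignment fails some clause. Hence $\exists\sched\,\forall\inst.\;\Pr\geq\frac12$ is a yes-instance for every input formula, satisfiable or not, and the unmodified reduction proves nothing for $\geq$ (dually for $\leq$). To salvage it one would have to perturb the threshold below the minimum of $f$ over $[0,1]$ for satisfying assignments, which requires a root-separation-type bound and is not immediate. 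Note, however, that the paper's own proof also only cites the strict-case theorem for NP-hardness, so on this point neither argument fully covers the non-strict operators; your coNP-hardness derivation, by contrast, does cover all four relations.
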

\begin{proof}
	NP-hardness follows from Thm.~\ref{thm:fp_e_a_robreach_np_compl},
	coNP/coETR-hardness follows from Thm.~\ref{thm:e_reach_gr_gp_np_hard}, Thm.~\ref{thm:e_reach_gr_np_hard}, and Thm.~\ref{thm:etr:pmcsarehard}, respectively.
	Iterating over all (finitely many) schedulers, check each scheduler in ETR or in coETR (and thus in PSPACE).
\end{proof}
Consequently, it is unlikely that either of the problems are in ETR or  coETR, as then ETR and coETR would coincide (which is not impossible, but unlikely~\cite{DBLP:journals/mst/SchaeferS17}).

\section{Conclusions}
We have studied the complexity of various reachability problems for
\emph{simple} pMCs and pMDPs. All the problems we have considered are easily
seen to be solvable in PSPACE via reductions to the existential theory of the
reals. We have complemented this observation with lower bounds, i.e.  ETR
hardness for several versions of the problem both for pMCs and pMDPs.
These lower bounds naturally extend to general pMCs and pMDPs.

We
have given an NP decision procedure for pMDPs with a fixed number of parameters. 
The exact complexity of pMDP reachability problems with this
restriction remains open, and our  upper bounds do not straightforwardly generalise beyond simple pMDPs (see Rem.~\ref{rem:fixedparam}).

Finally, we have established a tight connection between polynomials and pMCs (even beyond~\cite{Chonev17}).
However, our results do not allow us to conclude whether there always are
``small'' pMCs for every polynomial. Such a result would provide more evidence
of ETR being the right framework to solve problems for our parametric models.

\clearpage
\bibliography{literature}

\clearpage
\appendix
\section{Full proofs}
The subsections reflect sections in the main paper.

\subsection{Introduction}
No further proofs.

\subsection{Preliminaries}
In the sequel, for decision problems $A$ and $B$,
we write $A \leq_p B$ to denote the fact that there exists a polynomial-time
computable Karp reduction from $A$ to $B$. If $A \leq_p B$ and $B \leq A_p$
then we write $A \equiv_p B$.

\subsection{Problem landscape}
\subsubsection{Proof of Lemma~\ref{lem:restrict}}

\restaterestrictcomparisons*

\begin{proof}
  We prove only the first item for arbitrary $\quant_1$ and
  $\quant_2 = \exists$. All other cases
  can be proven analogously.
  First, we deduce from \cite[Thm.~10.122 and Thm.~10.127]{BK08} that in
  polynomial time, and without regarding the actual transition probabilities, we
  can compute from $\mdp$ and a target set $T$, a target set\footnote{Which is some
  adequate union of particular maximal end components in $\mdp$}  $T'$ such that 
  \begin{equation}\label{eq:maxeqmininv}
    \max_{\sched \in \Sched} \Pr_{\mdp}^\sched(\lozenge T) = 1 - \min_{\sched
    \in \Sched} \Pr_{\mdp}^\sched(\lozenge T').
  \end{equation}
  Please observe that the step above in general does not work without the restriction to graph-preserving instantiations.
  And combine this to obtain
  \begin{align*}
    \quant_1 \inst \in \ParamSpace^{\mathrm{gp}},\;
    \exists \sched \in \Sched.\;
    \Pr_{\mdp[\inst]}^\sched(\lozenge T) > \frac{1}{2}
    & \iff 
    \quant_1 \inst \in \ParamSpace^{\mathrm{gp}}.\;
    \max_{\sched \in \Sched}
    \Pr_{\mdp[\inst]}^\sched(\lozenge T) > \frac{1}{2}\\
    & \overset{\eqref{eq:maxeqmininv}}{\iff}
    \quant_1 \inst \in \ParamSpace^{\mathrm{gp}}.\;
    \left(1 - \min_{\sched \in \Sched}
          \Pr_{\mdp[\inst]}^\sched(\lozenge T')\right) > \frac{1}{2}\\
    & \iff
    \quant_1 \inst \in \ParamSpace^{\mathrm{gp}}.\;
    \min_{\sched \in \Sched} 
    \Pr_{\mdp[\inst]}^\sched(\lozenge T') < \frac{1}{2}\\
    & \iff
    \quant_1 \inst \in \ParamSpace^{\mathrm{gp}},
    \exists \sched \in \Sched.\;
    \Pr_{\mdp[\inst]}^\sched(\lozenge T') < \frac{1}{2}.
  \end{align*}
\end{proof}

\subsubsection{Proof of Theorem~\ref{thm:semicontinuous}}
\restatesemicontinuous*

The main intuition behind the argument presented herein is the following: for
instantiations which create more transitions labelled by probability $0$, the
set of states which can no longer reach the target should increase. In all other
cases, the solution function will be continuous since it is a rational function
of the parameter values.

\begin{proof}
Continuity on $\ParamSpace^\mathrm{gp}_\mdp$ follows from the fact that $\sol^\mdp$ is a rational function which is bounded by $[0,1]$ on all well-defined points~\cite{DBLP:conf/atva/QuatmannD0JK16}.

To prove semi-continuity, we use the following formal definition:
A function $f$ is \emph{lower semi-continuous} at an instantiation $\inst$ if $\forall \epsilon > 0$ there exists a neighbourhood $U$ of $\inst$ s.t. $f[\inst] \leq f[\inst'] + \epsilon$ for all $\inst' \in U$.

Let $S^\inst_{=0} = \{ s \in S \mid \sol^\mdp_s[\inst] = 0 \}$ denote the set of states that reach the target with probability zero in instantiation $\inst$.	
We make the following two  observations:
\begin{itemize}
\item For any $\inst, \inst'$, $P_{=0}^\inst \subseteq P_{=0}^{\inst'}$ implies $S_{=0}^\inst \subseteq S_{=0}^{\inst'}$, and $P_{=0}^\inst = P_{=0}^{\inst'}$ implies $S^\inst_{=0} = S^{\inst'}_{=0}$.
Essentially, removing transitions may cut states from having a path to the target states, but never adds new paths. 
\item Any instantiation $\inst$ has a neighbourhood $U$ s.t.\ $\forall \inst'
  \in U$, $P_{= 0}(\inst') \subseteq P_{= 0}(\inst)$.
  Essentially, each $\inst$ has a neighbourhood in which fewer parameters are assigned to $0$ or $1$ respectively (which are the only values which lead to transitions disappearing).
\end{itemize}
Thus:
$\text{for all }\inst \text{ there exists a neighbourhood }U 
\text{ of }\inst\text{ s.t. }
\forall \inst' \in U.\; S_{= 0}^\inst \subseteq S_{= 0}^{\inst'}.$
 
 For each $\inst$, $S$ can be partitioned into:
 $S = T  \uplus S_{=0}^\inst  \uplus S'$.
For conciseness, first assume that the pMC is acyclic.
We use structural induction over the graph of the pMC.
\begin{itemize}
\item For $s \in T, \sol_s^\mdp$ is constant and thus continuous.
\item For $s \in S^\inst_{=0}$, $\sol_s^\mdp[\inst] = 0$ and $\sol_s^\mdp$ is non-negative, so it is certainly lower semi-continuous on neighbourhood $U$.
\item For $s \in S'$, $\sol_s^\mdp = \sum_{s'\in S} P(s,s') \cdot \sol_{s'}^\mdp$. By induction, $\sol_{s'}^\mdp$ is lower semi-continuous for each $s' \in S$. Then, the (weighted) sum of lower semi-continuous functions is also lower semi-continuous on $U$.
\end{itemize}

For cyclic pMCs, observe that in each strongly connected component (SCC), either all states have probability zero to reach a target, or none.
Then, apply the structural induction on the level of SCCs (which, for graph-preserving instantiations are preserved). More formally,
one can consider a
new pMC in which the SCCs have been contracted. The new pMC is acyclic.
Additionally, it is straightforward to prove that SCC-contraction
preserves reachability
probabilities.

To prove continuity on acyclic MCs, consider again the  structural induction. 
As we are no longer interested in extending this to cyclic MCs, we do not consider $S_{=0}$ states.
 For each $\inst$, $S$ then can be partitioned into:
 $S = T \uplus S'$.
The structural induction simplifies to:
\begin{itemize}
\item For $s \in T, \sol_s^\mdp$ is constant and thus continuous.
\item For $s \in S'$, $\sol_s^\mdp = \sum_{s'\in S} P(s,s') \cdot \sol_{s'}^\mdp$. By induction, $\sol_{s'}^\mdp$ is continuous for each $s' \in S$. Then, the (weighted) sum of continuous functions is also continuous on $U$.
\end{itemize}
Observe that this cannot be applied to general SCCs, as there the equation system used above has only a unique fixed point if $S_{=0}$ states are explicitly set to zero.

\end{proof}

\subsection{Fixing the number of parameters}

\subsubsection{Proof of Theorem~\ref{thm:fp_ea_reach_in_np} and Lemma~\ref{lem:fp_check_sched_opt_in_np}}
\restatefpchecksched*
\begin{proof}
Following Rem. \ref{rem:fixedparam}, $\ParamSpace^\text{wd}$ can be partitioned
into constantly many graph-preserving regions $R$ (in fact, $3^{|\Params|}$
many). Those regions are checked one-by-one as follows:
  The rational functions $g_s/h_s$ are computed with respect to $R$, which takes
  at most polynomial time \cite{baiercomplexity}. For somewhere minimal strategies, formula $\Psi$ from
  \eqref{eq:optimalstratparametric} is thus of polynomial size and has only a
  fixed number of variables. Hence it can be checked for satisfiability in
  polynomial time. The overall procedure returns true if at least one of the
  checks was sat. For the other three cases (somewhere maximal and everywhere
  minimal/maximal), a similar procedure applies.
\end{proof}

\restatefpexfainnp*
\begin{proof}
	We only give the proof for the $\geq$-relation, the other cases are analogous. Observe that
	\begin{align*}
		\exists \inst \in \ParamSpace^\text{wd}, \forall \sched \in \Sched.\; \Pr_{\mdp[\inst]}^\sched(\lozenge T) \geq \frac{1}{2} \Longleftrightarrow \exists \inst \in \ParamSpace^\text{wd}.\; \min_{\sched \in \Sched} \Pr_{\mdp[\inst]}^\sched(\lozenge T) \geq \frac{1}{2}
	\end{align*}
  which means that it is sufficient and necessary for the
    answer to the problem to be positive that there be a somewhere optimal
    strategy which (for the valuation for which it is minimal, it) 
  simultaneously induces a reachability probability at least $\frac{1}{2}$.
  Hence we may guess a somewhere minimal scheduler and check its minimality using the encoding from Lem.
  \ref{lem:fp_check_sched_opt_in_np} with a conjunction that the initial state satisfies the bound.
\end{proof}

\subsubsection{Proof of Lemma~\ref{lem:conpconjecture}}
\restateconpconjecture*
\begin{proof}
Consider the complement of $\exists\forall\reach^{\bowtie}_{*}$, that is $\forall\exists\reach^{\overline{\bowtie}}_{*}$. It quantifies over all schedulers. If a minimal optimal scheduler set is only polynomially large, then we guess such a set in polynomial time, verify that it is an OSS (using Thm.~\ref{thm:fixpar-checkoss}) and then check the induced pMC under each policy (which is in P).
The problem $\exists\exists\reach^{\bowtie}_{*}$ can be verified similarly, also taking into account the proof of Thm.~\ref{thm:fp_ea_reach_in_np}. 
\end{proof}

\subsubsection{Proof of Lemma \ref{lem:exponentialmoss}}

\restateexponentiallmoss*
\begin{proof}
	Define $\mdp_n = (S_n, \Params_n, \Act, \sinit, P_n)$ as
	$S_n = \{s_1,...,s_{n+1}\}~\uplus~\{\bot\}$,
	$\Params_n = \{x_1,...,x_n\}$, $\sinit = s_1$,
	$\Act = \{a,b\}$, and
	\[P_n(s_i,a,s_{i+1}) = x_i,~~P_n(s_i,b,s_{i+1}) = 1-x_i\]
	for all $1 \leq i \leq n$, and $T = \{ s_{n+1} \}$.
	The remaining unspecified probability mass is directed to the sink $\bot$. Let us fix a $\sched \in \Sched$ and let $\inst \in \ParamSpace^\mathrm{wd}$ be such that
	\[\inst(x_i) = \begin{cases}
		1, \text{ if } \sigma(s_{i}) = a, \\
		0 \text{ if } \sigma(s_{i}) = b.
	\end{cases}
	\]
	Then clearly $\Pr_{{\mdp_n}[\inst]}^{\sigma}(\lozenge T) = 1$ and $\Pr_{{\mdp_n}[\inst]}^{\sigma'}(\lozenge T) = 0 < 1$ for all $\sched' \neq \sched$. Thus we have shown that for all schedulers $\sched$, there is an instantiation for which  $\sched$ is the unique maximal scheduler, so no such $\sched$ may be excluded from a MOSS. The lemma follows as there are $|\Sched| = 2^n$ many schedulers $\sched$.
	
\end{proof}

\subsection{Expressiveness of simple pMCs}

\subsubsection{Proof of Lemma \ref{lem:chonevtrick}}
\restatechonevstrick*
	\begin{proof}
	Following \cite{Chonev17},  observe that a monomial $-x_1 \cdot \ldots \cdot x_d$ of degree $d \geq 0$ can be written as
	\begin{align}
		\label{eq:chonevmonomialtrick}
		-x_1 \cdot \ldots \cdot x_d = {-}1 + \sum_{i=1}^d(1-x_i) \cdot x_{i+1} \cdot \ldots \cdot x_d ,
	\end{align}
	which is readily proved by induction on $d$: For $d = 0$, both sides are ${-}1$ (with the convention that an empty product equals ${-}1$). For $d \geq 0$, we multiply both sides of \eqref{eq:chonevmonomialtrick} by $x_{d+1}$ to obtain
	\begin{align*}
		-x_1 \cdot \ldots \cdot x_d \cdot x_{d+1} &= {-} x_{d+1} + \sum_{i=1}^{d}(1-x_i) \cdot x_{i+1} \cdot \ldots \cdot x_d \cdot x_{d+1}  \\
		&= (1 - x_{d+1}) - 1  + \sum_{i=1}^{d}(1-x_i) \cdot x_{i+1} \cdot \ldots \cdot x_d \cdot x_{d+1} \\
		&= {-1} + \sum_{i=1}^{d+1}(1-x_i) \cdot x_{i+1} \cdot \ldots \cdot x_d \cdot x_{d+1}.
	\end{align*}	Hence applying \eqref{eq:chonevmonomialtrick} to every term of $f$, we can write for some $m \geq 0$ 
	\begin{align}
		\label{eq:chonevtrickintermediate}
		f = \sum_{i=1}^m\alpha_i \cdot g_i + \beta
	\end{align}
	 where the $\alpha_i$ are \emph{positive} rational coefficients, the $g_i$  are nonempty products of terms from $\{x, (1-x) \mid x \in X\}$ and $\beta \in \mathbb{Q}$ is a constant term. We may assume that $\beta \leq 0$, otherwise $\beta = \beta \cdot x + \beta \cdot (1-x)$ for any $x \in \Params$ and we may pull $\beta$ inside the sum \eqref{eq:chonevtrickintermediate}. Let $N$ be an integer such that $N > max\{\sum_{i=1}^m \alpha_i, \mu - \beta\}$ and let $\tilde{f} \coloneqq \frac{f - \beta}{N}$. We have that
	\begin{align}
		f[\inst] \bowtie \mu 
		\iff \tilde{f}[\inst] \bowtie \frac{\mu - \beta}{N} \eqqcolon \lambda'.
	\end{align}
	for all valuations $\inst\colon \Params \rightarrow \RR$ and all comparison relations $\bowtie$. The new threshold $\lambda'$ is strictly between $0$ and $1$. The modified polynomial $\tilde{f}$ naturally corresponds to a simple acyclic pMC $\tilde{\mdp}$ with $\Pr_{\tilde{\mdp}}(\lozenge T) = \tilde{f}$ as shown in Fig.~\ref{fig:chonevtrick}.
	\begin{figure}
	\centering
	\scalebox{0.7}{
	\tikzset{decoration={snake,amplitude=.4mm,segment length=2mm,
                       post length=0mm,pre length=0mm}}
	\begin{tikzpicture}
		\node[state,initial,initial text=s,scale=0.6] (s0)  {};
		\node[right= of s0] (dots) {$\vdots$};
		\node[state,above=of dots,scale=0.6] (s1) {};
		\node[state,below=of dots,scale=0.6] (sm) {};
		\node[state,right=1.5cm of s1,scale=0.6,accepting]    (v1) {};
		\node[state,right=1.5cm of sm,scale=0.6,accepting]    (vm) {};
		
		\draw[->] (s0) edge node[above,xshift=-2mm] {$\frac{\alpha_1}{N}$} (s1);
		\draw[->] (s0) edge node[below,xshift=-2mm] {$\frac{\alpha_m}{N}$} (sm);
		
		\draw[->,decorate] (s1) -- node[above] {$g_1$} (v1);
		\draw[->,decorate] (sm) -- node[above] {$g_m$} (vm);
		
	\end{tikzpicture}	
	}
	\caption{The essential construction of the pMC in Lemma~\ref{lem:chonevtrick}: Any probability mass not drawn goes to a sink. }
	\label{fig:chonevtrick}
	\end{figure}
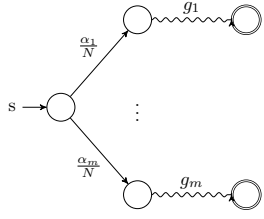

	We now modify $\tilde{\mdp}$ as outlined in Rem.~\ref{rem:fixedlambda} to obtain a pMC $\mdp$ such that for all $\inst \in \ParamSpace^{\mathrm{wd}}_\mdp$ it holds that for the given threshold $0 < \lambda < 1$

	\begin{align*}
		\Pr_{\tilde{\mdp}[\inst]}(\lozenge T) \bowtie \lambda' \iff \Pr_{\mdp[\inst]}(\lozenge T) \bowtie \lambda.
	\end{align*}
		
	For the complexity of the construction, notice that $m$ in the sum \eqref{eq:chonevtrickintermediate} is in $\mathcal{O}(td)$ where $t$ and $d$ are bounds on the total degree and the number of terms of $f$, respectively. The $g_i$ are products of at most $d$ terms. The $\alpha_i$ are the absolute values of the original coefficients of $f$ and $\beta$ is the sum of at most $t$ of those coefficients. Hence the $\alpha_i$, $\beta$, $N$ and the polynomial $\tilde{f}$ can be computed in time $O(poly(t,d,\kappa))$. The same then also holds for the pMC $\tilde{\mdp}$ and the final pMC $\mdp$ adapted to the desired threshold.
\end{proof}	


\subsubsection{Proof of Theorem~\ref{thm:simpleacyclicsuffices}}
\restateacyclicsuffices*
\begin{proof}
Let $\mdp$ be a nonsimple pMC.
     Compute the solution function $\sol^\mdp$, which is a rational function, say $\frac{h}{g}$.  
     We make the following two observations:
     \begin{itemize}
     \item 
      by definition, $\sol^\mdp$ describes a probability, thus $\forall \inst \in \ParamSpace^{\mathrm{gp}}_\mdp.\; g[\inst] \neq 0$, and
     \item  $g$ is continuous by Thm.~\ref{thm:semicontinuous}, and $\ParamSpace^{\mathrm{gp}}_\mdp$ is convex. Thus either $\forall \inst \in \ParamSpace^{\mathrm{gp}}_\mdp.\; g[\inst] > 0$ or $\forall \inst \in \ParamSpace^{\mathrm{gp}}_\mdp.\; g[\inst] < 0$. Which case applies can be easily evaluated.
     \end{itemize}
Now fix some $\inst \in \ParamSpace^{\mathrm{gp}}_\mdp$.     
     W.l.o.g., assume $g[\inst] > 0$.
     Let $f = h - (g-1)\cdot \lambda$. By construction $\Pr_{\mdp[\inst]}(\lozenge T) = \sol^\mdp[\inst] \bowtie \lambda$ iff $f[\inst] \bowtie \lambda$.
     By applying Lemma~\ref{lem:chonevtrick} on $f \bowtie \lambda$, there exists a simple acyclic   
     pMC $\mdp'$ and $\lambda' \in \QQ$ such that $\Pr_{\mdp'[\inst]}(\lozenge T) \bowtie \lambda'$\footnote{Closer inspection yields that $\lambda' = \lambda$.}.
     The case $g[\inst] < 0$ is analogous.

  For a fixed number of parameters, this transformation  has polynomial time
  complexity since the solution function becomes computable in polynomial time
  and the pMC from Lemma~\ref{lem:chonevtrick} is constructible in polynomial
  time too.
\end{proof}

\subsubsection{Proof of Theorem \ref{thm:handelmanconstr} and Lemma \ref{lem:binomialrep}}

\restatehandelmantheorem*
\restatebinomialrepr*

\begin{proof}
Both $\pol$ and $1-\pol$ are strictly positive on the (1 dimensional) polyhedron defined by the constraints $x \geq 0, 1-x \geq 0$. Applying Lemma \ref{lem:handelman} yields:
	\begin{align}
		\label{eq:eqfromhandelman}
		\pol = \sum_{i=1}^m\lambda_ix^{e_i}(1-x)^{d_i} \text{ and } 1 - \pol = \sum_{i=1}^{m'}\lambda_i'x^{e_i'}(1-x)^{d_i'}
	\end{align}
	with $\lambda_i > 0$ ($\lambda_i' > 0$, resp.), and pairwise distinct $(e_i,d_i)$ ($(e_i',d_i')$, resp.) for all $1\leq i \leq m$ ($1\leq i \leq m'$, resp.).
	Let $n = \max\{e_1 + d_1,...,e_m+d_m,e_1'+d_1',...,e'_{m'}+d'_{m'}\}$, i.e. $n$ is the maximum degree of a term appearing in one of the two forms \eqref{eq:eqfromhandelman}. Using that
	\[
	1 = 1^l = (x + (1-x))^l = \sum_{k=0}^l{l \choose k}x^{l-k}(1-x)^k
	\]
	for all $l\geq 0$ and with the convention ${l \choose k} = 0$ for $k \notin \{0,...,l\}$, we transform \eqref{eq:eqfromhandelman} as follows:
	\begin{align*}
		\pol &= \sum_{i=1}^m\lambda_ix^{e_i}(1-x)^{d_i} \\
		&= \sum_{i=1}^m\bigg[\underbrace{\sum_{k=0}^{n-e_i-d_i}{n-e_i-d_i \choose k}x^{n-e_i-d_i-k}(1-x)^{k}}_{=1}\bigg]\lambda_ix^{e_i}(1-x)^{d_i} & \\
		&= \sum_{i=1}^m\lambda_i\sum_{k=0}^{n-e_i-d_i}{n-e_i-d_i \choose k}x^{n-d_i-k}(1-x)^{k+d_i} & \\
		&= \sum_{i=1}^m\lambda_i\sum_{k=d_i}^{n-e_i}{n-e_i-d_i \choose k-d_i}x^{n-k}(1-x)^{k} &\text{(index shift on $k$)}\\
		&= \sum_{i=1}^m\lambda_i\sum_{k=0}^{n}{n-e_i-d_i \choose k-d_i}x^{n-k}(1-x)^{k} &\text{(adding zero terms)}\\
		&= \sum_{k=0}^n\bigg[\underbrace{{n \choose k}^{-1}\sum_{i=1}^{m}\lambda_i{n-e_i-d_i \choose k-d_i}}_{\coloneqq p_k \geq 0}\bigg]{n \choose k}x^{n-k}(1-x)^{k}&
	\end{align*}
	Notice that the binomial coefficients ${n-e_i-d_i \choose k}$ are well defined (i.e. $n-e_i-d_i \geq 0$) for all $i$ and all $k$ due to the way we have chosen $n$.
	In the exact same way we obtain an expression for $1-\pol$ with coefficients $p_k' \geq 0$. It remains to show that $p_k \leq 1$ for all $0 \leq k \leq n$. From $f + (1-f) = 1$ it follows that
	\[
	\sum_{k=0}^n(p_k+p_k')\cdot {n \choose k}\cdot x^{n-k} \cdot (1-x)^k = 1.
	\]
	The Binomial Theorem states that $p_k + p_k' = 1$ for all $0 \leq k \leq n$ is a solution of this equation. However, the terms $x^{n-k}\cdot(1-x)^k$ are linearly independent over $\RR$, so it must be the only solution.
	Thus $p_k, p_k' \leq 1$ for all $0\leq k \leq n$.
\end{proof}
\restatehandelmanconstr*
\begin{proof}
	First suppose that $\pol$ is strictly positive on $[0,1]$ (this is slightly stronger than being adequate). According to Lem. \ref{lem:binomialrep}, we can write $\pol$ as 
	\[
	\pol = \sum_{k=0}^np_k\cdot {n \choose k}\cdot x^{n-k}\cdot(1-x)^k
	\]
	for some $n \geq 0$ and $p_k \in [0,1]$ for all $0 \leq k \leq n$.

	Consider the simple pMC $\mdp_n$ which has the topology of a `pyramid' of height $n$ and where the transition to the left child always has probability $x$ and the one to the right child is taken with probability $1-x$. For example, the upper four rows in Fig. \ref{fig:handelmanex:app} constitute $\mdp_3$.
	Let $\sinit$ be the top of the pyramid and $s_0,\dots,s_n$ be the $n+1$ states in the `basement' of $\mdp_n$. It is not difficult to prove by induction on $n$ and using the basic identity ${n \choose k} = {n-1 \choose k-1} + {n-1 \choose k}$ that for all $0 \leq k \leq n$, there are ${n \choose k}$ many paths, each with probability $x^{n-k}(1-x)^k$, from $\sinit$ to $s_k$.
	Now we add an additional target state $T$ to $\mdp_n$ and new transitions with the constant probability $p_k$ from $s_k$ to $T$. In this pMC, it then holds that
	\[
	\Pr(\lozenge T) = \sum_{k=0}^np_k{n \choose k}x^{n-k}(1-x)^k = f.
	\]
	If $\pol$ was not strictly positive on $[0,1]$, then $\pol = x^e(1-x)^d \pol'$ for some integers $e, d \geq 0$ and a polynomial $\pol'$ which is strictly positive on $[0,1]$. We can then apply the construction for $\pol'$ and simply prepend the corresponding $x$ and $1-x$ transitions.
\end{proof}

\begin{figure}
\centering

\scalebox{0.7}{
\begin{tikzpicture}[initial text=, scale=0.6]
	\node[state, initial, scale=0.5] (root) at (0,0) {};
	
	\node[state, scale=0.5] (l) at (-2,-2) {};
	\node[state, scale=0.5] (r) at (2,-2) {};
	
	\node[state, scale=0.5] (ll) at (-4,-4) {};
	\node[state, scale=0.5] (lr) at (0,-4) {};
	\node[state, scale=0.5] (rr) at (4,-4) {};
	
	\node[state, scale=0.5] (lll) at (-6,-6) {};
	\node[state, scale=0.5] (llr) at (-2,-6) {};
	\node[state, scale=0.5] (lrr) at (2,-6) {};
	\node[state, scale=0.5] (rrr) at (6,-6) {};
	
	\node[state,accepting, scale=0.5] (T) at (0,-9) {};
	
	\draw[->] (root) -- node[sloped, anchor=center,below] {$x$} (l);
	\draw[->] (root) -- node[sloped, anchor=center,below] {$1-x$} (r);
	
	\draw[->] (l) -- node[sloped, anchor=center,below] {$x$} (ll);
	\draw[->] (l) -- node[sloped, anchor=center,below] {$1-x$} (lr);
	\draw[->] (r) -- node[sloped, anchor=center,below] {$x$} (lr);
	\draw[->] (r) -- node[sloped, anchor=center,below] {$1-x$} (rr);
	
	\draw[->] (ll) -- node[sloped, anchor=center,below] {$x$} (lll);
	\draw[->] (ll) -- node[sloped, anchor=center,below] {$1-x$} (llr);
	\draw[->] (lr) -- node[sloped, anchor=center,below] {$x$} (llr);
	\draw[->] (lr) -- node[sloped, anchor=center,below] {$1-x$} (lrr);
	\draw[->] (rr) -- node[sloped, anchor=center,below] {$x$} (lrr);
	\draw[->] (rr) -- node[sloped, anchor=center,below] {$1-x$} (rrr);
	
	\draw[->] (lll) -- node[below] {$\nicefrac{1}{4}$} (T);
	\draw[->] (llr) -- node[left] {$\nicefrac{11}{12}$} (T);
	\draw[->] (lrr) -- node[right] {$\nicefrac{11}{12}$} (T);
	\draw[->] (rrr) -- node[below] {$\nicefrac{1}{4}$} (T);
\end{tikzpicture}
}
\label{fig:handelmanex:app}
\caption{The reachability probability from source to target equals $f = 2x\cdot(1-x)+\frac{1}{4}$.}
\end{figure}
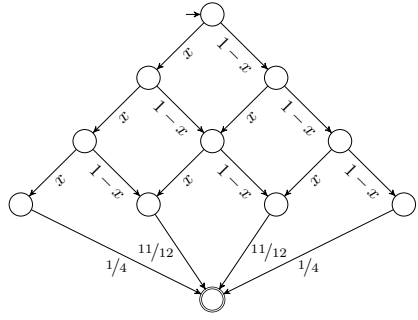

\subsection{Complexity of pMCs}

\subsubsection{Proof of Lemma~\ref{lem:etr:mb4feas}}
\restatembFEASdef*
\restatembFEASetr*

\noindent We start our reduction from the following ETR-complete problem:
\begin{definition}
  The decision problem \emph{bounded-closed quadratic conjunction} (bQUAD-c)
  asks: Given quadratic polynomials $f_1,\hdots,f_m$, $\exists \inst\colon
  \Params \rightarrow [-1,1]$	s.t.\ $\bigwedge_{i=1}^m f_i[\inst] = 0$?  The open
  variant of the decision problem (bQUAD-o) is defined analogously, where the range of $\inst$ is $(-1,1)$.
\end{definition}
\begin{theorem}(From \cite[Lemma 3.9]{Schaefer2013}\footnote{Strictly, the lemma
    is an immediate corollary to (the proof of) \cite[Lemma 3.9]{Schaefer2013}.
    Although Schaefer states the result for the Euclidean norm and a closed
    ball, the proof ensures that if there is a  solution to the original
    constraint system in the unbounded space, there is also a solution to the
    constructed constraint system in the open ball. Therefore, there is also
    a solution in the closed ball or the open/closed box. This adaption also
    occurs implicitly in \cite{DBLP:journals/mst/SchaeferS17}, where a closed
    box is assumed.})
\label{thm:etr:bquad}
	The bQUAD-o/c problems are ETR-complete.
\end{theorem}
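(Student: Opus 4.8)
The plan is to prove the two directions of ETR-completeness separately, obtaining hardness from (the proof of) \cite[Lemma~3.9]{Schaefer2013}. Membership I expect to be immediate: a positive instance of bQUAD-c is witnessed by an assignment $\inst$ with $f_1[\inst] = \dots = f_m[\inst] = 0$ and $-1 \le \inst(x) \le 1$ for every $x \in \Params$ (with strict bounds for bQUAD-o), and the existence of such a witness is a single sentence of the existential first-order theory of $(\RR,+,\cdot,0,1,<)$; hence both problems lie in ETR.

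For hardness I would reduce from the unrestricted existential theory of the reals in two stages. First, by the standard degree-reduction tricks, any ETR sentence $\varphi$ is turned in polynomial time into a finite list of \emph{quadratic} polynomials $g_1,\dots,g_k \in \QQ[y_1,\dots,y_n]$ having a common real zero iff $\varphi$ is true: products of degree exceeding two are split off with fresh variables obeying defining equations $z - y_iy_j = 0$, and (non-)strict inequalities are encoded by equations such as $f - z^2 = 0$ and $zw - 1 = 0$, all of which remain quadratic when $f$ is. Second --- the crucial stage --- one confines a witness to the unit box. Rather than rescaling by an explicit constant (the a priori bound on solutions of a polynomial system is doubly exponential and cannot be written down in polynomial space), I would use the device from the \emph{proof} of \cite[Lemma~3.9]{Schaefer2013}: augment the system with auxiliary variables pinning a suitable quantity close to $0$, so that a rescaled copy of \emph{any} solution fits inside the open Euclidean unit ball $B = \{\vec y : \lVert \vec y \rVert_2 < 1\}$; thus the augmented (still quadratic) system is satisfiable inside $B$ whenever it is satisfiable at all. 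Schaefer states this for a \emph{closed} ball, but the witness built in the proof actually lands in the open ball; the same adaptation appears implicitly in \cite{DBLP:journals/mst/SchaeferS17} with a box in place of a ball (cf.\ the footnote).

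Combining the stages gives the reduction. With $n = |\Params|$ and using $B \subseteq (-1,1)^n \subseteq [-1,1]^n$, one reads off
\begin{align*}
  \varphi \text{ true}
  &\;\Longrightarrow\; \exists\, \vec y \in B\colon\ g_1(\vec y) = \dots = g_k(\vec y) = 0 \\
  &\;\Longrightarrow\; \exists\, \vec y \in (-1,1)^n\colon\ g_1(\vec y) = \dots = g_k(\vec y) = 0 \\
  &\;\Longrightarrow\; \exists\, \vec y \in [-1,1]^n\colon\ g_1(\vec y) = \dots = g_k(\vec y) = 0 \\
  &\;\Longrightarrow\; \exists\, \vec y \in \RR^n\colon\ g_1(\vec y) = \dots = g_k(\vec y) = 0 \\
  &\;\Longrightarrow\; \varphi \text{ true},
\end{align*}
where the first implication is the content of the second stage and the last is correctness of the first. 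Hence all five statements are equivalent, so $g_1,\dots,g_k$ is a yes-instance of bQUAD-o exactly when it is a yes-instance of bQUAD-c exactly when $\varphi$ is true; both maps are polynomial-time computable, so bQUAD-o and bQUAD-c are ETR-hard and, with membership, ETR-complete.

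I expect the only genuine obstacle to be the second stage: ``the system is satisfiable'' does not by itself yield ``satisfiable in a prescribed bounded region'', and since the a priori solution bound is too large to encode one cannot simply scale. The work is in checking that Schaefer's auxiliary-variable construction keeps every polynomial quadratic, places the witness in the \emph{open} unit ball (needed for bQUAD-o, so that it lies in the open box), and runs in polynomial time; the passage from ball to box is then free, since ball-feasibility implies box-feasibility implies plain feasibility implies, by \cite[Lemma~3.9]{Schaefer2013} once more, ball-feasibility.
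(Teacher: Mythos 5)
Your proposal is correct and follows essentially the same route as the paper, which likewise obtains the result as an immediate corollary of (the proof of) Schaefer's Lemma~3.9: the standard reduction of ETR to common zeros of quadratic systems, followed by Schaefer's auxiliary-variable device that places a witness in the \emph{open} Euclidean unit ball, from which the open- and closed-box variants follow since the open ball is contained in the open box. The paper records exactly this chain (open ball $\Rightarrow$ open/closed box $\Rightarrow$ plain feasibility $\Rightarrow$ open ball) in its footnote rather than spelling it out, so your more explicit two-stage write-up adds detail but no new ideas.
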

\noindent 
We first reduce bQUAD to b4FEAS (like mb4FEAS, but with different range, and equality).
\begin{definition}
	The decision problem closed-bounded-4-feasibility (b4FEAS-c) asks: 
Given a (non-negative) polynomial $f$ of degree at most $4$, $\exists \inst\colon \Params \rightarrow [-1,1]$	s.t. $f[\inst] = 0$?
The open-bounded-4-feasibility (b4FEAS-o) can be defined analogously where $\inst\colon \Params \rightarrow (0,1)$. 
\end{definition}

\begin{lemma}
\label{lem:etr:bquadtob4feas}
	bQUAD-c $\;\leq_p\;$ b4FEAS-c and bQUAD-o $\;\leq_p\;$ b4FEAS-o.
\end{lemma}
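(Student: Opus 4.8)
The plan is to use the standard sum-of-squares encoding of a conjunction of polynomial equalities as a single polynomial equality. Given quadratic polynomials $f_1,\dots,f_m$, the key observation is that over the reals $\bigwedge_{i=1}^m f_i[\inst] = 0$ holds if and only if $\sum_{i=1}^m (f_i[\inst])^2 = 0$, since a finite sum of squares of real numbers vanishes precisely when every summand vanishes. So I would set $f \coloneqq \sum_{i=1}^m f_i^2$. Each $f_i$ being quadratic, $f_i^2$ has total degree at most $4$, hence so does $f$; moreover $f$ is non-negative by construction, as required for a b4FEAS instance.

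For the closed case (bQUAD-c $\leq_p$ b4FEAS-c) the domains of the two instantiation spaces coincide ($[-1,1]$), so the reduction is immediate: $\exists \inst \colon \Params \to [-1,1].\ \bigwedge_i f_i[\inst] = 0$ iff $\exists \inst \colon \Params \to [-1,1].\ f[\inst] = 0$. For the open case I must additionally bridge the mismatch between the domain $(-1,1)$ of bQUAD-o and the domain $(0,1)$ of b4FEAS-o. To that end I would first perform the affine change of variables $x \mapsto 2x - 1$ for every $x \in \Params$, obtaining quadratic polynomials $\hat f_i(x_1,\dots,x_n) \coloneqq f_i(2x_1 - 1, \dots, 2x_n - 1)$; the correspondence $\inst'(x) = (\inst(x)+1)/2$ is a bijection between the valuations $\inst \colon \Params \to (-1,1)$ and $\inst' \colon \Params \to (0,1)$, and it preserves the relevant values, i.e.\ $f_i[\inst] = \hat f_i[\inst']$. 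Then I set $f \coloneqq \sum_{i=1}^m \hat f_i^2$, which is again non-negative and of degree at most $4$, and the same sum-of-squares argument yields the equivalence with $\exists \inst' \colon \Params \to (0,1).\ f[\inst'] = 0$.

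It remains to verify that the reduction runs in polynomial time, which is routine. Squaring a polynomial with $t_i$ terms produces at most $t_i^2$ terms whose coefficients are products of pairs of the original coefficients, so their bit-sizes at most double (plus a logarithmic overhead from collecting like terms); summing over $i$ yields a polynomial with at most $\sum_i t_i^2 \leq \bigl(\sum_i t_i\bigr)^2$ terms, polynomial in the input size. In the open case the preliminary affine substitution is applied to quadratic polynomials and likewise blows up the representation only polynomially. Hence both $\sigma$-to-$f$ maps are valid polynomial-time Karp reductions.

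There is no real obstacle here; the only points requiring a small amount of care are the degree bound and non-negativity of the constructed polynomial (both immediate for a sum of squares of quadratics) and, in the open variant, the range translation from $(-1,1)$ to $(0,1)$ via the affine reparametrisation, which is transparent because it is a bijection on the respective open boxes.
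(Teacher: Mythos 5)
Your reduction is correct and is essentially the paper's own proof: both take $f = \sum_{i=1}^m f_i^2$, observe that a sum of squares of quadratics is non-negative, of degree at most four, and vanishes exactly at the common roots of the $f_i$ (in any bounded domain), and note that the bit-size at most roughly doubles. The only addition is your affine reparametrisation from $(-1,1)$ to $(0,1)$ in the open case; the paper's proof silently treats the open domain as $(-1,1)$ (the stated $(0,1)$ range in its definition of b4FEAS-o appears to be a slip), so your extra step is a harmless --- arguably more faithful --- patch rather than a genuinely different argument.
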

\begin{proof}
 The proof is a straightforward adaption from the proof of \cite[Lemma 3.2]{DBLP:journals/mst/SchaeferS17}.
	For quadratic polynomials $f_1,\hdots,f_m$, consider the quadric polynomial $f = \sum_{i=1}^m \left( f_i \right)^2$. 
	Clearly, each $(f_i)^2$ is non-negative and has the same roots as $f_i$. As the sum of non-negative polynomials is non-negative, $f$ is non-negative, and has roots exactly where the $f_i$ have common roots.
    As it preserves all roots, it in  particular preserves roots in any bounded domain such as $[-1,1]^\Params$.	
    The bitlength of the encoding at most doubles~\cite[Lemma 3.2]{DBLP:journals/mst/SchaeferS17}.
\end{proof}

\begin{lemma}
\label{lem:etr:bfeastomb4feas}
b4FEAS-c $\;\leq_p\;$ mb4FEAS-c and b4FEAS-o $\;\leq_p\;$ mb4FEAS-o.
\end{lemma}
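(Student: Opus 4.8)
The plan is to observe that b4FEAS and mb4FEAS differ only in two cosmetic ways: the domain of the instantiation ($[-1,1]$, resp.\ the open $(-1,1)$-box, versus $[0,1]$, resp.\ the open $(0,1)$-box) and the comparison in the feasibility condition (``$=0$'' versus ``$\leq 0$''). Both differences can be absorbed by a single affine change of variables together with the standing non-negativity promise. Concretely, given a non-negative quadric $f \in \QQ[\Params]$ with $\Params=\{x_1,\dots,x_n\}$ as an instance of b4FEAS-c, I would introduce fresh variables $y_1,\dots,y_n$ and output
\[
  g(y_1,\dots,y_n) \;\coloneqq\; f(2y_1-1,\dots,2y_n-1)
\]
as the corresponding mb4FEAS-c instance.

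The verification then splits into a handful of routine checks, carried out in this order. First, since $g$ is the composition of $f$ with an affine map, its total degree is at most that of $f$, so $g$ is again quadric. Second, $g$ is non-negative on all of $\RR^n$, because $g[\inst]=f[\inst']$ for $\inst'(x_i)=2\inst(y_i)-1$ and $f$ is non-negative everywhere; thus the non-negativity promise is preserved. Third, the affine map $v\mapsto(2v_1{-}1,\dots,2v_n{-}1)$ is a bijection from $[0,1]^n$ onto $[-1,1]^n$, with inverse $w\mapsto\tfrac{w+1}{2}$, and it restricts to a bijection from $(0,1)^n$ onto $(-1,1)^n$; hence $g$ vanishes somewhere in the closed (resp.\ open) unit box iff $f$ vanishes somewhere in the closed (resp.\ open) $[-1,1]$-box. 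Fourth, because $g\geq 0$ everywhere, the condition ``$\exists\inst\colon\Params\to[0,1].\ g[\inst]\leq 0$'' is equivalent to ``$\exists\inst.\ g[\inst]=0$'', so the mb4FEAS feasibility question for $g$ is exactly the b4FEAS feasibility question for $f$. Combining these four points gives both b4FEAS-c $\leq_p$ mb4FEAS-c and b4FEAS-o $\leq_p$ mb4FEAS-o via the same construction.

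The only point that needs a little care---and the closest thing to an obstacle here---is the polynomial-time bound on the construction, i.e.\ that expanding $g$ does not blow up. I would use that the degree is a fixed constant ($\leq 4$): writing $f=\sum_\alpha c_\alpha x^\alpha$ with $|\alpha|\leq 4$, each monomial expands as $c_\alpha\prod_i(2y_i-1)^{\alpha_i}$ into at most $2^{4}=16$ monomials in the $y_i$, each with coefficient an integer multiple of $c_\alpha$ by a factor of absolute value at most $3^{4}$. Hence $g$ has at most $16$ times as many terms as $f$, the coefficient bit-sizes grow only by an additive constant, and collecting like terms is polynomial as well; so the whole reduction runs in polynomial time. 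No deeper argument is required---the lemma is essentially a normalisation step transporting a symmetric box to the unit box, with the switch from equality to inequality being free under the non-negativity promise.
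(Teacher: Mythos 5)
Your proof is correct and follows essentially the same route as the paper: the paper performs the shift $x_i \mapsto x_i - 1$ (to $[0,2]$) and the scaling $x_i \mapsto 2x_i$ (to $[0,1]$) as two separate steps and then uses non-negativity to replace ``$=0$'' by ``$\leq 0$'', whereas you simply compose the two affine maps into $x_i \mapsto 2y_i - 1$ and make the same final observation. Your polynomial-time accounting (at most $16$ monomials per term and constant additive growth in coefficient bit-size, thanks to the degree bound $4$) matches the per-term analysis in the paper.
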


\begin{proof}
We show this in three steps:
\begin{enumerate}
\item We reduce b4FEAS-c to the problem
 \begin{description}
  \item {\textbf{P1}-c:} given a quadric\footnote{degree at most 4} (non-negative) polynomial $f'$,
   \[ \exists \inst\colon \Params \rightarrow [0,2]\text{ s.t. }f'[\inst] = 0.\]
	\end{description}
	Let $f$ be the input to b4FEAS-c. We construct $f'$ as $f[x_1 \mapsto x_1-1, \hdots x_n \mapsto x_n-1]$. 
	\begin{itemize}
	\item \textbf{Example:} For $f = -x_1^2x_2^2 + x_1^2$, we obtain $f' = -(x_1-1)^2(x_2-1)^2 + (x_1-1)^2$. 
	 Consider $\inst = [x_1 \mapsto 1, x_2 \mapsto -1]$: We have that $f[\inst] = 0$. \\
	 Now consider $\inst' = [x_1 \mapsto 2, x_2 \mapsto 0]$: We have $f'[\inst'] = -1\cdot 1 + 1 = 0$. 
	\item \textbf{Correctness:} 
	For each $\inst \in [-1,1]^X$, it holds that $f[\inst] = f'[\inst']$ for $\inst'$  with $\inst'(x_i) = 
	\inst(x_i) + 1$. Observe that $\inst' \in [0,2]^X$.
	\emph{There is a bijection between roots of $f$ and $f'$}:
	Assume $\inst$ to be a root of $f$. Then $\inst'$ is a root of $f'$ with $\inst'$ as above. 
	Analogously, any root $\inst'$ of $f'$ corresponds to a root of $f$.
	Each term remains (at most) quadric, thus $f'$ is also quadric.
		As $f$ is non-negative, so is $f'$. 
	
	\item \textbf{Complexity:}
	The substitution can be applied on a per-term basis: Consider the term $t_j = c_j \cdot x_{j_1}x_{j_2}x_{j_3}x_{j_4}$: It becomes the polynomial $f'_j = c_j \cdot (x_{j_1} - 1) \cdot (x_{j_2} - 1) \cdot (x_{j_3} - 1) \cdot (x_{j_4} - 1)$ which in its expanded form contains 16 terms $t'_{j,1}, \hdots t'_{j,16}$ with (for each $j$) pairwise different monomials, thus all coefficients in $f'_j$ have the same magnitude as the coefficient of $t_j$. We obtain $f'$ $\sum_j \sum_i t_{j,i}$.
	\end{itemize}
\item We reduce \textbf{P1}-c to the problem
  \begin{description}
  \item {\textbf{P2}-c:} given a quadric (non-neative) polynomial  $\hat{f}$,
	\[ \exists \inst\colon \Params \rightarrow [0,1]\text{ s.t. }\hat{f}[\inst] = 0.\] 
	\end{description}
	Let $f'$ be the input to \textbf{P1}-c. 
	We construct $\hat{f}$ by $f'[x_1 \mapsto 2\cdot x_1,\hdots, x_k \mapsto 2\cdot x_k]$.
	\begin{itemize}
	\item \textbf{Example:} For $f' = x_1^2-4x_1+x_2 + 4$, we construct $\hat{f} = 4x_1^2-8x_1+2x_2 + 4$.
	Consider $\inst' = [x_1 \mapsto 2, x_2 \mapsto 0]$: We have that $f'[\inst'] = 4-8+0+4=0$. \\
	 Now consider $\hat{\inst} = [x_1 \mapsto 1, x_2 \mapsto 0]$: We have $\hat{f}[\hat{\inst}] = 4-8+0+4 = 0$. 
	\item \textbf{Correctness:} 
	For each $\inst \in [0,2]^X$, it holds that $f[\inst] = f'[\inst']$ for $\inst'$  with $\inst'(x_i) = 
	\frac{\inst(x_i)}{2}$. Observe that $\inst' \in [0,1]^X$.
	\emph{There is a bijection between roots of $f$ and $f'$}:
	Assume $\inst$ to be a root of $f$. Then $\inst'$ is a root of $f'$ with $\inst'$ as above. 
	Analogously, any root $\inst'$ of $f'$ corresponds to a root of $f$.
	\emph{The polynomial remains non-negative}. The monomials (and thus the degree of the polynomial) remains unchanged.
	\item \textbf{Complexity:} Each coefficient is multiplied at most four times by two; and requires thus only constantly many bits more.
	\end{itemize}
\item We reduce the problem \textbf{P2}-c to mb4FEAS-c:
This is straightforward, just observe that for a non-negative polynomial $f$, it holds that for any $\inst$, $f[\inst] \leq 0 \iff f[\inst] = 0$.
\end{enumerate}
The reduction chain reduces 	b4FEAS-c to mb4FEAS-c, and can be analogously applied to reduce b4FEAS-o to mb4FEAS-o.
\end{proof}

\begin{proof}[Proof of Lemma~\ref{lem:etr:mb4feas}]
 Immediate corollary to Thm.~\ref{thm:etr:bquad}, Lemma~\ref{lem:etr:bquadtob4feas} and Lemma~\ref{lem:etr:bfeastomb4feas}.
\end{proof}



\subsubsection{Proof of Lemma~\ref{lem:gpvswdequiv}}
\restategpnotnecessary*
\noindent By prepending a pMC with the gadget outlined in Fig.~\ref{fig:e_reach_gp_red_e_reach_repeated}, we can assure that graph non-preserving parameter instantiations never satisfy lower-bounded reachability.
\begin{lemma}
\label{lem:e_reach_gp_red_e_reach}
	$\exists\reach_{\mathrm{gp}}^> \;\leq_p\; \exists\reach^>_\mathrm{wd}$ and $\exists\reach_{\mathrm{gp}}^\geq \;\leq_p\; \exists\reach^\geq_\mathrm{wd}$.
\end{lemma}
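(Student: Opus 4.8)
The plan is to obtain both reductions from a single construction. Given a simple pMC $\mdp = (S,X,\sinit,P)$ with target set $T$ and $X = \{x_1,\dots,x_n\}$, let $\mdp'$ be the simple pMC obtained by \emph{prepending} the gadget of Fig.~\ref{fig:e_reach_gp_red_e_reach}: $\mdp'$ has the fresh states $s_{x_1},s'_{x_1},\dots,s_{x_n},s'_{x_n}$, new initial state $s_{x_1}$, transitions $P'(s_{x_i},s_{x_i}) = x_i$, $P'(s_{x_i},s'_{x_i}) = 1-x_i$, $P'(s'_{x_i},s'_{x_i}) = 1-x_i$, $P'(s'_{x_i},s_{x_{i+1}}) = x_i$ for $1 \le i \le n$ (reading $s_{x_{n+1}} \coloneqq \sinit$), and the $\mdp$-part unchanged with the same target set $T$. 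Only labels $x_i$ and $1-x_i$ are introduced and every new state has outgoing mass $1$, so $\mdp'$ is simple and computable from $\mdp$ in linear time; since no new parameters are added, $\ParamSpace^{\mathrm{wd}}_{\mdp'} = [0,1]^X$ and $\ParamSpace^{\mathrm{gp}}_{\mdp'} = (0,1)^X = \ParamSpace^{\mathrm{gp}}_{\mdp}$.

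Next I would establish two facts about $\mdp'$. (i) For every graph-preserving $\inst \in (0,1)^X$ one has $\Pr_{\mdp'[\inst]}(\lozenge T) = \Pr_{\mdp[\inst]}(\lozenge T)$: within each block $(s_{x_i},s'_{x_i})$ the self-loop is taken with probability $x_i < 1$ (resp.\ $1-x_i < 1$), so the escaping edge of positive probability $1-x_i$ (resp.\ $x_i$) is eventually taken almost surely; hence $\sinit$ is reached from $s_{x_1}$ with probability $1$, and since every path from $s_{x_1}$ to $T$ traverses $\sinit$, we get $\Pr_{\mdp'[\inst]}(\lozenge T) = \Pr_{\mdp'[\inst]}(s_{x_1} \rightarrow \lozenge\sinit)\cdot\Pr_{\mdp[\inst]}(\lozenge T) = \Pr_{\mdp[\inst]}(\lozenge T)$. (ii) For every $\inst \in [0,1]^X$ that is \emph{not} graph-preserving, i.e.\ $\inst(x_i) \in \{0,1\}$ for some $i$, the run from $s_{x_1}$ is trapped in the first such block --- it loops forever at $s_{x_i}$ when $\inst(x_i)=1$, and moves to $s'_{x_i}$ and loops there forever when $\inst(x_i)=0$ --- so $\sinit$, and hence $T$, is never reached and $\Pr_{\mdp'[\inst]}(\lozenge T) = 0$.

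Finally, combining (i) and (ii) I would conclude that $\mdp \mapsto \mdp'$ (with target $T$) is the desired reduction for both comparisons. Fix $\unrhd \in \{>,\ge\}$. If $\inst \in \ParamSpace^{\mathrm{gp}}_{\mdp}$ satisfies $\Pr_{\mdp[\inst]}(\lozenge T) \unrhd \frac12$, then the same $\inst \in \ParamSpace^{\mathrm{wd}}_{\mdp'}$ satisfies $\Pr_{\mdp'[\inst]}(\lozenge T) \unrhd \frac12$ by (i). Conversely, if $\inst \in \ParamSpace^{\mathrm{wd}}_{\mdp'}$ satisfies $\Pr_{\mdp'[\inst]}(\lozenge T) \unrhd \frac12 > 0$, then $\Pr_{\mdp'[\inst]}(\lozenge T) \ne 0$, so by (ii) $\inst$ is graph-preserving, and then $\Pr_{\mdp[\inst]}(\lozenge T) = \Pr_{\mdp'[\inst]}(\lozenge T) \unrhd \frac12$. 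Taking $\unrhd\;=\;>$ yields $\exists\reach_{\mathrm{gp}}^> \leq_p \exists\reach^>_\mathrm{wd}$ and $\unrhd\;=\;\ge$ yields $\exists\reach_{\mathrm{gp}}^\geq \leq_p \exists\reach^\geq_\mathrm{wd}$.

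The only step that needs genuine care is fact (ii): the case analysis on the boundary parameter values, showing that a graph non-preserving instantiation forces the reachability probability to be \emph{exactly} $0$ (not merely small). It is precisely the strict positivity of the threshold $\frac12$ that then lets this single gadget serve both the strict and the non-strict variant; everything else is bookkeeping.
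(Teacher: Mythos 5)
Your proposal is correct and is essentially the paper's own proof: the same prepended gadget with self-loops $x_i$ on $s_{x_i}$ and $1-x_i$ on $s'_{x_i}$, the same factorisation $\Pr_{\mdp'[\inst]}(\lozenge T) = \Pr_{\mdp'[\inst]}(\lozenge\{\sinit\})\cdot\Pr_{\mdp[\inst]}(\lozenge T)$, and the same two observations that this prefix probability is $1$ on graph-preserving instantiations and $0$ otherwise, with positivity of the threshold closing the converse direction for both $>$ and $\geq$.
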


\noindent Even stronger, for strict lower-bounded reachability, we can restrict our attention to graph-preserving parameter instantiations.
The lemma below reformulates~\cite[Thm.~5]{DBLP:conf/uai/Junges0WQWK018}, and is an immediate consequence of Thm.~\ref{thm:semicontinuous}.
\begin{lemma}
\label{lem:e_reach_less_red_e_reach_less_gp}
$\exists\reach^>_\mathrm{wd} \;\leq_p\; \exists\reach_{\text{gp}}^>$.
\end{lemma}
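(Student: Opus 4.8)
The plan is to take the \emph{identity} as the desired reduction: an instance of $\exists\reach^>_\mathrm{wd}$ is a simple pMC $\mdp$ together with a target set $T$, and we hand back the very same pair $(\mdp,T)$ as an instance of $\exists\reach_{\mathrm{gp}}^>$. This map is trivially polynomial-time, so everything rests on the equivalence
\[
  \bigl(\exists\, \inst \in \ParamSpace^{\mathrm{wd}}_\mdp.\; \Pr_{\mdp[\inst]}(\lozenge T) > \tfrac12\bigr)
  \;\Longleftrightarrow\;
  \bigl(\exists\, \inst \in \ParamSpace^{\mathrm{gp}}_\mdp.\; \Pr_{\mdp[\inst]}(\lozenge T) > \tfrac12\bigr),
\]
which I would establish as follows. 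The direction ``$\Longleftarrow$'' is immediate, since for a simple pMC we have $\ParamSpace^{\mathrm{gp}}_\mdp = (0,1)^{\Params} \subseteq [0,1]^{\Params} = \ParamSpace^{\mathrm{wd}}_\mdp$, so a graph-preserving witness is in particular a well-defined one.

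For ``$\Longrightarrow$'' I would start from a well-defined witness $\inst$ with $s \coloneqq \sol^\mdp[\inst] = \Pr_{\mdp[\inst]}(\lozenge T) > \tfrac12$ and push it into the interior of the parameter space using the lower semi-continuity of $\sol^\mdp$ (Thm.~\ref{thm:semicontinuous}). Setting $\epsilon \coloneqq \tfrac12(s - \tfrac12) > 0$, lower semi-continuity provides a relatively open neighbourhood $U$ of $\inst$ in $\ParamSpace^{\mathrm{wd}}_\mdp$ such that $\sol^\mdp[\inst'] \geq s - \epsilon = \tfrac12 s + \tfrac14 > \tfrac12$ for every $\inst' \in U$. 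Since $(0,1)^{\Params}$ is dense in $[0,1]^{\Params}$, the relatively open set $U$ contains some point $\inst' \in (0,1)^{\Params} = \ParamSpace^{\mathrm{gp}}_\mdp$; for that point $\Pr_{\mdp[\inst']}(\lozenge T) = \sol^\mdp[\inst'] > \tfrac12$, so $(\mdp,T)$ is indeed a yes-instance of $\exists\reach_{\mathrm{gp}}^>$.

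There is no hard computation here; the one point that needs care is the interaction between the \emph{strict} threshold and semi-continuity. Plain lower semi-continuity only yields $\sol^\mdp[\inst'] \geq s - \epsilon$, so $\epsilon$ must be chosen genuinely smaller than the slack $s - \tfrac12$ (taking half of it suffices) in order to keep the perturbed value \emph{strictly} above $\tfrac12$. This is precisely the conceptual content of the lemma: lower semi-continuity holds on all of $\ParamSpace^{\mathrm{wd}}_\mdp$ --- unlike full continuity, which can fail at graph-non-preserving points such as $p = 0$ in Fig.~\ref{fig:semicontinuous} --- and it is exactly what is needed to transport a strict \emph{lower} bound from the (possibly degenerate) boundary into the interior. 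The same manoeuvre is unavailable for strict upper bounds and for non-strict bounds, which is the reason for the asymmetry of the corresponding entries in Table~\ref{tab:reach-zoo}. Finally, simplicity of $\mdp$ is used here only to identify $\ParamSpace^{\mathrm{gp}}_\mdp$ with the topological interior of $\ParamSpace^{\mathrm{wd}}_\mdp$; for non-simple pMCs one would additionally pass through the graph-consistent region decomposition of Rem.~\ref{rem:fixedparam}, but this is not needed for the stated (simple-pMC) version.
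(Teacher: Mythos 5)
Your proof is correct and follows exactly the route the paper intends: the paper states this lemma as an immediate consequence of the lower semi-continuity result (Thm.~\ref{thm:semicontinuous}), and your argument --- identity reduction, then perturbing a well-defined witness into the dense interior $(0,1)^{\Params}$ while using semi-continuity with a slack of half the margin $s-\frac{1}{2}$ to preserve the strict lower bound --- is precisely the filling-in of that claim. Your closing remarks on why the argument is specific to strict lower bounds also match the paper's discussion of the asymmetry in Table~\ref{tab:reach-zoo}.
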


\begin{proof}[Proof of Lemma~\ref{lem:gpvswdequiv}] 
Immediate from Lemmas~\ref{lem:e_reach_gp_red_e_reach} and~\ref{lem:e_reach_less_red_e_reach_less_gp}.	
\end{proof}
\begin{figure}
\begin{tikzpicture}[initial text=]
	\node[initial,state] (s1) {$s_{x_1}$};
	\node[state, right=of s1] (s2) {$s'_{x_1}$};
	\node[state, draw, right=of s2] (s3) {$s_{x_2}$};
	\node[state, right=of s3] (s4) {$s'_{x_2}$};
	\node[circle, right=of s4] (s5) {$\hdots$};
	\node[state, right=of s5] (s6) {$s'_{x_n}$};
	\node[state, right=of s6] (s7) {$\sinit$};
	
	\draw[->] (s1) edge node[auto] {$1{-}x_1$} (s2);
	\draw[->] (s2) edge node[auto] {$x_1$} (s3);
	\draw[->] (s3) edge node[auto] {$1{-}x_2$} (s4);
	\draw[->] (s6) edge node[auto] {$x_n$} (s7);

	\draw[->] (s1) edge[loop above] node[auto] {$x_1$} (s1);
	\draw[->] (s2) edge[loop above] node[auto] {$1{-}x_1$} (s2);
	\draw[->] (s3) edge[loop above] node[auto] {$x_2$} (s3);
	\draw[->] (s6) edge[loop above] node[auto] {$1{-}x_n$} (s6);
	
\end{tikzpicture}	
\caption{Gadget for the proof of Lemma~\ref{lem:e_reach_gp_red_e_reach}}
\label{fig:e_reach_gp_red_e_reach_repeated}
\end{figure}
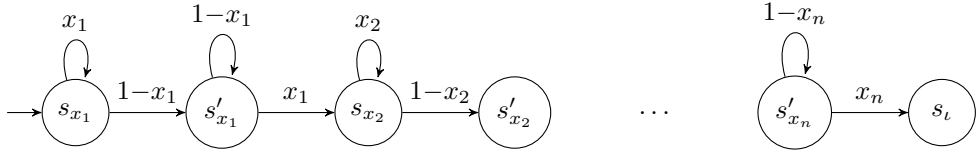
\begin{proof}[Proof of Lemma~\ref{lem:e_reach_gp_red_e_reach}]
Given a simple pMC $\mdp$. We extend the pMC with a gadget outlined in Fig.~\ref{fig:e_reach_gp_red_e_reach_repeated}. Formally, we construct an pMC $\mdp'$ with states $S'=S \cup \{ s_x, s'_x \mid x \in \Params \}$, initial state $s_{x_1}$ and \[ P'(s,s') = \begin{cases}
 P(s,s) & \text{if }s,s' \in S \\
 x    & \text{if }s = s' = s_x \\
 1{-}x & \text{if }s = s' = s'_x \\
 1{-}x & \text{if }s = s_x \text{ and } s' = s'_x \\
 x & \text{if }s = s'_x \text{ and } s' = \text{next}(s'_{x}) \\
 0 & \text{otherwise.}	
 \end{cases}
 \]
where $\text{next}(s'_x)$ is $s_{x+1}$ if $x = x_i$ for some $i < |\Params|$, and $\sinit$ if $i = |\Params|$.
The pMC $\mdp'$ is only linearly larger than $\mdp$. Observe that the construction of the gadget can be adapted for non-simple pMCs (with different well-defined parameter spaces).
By construction \[ \Pr_{\mdp'[\inst]}(\lozenge T) =  \Pr_{\mdp'[\inst]}(\lozenge \{ \sinit \}) \cdot \Pr_{\mdp[\inst]}(\lozenge T) \]
We observe the following: \begin{align}& \forall \inst \in \ParamSpace^\text{gp}.\; \Pr_{\mdp'[\inst]}(\lozenge \{ \sinit \}) = 1 \text{ and thus } \Pr_{\mdp'[\inst]}(\lozenge T) = \Pr_{\mdp[\inst]}(\lozenge T)\quad\text{, and}\label{eq:e_reach_gp_red_e_reach:obs1} \\
& \forall \inst \in \ParamSpace^\text{gp} \setminus \ParamSpace^\text{wd}.\; \Pr_{\mdp'[\inst]}(\lozenge \{ \sinit \}) = 0\label{eq:e_reach_gp_red_e_reach:obs2}.
\end{align}

\noindent We have:
\begin{align*}
	& \exists \inst \in \ParamSpace^{\text{gp}}_{\mdp} .\;\Pr_{\mdp[\inst]}(\lozenge T) {\unrhd} \lambda
	\implies 
	\exists \inst \in \ParamSpace^{\text{gp}}_{\mdp'} .\;\Pr_{\mdp'[\inst]}(\lozenge T) {\unrhd} \lambda 
	\overset{\eqref{eq:e_reach_gp_red_e_reach:obs1}}{\implies}
	\\
	&\exists \inst \in \ParamSpace^{\text{wd}}_{\mdp'} .\;\Pr_{\mdp'[\inst]}(\lozenge T) {\unrhd} \lambda,
\end{align*}
and
\begin{align*}
	& \not\exists \inst \in \ParamSpace^{\text{gp}}_{\mdp} .\;\Pr_{\mdp[\inst]}(\lozenge T) {\unrhd} \lambda 
	\implies 
	\forall \inst \in \ParamSpace^{\text{gp}}_{\mdp} .\;\Pr_{\mdp[\inst]}(\lozenge T) {\unlhd} \lambda 
	\overset{\eqref{eq:e_reach_gp_red_e_reach:obs2}}{\implies}\\
	& \forall\inst \in \ParamSpace^{\text{wd}}_{\mdp}.\; \Pr_{\mdp'[\inst]}(\lozenge T) {\unlhd} \lambda 
	\implies
	\not\exists \inst \in \ParamSpace^{\text{wd}}_{\mdp}.\;\Pr_{\mdp'[\inst]}(\lozenge T) {\unrhd} \lambda .
\end{align*}
Together, 
\begin{align*}
\exists \inst \in \ParamSpace^{\text{gp}}_{\mdp}.\; \Pr_{\mdp[\inst]}(\lozenge T) {\unrhd} \lambda
	\quad \iff \quad
	\exists \inst \in \ParamSpace^{\text{wd}}_{\mdp'}.\; \Pr_{\mdp'[\inst]}(\lozenge T) {\unrhd} \lambda.
\end{align*}

\end{proof}

\subsubsection{Proof of Theorem \ref{thm:e_reach_gr_np_hard}}
\restatechonevrefined*
\noindent
We extend the proof of \cite[Thm.\ 3]{Chonev17} and show that the same 3SAT-reduction also applies to the $\exists\reach^>_\mathrm{wd}$ and $\exists\reach^<_\mathrm{wd}$ problem. 
\begin{proof}
We first reformulate the construction in our notation: 
Let $\phi = \phi_1 \wedge \ldots \wedge \phi_k$ be a given 3SAT-formula, i.e. the $\phi_j$ are of the form
$$\phi_j = l_{j,1} \vee l_{j,2} \vee l_{j,3},$$
where the $l_{i,j}$ are \emph{literals} (variables or negated variables). Let $x_1, \hdots, x_m$ be the variables of $\varphi$.
 The nonsimple pMC for $\varphi$ is outlined in Fig.~\ref{fig:chonevconstruction}. 
 Formally, the pMC $\mdp_\varphi = (S, \Params, \sinit, P)$ is defined as follows:
\begin{itemize}
	\item $S =
		\{v_i \mid 0 \leq i \leq m\}~\uplus~
		\{x_i, \overline{x_i} \mid 1 \leq i \leq m\}~\uplus~
	    \{\phi_i \mid 1 \leq i \leq k\}~\uplus~
	    \{T, \bot\}$ are the $3m + k + 3$ states, $v_0 = \sinit$ is the initial state, $T$ and $\bot$ indicate target and sink respectively,
	\item $\Params = \{y_1, ..., y_m\}~\uplus~\{z_{1,1}, z_{1,2}, z_{1,3}, ..., z_{k,1}, z_{k,2}, z_{k,3}\}$ are the $m + 3k$ parameters,
	\item for all $1 \leq i \leq m$ and $1 \leq j \leq k$ we define the transition probabilities as
	\begin{align*}
		&P(v_{i-1}, x_i) = y_i, && P(v_{i-1}, \overline{x_i}) = 1-y_i, \\
		&P(x_i, v_i) = y_i, && P(\overline{x_i}, v_i) = 1-y_i, \\
		&P(x_i, \bot) = 1-y_i, && P(\overline{x_i}, \bot) = y_i, \\
		&P(v_m, \phi_j) = \frac{1}{k+1}, && P(v_m, T) = \frac{1}{k+1}, \\
		&P(\phi_j, x_i) = z_{j,r} \text{ if } l_{j,r} = x_i, && P(\phi_j, \overline{x}_i) = z_{j,r} \text{ if } l_{j,r} = \overline{x}_i.
	\end{align*}
\end{itemize}
Moreover, we let $P(s,t) = 0$ for each pair $(s,t)$ of states not specified above.
At the end of the proof, we describe how to reformulate the pMC into a simple pMC.

Observe that under any well-defined instantiation, there are exactly two bottom strongly connected components, namely $\bot$ and $T$. As a consequence, it holds that:
\begin{align}
\label{eq:gadget:twoendcomponents}
	\forall \inst \in \ParamSpace_{\mdp_\phi}^\text{wd}.\; \quad \Pr_{\mdp_\phi[\inst]}(\lozenge T) + \Pr_{\mdp_\phi[\inst]}(\lozenge \bot) = 1.
\end{align}
\begin{figure}
	\centering
	
	\begin{tikzpicture}[scale=0.8, every node/.style={scale=0.8},initial text=]
		\node[state,initial] (v_0) {$v_0$};
		\node[state, right=7mm of v_0,yshift=12mm] (x_1) {$x_1$};
		\node[state, right=7mm of v_0,yshift=-12mm] (notx_1) {$\overline{x}_1$};
		
		\node[state, right=of v_0,xshift=14mm] (v_1) {$v_1$};
		\node[state, right=7mm of v_1,yshift=12mm] (x_2) {$x_2$};
		\node[state, right=7mm of v_1,yshift=-12mm] (notx_2) {$\overline{x}_2$};
		
		\node[state, right=of v_1,xshift=14mm] (v_2) {$v_2$};
		\node[state, right=7mm of v_2,yshift=12mm] (x_3) {$x_3$};
		\node[state, right=7mm of v_2,yshift=-12mm] (notx_3) {$\overline{x}_3$};
		
		\node[right=of v_2,xshift=7mm] (dots) {$\ldots$};
		\node[state, right=1mm of dots,yshift=12mm] (x_m) {$x_m$};
		\node[state, right=1mm of dots,yshift=-12mm] (notx_m) {$\overline{x}_m$};
		
		\node[state, right=of dots,xshift=5mm] (v_m) {$v_m$};
		
		\node[state, right=of v_m, xshift=10mm,yshift=16mm] (phi_1) {$\phi_1$};
		\node[state, right=of v_m, xshift=10mm,yshift=-16mm] (phi_k) {$\phi_k$};
		\node[right=of v_m, xshift=14mm,rotate=90,anchor=center] (dots2) {$\dots$};
		
		\node[left=of phi_1, xshift=-6mm,yshift=12mm,rotate=90,anchor=center] (literals_1) {to literals};
		\node[left=of phi_k, xshift=-6mm,yshift=-12mm,rotate=90,anchor=center] (literals_k) {to literals};
		
		\node[state,above=of v_2,yshift=16mm] (bot1) {$\bot$};
		\node[state,below=of v_2,yshift=-16mm] (bot2) {$\bot$};
		
		\node[state,accepting,above=of v_m,yshift=16mm] (T) {$T$};
		
		
		\draw[->] (v_0) -- node[sloped, anchor=center,above] {$y_1$} (x_1);
		\draw[->] (v_0) -- node[sloped, anchor=center,below] {$1-y_1$} (notx_1);
		\draw[->] (x_1) -- node[sloped, anchor=center,above] {$y_1$} (v_1);
		\draw[->] (notx_1) -- node[sloped, anchor=center,below] {$1-y_1$} (v_1);
		
		\draw[->] (v_1) -- node[sloped, anchor=center,above] {$y_2$} (x_2);
		\draw[->] (v_1) -- node[sloped, anchor=center,below] {$1-y_2$} (notx_2);
		\draw[->] (x_2) -- node[sloped, anchor=center,above] {$y_2$} (v_2);
		\draw[->] (notx_2) -- node[sloped, anchor=center,below] {$1-y_2$} (v_2);
		
		\draw[->] (v_2) -- node[sloped, anchor=center,above] {$y_3$} (x_3);
		\draw[->] (v_2) -- node[sloped, anchor=center,below] {$1-y_3$} (notx_3);
		
		\draw[->] (x_m) -- node[sloped, anchor=center,above] {$y_m$} (v_m);
		\draw[->] (notx_m) -- node[sloped, anchor=center,below] {$1-y_m$} (v_m);
		
		\draw[->] (v_m) -- node[sloped, anchor=center,above] {$\nicefrac{1}{k+1}$} (phi_1);
		\draw[->] (v_m) -- node[sloped, anchor=center,below] {$\nicefrac{1}{k+1}$} (phi_k);
		
		\draw[->] (phi_1) edge[bend left=30] node[sloped, anchor=center,above] {$z_{1,1}$} (literals_1);
		\draw[->] (phi_1) edge[bend right=30] node[sloped, anchor=center,above] {$z_{1,3}$} (literals_1);
		\draw[->] (phi_1) edge[bend left=0] node[sloped, anchor=center,above] {$z_{1,2}$} (literals_1);
		
		\draw[->] (phi_k) edge[bend left=30] node[sloped, anchor=center,below] {$z_{k,1}$} (literals_k);
		\draw[->] (phi_k) edge[bend right=30] node[sloped, anchor=center,below] {$z_{k,3}$} (literals_k);
		\draw[->] (phi_k) edge[bend left=0] node[sloped, anchor=center,below] {$z_{k,2}$} (literals_k);
		
		\draw[->] (x_1) -- node[sloped, anchor=center,above] {$1-y_1$} (bot1);
		\draw[->] (x_2) -- node[sloped, anchor=center,above] {$1-y_2$} (bot1);
		\draw[->] (x_3) -- node[sloped, anchor=center,below] {$1-y_3$} (bot1);
		\draw[->] (x_m) -- node[sloped, anchor=center,above] {$1-y_m$} (bot1);
		
		\draw[->] (notx_1) -- node[sloped, anchor=center,above] {$y_1$} (bot2);
		\draw[->] (notx_2) -- node[sloped, anchor=center,above] {$y_2$} (bot2);
		\draw[->] (notx_3) -- node[sloped, anchor=center,above] {$y_3$} (bot2);
		\draw[->] (notx_m) -- node[sloped, anchor=center,above] {$y_m$} (bot2);
		
		\draw[->] (v_m) -- node[left] {$\nicefrac{1}{k+1}$} (T);
		
	\end{tikzpicture}	
	
	\caption{Chonev's construction used in the proof of Thm. \ref{thm:e_reach_gr_np_hard} ($\bot$ is duplicated for readability).}
	\label{fig:chonevconstruction}
\end{figure}
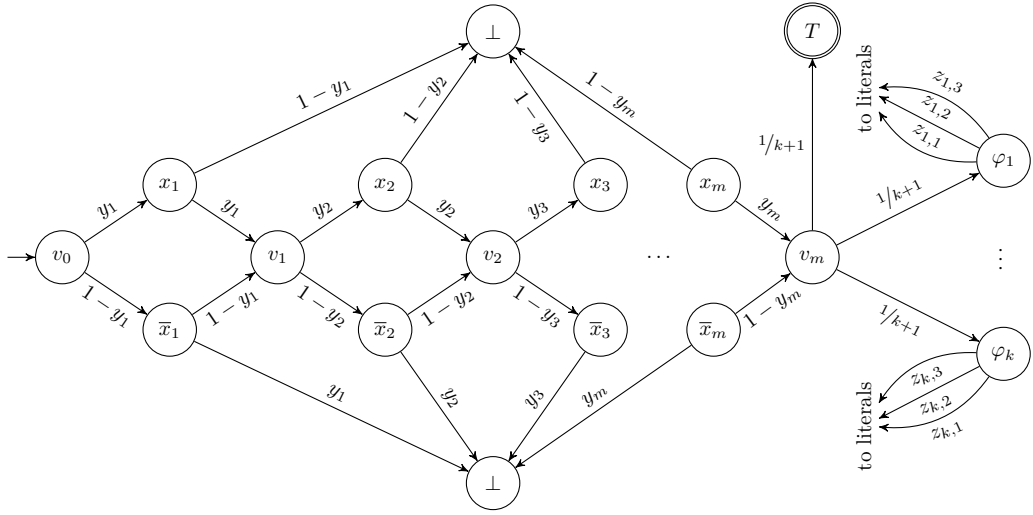

We first show the following claim to simplify our proof afterwards:\\
\noindent\textbf{Auxiliary claim:} If $\phi$ is unsatisfiable, then for all $\inst \in \ParamSpace_{\mdp_\phi}^\text{wd}$ there exists some clause $\phi_{j^*}$, such that $P(l_{j^*,r}, \bot)[\inst] \geq \frac{1}{2}$ for all $r\in\{1,2,3\}$, or more formally
\begin{align}
	\label{eq:auxclaim}
	\phi \text{ is unsat} \Longrightarrow \forall \inst \in \ParamSpace_{\mdp_\phi}^\text{wd},\; \exists j^* \in \{1,\ldots,k\}, \forall r \in\{1,2,3\}.\; P(l_{j^*,r}, \bot)[\inst] \geq \frac{1}{2}
\end{align}
\emph{Proof of the auxiliary claim:} By contraposition. Suppose for some $\inst$ and for every clause $\phi_j$ there is a `witness' literal $l_{j,r}$ with $P(l_{j,r}, \bot)[\inst] < \frac{1}{2}$.
By definition of $P$,  either
\begin{enumerate}
	\item $l_{j,r}$ is a variable $x_i$ and $\inst(y_i) > \frac{1}{2}$ or
	\item $l_{j,r}$ is a negated variable $\overline{x_i}$ and $\inst(y_i) < \frac{1}{2}$.
\end{enumerate}
Consider some variable assignment $\mathcal{I}$ for $\phi$, with
$$\mathcal{I}(x_i) =
\begin{cases}
	1, & \text{if } \inst(y_i) > \frac{1}{2}, \\
	0, & \text{if } \inst(y_i) < \frac{1}{2}, \\
	\text{arbitrary}, & \text{if } \inst(y_i) = \frac{1}{2}.
\end{cases}$$
In both case 1 and 2 above, $\mathcal{I}$ satisfies clause $\phi_j$. 
Thus $\phi$ is satisfiable. By contraposition, statement \eqref{eq:auxclaim} holds.
\medskip

\newcommand{\prfromto}[2]{\Pr(#1 \rightarrow \lozenge #2)}
\noindent \emph{Proof for correctness of reduction:} 
We show:
\begin{align*}
	\phi \text{ is sat} \iff \exists \inst \in \ParamSpace^{\text{wd}}.\; \Pr_{\mdp_\phi[\inst]}(\lozenge T) > \frac{2}{3}.
\end{align*}
\noindent The first step is to show that
\begin{align}
	\label{eq:chonevpart2}
	\phi \text{ is unsat} \Longrightarrow \forall \inst \in \ParamSpace^{\text{wd}}: \Pr_{\mdp_\phi[\inst]}(\lozenge T) \leq \frac{2}{3}.
\end{align}
We denote by $\prfromto{s}{t}$ the probability to move from state $s$ to $t$ in more than zero steps in $\mdp_\phi[\inst]$ for a fixed $\inst$.
Let $j^*$ be like in the auxiliary claim \eqref{eq:auxclaim}. By construction of $\mdp_\phi$ we have that
\[
\prfromto{l_{j^*,r}}{v_m} \leq 1 - P(l_{j*,r}, \bot)[\inst] \leq \frac{1}{2}
\]
for all $r \in \{1,2,3\}$ and hence
\begin{align*}
	\prfromto{\phi_{j^*}}{v_m} = \sum_{r=1}^3 \inst(z_{j^*, r})\cdot \prfromto{l_{j^*,r}}{v_m} \leq \frac{1}{2}\sum_{r=1}^3 \inst(z_{j^*, r}) = \frac{1}{2}.
\end{align*}
Consequently, for $\prfromto{v_m}{v_m}$ it holds that
\begin{align*}
	\prfromto{v_m}{v_m} &= P(v_m, \phi_{j^*}) \cdot \prfromto{\phi_{j^*}}{v_m} + \sum_{j \neq j^*}P(v_m, \phi_j)\cdot \prfromto{\phi_j}{v_m} \\
	& \leq \frac{1}{k+1}\cdot\frac{1}{2} + \frac{k-1}{k+1} = \frac{2k-1}{2(k+1)}.
\end{align*}
Plugging this into the equation
\begin{align*}
	\prfromto{v_m}{T} = \frac{1}{k+1} + \prfromto{v_m}{v_m}\cdot \prfromto{v_m}{T},
\end{align*}
yields 
$\prfromto{v_m}{T} \leq \frac{2}{3}$
by a straightforward calculation. All paths from $v_0$ to $T$ go through $v_m$, thus: 
\begin{align*}
\Pr_{\mdp_\phi[\inst]}(\lozenge T) = \prfromto{\sinit}{T} = \prfromto{\sinit}{v_m} \cdot \prfromto{v_m}{T}
\end{align*}
Together,
\begin{align*}
\Pr_{\mdp_\phi[\inst]}(\lozenge T) = \prfromto{\sinit}{T} = \prfromto{\sinit}{v_m} \cdot \prfromto{v_m}{T} \leq \prfromto{v_m}{T} \leq \frac{2}{3}	
\end{align*}
proves \eqref{eq:chonevpart2}. 

\noindent For completeness, we reproduce the following from \cite{Chonev17}:
\begin{align*}
	\phi \text{ is sat} \Longrightarrow \exists \inst \in \ParamSpace^{\text{wd}}.\; \Pr_{\mdp_\phi[\inst]}(\lozenge T) = 1 > \frac{2}{3}.
\end{align*}
Choose some satisfying assignment $\mathcal{I}$ for $\varphi$. We construct $\inst \in  \ParamSpace^{\mathrm{wd}}$ in two steps. First, let  $\inst(y_i) = \mathcal{I}(x_i)$ for all $1 \leq i \leq m$.
Second, for each clause $\varphi_i$, select one literal $l_{i,j}$ which makes $\varphi_i$ true under $\mathcal{I}$, and set $\inst(z_{i,j}) = 1$. Set all other $z_{i,j}$ to $0$. Under this assignment, 
$\bot$ is unreachable, we conclude using \eqref{eq:gadget:twoendcomponents} that $\Pr_{\mdp_\phi[\inst]}(\lozenge T) = 1$.

\noindent Together, we obtained:
\begin{align*}
	\phi \text{ is sat} \iff \exists \inst \in \ParamSpace^{\text{wd}}.\; \Pr_{\mdp_\phi[\inst]}(\lozenge T) > \frac{2}{3}.
\end{align*}
We can adapt $\mdp_\phi$ to a simple pMC $\mdp_\phi'$ by updating the encoding around states $\varphi_i$, outlined in Fig.~\ref{fig:chonevconstruction:simple}.
Formally, the pMC $\mdp'_\phi = (S, \Params, \sinit, P)$ is defined as follows:
\begin{itemize}
	\item $S =
		\{v_i \mid 0 \leq i \leq m\}~\uplus~
		\{x_i, \overline{x_i} \mid 1 \leq i \leq m\}~\uplus~
	    \{\phi_i, \varphi'_i \mid 1 \leq i \leq k\}~\uplus~
	    \{T, \bot\}$ are the $3m + 2k + 3$ states, $\sinit = v_0$ is the initial state, $T$ and $\bot$ indicate target and sink respectively,
	\item $\Params = \{y_1, ..., y_m\}~\uplus~\{z_{1,1}, z_{1,*}, ..., z_{k,1}, z_{k,*}\}$ are the $m + 2k$ parameters,
	\item for all $1 \leq i \leq m$ and $1 \leq j \leq k$ we define the transition probabilities as
	\begin{align*}
		&P(v_{i-1}, x_i) = y_i, && P(v_{i-1}, \overline{x_i}) = 1-y_i, \\
		&P(x_i, v_i) = y_i, && P(\overline{x_i}, v_i) = 1-y_i, \\
		&P(x_i, \bot) = 1-y_i, && P(\overline{x_i}, \bot) = y_i, \\
		&P(v_m, \phi_j) = \frac{1}{k+1}, && P(v_m, T) = \frac{1}{k+1}, \\
		&P(\phi_j, x_i) = z_{j,1} \text{ if } l_{j,1} = x_i, && P(\phi_j, \overline{x}_i) = z_{j,1} \text{ if } l_{j,1} = \overline{x}_i\\
		&P(\phi_j, \phi'_j) = 1-z_{j,1},  && \\
		&P(\phi'_j, x_i) = z_{j,*} \text{ if } l_{j,2} = x_i, && P(\phi_j, \overline{x}_i) = z_{j,*} \text{ if } l_{j,2} = \overline{x}_i\\
		&P(\phi'_j, x_i) = 1-z_{j,*} \text{ if } l_{j,3} = x_i, && P(\phi_j, \overline{x}_i) = 1-z_{j,*} \text{ if } l_{j,3} = \overline{x}_i.
	\end{align*}
\end{itemize}
Moreover, we let $P(s,t) = 0$ for each pair $(s,t)$ of states not specified above.
Observe that for $\inst' \in \ParamSpace^\text{wd}_{\mdp'_\varphi}$, there exists $\inst \in \ParamSpace^\text{wd}_{\mdp_\varphi} $ s.t.:
 \[
 \Pr_{\mdp_\phi[\inst]}(\lozenge T) = \Pr_{\mdp'_\phi[\inst']}(\lozenge T),
 \] 
 by setting 
\begin{itemize}
\item 
$\inst(z_{i,2}) = \left(1 - \inst(z_{i,1}) \right) \cdot \inst(z_{i,*})$, and
\item 
$\inst(z_{i,3}) = \left(1 - \inst(z_{i,1}) \right) \cdot \left( 1- \inst(z_{i,*})\right)$	
\end{itemize}
for all $1 \leq i \leq k$.
Analogously, we have that for some $\inst \in \ParamSpace^\text{wd}_{\mdp_\varphi} $, there exists $\inst' \in \ParamSpace^\text{wd}_{\mdp'_\varphi}$.

\begin{figure}
\centering
\begin{tikzpicture}[scale=0.8, every node/.style={scale=0.8},initial text=]
\node[state,initial,gray] (v_0) {$v_0$};
		\node[state, right=7mm of v_0,yshift=12mm,gray] (x_1) {$x_1$};
		\node[state, right=7mm of v_0,yshift=-12mm,gray] (notx_1) {$\overline{x}_1$};
		
		\node[state, right=of v_0,xshift=14mm,gray] (v_1) {$v_1$};
		\node[state, right=7mm of v_1,yshift=12mm,gray] (x_2) {$x_2$};
		\node[state, right=7mm of v_1,yshift=-12mm,gray] (notx_2) {$\overline{x}_2$};

		\node[right=of v_1,xshift=7mm,gray] (dots) {$\ldots$};
		\node[state, right=1mm of dots,yshift=12mm,gray] (x_m) {$x_m$};
		\node[state, right=1mm of dots,yshift=-12mm,gray] (notx_m) {$\overline{x}_m$};
		
		\node[state, right=of dots,xshift=5mm,gray] (v_m) {$v_m$};
		
\node[state, right=of v_m, xshift=10mm,yshift=16mm] (phi_1) {$\phi_1$};
		\node[state, right=of v_m, xshift=10mm,yshift=-16mm] (phi_k) {$\phi_k$};
		\node[state, right=of phi_1,yshift=7mm] (phip_1) {$\phi'_1$};
		\node[state, right=of phi_k,yshift=-7mm] (phip_k) {$\phi'_k$};
		\node[right=of v_m, xshift=14mm,rotate=90,anchor=center] (dots2) {$\dots$};
		\node[right=0.68cm of dots2, xshift=14mm, yshift=-3.5mm, rotate=90,anchor=center] (dots3) {$\dots$};

		\node[left=of phi_1, xshift=-6mm,yshift=12mm,rotate=90,anchor=center] (literals_1) {to literals};
		\node[left=of phi_k, xshift=-6mm,yshift=-12mm,rotate=90,anchor=center] (literals_k) {to literals};

	\draw[->] (phi_1) edge[bend left=0] node[pos=0.4, sloped, anchor=center,above] {$z_{1,1}$} (literals_1);
		\draw[->] (phi_1) edge node[anchor=center,below,sloped] {$1-z_{1,1}$} (phip_1);
		\draw[->] (phip_1) edge[bend left=0] node[sloped, anchor=center,above] {$z_{1,*}$} (literals_1);
		\draw[->] (phip_1) edge[bend right=30] node[sloped, anchor=center,above] {$1-z_{1,*}$} (literals_1);
		
		\draw[->] (phi_k) edge[bend left=0] node[pos=0.4, sloped, anchor=center,below] {$z_{k,1}$} (literals_k);
		\draw[->] (phi_k) edge node[anchor=center,above,sloped] {$1-z_{k,1}$} (phip_k);
		\draw[->] (phip_k) edge[bend left=0] node[sloped, anchor=center,below] {$z_{k,*}$} (literals_k);
		\draw[->] (phip_k) edge[bend left=30] node[sloped, anchor=center,below] {$1-z_{k,*}$} (literals_k);

		\node[state,above=of v_1,yshift=16mm,gray] (bot1) {$\bot$};
		\node[state,below=of v_1,yshift=-16mm,gray] (bot2) {$\bot$};
		
		\node[state,accepting,above=of v_m,yshift=16mm,gray] (T) {$T$};
		
		
		\draw[->,gray] (v_0) -- node[sloped, anchor=center,above,gray] {$y_1$} (x_1);
		\draw[->,gray] (v_0) -- node[sloped, anchor=center,below,gray] {$1-y_1$} (notx_1);
		\draw[->,gray] (x_1) -- node[sloped, anchor=center,above,gray] {$y_1$} (v_1);
		\draw[->,gray] (notx_1) -- node[sloped, anchor=center,below,gray] {$1-y_1$} (v_1);
		
		\draw[->,gray] (v_1) -- node[sloped, anchor=center,above,gray] {$y_2$} (x_2);
		\draw[->,gray] (v_1) -- node[sloped, anchor=center,below,gray] {$1-y_2$} (notx_2);

		\draw[->,gray] (x_m) -- node[sloped, anchor=center,above,gray] {$y_m$} (v_m);
		\draw[->,gray] (notx_m) -- node[sloped, anchor=center,below,gray] {$1-y_m$} (v_m);
		
		\draw[->,gray] (v_m) -- node[sloped, anchor=center,above,gray] {$\nicefrac{1}{k+1}$} (phi_1);
		\draw[->,gray] (v_m) -- node[sloped, anchor=center,below,gray] {$\nicefrac{1}{k+1}$} (phi_k);

		\draw[->,gray] (x_1) -- node[sloped, anchor=center,above,gray] {$1-y_1$} (bot1);
		\draw[->,gray] (x_2) -- node[sloped, anchor=center,above,gray] {$1-y_2$} (bot1);
		\draw[->,gray] (x_m) -- node[sloped, anchor=center,above,gray] {$1-y_m$} (bot1);
		
		\draw[->,gray] (notx_1) -- node[sloped, anchor=center,above,gray] {$y_1$} (bot2);
		\draw[->,gray] (notx_2) -- node[sloped, anchor=center,above,gray] {$y_2$} (bot2);
		\draw[->,gray] (notx_m) -- node[sloped, anchor=center,above,gray] {$y_m$} (bot2);
		
		\draw[->,gray] (v_m) -- node[left,gray] {$\nicefrac{1}{k+1}$} (T);

\end{tikzpicture}	
\caption{Adaption of the gadget from Fig.~\ref{fig:chonevconstruction} to simplicity (grey = unchanged).}
\label{fig:chonevconstruction:simple}
\end{figure}
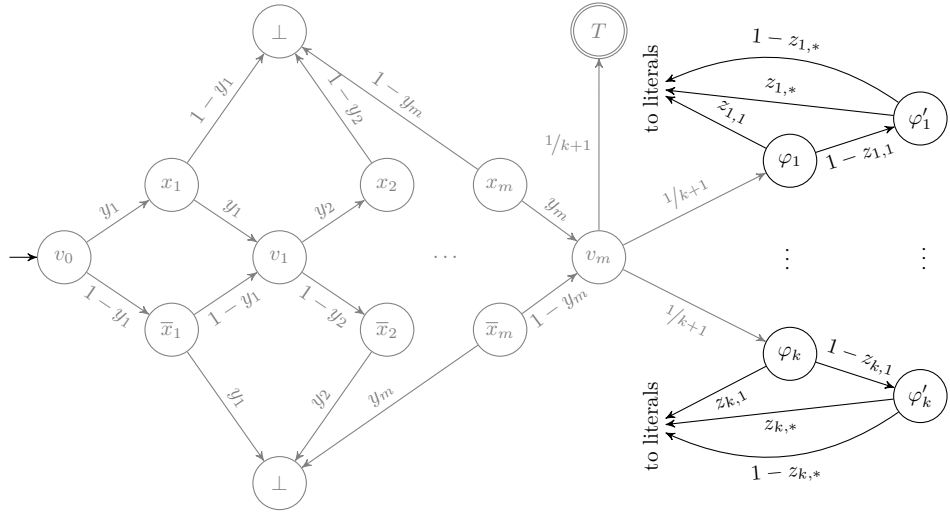

$\mdp_\phi$ can now be further modified to match the usual threshold $\lambda = \frac{1}{2}$ according to Rem. \ref{rem:fixedlambda}. It follows that $\exists\reach^>_\mathrm{wd}$ is NP-hard.

\noindent For $\exists\reach^<_\mathrm{wd}$, we remark that by \eqref{eq:gadget:twoendcomponents}, 
\[ 
\Pr_{\mdp_\phi[\inst]}(\lozenge T) > \frac{2}{3} \quad \iff \quad \Pr_{\mdp_\phi[\inst]}(\lozenge \bot) < \frac{1}{3}.
 \]
Thus, 
\begin{align*}
	\phi \text{ is sat} \iff \exists \inst \in \ParamSpace^{\text{wd}}.\; \Pr_{\mdp_\phi[\inst]}(\lozenge T) < \frac{1}{3}.
\end{align*}
Again, $\mdp_\phi$ can now be further modified to match the usual threshold $\lambda = \frac{1}{2}$. It follows that $\exists\reach^<_\mathrm{wd}$ is NP-hard.
\end{proof}


\subsection{Complexity of pMDPs}

\subsubsection{Proof of Lemma~\ref{lem:existwdgeeqexistwdge} and  Lemma~\ref{lem:equivsimplepmc}}

\restatepmdpsarepmcs*

\noindent The lemma follows immediately from:

\restateequivsimplepmcs*

The construction consists of two parts, as outlined in Sect.~\ref{sec:mdp:ee}.
Analogously, the lemma is a corollary to the following two lemmas:

\begin{definition}
	A simple pMDP $\mdp$ is called \emph{binary} if 
	\begin{itemize}
		\item for all $s \in S$: $|\Act(s)| \leq 2$, and
		\item for all $s,s' \in S$, $\act \in \Act$: $|\Act(s)| = 2 \implies P(s,\act, s') = 1$.
	\end{itemize}
\end{definition}

\begin{lemma}
\label{lem:makebinary}
	For all pMDPs $\mdp$ one can construct in polynomial time
  a (linearly larger) binary pMDP $\mdp'$ s.t.
  \[ 
    \left(\exists \inst \in
    \ParamSpace^{\mathrm{wd}}_\mdp, \exists \sched \in \Sched.\; \Pr_{\mdp[\inst]}^\sched(\lozenge T) \bowtie
    \frac{1}{2}\right) \iff \left(\exists \inst \in \ParamSpace^\mathrm{wd}_{\mdp'}, \exists \sched \in \Sched.\;
    \Pr_{{\mdp'}[\inst]}^\sched(\lozenge T) \bowtie \frac{1}{2}\right). 
  \]
\end{lemma}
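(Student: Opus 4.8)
The plan is to carry out the standard ``binarisation'' of non-deterministic choice: each state that violates the binary condition is expanded into a small binary tree of fresh auxiliary states whose internal edges are all Dirac, with the original (possibly probabilistic) transitions pushed into fresh leaf states. First I would fix, for every state $s$ of the given simple pMDP with $\Act(s) = \{a_1,\dots,a_n\}$ that is not already binary, a binary tree with $n$ leaves whose root is identified with $s$: every internal node of this tree (including $s$ itself) becomes a state with exactly two enabled actions, each action carrying a Dirac transition to one of its two children, and every leaf becomes a fresh state $s_{a_i}$ equipped with a single action whose transition distribution is $P(s,a_i,\cdot)$. States that already satisfy the binary condition---in particular all states with a single enabled action, and all target states---are left untouched, and no new parameters are introduced.

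Next I would record the easy structural facts. Since $\Params$ is unchanged and the initial state is preserved (it is identified with the root of its own tree when expanded), $\ParamSpace^{\mathrm{wd}}_{\mdp'}$ corresponds canonically to $\ParamSpace^{\mathrm{wd}}_\mdp$. The pMDP $\mdp'$ is simple: every added transition is either the constant $1$ (the tree edges) or a verbatim copy of an already-simple transition (the leaf edges), and the outgoing probabilities of each newly created state-action pair sum to $1$; by construction $\mdp'$ is moreover binary. A binary tree with $n$ leaves has $n-1$ internal nodes, so expanding $s$ introduces $O(n) = O(|\Act(s)|)$ fresh states; summing over all states this is linear in the size of $\mdp$, and the construction is clearly computable in polynomial (indeed linear) time.

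The substantive step is the value-preserving scheduler correspondence. Given a deterministic memoryless scheduler $\sched'$ of $\mdp'$ and a state $s$ that was expanded, the choices $\sched'$ makes at $s$ and at the internal nodes of its tree route, deterministically (all tree transitions being Dirac), to exactly one leaf $s_{a_i}$; setting $\sched(s) \coloneqq a_i$ and $\sched \coloneqq \sched'$ on untouched states yields a deterministic memoryless scheduler of $\mdp$. Because traversing the tree contributes no probability mass outside it and never visits $T$, the effective one-step behaviour from $s$ under $\sched'$ in $\mdp'[\inst]$ is precisely $P(s,a_i,\cdot)$, i.e.\ the behaviour of $\sched$ from $s$ in $\mdp[\inst]$; hence $\Pr^{\sched'}_{\mdp'[\inst]}(\lozenge T) = \Pr^{\sched}_{\mdp[\inst]}(\lozenge T)$ for every $\inst$. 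Conversely, from a deterministic memoryless scheduler $\sched$ of $\mdp$ I would build $\sched'$ by routing, within each tree, towards the leaf $s_{\sched(s)}$ (the remaining internal tree nodes are then unreachable, so their choices are irrelevant) and copying $\sched$ elsewhere; the same argument gives equality of the two reachability probabilities. Combining the two directions, $\exists\inst\in\ParamSpace^{\mathrm{wd}}_\mdp,\ \exists\sched.\ \Pr^\sched_{\mdp[\inst]}(\lozenge T) \bowtie \frac{1}{2}$ holds iff the analogous statement for $\mdp'$ holds, which is the claim.

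The construction itself is entirely routine; the part needing genuine care is only this last step---phrasing the scheduler translation precisely enough to see that the deterministic ``router'' tree neither adds reachability probability nor prematurely reaches $T$, so that the reachability probability is preserved exactly (not merely the truth of the $\bowtie\frac{1}{2}$ comparison)---together with a clean treatment of the edge cases (already-binary states, single-action states, and states in $T$).
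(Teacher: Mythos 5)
Your proof is correct and follows essentially the same approach as the paper: simulate $k$-ary non-determinism by a cascade of binary Dirac choices with the original distributions pushed to fresh terminal states, then establish a value-preserving correspondence between deterministic memoryless schedulers of the two pMDPs. The only (cosmetic) difference is the shape of the gadget — you use the binary tree advocated in the paper's main text, while the appendix proof formalises an equivalent sequential chain (repeatedly take action $a$ to advance past $\act_1,\dots,\act_{i-1}$, then take $b$ to commit to $\act_i$), explicitly noting in a footnote that this is only for conciseness of the formalisation.
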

\begin{lemma}
\label{lem:makepmc}
	For all binary pMDPs $\mdp$ one can construct in polynomial time
  a (linearly larger) simple pMC $\mdp'$ s.t.
  \[ 
    \left(\exists \inst \in
    \ParamSpace^{\mathrm{wd}}_\mdp,\exists \sched \in \Sched.\; \Pr_{\mdp[\inst]}^\sched(\lozenge T) \bowtie
    \frac{1}{2}\right)\; \iff\; \left(\exists \inst \in \ParamSpace^\mathrm{wd}_{\mdp'}.\;
    \Pr_{{\mdp'}[\inst]}^\sched(\lozenge T) \bowtie \frac{1}{2}\right). 
  \]
\end{lemma}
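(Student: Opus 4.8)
The plan is to carry out the ``from non-determinism to parameters'' construction of Section~\ref{sec:mdp:ee} and to verify its correctness via a scheduler/instantiation correspondence, closing the argument with the domination of randomised schedulers by deterministic ones (Remark~\ref{rem:detsched}). Given a binary pMDP $\mdp = \pmdptuple$, I would introduce a fresh parameter $x_s$ for every state $s$ with $|\Act(s)| = 2$ and define a pMC $\mdp'$ over the parameter set $\Params \uplus \{x_s \mid |\Act(s)| = 2\}$ as follows: every transition leaving a state $s$ with $|\Act(s)| = 1$ is kept verbatim (it is already of the simple form, since $\mdp$ is a simple pMDP); for a state $s$ with $\Act(s) = \{a, a'\}$, where by the definition of a binary pMDP $P(s,a,\cdot)$ and $P(s,a',\cdot)$ are Dirac, say $P(s,a,t) = P(s,a',t') = 1$, we put $P'(s,t) = x_s$ and $P'(s,t') = 1-x_s$ (or simply $P'(s,t) = 1$ if $t = t'$). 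Every non-constant probability of $\mdp'$ is then of the form $x$ or $1-x$, and the outgoing probabilities of each state sum to $1$, so $\mdp'$ is a simple pMC; it has the same state set and at most $|S|$ extra parameters, hence is linearly larger and computable in linear time. Since $\ParamSpace^{\mathrm{wd}}$ of a simple pMC is exactly the set of $[0,1]$-valued instantiations and each $x_s$ occurs only in the two transitions $x_s, 1-x_s$ out of $s$, an instantiation of $\mdp'$ is well-defined iff its restriction to $\Params$ is well-defined for $\mdp$.

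For correctness I would set up the following correspondence. For $\inst' \in \ParamSpace^{\mathrm{wd}}_{\mdp'}$ write $\inst = \inst'|_\Params$ and let $\sched'_{\inst'} \in \RSched_\mdp$ be the randomised memoryless scheduler that in a state $s$ with $\Act(s) = \{a,a'\}$ picks $a$ with probability $\inst'(x_s)$ and $a'$ with probability $1 - \inst'(x_s)$. A term-by-term comparison of the transition functions shows that $\mdp'[\inst']$ and the induced pMC $(\mdp[\inst])^{\sched'_{\inst'}}$ are literally the same Markov chain, so $\Pr_{\mdp'[\inst']}(\lozenge T) = \Pr^{\sched'_{\inst'}}_{\mdp[\inst]}(\lozenge T)$; moreover $\inst' \mapsto (\inst, \sched'_{\inst'})$ is a bijection from $\ParamSpace^{\mathrm{wd}}_{\mdp'}$ onto $\ParamSpace^{\mathrm{wd}}_\mdp \times \RSched_\mdp$. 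Conversely, a well-defined instantiation $\inst$ of $\mdp$ together with a deterministic $\sched \in \Sched_\mdp$ yields the instantiation $\inst'$ with $\inst'|_\Params = \inst$ and $\inst'(x_s) \in \{0,1\}$ dictated by $\sched(s)$, and then $\mdp'[\inst'] = (\mdp[\inst])^\sched$.

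Putting these together, the right-hand side of the lemma, namely $\exists \inst' \in \ParamSpace^{\mathrm{wd}}_{\mdp'}$ with $\Pr_{\mdp'[\inst']}(\lozenge T) \bowtie \frac{1}{2}$ (the scheduler on a pMC being trivial), is equivalent to $\exists \inst \in \ParamSpace^{\mathrm{wd}}_\mdp\ \exists \sched \in \RSched_\mdp$ with $\Pr^\sched_{\mdp[\inst]}(\lozenge T) \bowtie \frac{1}{2}$. The final step is to exchange randomised for deterministic schedulers, and this is where the only real care is needed: by Remark~\ref{rem:detsched} and its symmetric counterpart for minimal schedulers, for every fixed $\inst$ the quantities $\sup_{\sched \in \RSched}\Pr^\sched_{\mdp[\inst]}(\lozenge T)$ and $\inf_{\sched \in \RSched}\Pr^\sched_{\mdp[\inst]}(\lozenge T)$ are attained by deterministic schedulers. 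Hence if some randomised scheduler achieves $\bowtie\,\frac{1}{2}$ with $\bowtie\,\in\{\geq,>\}$, the maximizing deterministic scheduler achieves it as well, and dually for $\bowtie\,\in\{\leq,<\}$ with the minimizing deterministic scheduler; the reverse implication is immediate since deterministic schedulers are randomised. This establishes the claimed equivalence. The construction and the transition-function check are routine; the bookkeeping over the four comparison operators in the randomised-to-deterministic step is the main (mild) obstacle.
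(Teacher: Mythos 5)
Your proposal is correct and follows essentially the same route as the paper's proof: the same replacement of the binary choice at each state $s$ by a fresh parameter $x_s$, the same identification of instantiations of $\mdp'$ with pairs of an instantiation of $\mdp$ and a randomised memoryless scheduler, and the same appeal to Remark~\ref{rem:detsched} to pass back to deterministic schedulers. Your explicit case split over the comparison operators (maximising deterministic scheduler for $\unrhd$, minimising one for $\unlhd$) is a point the paper's proof glosses over, but it is the same argument.
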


\begin{proof}[Proof of Lemma~\ref{lem:makebinary}]
	Let $\mdp = (S,\Params,\Act,\sinit,P)$ be a simple pMDP with $\Act = \{ \act_1, \hdots \act_m \}$.
	We construct a binary pMDP $\mdp'=(S',\Params,\Act',\sinit,P')$.\footnote{The construction here is not building a binary tree as recommended in the main paper, as a formalisation of that is less concise. Instead, we follow a construction sketched in \cite[Fig. 5b]{DBLP:conf/uai/Junges0WQWK018}.}
	Therefore, let $S' = S \times \{1,\hdots,m \}$, and $\Act' = \{ a,b \}$.
	Then $P'$ is defined by 
 \begin{itemize}
\item $	P'(\langle s, i \rangle,a,\langle s,i+1 \rangle) = 1$ for all $s \in S$ and $1 \leq i \leq m$.
\item  $P'(\langle s, i \rangle,b,\langle s',1 \rangle) = P(s,\act_i,s')$ for all $s,s' \in S$, $\act_i \in \Act$, and $1 \leq i \leq m$.
 	\end{itemize}
 	And $P'(\cdot) = 0$ otherwise.
 	
	\noindent Observe that if there is a transition $P(s,\act_i,s')$ then there is a path 
	\[ \langle s, 1 \rangle \xrightarrow{a:1}  \langle s, 2 \rangle \xrightarrow{a:1}  \langle s, 3 \rangle \hdots 
	\langle s,i \rangle \xrightarrow{b:P(s,\act_i,s')} \langle s',1 \rangle.\]
	Thus, a scheduler selecting $\act_i$ in $s$ can be mimicked by selecting $a$ $i-1$ times, and then selecting $b$. The probability of the path is $P(s,\act_i,s')$.
	Similarly, a scheduler in $\mdp'$ can be always mimicked. 
	Thus, the induced pMCs are weakly bisimilar, and they satisfy the same reachability properties.
\end{proof}

\begin{proof}[Proof of Lemma~\ref{lem:makepmc}]

Let $\mdp = (S,\Params,\Act,\sinit,P)$ be a binary pMDP with $\Act = \{ a, b \}$.
We construct the simple pMC $\mdp' = (S,\Params',\sinit,P')$.
We introduce fresh variables $\Params_S = \{ x_s \mid s \in S \}$: Thus $\Params' = \Params \cup \Params_S$.
Then, $P'$ is given by 
\[
P'(s,s') = \begin{cases}
  x_{s} & \text{if }P(s,a,s') = 1	,\\
  1-x_{s} & \text{if }P(s,b,s') = 1,\\
  P(s,\act',s') & \text{if }\Act(s) = \{ \act' \}, \act' \in \Act,\\
  0 & \text{otherwise.}
 \end{cases}
\]
It holds that \begin{align*} \{ \mdp^\sched[\inst] \mid \sched \in \Sched_\mdp,\;\inst\in \ParamSpace \} 
& =
 \{ \mdp'[\inst'] \mid \;\inst'\in \ParamSpace_{\mdp'} \text{ with } \inst(\Params_S) \subseteq \{0,1\}  \} \\
& \subseteq 
 \{ \mdp'[\inst'] \mid \;\inst'\in \ParamSpace_\mdp \} .\end{align*}
Therefore, clearly 
\[ \exists \sched \in \Sched,\; \exists \inst \in \ParamSpace_\mdp.\; \Pr_{\mdp[\inst]}^\sched(\lozenge T) \bowtie \frac{1}{2} \implies \exists \inst \in \ParamSpace_{\mdp'} \Pr_{{\mdp'}[\inst]}^\sched(\lozenge T) \bowtie \frac{1}{2}. 
\]%
For the other direction, consider that we can reverse the translation, i.e., replace parameters which only occur at a single state (e.g. $\Params_S$ above) by  randomised choices over actions. 
\begin{align*}
\{ \mdp'[\inst'] \mid \;\inst'\in \ParamSpace_{\mdp'} \}
=	
\{ \mdp^{\tau}[\inst] \mid \;\inst\in \ParamSpace_\mdp\;\tau \in \RSched \}
\end{align*}
Following Rem.~\ref{rem:detsched}, randomised schedulers are dominated by deterministic ones for every instantiated (parameter-free) MDP, thus:
\begin{align*}
 \exists \inst \in \ParamSpace_{\mdp'}.\; \Pr_{{\mdp'}[\inst]}^\sched(\lozenge T) \bowtie \frac{1}{2}
 & \implies 
 \exists \tau \in \RSched,\; \exists \inst \in \ParamSpace_\mdp.\; \Pr_{\mdp[\inst]}^\sched(\lozenge T) \bowtie \frac{1}{2} \\
 & \implies \exists \sched \in \Sched,\; \exists \inst \in \ParamSpace_\mdp.\; \Pr_{\mdp[\inst]}^\sched(\lozenge T) \bowtie \frac{1}{2}.
\end{align*}

\end{proof}

\subsubsection{Proof of Theorem~\ref{thm:etr:mdpsarehard} and Lemma~\ref{lem:etr:bcon4INEQhard}}
\restatebconINEQdef*
\restatebconetrhard*

We reduce from mb4FEAS-c (ETR-hard by Lemma~\ref{lem:etr:mb4feas}) to bcon4INEQ-c, using an adaption of the construction of the proof of \cite[Thm.\ 4.1]{DBLP:journals/mst/SchaeferS17}.	
\restatembFEASdef*

\noindent In the following, we give ETR encodings. 
We use $\doteq$, $\gtrdot$, $\lessdot$, $\leqslant$, $\geqslant$, to clarify the usage of (in)equalities in constraints.
\begin{proof}[Proof of Lemma~\ref{lem:etr:bcon4INEQhard}]
(adapted from~\cite[Thm 4.1]{DBLP:journals/mst/SchaeferS17})
We ask for an $\inst \in [0,1]^\Params$ such that $f[\inst] \leq 0$. As we can assume $f$ to be non-negative, this is equivalent to $f[\inst] = 0$ (cf.~\textbf{P2}-c in the proof of Lemma~\ref{lem:etr:bquadtob4feas}).
We introduce a fresh variable $z$.
Consider the two semi-algebraic sets \[ A = \{ \inst' \in \RR^{|\Params|+1} \mid \inst \models \big( f \doteq z \land \bigwedge_{x \in \Params} 0 \leqslant x \leqslant 1 \big) \}, \] 
and 
\[ B = \{ \inst' \in \RR^{|\Params|+1} \mid \inst \models \big( z \doteq 0 \land \bigwedge_{x \in \Params} 0 \leqslant x \leqslant 1 \big) \}.  \]
Observe that $A$ and $B$ are disjoint, iff no $\inst \in \RR^\Params$ exists such that $f[\inst] = 0$.
We now construct an bcon4INEQ-c instance that is satisfiable if $A$ and $B$ are not disjoint.

The sets $A$ and $B$ are compact\footnote{Notice that we use the same sets $A,B$ also for reducing the open variant, to ensure compactness.}.
If $A$ and $B$ are disjoint, then there is a positive (minimal) distance $\delta$ between $A$ and $B$. By \cite[Corollary 3.8]{DBLP:journals/mst/SchaeferS17}, $\delta$ is at least $2^{{-}2^{L+5}}$, where $L$ is the minimal number of bits to encode $f$ in a sum-of-terms fashion.
Thus, if there exists an point where the distance is between $A$ and $B$ would be smaller than $\delta$, then $A$ and $B$ cannot be disjoint, and hence they must overlap.
By construction, the distance between $A$ and $B$ is given by difference between of $f$ and $0$.  
This observation yields the following constraints
\[
	|f-0| \lessdot \delta
\quad\land\quad
	\delta \lessdot 2^{{-}2^{L+5}} 
\quad\land\quad
	\inst \in [0,1]^\Params
\]
Clearly, if there is a satisfying assignment, then $0 < \delta < 1$, and as $f$ is non-negative, we have $|f| = f$.
To construct (in polynomial time) an equivalent ETR formula, have to reformulate the constant in the second constraint: Therefore, we reformulate $\delta < 2^{{-}2^{L+5}}$ by iterative squaring (or power iteration) to
\[ \delta \lessdot \delta_1 \cdot \delta_1 
\quad\land\quad
   \delta_1 \lessdot \delta_2 \cdot \delta_2
\quad\land\quad
\hdots
\quad\land\quad
   \delta_{L+3} \lessdot \delta_{L+4} \cdot \delta_{L+4} 
\quad\land\quad  
   \delta_{L+4} \lessdot \frac{1}{2}
\] 
which is trivially to reformulate as
\[ \Delta = \left(\delta - \delta_1 \cdot \delta_1 \lessdot 0
\quad\land\quad
\hdots
\quad\land\quad
   \delta_{L+3} - \delta_{L+4} \cdot \delta_{L+4} \lessdot 0
\quad\land\quad  
   \delta_{L+4} - \frac{1}{2} \lessdot 0\right) \]

Together we obtain the following input to bcon4INEQ-c:
\[ f - \delta \lessdot 0 \land \Delta \]
which is has a satisfying $\inst'\colon \Params \cup \{ \delta, \delta_1 \hdots \delta_{L+4} \} \rightarrow [0,1]$ iff there exists $\inst \in [0,1]^\Params$ s.t. $f[\inst] \leq 0$.
All constraints are at most degree 4.

\end{proof}

\restatemdpsetrhard*

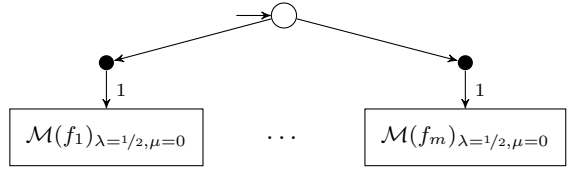
\begin{figure}
\centering
\begin{tikzpicture}
	\node[circle, draw, initial, initial text=] (sinit) {};
	\node[below=1.3cm of sinit] (dots) {$\dots$};
	\node[rectangle, left=0.7cm of dots, draw, inner sep = 6pt] (f1) {\footnotesize$\mdp(f_1)_{\lambda=\nicefrac{1}{2},\mu=0}$};
	\node[rectangle, right=0.7cm of dots, draw, inner sep = 6pt] (fm) {\footnotesize$\mdp(f_m)_{\lambda=\nicefrac{1}{2},\mu=0}$};
	\node[circle, fill, inner sep = 2pt, above=0.5cm of f1] (a1) {};
	\node[circle, fill, inner sep = 2pt, above=0.5cm of fm] (am) {};
	
	\draw[->] (a1) edge  node[right] {\scriptsize$1$} (f1);
	\draw[->] (am) edge  node[right] {\scriptsize$1$} (fm);
	\draw[->] (sinit) edge  node[right] {} (a1);
	\draw[->] (sinit) edge  node[right] {} (am);

\end{tikzpicture}
\caption{Construction for the proof of Thm.~\ref{thm:etr:mdpsarehard}}
\label{fig:mdpsareetrhardrepeated}
\end{figure}

\begin{proof}
ETR-hardness for non-strict inequalities follows from Thm.~\ref{thm:etr:pmcsarehard}. For strict inequalities, we reduce from bcon4INEQ-o/c.
We show the proof for $\exists\forall\reach^>_\mathrm{wd}$.
	We reduce from bcon4INEQ-c. 
For given $f_1,\hdots,f_m$, we construct pMCs $\mdp(f_1)_{\lambda=\nicefrac{1}{2},\mu=0}, \hdots, \mdp(f_m)_{\lambda=\nicefrac{1}{2},\mu=0}$ with target states $T_i$ by applying Lemma~\ref{lem:chonevtrick} to $f_i$ (with $\lambda=\frac{1}{2}$ and $\mu = 0$.
Then, we construct a pMDP by taking the disjoint union of the pMCs and adding a fresh initial state, with non-deterministic actions into each pMC, as outlined in Fig.~\ref{fig:mdpsareetrhardrepeated}.
Formally, let $\mdp(f_i)_{\lambda=\nicefrac{1}{2},\mu=0} = (S_i,\Params,\sinit^i, P_i)$.
We construct a pMDP $\mdp = (S, \Params, \Act, \sinit, P)$ with
\begin{itemize}
\item $S = \bigcup S_i \cup \{ s_0 \}$
\item $\Act = \{ \act_i \mid 1 \leq i \leq m \}$
\item $\sinit = s_0$
\item $P$ given by:
\begin{align*}
P(s,\act,s') = \begin{cases}
 P_i(s,s') &\text{if } s,s' \in S_i, \act = \act_i\text{ for some }i,\\
 1 & \text{if }s=s_0, s'=\sinit^i, \act=\act_i\text{ for some }i\\
 0 & \text{otherwise.} 	
 \end{cases}
\end{align*}
\end{itemize}
We consider target states $T = \bigcup T_i$.
The construction clearly is in polynomial time. 
The $\mdp$ has $m$ schedulers $\sched_1,\hdots,\sched_m$ with $\sched_i = [ s_0 \mapsto \act_i]$ (and all other actions trivially selected).

\noindent By construction, \[
\exists \inst \in \ParamSpace^\mathrm{wd}_\mdp.\;
 \Big(
\Pr_{\mdp[\inst]}^{\sched_i}(\lozenge T) < \frac{1}{2}
\text{ iff }
f_i[\inst] < 0
\Big).
\]
Then, \[ \exists \inst \in \ParamSpace^\mathrm{wd}_\mdp.\; \bigwedge_i \Pr_{\mdp[\inst]}^{\sched_i}(\lozenge T) < \frac{1}{2}\quad \iff \quad \exists \inst \in [0,1]^\Params \; \bigwedge_i f_i[\inst] < 0,  \]
or equivalently, 
\[ \exists \inst \in \ParamSpace^\mathrm{wd}_\mdp,\; \forall \sched \in \Sched.\; \Pr_{\mdp[\inst]}^{\sched}(\lozenge T) < \frac{1}{2}\quad \iff \quad \exists \inst \in [0,1]^\Params \; \bigwedge_i f_i[\inst] < 0,  \]
\end{proof}


\paragraph*{NP-hardness for a special case} 
We provide a further NP-hardness result for acyclic pMDPs with just two schedulers:

\begin{lemma}
\label{lem:efnphard}
	$ \exists\forall\reach_\mathrm{gp}^>$ and $ \exists\forall\reach_\mathrm{gp}^\geq$ is NP-hard (even for acyclic pMDPs with just two  schedulers).
	\end{lemma}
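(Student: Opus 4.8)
The plan is to reduce from 3SAT, reusing the same overall shape as the pMDP construction in Fig.~\ref{fig:mdpsareetrhard} but with only two branches, so that the resulting pMDP has exactly two (relevant) deterministic schedulers. Given a 3SAT formula $\phi = \phi_1 \wedge \dots \wedge \phi_k$ over variables $x_1,\dots,x_m$, I would first build, via Chonev's trick (Lemma~\ref{lem:chonevtrick}), a single simple acyclic pMC whose reachability probability encodes a polynomial inequality that is satisfiable (over the graph-preserving parameter space $(0,1)^\Params$) if and only if $\phi$ is satisfiable. Concretely, the standard arithmetization $\prod_{j}\big(1-\prod_{r}(1-t_{j,r})\big)$ of $\phi$ — where $t_{j,r}$ is $x_i$ or $1-x_i$ according to the literal $l_{j,r}$ — is a polynomial $g_\phi$ with $g_\phi[\inst] = 1$ for a Boolean $\inst$ satisfying $\phi$ and $g_\phi[\inst] < 1$ for Boolean $\inst$ violating $\phi$; a small rounding/threshold gadget around $g_\phi$ makes ``$g_\phi[\inst] > 1-\varepsilon$'' satisfiable over $(0,1)^\Params$ iff $\phi$ is satisfiable. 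By Lemma~\ref{lem:chonevtrick} this polynomial inequality becomes a reachability constraint $\Pr_{\mdp_\phi[\inst]}(\lozenge T) > \lambda$ in a simple acyclic pMC $\mdp_\phi$, for some $\lambda \in \QQ$, preserving all graph-preserving instantiations.

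Next I would turn this into a pMDP with a single non-deterministic (initial) state $s_0$ having two actions: one leading (with probability $1$) into $\mdp_\phi$, and one leading (with probability $1$) into a trivial pMC $\mdp^{\mathrm{bad}}$ in which the target is reached with a fixed constant probability below $\lambda$ — for instance a single transition with constant probability $\lambda' < \lambda$ to $T$ and $1-\lambda'$ to a sink. This pMDP is acyclic and has exactly two deterministic schedulers $\sched_{\mathrm{good}}$ and $\sched_{\mathrm{bad}}$. Then $\forall \sched.\ \Pr^{\sched}_{\mdp[\inst]}(\lozenge T) > \lambda$ holds iff both $\Pr^{\sched_{\mathrm{good}}}_{\mdp[\inst]}(\lozenge T) > \lambda$ and $\Pr^{\sched_{\mathrm{bad}}}_{\mdp[\inst]}(\lozenge T) > \lambda$; since the latter is $\lambda' < \lambda$, this is never satisfiable — so the bad branch alone forces the answer ``no''. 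To make both schedulers matter, I would instead use the threshold $\lambda'$ for the bad branch and a slightly \emph{smaller} common threshold, i.e. ask for $\Pr^{\sched}_{\mdp[\inst]}(\lozenge T) > \mu$ with $\mu < \lambda'$: then the bad branch is always fine, and the constraint reduces to the good branch, i.e. to $\Pr_{\mdp_\phi[\inst]}(\lozenge T) > \mu$; choosing $\mu$ appropriately (again by the rescaling in Remark~\ref{rem:fixedlambda}) this is exactly the $\phi$-satisfiability condition. After normalising the threshold to $\tfrac12$ via Remark~\ref{rem:fixedlambda}, and noting the construction is polynomial-time and acyclic, this establishes NP-hardness of $\exists\forall\reach^{>}_\mathrm{gp}$; the $\geq$ case is obtained by the analogous (non-strict) version of Chonev's trick and an analogous threshold choice, or directly via Lemma~\ref{lem:restrict}.

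The main obstacle is the same subtlety that already distinguishes strict from non-strict bounds elsewhere in the paper: ensuring that the $\phi$-satisfiability $\Leftrightarrow$ reachability-above-threshold equivalence survives the restriction to the \emph{graph-preserving} (open) parameter space $(0,1)^\Params$ and with a \emph{strict} inequality. A satisfying Boolean assignment sits on the boundary of $(0,1)^\Params$, so one cannot simply plug in $0/1$ values; I would handle this exactly as in the proof of Theorem~\ref{thm:etr:pmcsarehard} and Lemma~\ref{lem:chonevtrick}, arguing that if $\phi$ is satisfiable then $g_\phi$ exceeds the threshold on an open neighbourhood intersecting $(0,1)^\Params$ (so there are rational witnesses strictly inside the cube), while if $\phi$ is unsatisfiable a compactness/continuity argument on the closed cube — together with the fact that the arithmetization of an unsatisfiable clause is bounded away from $1$ — keeps $g_\phi$ uniformly below the threshold on all of $[0,1]^\Params \supseteq (0,1)^\Params$. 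The second, more bookkeeping-heavy point is verifying that the two-branch pMDP genuinely has only two deterministic schedulers up to the choices that affect reachability (every state other than $s_0$ is deterministic by construction, since both $\mdp_\phi$ and $\mdp^{\mathrm{bad}}$ are pMCs), which is immediate from the construction.
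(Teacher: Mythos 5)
Your route is genuinely different from the paper's: the paper proves Lemma~\ref{lem:efnphard} by reducing from the NP-hard \emph{multi-objective} pMC problem $2\exists\reach^{\unrhd\lambda_1\lambda_2}_\mathrm{gp}$ (cited from \cite{infocomp}) via Lemma~\ref{lem:2etoef}, so that the two schedulers of the constructed pMDP each test one of the two reachability constraints and the $\forall$-quantifier encodes their conjunction. You instead reduce from 3SAT through a single pMC plus a dummy branch. Note that in your construction the second scheduler is \emph{always} satisfying, so the universal quantification is vacuous and your argument really only establishes NP-hardness of $\exists\reach^{>}_\mathrm{gp}$ for acyclic pMCs and then observes that a pMC with a harmless extra action is formally a two-scheduler pMDP. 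That formally suffices for the statement of the lemma, but it sidesteps the content of the paper's proof, in which both schedulers carry information.

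There is, however, a genuine gap in the reduction as you describe it: you cannot apply Lemma~\ref{lem:chonevtrick} to the arithmetization $g_\phi=\prod_{j}\bigl(1-\prod_{r}(1-t_{j,r})\bigr)$ in polynomial time. The lemma (and the paper's input encoding) takes the polynomial in sum-of-monomials form and its construction is polynomial in the \emph{number of terms} $t$; but expanding $g_\phi$ produces, in general, exponentially many monomials (e.g., for $k$ clauses over pairwise disjoint variable sets the expansion has at least $2^{k}$ terms). So the reduction is not polynomial-time as written. The fix is to bypass Lemma~\ref{lem:chonevtrick} and realise the product structure of $g_\phi$ directly as a chain of acyclic clause gadgets (each gadget passes control to the next with probability $1-\prod_r(1-t_{j,r})$ and otherwise falls to the sink) --- essentially an acyclic version of Chonev's construction used in Theorem~\ref{thm:e_reach_gr_np_hard}. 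A second, smaller point: the soundness direction needs a \emph{concrete} gap, not just compactness. One has to argue that if $g_\phi[\inst]\geq 1-\delta$ with $\delta<\tfrac18$ then every clause contains a literal with value at least $1-\delta^{1/3}>\tfrac12$, so rounding $\inst$ yields a satisfying Boolean assignment; hence for unsatisfiable $\phi$ one gets $g_\phi<\tfrac78$ uniformly on $[0,1]^{\Params}$, which supplies the explicit threshold your ``small rounding/threshold gadget'' needs and makes both the $>$ and $\geq$ variants go through on the open cube.
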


\noindent To prove the theorem, we consider pMCs with multiple objectives:
\begin{definition}
Given a pMC with sets of states $T$ and $T'$ and constants $\lambda_1$, $\lambda_2$.\[
 2\exists\reach^{\bowtie \lambda_1 \lambda_2}_\mathrm{gp} 	\stackrel{\mathrm{def}}{\iff} \exists \inst \in \ParamSpace^{\mathrm{gp}}.\; \left( \Pr_{\mdp[\inst]}(\lozenge T_1) \bowtie \lambda_1 \land  \Pr_{\mdp[\inst]}(\lozenge T_2) \bowtie \lambda_2 \right) \]
\end{definition}
The problem $2\exists\reach^{\unrhd}_\text{gp}$ is NP-hard~\cite[Theorem 8]{infocomp}\footnote{The original statement uses one $>$ and one $\geq$ relation, but the given proof does not depend on the (non)strictness.}.

\begin{lemma}
$2\exists\reach^{\bowtie \lambda_1 \lambda_2}_\mathrm{gp} \;\leq_p\; \exists\forall\reach_\mathrm{gp}^{\bowtie}$
	\label{lem:2etoef}
\end{lemma}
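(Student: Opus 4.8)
The plan is to reuse, with only two branches, the disjoint-union gadget from Thm.~\ref{thm:etr:mdpsarehard} (Fig.~\ref{fig:mdpsareetrhard}): a fresh non-deterministic initial state whose two actions lead into two copies of the input pMC, the first copy enforcing the $T_1$-objective and the second the $T_2$-objective. A universal quantification over the (only two) schedulers then exactly encodes the conjunction of the two pMC reachability constraints. Let $\mdp$ be the input pMC with targets $T_1,T_2$ and rational thresholds $\lambda_1,\lambda_2$; we may assume $\mdp$ is simple and $0 < \lambda_i < 1$ (degenerate thresholds make one conjunct trivially true or false, and the general-pMC case is handled by the usual simplification). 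First I would take two vertex-disjoint copies $\mdp^{(1)},\mdp^{(2)}$ of $\mdp$ over the same parameter set $\Params$, and apply the threshold normalisation of Rem.~\ref{rem:fixedlambda} to each copy \emph{separately}, turning $\mdp^{(i)}$ (with target $T_i$, threshold $\lambda_i$) into a pMC $\hat{\mdp}^{(i)}$ with target $\hat{T}_i$ such that $\Pr_{\hat{\mdp}^{(i)}[\inst]}(\lozenge \hat{T}_i) \bowtie \frac{1}{2} \iff \Pr_{\mdp[\inst]}(\lozenge T_i) \bowtie \lambda_i$ for \emph{every} $\inst$ (the construction there yields an affine relation between the two probabilities, hence holds pointwise). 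The key point is that this normalisation only prepends a single state with constant-probability transitions: it adds no parameters, preserves simplicity and acyclicity, and, since the new transition probabilities are rational constants strictly between $0$ and $1$, it neither enlarges nor shrinks the graph-preserving parameter space.

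Next I would assemble the pMDP $\mdp'$: a fresh initial state $s_0$ with two actions $a_1,a_2$, where $a_i$ moves deterministically to the initial state of $\hat{\mdp}^{(i)}$; the rest of $\mdp'$ is the disjoint union of $\hat{\mdp}^{(1)}$ and $\hat{\mdp}^{(2)}$ (each of whose states carries its unique original action); the target set is $T = \hat{T}_1 \uplus \hat{T}_2$. Then $\mdp'$ is simple (and acyclic whenever $\mdp$ is), it uses exactly the parameters $\Params$ so that $\ParamSpace^{\mathrm{gp}}_{\mdp'} = \ParamSpace^{\mathrm{gp}}_{\mdp}$, it has exactly the two deterministic schedulers $\sched_i = [s_0 \mapsto a_i]$, and it is computable in polynomial time. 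For correctness, fix any $\inst \in \ParamSpace^{\mathrm{gp}}_{\mdp'} = \ParamSpace^{\mathrm{gp}}_{\mdp}$. Since the two copies are vertex-disjoint, under $\sched_i$ only $\hat{T}_i$ is reachable from $s_0$, so $\Pr^{\sched_i}_{\mdp'[\inst]}(\lozenge T) = \Pr_{\hat{\mdp}^{(i)}[\inst]}(\lozenge \hat{T}_i)$, which by construction of $\hat{\mdp}^{(i)}$ satisfies $\bowtie \frac{1}{2}$ iff $\Pr_{\mdp[\inst]}(\lozenge T_i) \bowtie \lambda_i$. Hence $\forall \sched \in \Sched.\ \Pr^{\sched}_{\mdp'[\inst]}(\lozenge T) \bowtie \frac{1}{2}$ holds iff $\Pr_{\mdp[\inst]}(\lozenge T_1) \bowtie \lambda_1$ and $\Pr_{\mdp[\inst]}(\lozenge T_2) \bowtie \lambda_2$ both hold. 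Quantifying existentially over $\inst \in \ParamSpace^{\mathrm{gp}}$ shows $\mdp'$ is a positive instance of $\exists\forall\reach^{\bowtie}_{\mathrm{gp}}$ iff $(\mdp,T_1,T_2,\lambda_1,\lambda_2)$ is a positive instance of $2\exists\reach^{\bowtie\lambda_1\lambda_2}_{\mathrm{gp}}$.

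I expect the only genuinely delicate part to be the bookkeeping around Rem.~\ref{rem:fixedlambda}: one must verify that renormalising each objective to threshold $\frac{1}{2}$ introduces no new parameters, keeps the pMC simple (and acyclic), and leaves the graph-preserving region untouched, which is exactly what the ``constants in $(0,1)$'' observation gives. Everything else is the routine remark that a single non-deterministic state with two deterministic actions exactly realises a conjunction of two pMC objectives over a shared parameter set, together with the disjointness argument ensuring each branch contributes precisely one of the two reachability probabilities. Finally, instantiating $\bowtie\;=\;\unrhd$ and combining with the NP-hardness of $2\exists\reach^{\unrhd}_{\mathrm{gp}}$ from \cite[Theorem~8]{infocomp} yields Lemma~\ref{lem:efnphard}, the resulting pMDP being acyclic with just two schedulers.
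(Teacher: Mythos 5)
Your construction is correct and is essentially the paper's own proof: a fresh initial state with two actions branching into two disjoint copies of the pMC, one per objective, with each threshold renormalised to $\frac{1}{2}$ via the constant-probability gadget of Remark~\ref{rem:fixedlambda}, so that universal quantification over the two schedulers realises the conjunction. The only (immaterial) difference is that the paper folds the renormalising constants into the transition probabilities of the two initial actions rather than prepending them inside each copy, and your explicit checks that the normalisation adds no parameters and leaves $\ParamSpace^{\mathrm{gp}}$ unchanged are points the paper leaves implicit.
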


\begin{proof}
	Given a pMC $\mdp$ with $T_1, T_2$ and $\lambda_1, \lambda_2 \in \QQ$. We construct a pMDP $\mdp'$.
	We show the construction for $\lambda_2 < \frac{1}{2}$ and $\lambda_1 > \frac{1}{2}$. The construction for the other cases is analogous (using adaptions as outlined in Remark~\ref{rem:fixedlambda}).
	The construction is outlined in Fig.~\ref{fig:2etoef}.

	Formally, we construct $(S',\Params,\{\act_1,\act_2, \act_d\},\sinit',P')$ with 
	$S' = S \times \{1,2\} \cup \{ \sinit, t, \bot \}$, $\sinit'$ and let $T = \{ t \}$.
\[ P'(s,\act,s') = \begin{cases} P(\bar{s},\bar{s}') & \text{if } \act=\act_d, s=\langle \bar{s},i\rangle, s'=\langle\bar{s}',i\rangle \text{ for some }i \in \{1,2\}\\
    1 & \text{if } \act=\act_d, s=\langle \bar{s}, i \rangle, \bar{s} \in T_i \text{ for some }i \in \{1,2\} s'=t \\
    1 & \text{if } \act=\act_d, s=s', s\in \{\bot, t\} \\
    \frac{2}{\lambda_1} & \act=\act_1, s=\sinit, s'=\langle \sinit, 1 \rangle \\
    1-\frac{2}{\lambda_1} & \act=\act_1, s=\sinit, s'=\bot \\
    \frac{\frac{1}{2}-\lambda_2}{1-\lambda_2} & \act=\act_2, s=\sinit, s'=\langle \sinit, 2 \rangle \\
    1-\frac{\frac{1}{2}-\lambda_2}{1-\lambda_2} & \act=\act_2, s=\sinit, s'=t \\
0 & \text{otherwise} 
 \end{cases}
\]
	The pMDP has two schedulers scheduler $\sched_i$ with $i\in \{1,2\}$ s.t. $\sched_i(s) = \act_i$.
	Then it holds that:	
	\begin{align*}
	 & \exists \inst  \in \ParamSpace^{\mathrm{gp}},\;\forall \sched \in \Sched.\;  \Pr_{\mdp'[\inst]}^\sched (\lozenge T) \bowtie \frac{1}{2} \\
	\iff 
		 & \exists \inst \in \ParamSpace^{\mathrm{gp}}.\; \left( \Pr_{\mdp'[\inst]}^{\sched_1}(\lozenge T) \bowtie \frac{1}{2} \land  \Pr_{\mdp'[\inst]}^{\sched_2}(\lozenge T) \bowtie \frac{1}{2} \right)
		 \\
	\iff 
		 & \exists \inst \in \ParamSpace^{\mathrm{gp}}.\; \left( \Pr_{\mdp'[\inst]}^{\sched_1}(\langle\sinit,1\rangle \rightarrow \lozenge T_1) \bowtie \lambda_1 \land  \Pr_{\mdp'[\inst]}^{\sched_2}(\langle\sinit,2\rangle \rightarrow \lozenge T_2) \bowtie \lambda_2 \right)\\
		 \iff 
		 & \exists \inst \in \ParamSpace^{\mathrm{gp}}.\;\left( \Pr_{\mdp[\inst]}(\lozenge T_1) \bowtie \lambda_1 \land  \Pr_{\mdp[\inst]}(\lozenge T_2) \bowtie \lambda_2 \right)
	\end{align*}
\end{proof}

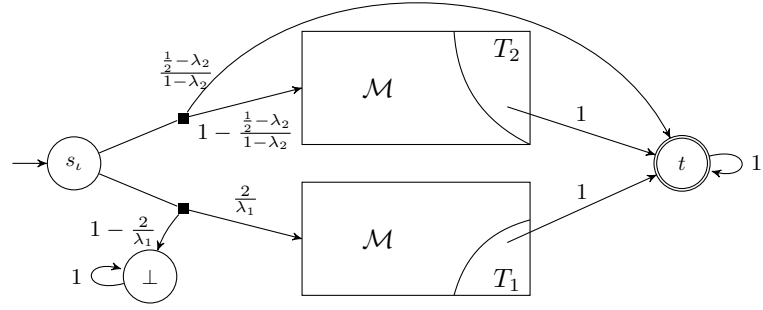
\begin{figure}
\centering
	\begin{tikzpicture}
		
		\draw (1,1) rectangle (4,2.5);
		\draw (1,3) rectangle (4,4.5);
		
		\draw (3,1) edge[bend left=30] (4, 2);
		\draw (3,4.5) edge[bend right] (4, 3);
		
		\node at (2,1.75) {$\mdp$};
		\node at (2,3.75) {$\mdp$};

		\node at (3.7,1.2) {$T_1$};
		\node at (3.7,4.25) {$T_2$};
		
		\node[state,accepting,scale=0.8] (t) at (6,2.75) {$t$};
		\draw[->] (t) edge [loop right] node {\footnotesize$1$} (t);
		
		\node[state,initial, initial text=,scale=0.8] (s) at (-2,2.75) {$\sinit$};
		\node[inner sep=2pt, fill,right=of s, yshift=0.6cm] (a1) {};
		\node[inner sep=2pt, fill,right=of s, yshift=-0.6cm] (a2) {};
		
		\node[state,scale=0.8] (b) at (-1,1.25) {$\bot$};
		\draw[->] (b) edge [loop left] node {\footnotesize$1$} (b);
		
		\draw[-] (s) -- (a1);
		\draw[-] (s) -- (a2);
		\draw[->] (a2) edge node[above] {\footnotesize$\frac{2}{\lambda_1}$} (1,1.75);
		\draw[->] (a1) edge node[below] {\footnotesize$1-\frac{\frac{1}{2}-\lambda_2}{1-\lambda_2}$} (1,3.75);
		\draw[->] (a2) edge[bend right=10] node[left] {\footnotesize$1-\frac{2}{\lambda_1}$} (b);
		\draw[->] (a1) edge[bend left=60] node[left, pos=0.1] {\footnotesize$\frac{\frac{1}{2}-\lambda_2}{1-\lambda_2}$} (t);
		\draw[->] (3.7, 3.5) -- node[above] {\footnotesize$1$} (t);
		\draw[->] (3.7, 1.7) -- node[above] {\footnotesize$1$} (t);

	\end{tikzpicture}
	\caption{Reducing $2\exists\reach^{\bowtie \lambda_1 \lambda_2}_\text{gp}$ to $\exists\forall\reach^{\bowtie}_\text{gp}$ for the case $\lambda_2 < \frac{1}{2} < \lambda_1$}
	\label{fig:2etoef}
\end{figure}

\begin{proof}[Proof of Theorem~\ref{lem:efnphard}]
As corollary to the NP-hardness of $2\exists\reach^{\unrhd \lambda_1 \lambda_2}_\text{gp}$  and Lemma~\ref{lem:2etoef}.
\end{proof}

\subsubsection{Proof of Theorem \ref{thm:csrgencoding}}
\restatecsrgencoding*

\noindent
For convenience, we define values for both players in a CSRG as follows:
\begin{align*}
  V_\playerI(\csrg) 
    &\coloneqq
      \sup_{\sigma}\Inf_{\tau}\Pr_\csrg^{\sigma, \tau}(\lozenge T)\\
  V_\playerII(\csrg)
    &\coloneqq
    1 - \sup_{\sigma}\Inf_{\tau}\Pr_\csrg^{\sigma, \tau}(\lozenge T).
\end{align*}
We now recall some known results regarding CSRGs.
\begin{theorem}[From~\cite{partha71,partha76}]
	CSRGs enjoy the following properties:
	\label{thm:csrgprops}
	\begin{itemize}
		\item It holds that
		\[
      \sup_{\sigma}\Inf_{\tau}\Pr_\csrg^{\sigma, \tau}(\lozenge T)
		  =
		  \Inf_{\tau}\sup_{\sigma}\Pr_\csrg^{\sigma, \tau}(\lozenge T),
		\]
		and thus
		$V_\playerI(\csrg) + V_\playerII(\csrg) = 1$ (i.e. CSRGs are determined).
		\item There is a stationary strategy $\tau^*$ for $\playerII$ such that
		\[
      \Inf_{\tau}\sup_{\sigma}\Pr_\csrg^{\sigma, \tau}(\lozenge T) = \min_{\tau
      \in \Sigma^\playerII}\max_{\sigma \in \Sigma^\playerI}\Pr_\csrg^{\sigma,
      \tau}(\lozenge T)
		  =\max_{\sigma \in \Sigma^\playerI}\Pr_\csrg^{\sigma, \tau^*}(\lozenge T).\]
	\end{itemize}
\end{theorem}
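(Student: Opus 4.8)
The plan is to treat the reachability value of $\csrg$ as the least fixed point of a Shapley-style operator built from one-shot matrix games, and then to extract the stationary strategy $\tau^*$ for player $\playerII$ from the optimal defences in those matrix games. Concretely, I would first fix the value candidate $v\colon S \to [0,1]$ with $v(s) = 1$ for $s \in T$ and, for $s \notin T$, with $v(s)$ equal to the minimax value of the finite zero-sum matrix game $M_s$ whose $(a,b)$-entry is $\sum_{s'} P(s' \mid s,a,b)\cdot v(s')$. By von Neumann's minimax theorem each $M_s$ has a value and both players possess optimal mixed actions in it, so this defines an operator $\mathcal{B}$ on $[0,1]^S$ that is monotone (the value of a matrix game is monotone in its payoff entries) and whose least fixed point I take to be $v$. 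Value iteration $v_0 = \mathbf{1}_T$, $v_{n+1} = \mathcal{B}(v_n)$ then increases to $v$, and I would note that $v_n(s)$ is exactly the value of the $n$-step-bounded reachability game, a finite-horizon concurrent game that is trivially determined with optimal (level-stationary) strategies for both players.

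For the upper bound I would let $\tau^*(s)$ be an optimal mixed action of player $\playerII$ in the matrix game $M_s$ taken at the fixed point $v$, yielding a stationary strategy. The key computation is that, under $\tau^*$, for every state $s \notin T$ and every pure action $a$ of player $\playerI$ one has $\sum_{s'}\big(\sum_b \tau^*(b \mid s)\, P(s' \mid s,a,b)\big) v(s') \le v(s)$, because $\tau^*$ guarantees the matrix-game value $v(s)$ against any choice of $a$. Hence, along any play against any $\sigma$, the process $v(s_n)$ (stopped at $T$, where $v \equiv 1$ and which is absorbing) is a bounded supermartingale. Since $v(s_n) \ge \mathbf{1}[s_n \in T]$ and its expectation is non-increasing, I obtain $\Pr(s_n \in T) \le \mathbb{E}[v(s_n)] \le v(\csrginit)$; letting $n \to \infty$ and using that $T$ is absorbing gives $\sup_\sigma \Pr_\csrg^{\sigma,\tau^*}(\lozenge T) \le v(\csrginit)$.

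For the matching lower bound I would argue $\sup_\sigma \Inf_\tau \Pr_\csrg^{\sigma,\tau}(\lozenge T) \ge v(\csrginit)$ by exploiting the convergence $v_n \uparrow v$: fixing $\epsilon > 0$, pick $n$ with $v_n(\csrginit) \ge v(\csrginit) - \epsilon$ and let player $\playerI$ play the finite-horizon optimal strategy for the $n$-step game during the first $n$ rounds (and arbitrarily afterwards), which reaches $T$ within $n$ steps with probability at least $v_n(\csrginit)$ against every $\tau$. Combining the two bounds with the trivial $\sup_\sigma\Inf_\tau \le \Inf_\tau\sup_\sigma$ collapses the chain $v(\csrginit) \le \sup_\sigma\Inf_\tau \le \Inf_\tau\sup_\sigma \le \sup_\sigma \Pr_\csrg^{\sigma,\tau^*}(\lozenge T) \le v(\csrginit)$, proving determinacy (whence $V_\playerI(\csrg) + V_\playerII(\csrg) = 1$) and, since fixing $\tau^*$ turns $\csrg$ into an MDP whose reachability probability is maximised by some player-$\playerI$ strategy, that $\tau^*$ attains $\Inf_\tau\sup_\sigma = \min_\tau\max_\sigma = \max_\sigma \Pr_\csrg^{\sigma,\tau^*}(\lozenge T)$.

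The main obstacle is the lower bound for the reaching player: player $\playerI$ generically has \emph{no} stationary optimal strategy (the value is only approached, as in the classical Hide-or-Run game), so the argument must pass through the finite-horizon approximants $v_n$ and an $\epsilon$-argument rather than a single fixed stationary strategy, and one must verify that value iteration genuinely converges to the least fixed point and that this least fixed point coincides with the true reachability value. The supermartingale limit argument for the upper bound also needs the mild but essential care that $T$ is absorbing, so that $\Pr(s_n \in T) \to \Pr(\lozenge T)$.
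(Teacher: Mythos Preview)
The paper does not prove this theorem at all: it is stated with the attribution ``From~\cite{partha71,partha76}'' and used as a black box in the proof of Theorem~\ref{thm:csrgencoding}. So there is no ``paper's own proof'' to compare against; any correct argument you supply goes strictly beyond what the paper does.

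That said, your sketch is the standard route to these results and is essentially correct. The Shapley operator $\mathcal{B}$ is monotone and (non-expansive, hence) continuous on $[0,1]^S$, so Kleene iteration from $\mathbf{1}_T$ increases to the least fixed point $v$; the supermartingale argument under the stationary $\tau^*$ built from per-state optimal defences in $M_s$ gives $\sup_\sigma \Pr^{\sigma,\tau^*}(\lozenge T)\le v(\csrginit)$; and the finite-horizon lower bound gives $\sup_\sigma\Inf_\tau \ge v_n(\csrginit)\to v(\csrginit)$. The chain then collapses exactly as you write. Two small points worth tightening if you flesh this out: (i) make explicit that continuity of $\mathcal{B}$ (not just monotonicity) is what guarantees $\sup_n v_n$ is a fixed point, hence the least one; (ii) in the final clause, after fixing $\tau^*$ you invoke that in the resulting MDP the reachability supremum over all $\sigma$ is attained by a stationary (indeed memoryless deterministic) $\sigma$---this is the standard MDP fact and is exactly what lets you replace $\sup_\sigma$ by $\max_{\sigma\in\Sigma^\playerI}$ in the last equality.
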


We can now proceed with the proof of the claim.
\begin{proof}[Proof of Theorem \ref{thm:csrgencoding}]
	Let $\csrg$ be a CSRG with state space $S_\csrg$ and transition probability distributions $P_\csrg(\cdot | s,a,b)$. We define $\mdp = \pmdptuple$ as follows:
	\begin{itemize}
		\item $S = S_\csrg~\uplus~\{s_a \mid s \in S_\csrg, a \in A_s\}~\uplus~\{s_{ab} \mid s \in S_\csrg, a \in A_s, b \in B_s\}$,
		\item $\Params = \biguplus_{s \in S_\csrg} B_s$ and $\Act = \biguplus_{s \in S_\csrg} A_s$ (we may assume w.l.o.g. that the $A_s$ and $B_s$ are pairwise disjoint for all $s \in S_\csrg$),
		\item for all $s,s' \in S_\csrg, a \in A_s, b \in B_s$ we define
		\begin{itemize}
			\item $P(s,a,s_{ab}) = b$,
			\item $P(s_{ab}, s') = P_\csrg(s'|s,a,b)$,
		\end{itemize}
		where we omit the actions in states where only one action is available.
	\end{itemize}
	Clearly, the construction can be carried out in polynomial time.
	Note that $\mdp$ is not a simple pMDP if $|B_s| > 2$ for some $s \in S_\csrg$. Later we will argue how it can be made simple. 
	
  The intuition is that there is a one-to-one correspondence between stationary
  strategies $\tau$ for $\playerII$ in $\csrg$ and parameter valuations $\inst
  \in \ParamSpace_{\mdp}^{\mathrm{wd}}$: A parameter $b \in \Params$ corresponds
  to a player-$\playerII$ action $b \in B_s$ for some $s\in S_\csrg$. Using
  $\tau$, we define the valuation $\inst = \phi(\tau) \in
  \ParamSpace^{\mathrm{wd}}$ as $\inst(b) = \tau(b|s)$, where $\tau(b|s)$ is the
  probability assigned to $b$ in state $s$ under $\tau$. Note that this
  construction also works in the opposite direction, which we will denote by
  $\phi^{-1}(\inst)$.
	
  Formally, we show that for stationary strategies $\tau$ of player $\playerII$,
  the instantiations $\csrg^\tau$ and $\mdp[\inst]$ with $\inst = \phi(\tau)$
  coincide as MDPs: For $\csrg^\tau$ we have, similar to \eqref{eq:csrginst},
  that
	\begin{align*}
		P_{\csrg^\tau}(s,a,s')
		= \sum_{b \in B_s} \tau(b|s)P_\csrg(s'|s,a,b)
	\end{align*}
	and for $\mdp[\inst]$ we conclude from the definition above that
	\begin{align*}
	\label{eq:mdpcoincide}
		P_{\mdp[\inst]}(s,a,s')
		= \sum_{b \in B_s} P(s,a,s_{ab}) P(s_{ab}, s')
		= \sum_{b \in B_s} \tau(b|s) P_\csrg(s'|s,a,b)
		= P_{\csrg^\tau}(s,a,s')
	\end{align*}
	where we merged states $s_{ab}$ and $s'$ into a single one. Let $\tau^*$ be like in Thm. \ref{thm:csrgprops} and $\inst^* = \phi(\tau^*)$. Then we have that
	\[
	\min_{\tau \in \Sigma^\playerII}\max_{\sigma \in \Sigma^\playerI}\Pr_\csrg^{\sigma, \tau}(\lozenge T)
	=\max_{\sigma \in \Sigma^\playerI} \Pr_\csrg^{\sigma, \tau^*}(\lozenge T)
	\overset{(*)}{=}\max_{\sched \in \Sched}\Pr_{\mdp[\inst^*]}^\sched(\lozenge T)
	\geq \min_{\inst \in \ParamSpace^{\mathrm{wd}}} \max_{\sched \in \Sched}\Pr_{\mdp[\inst]}^\sched(\lozenge T),
	\]
	where $(*)$ is true because the induced MDPs coincide as shown above.
	If we let $\inst^* \in \ParamSpace_{\mdp}^{\mathrm{wd}}$ be a minimizing parameter valuation and $\tau^* = \phi^{-1}(\inst^*)$, then it holds conversely that
	\[
	\min_{\inst \in \ParamSpace^{\mathrm{wd}}} \max_{\sched \in \Sched}\Pr_{\mdp[\inst]}^\sched(\lozenge T)
	= \max_{\sched \in \Sched}\Pr_{\mdp[\inst^*]}^\sched(\lozenge T)
	\overset{(*)}{=} \max_{\sigma \in \Sigma^\playerI}\Pr_\csrg^{\sigma, \tau^*}(\lozenge T)
	\geq \min_{\tau \in \Sigma^\playerII}\max_{\sigma \in \Sigma^\playerI}\Pr_\csrg^{\sigma, \tau}(\lozenge T)
	\]
	and the claim follows.
	
	We now argue how $\mdp$ can be made simple\footnote{The construction below requires all parameters to occur in the same combination, and then mimics the construction to binary pMDPs in Lem~\ref{lem:makebinary} (where it is applied to action rather than to parametric transitions).}. 
	We give a slightly more general argument: Consider a pMC `gadget' $\mathcal{H}$ with a single state $s$ which has multiple outgoing transitions labeled with parameters $x_1,\ldots,x_k$, $k > 2$. We may iterate the construction suggested in Fig.~\ref{fig:nonsimpletosimple} to obtain a simple gadget $\mathcal{H}'$ by introducing $k-2$ new states. 
	In order to retain an equivalent behaviour with respect to reachability of the $k$ output states of $\mathcal{H}$, a given valuation $\inst$ for the $x_1,\ldots,x_k$ has to be adapted to a new valuation $\inst'$ for $\mathcal{H}'$ as follows: $\inst'(x_1) = \inst(x_1)$ and for all $1 < i \leq k$,
	\[
	\inst(x_i) = \begin{cases}
		\frac{\inst(x_i)}{1-\inst(x_1)}, &\text{ if } \inst(x_1) < 1, \\
		0, &\text{ else.}
	\end{cases}
	\]
Similarly, we can obtain a valuation $\inst$ from a given $\inst'$. We apply this transformation to all states $S_\csrg$ with $|B_s| > 2$. This is also possible in polynomial time.
\end{proof}

	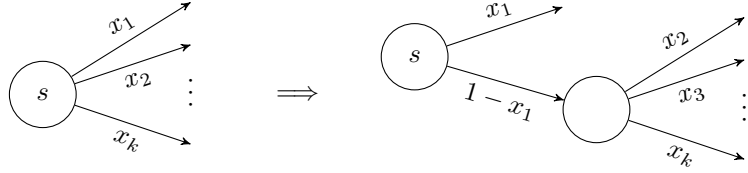
\begin{figure}[tb]
\centering
\begin{tikzpicture}
	\node[state] (s) {$s$};
	\node[right=15mm of s, yshift=13mm] (s1) {};
	\node[right=15mm of s, yshift=7mm] (s2) {};
	\node[right=15mm of s, yshift=-3mm,rotate=90] (dots) {$\dots$};
	\node[right=15mm of s, yshift=-7mm] (sk) {};
	
	\draw[->] (s) -- node[sloped,anchor=center,above] {$x_1$} (s1);
	\draw[->] (s) -- node[sloped,anchor=center,below] {$x_2$} (s2);
	\draw[->] (s) -- node[sloped,anchor=center,below] {$x_k$} (sk);
	
	\node[right=25mm of s] {$\Longrightarrow$};

	\node[state, right=40mm of s, yshift=5mm] (t) {$s$};
	\node[right=15mm of t, yshift=7mm] (t1) {};
	\node[state,right=15mm of t, yshift=-7mm] (t2) {};
	\node[right=15mm of t2, yshift=13mm] (r1) {};
	\node[right=15mm of t2, yshift=7mm] (r2) {};
	\node[right=15mm of t2, yshift=-3mm,rotate=90] (dots) {$\dots$};
	\node[right=15mm of t2, yshift=-7mm] (rk) {};
	
	\draw[->] (t) -- node[sloped,anchor=center,above] {$x_1$} (t1);
	\draw[->] (t) -- node[sloped,anchor=center,below] {$1-x_1$} (t2);
	
	\draw[->] (t2) -- node[sloped,anchor=center,above] {$x_2$} (r1);
	\draw[->] (t2) -- node[sloped,anchor=center,below] {$x_3$} (r2);
	\draw[->] (t2) -- node[sloped,anchor=center,below] {$x_k$} (rk);
\end{tikzpicture}
\caption{Construction used to obtain a simple pMDP in the proof of Thm. \ref{thm:csrgencoding}.}

\label{fig:nonsimpletosimple}
\end{figure}


\subsubsection{Proof of Corollary~\ref{thm:csrghardness}}
\restatecsrghardness*
\begin{proof}
	As a simple corollary to Thm. \ref{thm:csrgencoding}:
	Recall from Thm. \ref{thm:csrgprops} that
	\[
	V(\csrg) = \min_{\tau \in \Sigma^\playerII}\max_{\sigma \in \Sigma^\playerI}\Pr_\csrg^{\sigma, \tau}(\lozenge T).
	\]
	Now use Thm. \ref{thm:csrgencoding} to obtain an $\mdp$ in polynomial time with
	\[
	\min_{\tau \in \Sigma^\playerII}\max_{\sigma \in \Sigma^\playerI}\Pr_\csrg^{\sigma, \tau}(\lozenge T)
	=
	\min_{\inst \in \ParamSpace^{\mathrm{wd}}}\max_{\sched \in \Sched}\Pr_{\mdp[\inst]}^\sched(\lozenge T).
	\]
	The claim follows noticing that
	\[
	\min_{\inst \in \ParamSpace^{\mathrm{wd}}}\max_{\sched \in \Sched}\Pr_{\mdp[\inst]}^\sched(\lozenge T) \unlhd \lambda
	\iff
	\exists \inst \in \ParamSpace^{\mathrm{wd}}, \forall \sched \in \Sched .\;\Pr_{\mdp[\inst]}^\sched(\lozenge T) \unlhd \lambda.
	\]
\end{proof}


\subsection{Robust reachability}

\subsubsection{Proof of Theorem.~\ref{thm:fp_e_a_robreach_np_compl}}\label{App:fp-esfp-np}

\restatefprobreachnpcomp*

\begin{proof}
	We provide a reduction from 3-SAT. Let $\phi$ be a 3-SAT formula with clauses $\phi_1 \wedge \ldots \wedge \phi_m = \phi$ and variables $X_1,...,X_n$. We may restrict ourselves to the case where there exists no literal (a variable or negated variable) that is contained in all clauses, as otherwise $\phi$ is trivially satisfiable.
	
	For a literal $l$ we let $Sat(l) \coloneqq \{\phi_i \mid 1 \leq i \leq
m \text{ and } l \in \phi_i\}$ be the set of all clauses satisfied by $l$ and $Sat^\mathcal{C}(l)$ its complement.
	
	We first construct a \emph{nonsimple} pMDP $\mdp = \pmdptuple$ as depicted in Fig. \ref{fig:robstratNPexample}.
	Formally, let
  $S \coloneqq \{X_1,...,X_n\} \uplus \{\sinit,T,F\}$, where $T$ is the target and $F$
  is a sink, $X_i \coloneqq \{x_i\}$ and $\Act \coloneqq \{\alpha,\beta\}$.
	In order to define $P$ we consider the polynomials
\begin{align*}
	f_{i,\alpha} \coloneqq \frac{1}{2}\bigg(1 + \hspace{-3mm}\prod_{\phi_k \in Sat^\mathcal{C}(X_i)}\left(x-\frac{k}{m+1}\right)^2\bigg),\ \ 
	f_{i,\beta} \coloneqq \frac{1}{2}\bigg(1 + \hspace{-3mm}\prod_{\phi_k \in Sat^\mathcal{C}(\overline{X}_i)}\left(x-\frac{k}{m+1}\right)^2\bigg).	
\end{align*}
The only states where more than one action is enabled are the $X_1,\hdots,X_n$. In all other cases we omit the action. For all $1 \leq i \leq n$ we let
\begin{itemize}
	\item $P(\sinit,X_i) \coloneqq \frac{1}{n}$,
	\item $P(X_i,\alpha,T) \coloneqq f_{i,\alpha}$,
	\item $P(X_i,\beta,T) \coloneqq f_{i,\beta}$.
\end{itemize}
All the probability mass left unspecified in any state-action pair leads to the sink $F$. 
For the well-defined (i.e. not necessarily graph-preserving) case, we need to show that 
\begin{align}
	\label{eq:proofobligation}
	\phi \text{ is sat} \iff \exists \sched \in \Sched , \forall \inst \in \ParamSpace_{\mdp}^{\text{wd}} .\; \Pr_{\mdp[\inst]}^\sched(\lozenge T) > \frac{1}{2}.
\end{align}
The intuition is that there is a one-to-one-correspondence between the variable assignments of $\phi$ and the schedulers of $\mdp$: From an assignment $\mathcal{I}\colon \{X_1,...,X_n\} \rightarrow \{0,1\}$, a scheduler $\sched$ is obtained via $\sched(X_i) \coloneqq \alpha$ if $\mathcal{I}(X_i) = 1$ and $\sched(X_i) \coloneqq \beta$ if $\mathcal{I}(X_i) = 0$. We will refer to $\sigma$ as the scheduler corresponding to $\mathcal{I}$. 
Conversely, for every scheduler there is an assignment corresponding to it.
Moreover, note that the (global) minima of the polynomials $f_{i,\gamma}$, $\gamma \in \{\alpha,\beta\}$, are given by the set of points
$\{\big(\frac{k}{m+1}, \frac{1}{2}\big) \mid k \in Sat^\mathcal{C}(l_i)\}$
where $l_i = X_i$ if $\gamma = \alpha$ and $l_i = \overline{X}_i$ if $\gamma = \beta$.
By construction of $\mdp$ we have that for a fixed scheduler $\sched$
\begin{align}
	\label{eq:eqforf}
	\Pr^\sched(\lozenge T)
	= \frac{1}{n}\bigg(\sum_{\sched(X_i) = \alpha}f_{i,\alpha} + \sum_{\sched(X_i) = \beta}f_{i,\beta}\bigg)
	= \frac{1}{n} \sum_{i=1}^n \Delta_i
	\eqqcolon f,
\end{align}
where the polynomials $\Delta_i$ are defined in terms of the assignment $\mathcal{I}$ corresponding to $\sched$:
\begin{align*}
	\Delta_i \coloneqq \mathcal{I}(X_i) f_{i,\alpha} + (1-\mathcal{I}(X_i)) f_{i,\beta}.
\end{align*}
We observe that $f[\inst] \geq \frac{1}{2}$ for all $\inst \in
\ParamSpace_{\mdp}^{\text{wd}}$ and $f[\inst] = \frac{1}{2}$ can only be achieved if
$\inst(x) = k/(m+1)$ for some $1 \leq k\leq m$.
Notice that since $f(0) > \frac{1}{2}$ and $f(1) > \frac{1}{2}$ holds in particular, we do not need to show \eqref{eq:proofobligation} separately for the graph-preserving case.

\medskip
\noindent\emph{$\Longrightarrow$ of~\eqref{eq:proofobligation}}
Let $\mathcal{I}\colon \{X_1,...,X_n\} \rightarrow \{0,1\}$ be a satisfying assignment of $\phi$ and $\sched$ its corresponding scheduler. Since $\mathcal{I}$ is satisfying it holds that for every clause $\phi_k$, $1 \leq k \leq m$, there exists a variable $X_j$ such that either
\begin{enumerate}
 \item $\mathcal{I}(X_j) = 1$ and $\phi_k \in Sat(X_j)$ $ (\ \Longleftrightarrow \phi_k \notin Sat^\mathcal{C}(X_j)\ )$ or
 \item $\mathcal{I}(X_j) = 0$ and $\phi_k \in Sat(\overline{X}_j)$ $(\ \Longleftrightarrow \phi_k \notin Sat^\mathcal{C}(\overline{X}_j)\ )$.
\end{enumerate}
Hence in the first case, $f_{j,\alpha}[\frac{k}{m+1}] > \frac{1}{2}$ because $\frac{k}{m+1}$ is not a minimum of $f_{j,\alpha}$ by definition and in the second case $f_{j,\beta}[\frac{k}{m+1}] > \frac{1}{2}$ analogously. Thus in both cases
$\Delta_j[\frac{k}{m+1}] > \frac{1}{2}$.
We conclude
\begin{align*}
f\big[\frac{k}{m+1}\big]
\overset{\eqref{eq:eqforf}}{=} \frac{1}{n}\sum_{i=1}^n\Delta_i\big[\frac{k}{m+1}\big]
\geq \frac{1}{n} \bigg(\frac{n-1}{2} + \Delta_j\big[\frac{k}{m+1}\big]\bigg)
> \frac{1}{n} \bigg(\frac{n-1}{2} + \frac{1}{2} \bigg)
= \frac{1}{2}.
\end{align*}
Thus as $k/(m+1)$ is a global minimum of $f$ it follows that $f[\inst] > \frac{1}{2}$ for all $\inst \in \ParamSpace_{\mdp}^{\text{gp}}$.

\medskip
\noindent\emph{$\Longleftarrow$ of~\eqref{eq:proofobligation}}
Suppose that $\phi$ is unsatisfiable and let $\sched$ by any strategy. Then for its corresponding assignment $\mathcal{I}$ there is a clause $\phi_k$ which remains unsatisfied, i.e. for all variables $X_i$ we have, similar as before, that either
\begin{enumerate}
	\item $\mathcal{I}(X_i) = 1$ and $\phi_k \notin Sat(X_i)$ $ (\ \Longleftrightarrow \phi_k \in Sat^\mathcal{C}(X_i)\ )$ or
 	\item $\mathcal{I}(X_i) = 0$ and $\phi_k \notin Sat(\overline{X}_i)$ $(\ \Longleftrightarrow \phi_k \in Sat^\mathcal{C}(\overline{X}_i)\ )$,
\end{enumerate}
which implies that $\Delta_j[\frac{k}{m+1}] = \frac{1}{2}.$
So we see that at $\inst(x) = (k/(m+1))$, the reachability probability cannot be greater than $\frac{1}{2}$:
\begin{align*}
f\big[\frac{k}{m+1}\big]
\overset{\eqref{eq:eqforf}}{=} \frac{1}{n}\sum_{i=1}^n \Delta_j\big[\frac{k}{m+1}\big]
= \frac{1}{n} \cdot \frac{n}{2}
= \frac{1}{2}.
\end{align*}

\medskip
\noindent\emph{Simple pMDPs.} We can now further modify our construction in order to obtain a \emph{simple} pMDP. For this sake we replace the transitions with probability $f_{i,\gamma}$, $\gamma \in \Act$, with the gadgets constructed in the proof of Lemma~\ref{lem:chonevtrick} (choosing $\lambda = \frac{1}{2}$ as threshold). The idea is that those gadgets do not change the global minima (even though they do change the exact function $f_{i,\gamma}$).
\end{proof}

\begin{figure}
\centering
\begin{tikzpicture}[initial text=]
	\node[state,initial] (sinit) {$\sinit$};
	
	\node[state, right=of sinit, yshift=10mm] (x_1) {$X_1$};
	\node[right=of sinit,xshift=4mm,rotate=90, anchor=center] (dots) {$\dots$};
	\node[state, right=of sinit, yshift=-10mm] (x_n) {$X_n$};
	
	\node[circle,fill=black,inner sep=1pt,right of=x_1,xshift=4mm,yshift=3mm] (x_1_alpha) {};
	\node[circle,fill=black,inner sep=1pt,right of=x_1,xshift=4mm,yshift=-3mm] (x_1_beta) {};
	\node[circle,fill=black,inner sep=1pt,right of=x_n,xshift=4mm,yshift=3mm] (x_n_alpha) {};
	\node[circle,fill=black,inner sep=1pt,right of=x_n,xshift=4mm,yshift=-3mm] (x_n_beta) {};
	
	\node[state,accepting,right=of sinit,xshift=35mm] (T) {$T$};
	
	
	\draw[->] (sinit) -- node[sloped,anchor=center,above] {$\nicefrac{1}{n}$} (x_1);
	\draw[->] (sinit) -- node[sloped,anchor=center,below] {$\nicefrac{1}{n}$} (x_n);
	
	\draw[->] (x_1) -- node[above] {$\alpha$} (x_1_alpha);
	\draw[->] (x_1) -- node[below] {$\beta$} (x_1_beta);
	\draw[->] (x_n) -- node[above] {$\alpha$} (x_n_alpha);
	\draw[->] (x_n) -- node[below] {$\beta$} (x_n_beta);
	
	\draw[->] (x_1_alpha) edge[bend left=20] node[sloped,anchor=center,above,above] {$f_{1,\alpha}$} (T);
	\draw[->] (x_1_beta) -- node[sloped,anchor=center,above,above] {$f_{1,\beta}$} (T);
	\draw[->] (x_n_alpha) -- node[sloped,anchor=center,above,below] {$f_{n,\alpha}$} (T);
	\draw[->] (x_n_beta) edge[bend right=20] node[sloped,anchor=center,above,below] {$f_{n,\beta}$} (T);
\end{tikzpicture}
\caption{Construction used in the proof of Thm. \ref{thm:fp_e_a_robreach_np_compl}.}

\label{fig:robstratNPexample}
\end{figure}
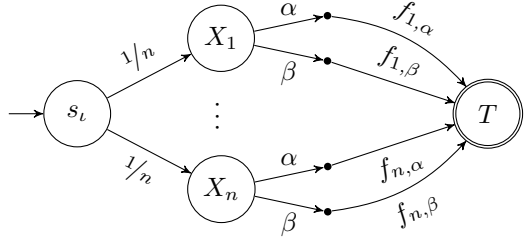

\pagebreak
\section{Full ETR encoding}
\label{app:fullencoding}
In the following, we give ETR encodings. 
We use $\doteq$, $\gtrdot$, $\lessdot$, $\leqslant$, $\geqslant$, to clarify the usage of (in)equalities in constraints.
The encodings are presented in a linear fashion, the first encoding is presented in more detail.

\subparagraph*{Auxiliary notation.}

Let the set $S^{\inst}_{=0,\forall \sched}$ describe all states that reach the target with probability zero using any scheduler, formally \[S^{\inst}_{=0,\forall \sched} \coloneqq \{ s' \mid \forall \sched \in \Sched.\; \Pr^{\sched}_\mdp[\inst](s' \rightarrow T) = 0 \}.\]
Let the set $S^{\inst}_{=0,\exists \sched}$ describe all states that reach the target with probability zero using some scheduler, formally \[S^{\inst}_{=0,\exists \sched} \coloneqq \{ s' \mid \exists \sched \in \Sched.\; \Pr^{\sched}_\mdp[\inst](s' \rightarrow T) = 0 \}.\]

In particular, for graph consistent instantiations $\inst, \inst'$ it holds that  $S^{\inst}_{=0,\forall \sched} = S^{\inst'}_{=0,\forall \sched} $ and $S^{\inst}_{=0,\exists \sched} = S^{\inst'}_{=0,\exists \sched} $. 
We define $S^{\mathrm{gp}}_{=0,\forall \sched} \coloneqq \bigcup_{\inst \in \ParamSpace^\mathrm{gp}} S^{\inst}_{=0,\forall \sched}$ and $S^{\mathrm{gp}}_{=0,\exists \sched} \coloneqq \bigcup_{\inst \in \ParamSpace^\mathrm{gp}} S^{\inst}_{=0,\exists \sched}$.
These sets can be efficiently precomputed via standard graph-based algorithms~\cite{BK08}.

\subsection{$\exists\forall$ with upper bounds}
We consider encodings for: 
\begin{align*}
  \exists\;\inst \in \ParamSpace^{*},
  \forall\;\sched \in \Sched.\;
    \Pr_{\mdp[\inst]}^{\sched}(\lozenge T) \unlhd \frac{1}{2} 
\end{align*}
We thus search for an encoding that checks whether there exist parameter values such that even the maximising scheduler is below the threshold.

\paragraph*{Graph preserving}
We consider:
\begin{align}
\label{eq:defproblemgpub}
  \exists\;\inst \in \ParamSpace^{\mathrm{gp}},
  \forall\;\sched \in \Sched.\;
    \Pr_{\mdp[\inst]}^{\sched}(\lozenge T) < \frac{1}{2} \end{align}
    The non-strict version is analogously encoded.

\subparagraph*{Encoding.}
We take the following straightforward generalisation of the Bellman inequalities~\cite{puterman05,BK08} from~\cite{prophesy_journal}.
We use variables $\{ v_s \mid s \in S \} \cup \Params$,
where variables $v_s$ shall encode the probability to reach the target (for any fixed values for the variables in $\Params$).

We let $S_{?} = S \setminus (T \cup S^{\mathrm{gp}}_{=0,\forall \sched})$.
We encode the pMDP $\mdp$ by the conjunction $\Upphi_{\forall,\unlhd}^\mathrm{gp}(\mdp)$ of:
\begin{align*}
		& \bigwedge_{s \in T}  v_s \doteq 1 
		\quad\land\quad \bigwedge_{s \in  S^{\mathrm{gp}}_{=0,\forall \sched}}  v_s \doteq 0 
		\quad\land\quad \bigwedge_{s \in S_{?}} \bigwedge_{\act \in \Act(s)} v_s \geqslant \sum_{s' \in S} P(s,\act,s') \cdot v_{s'} 			\end{align*}
and encode with $\Upphi(\ParamSpace^\mathrm{gp})$ the parameter space:
\begin{align*}
	\bigwedge_{x\in\Params} 0 \lessdot x \lessdot 1.
\end{align*}

Together, the formula $\Upphi_{\forall,\unlhd}^\mathrm{gp}(\mdp) \land \Upphi(\ParamSpace^\mathrm{gp}) \land v_{\sinit} \lessdot \frac{1}{2}$ encodes\footnote{For a non-strict bound, substitute $\lessdot$ with $\leqslant$} precisely \eqref{eq:defproblemgpub}.
That is, if there is a satisfying solution, then the assignments to the parameters precisely encode parameter values s.t.\ for all schedulers the reachability probability is $< \frac{1}{2}$. 
Likewise, if there is no satisfying solution, there are no parameter values inducing an MDP fulfilling the reachability constraint.

Target states reach the target with probability $1$, and when under all strategies the probability to reach a target is $0$, then it surely is $0$ for the maximising scheduler; vice versa, if there is a scheduler for which the probability to reach the target is positive, then the probability will not be $0$ under the maximising scheduler.   
The solver tries to assign sufficiently small values to the states in order to satisfy $v_{\sinit} \lessdot \frac{1}{2}$, yet has to assign at least what each action locally yields, thereby assigning at least the value from the maximising action.

\paragraph*{Well-defined}
We construct an encoding for $\exists\forall\reach^{\unlhd}_\mathrm{wd}$, that is for a pMDP $\mdp$:
\begin{align}
\label{eq:defproblemwdub}
  \exists\;\inst \in \ParamSpace^{\mathrm{wd}},
  \forall\;\sched \in \Sched.\;
    \Pr_{\mdp[\inst]}^{\sched}(\lozenge T) < \frac{1}{2}. \end{align}

We cannot reuse the encoding from the graph-preserving case. The sets $P^\inst_{=0}$ of transitions that become zero vary, and thus there is not a single set  $S^{\inst}_{=0,\forall \sched}$.
Furthermore, the number of different sets is exponential in the number of parameters, thus it cannot be efficiently precomputed.
We therefore encode their computation into the encoding.
As we consider maximising schedulers, we have to assign probability $0$ exactly iff there is no path to the target states.
Notice that we only have to consider finite paths. There are various ways to
encode the computation of these states, e.g.\ via~\cite{DBLP:conf/aaai/ChatterjeeCD16} for POMDPs.
 We  construct this encoding using an idea from~\cite{DBLP:journals/tcs/WimmerJAKB14} for counterexamples in parameter-free MDPs.
 The idea is that a state has a path to the target if it has a successor state $s'$ (under the current parameter assignment) which  has a path to the target, and that $s'$ is closer to the target (to prevent cyclic arguments). 
 To encode that a state is closer to the target, we use variables to rank the states along a path: a path to the target gets strictly increasing rank along its states, preventing cycles.

\subparagraph*{Encoding.}
We use the following set of additional variables: $\{ p_s, r_s \mid s \in S \}$. 
Here, we assume that $p_s$ are Boolean variables.
Intuitively, variable $p_s$ being true means that, for any fixed parameter assignment, state $s$ has a positive probability to reach the target, i.e., a path to the target.
We encode being closer to the target by the auxiliary variable $r_s$: If the
value $r_s$ is larger than $r_{s'}$ than it must be closer to the target. 
We encode the pMDP $\mdp$ as the 
conjunction $\Upphi_{\forall,\unlhd}^\mathrm{wd}(\mdp)$ of:

\begin{align*}
		& \bigwedge_{s \in T}  v_s \doteq 1 
		\quad\land\quad \bigwedge_{s \in S\setminus T} \neg p_s \rightarrow v_s \doteq 0 
		\quad\land\quad \bigwedge_{s \in S \setminus T} \left( p_s  \rightarrow \bigwedge_{\act \in \Act(s)} v_s \geqslant \sum_{s' \in S} P(s,\act,s') \cdot v_{s'} \right) \\ 
		& \bigwedge_{s \in S \setminus T} \left( p_s \leftrightarrow \bigvee_{\act \in \Act(s)}\bigvee_{s'\in S} \left( P(s,\act,s') \gtrdot 0 \rightarrow \left( p_{s'} \land r_{s} \lessdot r_{s'} \right) \right)	\right),
	\end{align*}%
	and encode with $\Upphi(\ParamSpace^\mathrm{wd})$ the parameter space:
\begin{align*}
	\bigwedge_{x\in\Params} 0 \leqslant x \leqslant 1.
\end{align*}
Together, the formula $\Upphi^\mathrm{wd}_{\unlhd}(\mdp) \land \Upphi(\ParamSpace^\mathrm{wd}) \land v_{\sinit} \lessdot \frac{1}{2}$ encodes precisely \eqref{eq:defproblemwdub}. 	
In particular, for a fixed assignment to $\Params$, the states such that $p_s$ is assigned true are exactly the states $S^{\inst}_{=0,\forall \sched}$, and the encoding then is correct following the reasoning from the graph-preserving case.

\subsection{$\exists\forall$ with lower bounds}
We consider encodings for: 
\begin{align*}
  \exists\;\inst \in \ParamSpace^{*},
  \forall\;\sched \in \Sched.\;
    \Pr_{\mdp[\inst]}^{\sched}(\lozenge T) \unrhd \frac{1}{2} \end{align*}

We thus search for an encoding that checks whether there exist parameter values such that even the minimising scheduler is above the threshold.

\paragraph*{Graph preserving}
We consider:
\begin{align}
\label{eq:defproblemgplb}
  \exists\;\inst \in \ParamSpace^{\mathrm{gp}},
  \forall\;\sched \in \Sched.\;
    \Pr_{\mdp[\inst]}^{\sched}(\lozenge T) > \frac{1}{2} \end{align}
    The non-strict version is analogously encoded.

We only require a slight adaption to the upper-bounded case, as we are now considering minimising schedulers.
	We let $S_{?} = S \setminus (T \cup S^{\mathrm{gp}}_{=0,\exists \sched})$. We encode the pMDP $\mdp$ by the conjunction $\Upphi_{\forall,\unrhd}^\mathrm{gp}(\mdp)$ of:
\begin{align*}
		& \bigwedge_{s \in T}  v_s \doteq 1 
		\quad\land\quad \bigwedge_{s \in  S^{\mathrm{gp}}_{=0,\exists \sched}}  v_s \doteq 0 
		\quad\land\quad \bigwedge_{s \in S_{?}} \bigwedge_{\act \in \Act(s)} v_s \leqslant \sum_{s' \in S} P(s,\act,s') \cdot v_{s'}. 			\end{align*}
%
Together, the formula $\Upphi_{\forall,\unrhd}^\mathrm{gp}(\mdp) \land \Upphi(\ParamSpace^\mathrm{gp}) \land v_{\sinit} \gtrdot \frac{1}{2}$ encodes\footnote{For a non-strict bound, substitute $\gtrdot$ with $\geqslant$} precisely \eqref{eq:defproblemgplb}.	

\paragraph*{Well-defined}

We consider:
\begin{align}
\label{eq:defproblemwdlb}
  \exists\;\inst \in \ParamSpace^{\mathrm{wd}},
  \forall\;\sched \in \Sched.\;
    \Pr_{\mdp[\inst]}^{\sched}(\lozenge T) > \frac{1}{2} \end{align}
    The non-strict version is analogously encoded.

\subparagraph*{Encoding.}
We use the following set of additional variables: $\{ p_s, r_s \mid s \in S \}$. 
Here, we assume that $p_s$ are Boolean variables.
Intuitively, variable $p_s$ is true means that, for any fixed parameter assignment, state $s$ has a positive probability to reach the target irrespectively of the selected action, i.e., a path to the target starting with any action.
We encode being closer to the target by the auxiliary variable $r_s$: If the
value $r_s$ is larger than $r_{s'}$ than it must be closer to the target. 
We encode the pMDP $\mdp$ by the conjunction $\Upphi_{\forall,\unrhd}^\mathrm{wd}(\mdp)$ of:

\begin{align*}
		& \bigwedge_{s \in T}  v_s \doteq 1 
		\quad\land\quad \bigwedge_{s \in S\setminus T} \neg p_s \rightarrow v_s \doteq 0 
		\quad\land\quad \bigwedge_{s \in S \setminus T} \left( p_s  \rightarrow \bigwedge_{\act \in \Act(s)} v_s \leqslant \sum_{s' \in S} P(s,\act,s') \cdot v_{s'} \right) \\ 
		& \bigwedge_{s \in S \setminus T} \left( p_s \leftrightarrow \bigwedge_{\act \in \Act(s)}\bigvee_{s'\in S} \left( P(s,\act,s') \gtrdot 0 \rightarrow \left( p_{s'} \land r_{s} \lessdot r_{s'} \right) \right)	\right).
	\end{align*}%
Together, the formula $\Upphi_{\forall,\unrhd}^\mathrm{wd}(\mdp) \land \Upphi(\ParamSpace^\mathrm{wd}) \land v_{\sinit} \gtrdot \frac{1}{2}$ encodes precisely \eqref{eq:defproblemwdlb}. 	
In particular, for a fixed assignment to $\Params$, the states such that $p_s$ is assigned true are exactly the states $S^{\inst}_{=0,\exists \sched}$, and the encoding then is correct following the reasoning from the graph-preserving case.

\subsection{$\exists\exists$ with upper bounds}

\paragraph*{Graph preserving}

We consider encodings for: 
\begin{align}
  \exists\;\inst \in \ParamSpace^{gp},
  \exists\;\sched \in \Sched.\;
    \Pr_{\mdp[\inst]}^{\sched}(\lozenge T) \unlhd \frac{1}{2} 
    \label{eq:defproblemexgpub}
\end{align}
We thus search for an encoding that checks whether there exist parameter values such that the minimising scheduler is below the threshold.

We take the encoding from~\cite{prophesy_journal}. We let $S_{?} = S \setminus (T \cup S^{\mathrm{gp}}_{=0,\exists \sched})$. 
We encode the pMDP $\mdp$ by the conjunction $\Upphi_{\exists}^\mathrm{gp}(\mdp)$ of:
\begin{align*}
		& \bigwedge_{s \in T}  v_s \doteq 1 
		\quad\land\quad \bigwedge_{s \in  S^{\mathrm{gp}}_{=0,\exists \sched}}  v_s \doteq 0 
		\quad\land\quad \bigwedge_{s \in S_{?}} \bigvee_{\act \in \Act(s)} v_s \doteq \sum_{s' \in S} P(s,\act,s') \cdot v_{s'}. 			\end{align*}
Together, the formula $\Upphi_{\exists,\unlhd}^\mathrm{gp}(\mdp) \land \Upphi(\ParamSpace^\mathrm{gp}) \land v_{\sinit} \lessdot \frac{1}{2}$ encodes\footnote{For a non-strict bound, substitute $\lessdot$ with $\leqslant$} precisely \eqref{eq:defproblemexgpub}.

\paragraph*{Well defined}

Omitted. ETR membership follows from Lemma~\ref{lem:existwdgeeqexistwdge} and the encoding for $\exists\forall$.
An direct encoding is a straightforward combination of the ingredients above.

\subsection{$\exists\exists$ with lower bounds}

Omitted. ETR membership follows from Lemma~\ref{lem:existwdgeeqexistwdge}, Corollary~\ref{cor:eeggpeqerggwd}, and the encoding for $\exists\forall$.
An direct encoding is a straightforward combination of the ingredients above.


\end{document}